\begin{document}

\title{Parity Game Reductions}

\author[1]{Sjoerd Cranen}
\author[2,3]{Jeroen J.A. Keiren}
\author[1]{Tim A.C. Willemse}

\affil[1]{Department of Computer Science and Mathematics\\
    Eindhoven University of Technology\\
    PO Box 513, 5600MB Eindhoven, The Netherlands\\
    \texttt{s.cranen,t.a.c.willemse@tue.nl}}
\affil[2]{Open University of the Netherlands\\
    Faculty of Management, Science \& Technology\\
    PO Box 2960, 6401 DL Heerlen, The Netherlands\\
    \texttt{Jeroen.Keiren@ou.nl}}
\affil[3]{Radboud University\\
    Institute for Computing and Information Sciences\\
    Nijmegen, The Netherlands}

\renewcommand\Authands{, and }
\renewcommand\Affilfont{\itshape\small}

\date{}

\tikzstyle{even}=[shape=diamond,draw,inner sep=0pt, minimum size=15pt, font={\scriptsize}]
\tikzstyle{odd}=[shape=rectangle,draw,inner sep=0pt, minimum size=11pt, font={\scriptsize}]


\newcommand{\SC}[1]{\todo[color=blue!70!cyan]{SC: #1}}
\newcommand{\TW}[1]{\todo[color=red!90!blue]{TW: #1}}
\newcommand{\JK}[1]{\todo[color=green!85!black]{JK: #1}}
\newcommand{\ie}{\emph{i.e.}\xspace}
\newcommand{\eg}{\textit{e.g.}\xspace}
\newcommand{\viz}{\textit{viz.}\xspace}
\newcommand{\etal}{\textit{et al.}\xspace}
\newcommand{\ibid}{\textit{ibid.}\xspace}
\newcommand{\true}{\mathsf{true}}
\newcommand{\false}{\mathsf{false}}
\newcommand{\nat}{\mathbb{N}}
\newcommand{\oftype}{\mathop{:}}
\renewcommand{\land}{\mathrel{\wedge}}
\renewcommand{\lor}{\mathrel{\vee}}
\let\existssym\exists
\let\forallsym\forall
\renewcommand{\exists}[2]{\existssym #1 \mathrel{:} #2}
\renewcommand{\forall}[2]{\forallsym #1 \mathrel{:} #2}
\newcommand{\ind}[2]{{#1}[#2]} 
\newcommand{\disjunion}{\sqcup}

\newcommand{\isdef}{\ensuremath{\stackrel{\Delta}{=}}}
\newcommand{\dom}[1]{\ensuremath{\textsf{dom}(#1)}}

\newcommand{\sosrule}[2]{\frac{\raisebox{.7ex}{\normalsize{$#1$}}}
  {\raisebox{-1.0ex}{\normalsize{$#2$}}}}
\newcommand{\project}[2]{\pi_{#1}(#2)}
\newcommand{\NP}{\mathsf{NP}}
\newcommand{\coNP}{\mathsf{co-NP}}
\newcommand{\UP}{\mathsf{UP}}
\newcommand{\coUP}{\mathsf{co-UP}}
\newcommand{\game}{\mathcal{G}}
\newcommand{\even}{\ifmmode \scalebox{0.7071}{\rotatebox{45}{$\Box$}}\else \emph{even}\xspace\fi}
\newcommand{\odd}{\ifmmode \Box \else \emph{odd}\xspace\fi}
\newcommand{\player}[1][]{\mathit{i}_{#1}} 
\newcommand{\opponent}[1]{\neg #1}
\newcommand{\priosym}{\Omega}
\newcommand{\prio}[1]{\priosym(#1)}
\newcommand{\getplayername}{\mathcal{P}}
\newcommand{\getplayer}[1]{\getplayername(#1)}
\newcommand{\priority}[1][]{\mathit{n}_{#1}} 
\newcommand{\post}[1]{\ensuremath{#1^\bullet}}
\renewcommand{\path}[1]{#1}
\newcommand{\pathvar}[1][]{p_{#1}}
\newcommand{\pathconcat}{}
\newcommand{\pathsname}{\Pi}
\newcommand{\treename}{\Psi}
\newcommand{\paths}[3][]{\pathsname_{#2}^{#1}(#3)}
\newcommand{\segpaths}[3][\strategyname]{\bar{\pathsname}_{#2}^{#1}(#3)}
\newcommand{\tree}[3]{\treename_{#1}^{#3}(#2)}
\newcommand{\strategy}[1]{\mathbb{S}_{#1}}
\newcommand{\memstrategy}[1]{\mathbb{S}_{#1}^*}
\newcommand{\allows}{\Vdash}
\newcommand{\consistent}[2]{\ensuremath{{#2}\allows{#1}}}
\newcommand{\strategyname}{\sigma}
\newcommand{\strategynamealt}{\psi}
\newcommand{\strategynamealta}{\psi'}
\newcommand{\step}{\to}
\newcommand{\nstep}{\not\to}
\newcommand{\steps}[1]{\mathop{\,\!_{#1}\smash\step}}
\newcommand{\nsteps}[1]{\mathop{\,\!_{#1}\smash\nstep}}
\newcommand{\force}{\mapsto}
\newcommand{\nforce}{\not\mapsto}
\newcommand{\forces}[3][]{\mathop{\,\!_{#2}\smash\force^{#1}_{#3}}}
\newcommand{\stutters}[1]{\forces{}{#1}}
\newcommand{\nforces}[3][]{\mathop{\,\!_{#2}\smash\nforce^{#1}_{#3}}}
\newcommand{\diverges}[2]{\forces{#1}{#2}}
\newcommand{\stutdiverges}[1]{\diverges{}{#1}}
\newcommand{\ndiverges}[2]{\nforces{#1}{#2}}

\newcommand{\C}{\mathcal{C}}
\renewcommand{\O}{\mathcal{O}}
\newcommand{\U}{\mathcal{U}}
\newcommand{\T}{\mathcal{T}}
\newcommand{\R}[1][]{\mathrel{R^{#1}}}
\renewcommand{\S}{\mathrel{S}}
\newcommand{\Rt}{\R^*} 
\newcommand{\RnaS}{\mathrel{(\R \circ \S)}} 
\newcommand{\RS}{\mathrel{(\R \cup \S)}}    
\newcommand{\RSt}{\mathrel{\RS^*}} 
\newcommand{\Q}{\mathrel{Q}}
\newcommand{\Qt}{\Q^*}
\newcommand{\inv}[1]{#1^{-1}} 
\newcommand{\delaygame}[3]{\mathsf{de-game}(#1,#2,#3)}
\newcommand{\gamevertex}[1][]{v_{#1}}
\newcommand{\obligation}[1][]{k_{#1}}
\newcommand{\gameposition}[1][]{p_{#1}}
\newcommand{\gameplayer}[1][]{\player[#1]}
\newcommand{\spoiler}{\ifmmode S\else\emph{Spoiler}\xspace\fi}
\newcommand{\duplicator}{\ifmmode D\else\emph{Duplicator}\xspace\fi}
\newcommand{\obligations}{K}
\newcommand{\configuration}[1][]{c_{#1}}
\newcommand{\configurations}{P}
\newcommand{\before}{\circ}
\newcommand{\Gstutname}{Governed stuttering bisimulation\xspace}
\newcommand{\gstutname}{governed stuttering bisimulation\xspace}
\newcommand{\eGstutname}{Governed stuttering bisimilarity\xspace}
\newcommand{\egstutname}{governed stuttering bisimilarity\xspace}

\newcommand{\isomorphic}{\cong}
\newcommand{\winner}{\sim_w}

\newlength{\simw}
\settowidth{\simw}{$\smash\sim$}
\newcommand{\dashes}[1]{%
  \rule{0.1#1}{0pt}\rule[0.05em]{0.16#1}{0.5pt}%
  \rule{0.16#1}{0pt}\rule[0.05em]{0.16#1}{0.5pt}%
  \rule{0.16#1}{0pt}\rule[0.05em]{0.16#1}{0.5pt}}
\def\gstut{\mathrel{\mathpalette\gstuti\relax}}
\def\gstuti#1#2{%
  \setbox0=\hbox{\raisebox{0.1em}{$\mathsurround=0pt#1\smash\sim$}}\usebox0\hskip-\wd0\dashes{\wd0}}
\newcommand{\stut}{\simeq}
\newcommand{\bisim}{\mathrel{\raisebox{0.15em}{\resizebox{\simw}{!}{$\leftrightarrow$}}\hspace{-\simw}{\rule{0.1\simw}{0pt}\rule{0.8\simw}{0.5pt}}}}
\newcommand{\gov}{\mathrel{\raisebox{0.15em}{\resizebox{\simw}{!}{$\leftrightarrow$}}\hspace{-\simw}{\dashes{\simw}}}}
\newcommand{\directsim}{\leq_{d}}
\newcommand{\strongdirectsim}{\leq_{sd}}
\newcommand{\strongdirectsimeq}{\equiv_{sd}}
\newcommand{\semistut}{\sim_{\mathrm{semi-st}}}
\newcommand{\delaysimc}{\leq_{de}}
\newcommand{\delaysimco}{\delaysimc^{o}}
\newcommand{\delaysimce}{\delaysimc^{e}}
\newcommand{\minsucc}[1]{\ensuremath{\min_{\directsim}(#1)}}
\newcommand{\maxsucc}[1]{\ensuremath{\max_{\directsim}(#1)}}

\newcommand{\bisimg}{\equiv_{\mathit{s}}}
\newcommand{\govg}{\equiv_{\mathit{g}}}
\newcommand{\directsimg}{\sqsubseteq_{\mathit{d}}}
\newcommand{\delaysim}{\sqsubseteq_{\mathit{de}}}
\newcommand{\delaysimo}{\delaysim^{o}}
\newcommand{\delaysime}{\delaysim^{e}}
\newcommand{\directsimeq}{\equiv_{\mathit{d}}}
\newcommand{\delaysimeq}{\equiv_{\mathit{de}}}
\newcommand{\delaysimoeq}{\delaysimeq^{o}}
\newcommand{\delaysimeeq}{\delaysimeq^{e}}
\newcommand{\stutg}{\equiv_{\mathit{st}}}
\newcommand{\gstutg}{\equiv_{\mathit{g,st}}}

\newcommand{\distname}{\mathit{dist}}
\newcommand{\dist}[4]{\distname_{#1,#2}(#3,#4)}
\newcommand{\measuresym}{\mathit{m}}
\newcommand{\measure}[1]{\measuresym(#1)}
\newcommand{\frontname}{\textit{expel}}
\newcommand{\front}[1]{\ensuremath{\frontname(#1)}}
\newcommand{\exitname}{\textit{exit}}
\newcommand{\exit}[1]{\ensuremath{\exitname(#1)}}

\newcommand{\rankname}{\mathsf{rank}}
\newcommand{\rank}[3]{(#1,#2,#3)}

\newcommand{\mimickname}{\mathsf{mimic}}
\newcommand{\mimick}[1]{\mimickname_{#1}}
\newcommand{\divr}[2]{\mathsf{div}_{#1}(#2)}
\newcommand{\entry}[2]{\mathsf{reach}_{#1}(#2)}

\newcommand{\targetclass}[2]{\mathsf{targetclass}_{#1}(#2)}
\newcommand{\target}[2]{\tau_{#1} (#2)}
\newcommand{\vertexorder}{\ensuremath{\sqsubset}}
\newcommand{\targetorder}[1]{\ensuremath{\prec_{#1}}}
\newcommand{\vertexordermin}{\sqcap}
\newcommand{\targetordermin}[1]{\curlywedge_{#1}}

\newcommand{\prioorder}{\prec}
\newcommand{\prioordereq}{\preccurlyeq}
\newcommand{\playerorder}{\lessdot}
\newcommand{\playermin}{\min_{\playerorder}}
\newcommand{\fmistorder}{\lll}
\newcommand{\fmistmin}{\min_{\fmistorder}}
\newcommand{\update}[1][]{\gamma^{#1}}

\newcommand{\partition}[2]{{#1}_{/#2}}
\newcommand{\class}[2]{[#1]_{#2}}
\newcommand{\partitioneq}[1]{\mathrel{#1}}

\newcommand{\modcompatible}[2]{#1 \equiv_{2}#2}
\newcommand{\nmodcompatible}[2]{#1 \not\equiv_{2} #2}
\newcommand{\attrsym}{\ensuremath{\mathit{Attr}}}
\newcommand{\attr}[3][]{\ensuremath{\attrsym^{#1}_{#2}(#3)}}
\newcommand{\battr}[4][]{\ensuremath{{_{#2}}\!\attrsym^{#1}_{#3}(#4)}}
\newcommand{\bottomsym}{\ensuremath{\mathit{Bottom}}}
\newcommand{\bbottom}[3]{\ensuremath{{_{#1}}\!\bottomsym_{#2}(#3)}}
\newcommand{\restrict}[2]{#1 \upharpoonright#2}
\newcommand{\leavesym}{\ensuremath{\mathit{Leave}}\xspace}
\newcommand{\leave}[4][]{\ensuremath{\leavesym^{#1}_{#2}(#3, #4)}}
\newcommand{\possym}{\ensuremath{\mathit{pos}}\xspace}
\newcommand{\pos}[3]{\ensuremath{\possym_{#1}(#2, #3)}}
\newcommand{\becomes}{\ensuremath{\leftarrow}}
\newcommand{\todolist}{\ensuremath{\mathit{todo}}\xspace}
\newcommand{\head}[1]{\ensuremath{\mathit{head}(#1)}}
\newcommand{\pop}[1]{\ensuremath{#1.\mathit{pop}()}}
\newcommand{\append}[2]{\ensuremath{#1.\mathit{append}(#2)}}
\newcommand{\remove}[2]{\ensuremath{#1.\mathit{remove}(#2)}}
\newcommand{\lhead}[1]{\ensuremath{\mathit{head}(#1)}}
\newcommand{\stable}{\ensuremath{\mathit{stable}}\xspace}
\newcommand{\foundsplitter}{\ensuremath{\mathit{foundsplitter}}\xspace}
\newcommand{\inert}[1]{\ensuremath{#1.\mathit{incoming}}}
\newcommand{\noninert}[1]{\ensuremath{#1.\mathit{incoming}}}
\newcommand{\inerttononinert}{\ensuremath{\mathit{inert\_becomes\_non\_inert}}\xspace}
\newcommand{\trysplit}[2]{\ensuremath{\mathit{TrySplit}(#1, #2)}}

\theoremstyle{plain}
\newtheorem{theorem}{Theorem}
\newtheorem{lemma}{Lemma}
\newtheorem{proposition}{Proposition}
\newtheorem{corollary}{Corollary}

\theoremstyle{definition}
\newtheorem{definition}{Definition}
\newtheorem{example}{Example}

\def\precdot{{\ensuremath{<\!\!\!\cdot}}}

\maketitle

\begin{abstract}
Parity games play a central role in model checking and satisfiability checking. Solving parity games is computationally expensive, among others due to the size of the games, which, for model checking problems, can easily contain $10^9$ vertices or beyond. Equivalence relations can be used to reduce the size of a parity game, thereby potentially alleviating part of the computational burden. We reconsider (governed) bisimulation and (governed) stuttering bisimulation,  and we give detailed proofs that these relations are equivalences, have unique quotients and they approximate the winning regions of parity games. Furthermore, we present game-based characterisations of these relations. Using these characterisations our equivalences are compared to relations for parity games that can be found in the literature, such as direct simulation equivalence and delayed simulation equivalence. To complete the overview we develop coinductive characterisations of direct- and delayed simulation equivalence and we establish a lattice of equivalences for parity games. 
\end{abstract}

\section{Introduction}

We study preorders and equivalences defined on \emph{parity games}. Such games are
turn-based graph games between two players taking turns pushing a
token along the vertices of a finitely coloured graph. These players,
called \even and \odd, strive to optimise the parity of the dominating
colour occurring infinitely often in a play.
Parity games appear in the core of various foundational results
such as Rabin's proof of the decidability of a monadic second-order
theory.  Solving parity games is a computationally expensive but
key step in many model checking algorithms~\cite{EJS:01,SS:98,Sti:99} and synthesis
and supervisory control algorithms~\cite{AVW:03,AW:08,FL:13}. 

Parity game solving enjoys a special status among combinatorial
optimisation problems, being one of the rare problems in the intersection
of the UP and coUP classes \cite{Jur:98} that is not known to be in P.
Despite the continued research effort directed to it,
resulting in numerous algorithms for solving parity games, see, \eg,~\cite{McN:93,Zie:98,VJ:00,PV:01,BSV:03,BV:05,Jur:00,JPZ:06,SS:98,Sch:07},
no polynomial time algorithm has yet been found.  

Orthogonally to
the algorithmic improvements, heuristics and static analyses have been devised that may
speed up solving, or fully solve parity games that occur in practice~\cite{FL:09,HKP:13,HKP:16}. Such
heuristics work particularly well for verification problems, which
give rise to games with only few different priorities.  
In a similar vein, heuristics based on the intimate ties between temporal
logics and bisimulation relations are often exploited to speed-up
model checking. First minimising a state space by computing the
equivalence quotient and only then analysing this quotient can be
an effective strategy, see \eg~\cite{KKZJ:07}. 

Given the close connection between parity game solving and model checking,
a  promising heuristic
in this setting is to minimise a game prior to solving
it.  Of course, this requires that the winning regions of the
original game can be recovered cheaply from the winning regions of the 
minimised game. Moreover,
minimisation makes sense only for equivalence relations
that strike a favourable balance between their power to compress
the game graph and the computational complexity of quotienting with
respect to the equivalence relation.  Indeed, in~\cite{CKW:11,KW:09}
we showed that quotienting using standard strong bisimilarity and
stuttering equivalence allow to solve parity games that could not
be solved otherwise. Despite the immense reductions that can be
obtained, the results were mixed and, apart from a number of cases
that become solvable, there was on average no clear gain from
using such relations.  It should be noted that the stuttering
equivalence experiments in \cite{CKW:11,CKW:12} were conducted using the Groote-Vaandrager
algorithm~\cite{GV:90} which runs in $\mathcal{O}(m n)$,
where $m$ is the number of edges and $n$ is the number of states.
A recent improvement on this algorithm, described in~\cite{GrW:16},
may very well mean the scale tips in favour of using stuttering
equivalence minimisation prior to solving a parity game, as experiments using this 
$\mathcal{O}(m \log{n})$ algorithm have shown speed-ups of
several orders of magnitude compared to the $\mathcal{O}(m n)$
algorithm. 

Similar observations can be made for \emph{governed
bisimilarity}~\cite{Kei:13} (also known as \emph{idempotence-identifying
bisimilarity} in~\cite{KW:09} and \emph{governed stuttering
bisimilarity}~\cite{CKW:12}, which weaken strong bisimilarity and
stuttering equivalence, respectively, by taking the potentials of
players into account. Quotienting for the latter relations relies
on the claim that the relations are equivalences.

As a side-note, simulation and bisimulation relations, tailored to
parity games, may lead to insights into the core of the parity game
solving problem. Indeed, in \eg~\cite{Jan:05}, Janin relies on
different types of simulations to provide uniform proofs when showing
the existence of winning strategies; at the same time he suggests
simulation relations may ultimately be used to solve games efficiently.

\paragraph{Contributions.}
In this paper,
we revisit the notions of (governed) bisimilarity and (governed) stuttering
bisimilarity for parity games from \cite{KW:09,Kei:13,CKW:11,CKW:12}.
We give formal proofs that they are indeed
equivalence relations and, equally important, that they approximate
the winning regions of a parity game, substantiating our claims in
the aforementioned papers. Showing that the relations are indeed
equivalence relations is technically rather involved, and slight
oversights are easily made, see~\eg~\cite{Bas:96}, and the added
complexity of working in a setting with two players complicates
matters significantly.

We furthermore study how our equivalence relations are related to
two other notions that have been studied in the context of parity
games, \viz{} \emph{direct simulation} and \emph{delayed
simulation}~\cite{FW:06} and the latter's \emph{even} and
\emph{odd}-biased versions.  A complicating factor is the fact that
these relations have only \emph{game-based} definitions, whereas
our equivalences are defined \emph{coinductively}. We mend this by
providing alternative coinductive definitions for direct simulation
and delayed simulation, inspired by~\cite{Nam:97},
and we show that these coincide with their
game-based definitions. Likewise, we give game-based definitions
for our coinductively defined relations, drawing inspiration
from~\cite{BKZ:07,YFHHT:14}, thereby offering a more
operational view on our relations.  

Finally, we show that, contrary
to (even- and odd-biased) delayed simulation equivalence, direct
simulation equivalence, governed bisimilarity and governed stuttering
bisimilarity have unique quotients.

\paragraph{Related work.}
In logic, bisimulation has
been used to characterise the subfamily of first-order logic that
is definable in modal logic~\cite{vBe:84}, and which fragment of
monadic second-order logic is captured by the modal $\mu$-calculus.
Bisimulation and simulation-like relations, called \emph{consistent
correlations}~\cite{Wil:10} and \emph{consistent consequence}~\cite{GW:12}
for PBESs, a fixpoint-logic based framework which is closely related
to parity games, were imperative to prove the soundness of the
syntax-based static analysis techniques described
in~\cite{OWW:09,OW:10,KWW:14,CGWW:15}. Various simulation relations have been
used successfully for minimising B\"uchi automata, see \eg~\cite{Cle:11,MC:13,EWS:05}.

In the context of process theory, there is an abundance of different
simulation and bisimulation relations, allowing to reason about the
powers of different types of observers of a system's behaviour,
see~\cite{vGla:90,vGla:93}.  Coinductive definitions of weak behavioural equivalences
such as stuttering equivalence (which is, essentially, 
the same as \emph{branching bisimulation} for labelled transition systems) are
commonplace, see~\cite{vGla:93} for a comprehensive overview.
Typically, these definitions rely on the transitive closure of the
transition relation. As argued by Namjoshi \cite{Nam:97}, local
reasoning typically leads to simpler arguments. He therefore
introduced well-founded bisimulation, a notion equivalent to
stuttering bisimulation which solely relies on local reasoning by
introducing a well-founded order into the relation. Still, at its
basis, well-founded bisimulation only serves to show the reachability
of some pair of related vertices. In our coinductive characterisation
of the delayed simulation of~\cite{FW:06}, we use Namjoshi's ideas. However, we need to
factor in that in delayed simulation each step on one side must be
matched by exactly one step on the simulating side.

There are only a few documented attempts that provide game-based definitions for
weak behavioural equivalences. Yin \etal describe branching bisimulation games for normed
process algebra \cite{YFHHT:14}. A game-based characterisation of
divergence-blind stuttering bisimulation was provided by Bulychev
\etal \cite{BKZ:07}. Neither of these definitions is easily
extended to the setting of governed stuttering bisimulation for
parity games. In particular, the latter definition is only sound
for transition systems that are free of divergences and requires a separate
preprocessing step to deal with these.
For governed stuttering bisimulation, it is unclear how the parity game
should be preprocessed, so instead we incorporate divergence into the game-based definition
as a first-class citizen.


%
%
%
%
%
%

\paragraph{Structure of the paper.} Parity games 
are introduced in Section~\ref{sec:parity_games}.
In Section~\ref{sec:notation}, we introduce notation that facilitates
us to define preorders and equivalences on parity games and we state
several basic results concerning this notation. A technical overview
of the relations studied in the remainder of the paper, and how
these are related is presented in Section~\ref{sec:lattice}. In
Section~\ref{sec:preorders} we study direct simulation, delayed
simulation and its biased versions and in Section~\ref{sec:bisimulations}
we study governed bisimulation and governed stuttering bisimulation.
Quotienting, for all involved equivalences that admit unique
quotients, is discussed in Section~\ref{sec:quotients} and in
Section~\ref{sec:comparison} we return to, and substantiate, the
overview we presented in Section~\ref{sec:lattice}. We wrap up with
conclusions and an outlook for future work in Section~\ref{sec:conclusions}.


\section{Parity Games}\label{sec:parity_games}

A parity game is a two-player graph game, played by two players \even and \odd (denoted $\even$ and $\odd$) on a total directed graph in which the vertices are partitioned into two sets, one for each player, and in which a natural priority is assigned to every vertex. The game is played by placing a token on some initial vertex, and if the token is on a vertex owned by player \even, then she moves the token to a successor of the current vertex (likewise for vertices owned by \odd).
The game is formally defined as follows.
\begin{definition}[Parity game]
  A parity game is a directed graph $(V, \to, \priosym, \getplayername)$,
  where
  \begin{itemize}
    \item $V$ is a finite set of vertices,
    \item ${\to} \subseteq V \times V$ is a total edge relation (\ie, for each 
    $v \in V$ there is at least one $w \in V$ such that $(v,w) \in {\to}$),
    \item $\priosym \oftype V \to \nat$ is a priority function that assigns
    priorities to vertices,
    \item $\getplayername \oftype V \to \{\even,\odd\}$ is a function assigning 
    vertices to players.
  \end{itemize}
\end{definition}
Instead of $(v,w) \in \to$ we typically write $v \to w$, and we write
$\post{v}$ for the set $\{w \in V ~|~ v \to w\}$.
If $\player$ is a player, then $\opponent{\player}$ denotes the
opponent of $\player$, \ie, $\opponent{\even}=\odd$ and $\opponent{\odd}=\even$.
The function $\getplayername$ induces a partitioning of $V$ into a set
of vertices $V_{\even}$ owned by player \even and a set of vertices $V_{\odd}$
owned by player \odd; we use $\getplayername$ and $V_{\even}, V_{\odd}$
interchangeably.
The \emph{reward order} on natural numbers is defined such that $n \prioordereq m$ if $n$ is even and $m$ is odd; or $n$ and $m$ are even and $n \leq m$, or $n$ and $m$ are odd and $m \leq n$. Note that $n \prioorder m$ means that $n$ is \emph{better than} $m$ for player \even. Notions $\min$ and $\max$ are always used with respect to the standard ordering on natural numbers.
Finally, we remark that the assumption that the edge relation is total only
serves to simplify the theory described in this paper. All results
can be generalised to deal with the situation in which one of the
players is unable to move.

\paragraph{Paths.}
A sequence of vertices $v_0 \ldots v_n$ for which
$v_m \to v_{m+1}$ for all $m < n$ is a \emph{path}. The concatenation 
$\pathvar[1] \pathconcat \pathvar[2]$ of paths $\pathvar[1]$ and $\pathvar[2]$ 
is again a path, provided there is
a step from the last vertex in $\pathvar[1]$ to the first vertex in 
$\pathvar[2]$. Infinite paths are
defined in a similar manner. We use $\ind{\pathvar}{j}$ to denote the 
$j^\textrm{th}$ vertex in a path $\pathvar$, counting from $0$. 
The set of paths of length $n$ 
starting in $v$ is defined inductively for $n \geq 1$ as follows:
\begin{align*}
\paths[1]{}{v} & \isdef \{ \path{v} \} \\
\paths[n+1]{}{v} & \isdef \{ \pathvar \pathconcat \path{ u } 
\mid \pathvar \in \paths[n]{}{v} \land \ind{\pathvar}{n} \to u \}
\end{align*}
The set of infinite paths starting in $v$ is denoted 
$\paths[\omega]{}{v}$, and the set of
both finite and infinite paths starting in $v$ is defined as follows:
\begin{equation*}
\paths{}{v} \isdef \paths[\omega]{}{v} \cup \bigcup_{n \in \nat} \paths[n]{}{v}
\end{equation*}

\paragraph{Plays and their winners.}
A game starts by placing a token on some vertex $v \in V$.  Players
move the token indefinitely according to the following simple rule:
if the token is on some vertex $v$, player $\getplayer{v}$ moves
the token to some vertex $w$ such that $v \to w$. The result is an
infinite path $\pathvar$ in the game graph; we refer to this infinite
path as a \emph{play}.  The \emph{parity} of the lowest priority
that occurs infinitely often on $\pathvar$ defines the \emph{winner}
of the play. If this priority is even, then player $\even$ wins,
otherwise player $\odd$ wins.

\paragraph{Strategies.}
A \emph{strategy} for player $\player$ is a partial function $\strategyname\oftype
V^{*\!\!}\to V$, that is defined only for paths ending in a vertex owned by
player $\player$ and determines the next vertex to be played onto. The set
of strategies for player $\player$ in a game $\game$ is denoted
$\memstrategy{\game,\player}$, or simply $\memstrategy{\player}$ if 
$\game$ is clear from the context.  If a strategy
yields the same vertex for every pair of paths that end in the same
vertex, then the strategy is said to be \emph{memoryless}. The set
of memoryless strategies for player $\player$ in a game $\game$ is
denoted $\strategy{\game,\player}$, abbreviated to $\strategy{\player}$
when $\game$ is clear from the context. A memoryless strategy
is usually given as a partial function $\strategyname\oftype V\to V\!$.

A strategy $\strategyname\in\memstrategy{\player}$ \emph{allows} a path $\pathvar$ of length $n$, denoted $\strategyname 
\allows \pathvar$, if and only if for all $j < n-1$ it is the case that if
$\strategyname$ is defined for $\path{\ind{\pathvar}{0} \ldots 
\ind{\pathvar}{j}}$, then $\ind{\pathvar}{j+1} = 
\strategyname(\path{\ind{\pathvar}{0} \ldots 
  \ind{\pathvar}{j}})$. The definition of consistency is 
extended to infinite paths in the obvious manner. We generalise the
definition of $\pathsname$ to paths allowed by a strategy $\strategyname$;
formally, we define:
\begin{equation*}
\paths[n]{\strategyname}{v} \isdef \{ \pathvar \in \paths[n]{}{v} \mid 
\strategyname \allows \pathvar \}
\end{equation*}
The definition for infinite paths is generalised in the same way and
denoted $\paths[\omega]{\strategyname}{v}$.
By $\paths{\strategyname}{v}$ we denote $\paths[\omega]{\strategyname}{v} \cup
\bigcup_{n \in \nat}
\paths[n]{\strategyname}{v}$, \ie, the set of all finite and infinite paths
starting in $v$ and allowed by $\strategyname$.

A strategy $\strategyname\in\memstrategy{\player}$ is said
to be a \emph{winning strategy} from a vertex $v$ if and only if $\player$ is the 
winner of every path allowed by $\strategyname$. A vertex is won by player $\player$ if $\player$
has a winning strategy from that vertex.

\paragraph{Solving parity games.}
It is well-known that parity games are determined, \ie that each vertex
in a game is won by exactly one player, and if a winning strategy for a
player exists from a vertex, then also a memoryless strategy exists.
This is summarised in the following theorem.

\begin{theorem}[Memoryless determinacy \cite{EJ:91}]
For every parity game there is a unique partition $(W_{\even}, W_{\odd})$
such that winning strategies $\strategyname_{\even} \in \memstrategy{\even}$
from $W_{\even}$ and $\strategyname_{\odd} \in \memstrategy{\odd}$ from
$W_{\odd}$ exist. Furthermore, if $\strategyname_{\player} \in \memstrategy{\player}$
is winning from $W_{\player}$ a memoryless strategy $\strategynamealt_{\player} \in
\strategy{\player}$ winning from $W_{\player}$ exists.
\end{theorem}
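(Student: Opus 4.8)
The plan is to prove both parts simultaneously by induction on the number $n$ of distinct priorities occurring in $\game$, mirroring Zielonka's recursive construction~\cite{Zie:98}: the construction is explicit and yields memoryless strategies, so memoryless determinacy drops out together with determinacy. In the base case $n = 1$ there is a single priority $d$, so every infinite play has $d$ as its lowest priority occurring infinitely often; hence player $\even$ wins from every vertex with an arbitrary strategy when $d$ is even, and player $\odd$ wins from every vertex when $d$ is odd. Either way the partition is trivial and any memoryless strategy is winning.

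For the inductive step, let $d$ be the minimal priority in $\game$ and let $\player$ be the player favoured by its parity (so $\player = \even$ iff $d$ is even). I would first recall the standard attractor apparatus: for $X \subseteq V$, the $\player$-attractor $\mathit{Attr}_{\player}(X)$ is the least set containing $X$, every $\player$-vertex with a successor in the set, and every $\opponent{\player}$-vertex all of whose successors are in the set; it carries a memoryless attractor strategy forcing every play from $\mathit{Attr}_{\player}(X)$ into $X$ in at most $|V|$ steps, and its complement $V \setminus \mathit{Attr}_{\player}(X)$ again induces a total game graph, \ie a genuine subgame. Now put $A = \mathit{Attr}_{\player}(\priosym^{-1}(d))$; the subgame on $V \setminus A$ has strictly fewer priorities, so by the induction hypothesis it has winning regions $W'_{\player}, W'_{\opponent{\player}}$ with memoryless winning strategies. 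If $W'_{\opponent{\player}} = \emptyset$ then $W_{\player} = V$: playing the attractor strategy on $A$ and the inductive strategy on $V \setminus A$, every resulting play is either eventually confined to $V \setminus A$ and won by $\player$ by the induction hypothesis, or visits $A$ --- hence a vertex of priority $d$ --- infinitely often, so the lowest priority occurring infinitely often on it is $d$ and $\player$ wins. Otherwise let $B = \mathit{Attr}_{\opponent{\player}}(W'_{\opponent{\player}})$, recurse on the strictly smaller subgame $V \setminus B$ to obtain $W''_{\player}, W''_{\opponent{\player}}$, and set $W_{\opponent{\player}} = W''_{\opponent{\player}} \cup B$ and $W_{\player} = W''_{\player}$. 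From $W_{\player}$, player $\player$ follows the deeper inductive strategy, which never leaves $V \setminus B$ --- since $V \setminus B$ is the complement of an $\opponent{\player}$-attractor, $\opponent{\player}$ cannot force the play into $B$ --- and is winning there by the induction hypothesis; from $W_{\opponent{\player}}$, player $\opponent{\player}$ uses the attractor strategy into $W'_{\opponent{\player}}$ on $B$ and the appropriate inductive strategy elsewhere, and one checks every play is either eventually confined to $W''_{\opponent{\player}}$ or returns to $B$, and thence to $W'_{\opponent{\player}}$, infinitely often --- in both cases won by $\opponent{\player}$ by the induction hypothesis. Termination of the recursion is immediate because $A \supseteq \priosym^{-1}(d) \neq \emptyset$, so each recursive call strictly shrinks the vertex set.

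Determinacy, and hence uniqueness of $(W_{\even}, W_{\odd})$, follows since no vertex is won by both players: composing a winning strategy of $\even$ with one of $\odd$ from a common vertex produces a single play that both players win, which is absurd. For the final clause, the construction has already exhibited, from $W_{\player}$, a memoryless strategy $\strategynamealt_{\player} \in \strategy{\player}$ winning from every vertex of $W_{\player}$; and by the first part $W_{\player}$ is exactly the set of vertices from which $\player$ has any winning strategy, so whenever some $\strategyname_{\player} \in \memstrategy{\player}$ wins from $W_{\player}$, this $\strategynamealt_{\player}$ witnesses the claim.

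The step I expect to be the main obstacle is the gluing argument in the inductive case: stitching together an attractor strategy and an inductively obtained memoryless strategy on complementary vertex sets must again produce a memoryless strategy that is winning, which requires carefully tracking which region a play occupies over time and arguing, for each player separately, that a play not eventually confined to the \emph{safe} inner region must revisit the attractor infinitely often, thereby inheriting either the appropriate parity or the inductive guarantee. The two attractor facts --- memoryless attractor strategies, and subgame-closure of the attractor's complement --- are routine but are exactly the lemmas on which the argument turns.
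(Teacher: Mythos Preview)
The paper does not prove this theorem: it is stated as a classical result attributed to~\cite{EJ:91}, with no argument supplied. Your Zielonka-style construction is a standard and correct route to memoryless determinacy, so there is nothing in the paper to compare against beyond the citation.

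One point in your sketch deserves tightening. You set up the induction on the number $n$ of distinct priorities, which covers the first recursive call on $V \setminus A$ (all priority-$d$ vertices lie in $A$, so the count drops), but the second call on $V \setminus B$ need not reduce $n$: priority-$d$ vertices may well survive there. You do observe that $B \supseteq W'_{\opponent{\player}} \neq \emptyset$ makes $V \setminus B$ strictly smaller, and you invoke this for termination, but ``strictly smaller'' is a vertex-count argument, so the induction hypothesis as phrased does not apply to that call. The usual fix --- and the one matching your own termination remark --- is to induct on $|V|$, or lexicographically on $(n,|V|)$. With that adjustment the argument goes through as written.
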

The problem of \emph{solving} a parity game is defined as the problem
of computing the winning partition $(W_{\even}, W_{\odd})$ of a
parity game.

\section{Notation}
\label{sec:notation}

In the remainder of this paper we frequently need to reason about
the concept of a player being able to force play towards a set of
vertices.  We introduce notation that facilitates such reasoning
and we provide some lemmata that express basic properties of parity
games in terms of this extended notation.
Throughout this section, we fix a parity game
$(V, \to, \priosym, \getplayername)$. Furthermore, we let
$T, U \subseteq V$ be subsets of vertices in the game.\medskip

Given a memoryless strategy $\strategyname$, we introduce a single-step relation
$\steps{\strategyname} \subseteq \to$ that contains only those edges allowed by $\strategyname$:%
\[
  \steps{\strategyname} \isdef \{ (v, u) \mid (v, u) \in \to \text{ and if } \strategyname(v) \text{ is defined } \strategyname(v) = u \}
\]
In line with $v \to u$, we write $v \steps{\strategyname} u$ if $(v, u) \in \steps{\sigma}$.
%
Abstracting from the specific strategy, we write $v \steps{\player} u$ iff
player $\player$ has a memoryless strategy $\strategyname$ such that
$v \steps{\strategyname} u$.

We introduce special notation to express which parts of the graph can be reached from a certain node. We use $v \stutters{U} T$ to denote that there is a finite path $\path{v_0 \ldots v_n}$, for some $n$, such that $v = v_0$, $v_n \in T$  and for all $j < n$, $v_j \in U$. Conversely, $v \stutdiverges{U}$ denotes the existence of an infinite path $\path{v_0\ v_1 \ldots}$ for which $v = v_0$ and for all $j$, $v_j \in U$.

We extend this notation to restrict this reachability analysis to plays that can be enforced by a specific player. We say that strategy $\strategyname$ forces the play from $v$ to $T$ via $U$, denoted $v \forces{\strategyname}{U} T$, if and only if for all plays $p$ starting in $v$ such that $\strategyname \allows p$, there exists an $n$ such that $\ind{p}{n} \in T$ and $\ind{p}{j} \in U$ for all $j<n$. Note that, in particular, $v \forces{\strategyname}{U} T$ if $v \in T$. Similarly, strategy $\strategyname$ forces the play to diverge in $U$ from $v$, denoted $v \diverges{\strategyname}{U}$, if and only if for all such plays $p$, $\ind{p}{j}\in U$ for all $j$.

Finally, if we are not interested in a particular strategy, but only in the \emph{existence} of a strategy for a player $\player$ via which certain parts of the graph are reachable from $v$, we replace $\strategyname$ by $\player$ in our notation to denote an existential quantification over memoryless strategies:
\begin{align*}
v\forces{\player}{U}T &\isdef \exists{\strategyname\in\strategy{\player}}{v \forces{\strategyname}{U} T} &
v\diverges{\player}{U} &\isdef \exists{\strategyname\in\strategy{\player}}{v \diverges{\strategyname}{U}}
\end{align*}
The lemma below shows that rather than using memoryless strategies, one may, if needed, use arbitrary strategies when reasoning about $v \forces{\player}{U} T$.
\begin{lemma}\label{lem:mem_vs_memless}
$\exists{\strategyname\in\strategy{\player}}{v \forces{\strategyname}{U} T}$ iff
$\exists{\strategyname\in\memstrategy{\player}}{v \forces{\strategyname}{U} T}$.
\end{lemma}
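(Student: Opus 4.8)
The plan is to prove the two directions separately, noting that one is trivial and the other requires a memory-erasure argument in the style of the standard proof of memoryless determinacy restricted to reachability objectives.

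\textbf{The easy direction.} Since every memoryless strategy is in particular a (memory-using) strategy, if $\strategyname \in \strategy{\player}$ witnesses $v \forces{\strategyname}{U} T$ then the same $\strategyname$, viewed as an element of $\memstrategy{\player}$, witnesses the right-hand side. So the implication from left to right is immediate from $\strategy{\player} \subseteq \memstrategy{\player}$ and the fact that the definition of $v \forces{\strategyname}{U} T$ does not depend on which set $\strategyname$ is drawn from.

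\textbf{The hard direction.} Assume $\strategyname \in \memstrategy{\player}$ with $v \forces{\strategyname}{U} T$; I want to extract a memoryless strategy. First I would observe that $v \forces{\strategyname}{U} T$ is a reachability-style winning condition: every $\strategyname$-consistent play from $v$ reaches $T$ while staying in $U$ until then. The clean way to argue is to note that forcing play to $T$ via $U$ from $v$ is exactly the statement that $v$ lies in the $\player$-attractor of $T$ relative to the subgame induced by $U \cup T$ (treating $T$ as an absorbing target); and attractor membership is witnessed by a memoryless \emph{attractor strategy}, which by construction drives any consistent play strictly down in attractor-rank and hence into $T$ within finitely many steps, never leaving $U$ before that. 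Concretely, the proof would: (i) show that if some (arbitrary-memory) strategy forces $v$ to $T$ via $U$, then $v$ is in the attractor $\attr{\player}{T}$ computed inside $U \cup T$ — this follows because the existence of \emph{any} forcing strategy means the opponent cannot avoid $T$, which is precisely the complement characterisation of the attractor; and (ii) exhibit the standard memoryless attractor strategy and verify it forces play to $T$ via $U$. The attractor and its strategy are defined later in the paper ($\attrsym$), so depending on placement one may instead give the argument directly: build the memoryless strategy by assigning to each vertex $w \in U$ from which $\player$ can force $T$ via $U$ a successor that decreases a well-founded rank (the least $n$ for which $w$ reaches $T$ via $U$ in $n$ steps under some strategy), and to vertices where $\player$ has no choice the forced successor; then any consistent play from $v$ has strictly decreasing rank until it hits $T$, staying in $U$ throughout.

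\textbf{Main obstacle.} The genuine content is step (i): translating "some memory-using strategy forces play to $T$ via $U$" into the purely positional statement of attractor membership (equivalently, into the well-foundedness of the reachability rank). This is the point where one must invoke determinacy of reachability games, or argue directly that if the opponent had a strategy escaping $T$-via-$U$ from $v$, playing it against $\strategyname$ would yield a $\strategyname$-consistent play violating $v \forces{\strategyname}{U} T$. Once $v$ is known to be in the attractor, the memoryless witness is routine. The bookkeeping subtlety to watch is the boundary case $v \in T$ (handled trivially, rank $0$) and the fact that vertices outside $U \cup T$ play no role, so the subgame restriction is well-defined even though the original game has a total edge relation.
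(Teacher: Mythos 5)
Your proposal is correct, but it takes a genuinely different route from the paper. The paper settles the hard direction by a reduction to parity-game memoryless determinacy: it makes every vertex of $T$ absorbing via a self-loop, reassigns priorities so that precisely the plays that reach $T$ are won by $\player$, observes that $v$ is then won by $\player$ in the transformed game because $v \forces{\strategyname}{U\setminus T} T$, and extracts the memoryless witness from the memoryless determinacy theorem quoted earlier, transferring it back to the original game. You instead stay inside reachability arguments: step (i) shows that the existence of \emph{any} (memoryful) forcing strategy puts $v$ in the generalised attractor $\battr{U}{\player}{T}$, by the contrapositive trap argument that outside the attractor the opponent can keep the play out of it (hence out of $T$) while in $U$, which played against $\strategyname$ would contradict $v \forces{\strategyname}{U} T$; step (ii) then takes the standard rank-decreasing attractor strategy, which is memoryless and forces $T$ via $U$ because all attractor vertices outside $T$ lie in $U$. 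Both arguments are sound. The paper's reduction is shorter because it leans on the determinacy theorem as a black box, whereas yours is elementary and self-contained, and it dovetails nicely with Lemma~\ref{lem:attractor_vs_forces}, proved immediately afterwards: your step (ii) is essentially the $\Leftarrow$ direction of that lemma, and your step (i) is its $\Rightarrow$ direction strengthened from memoryless to arbitrary strategies. A further small advantage of your route is that the ``via $U$'' constraint is built into the attractor natively; in the paper's transformed game, edges into $T$ are not removed, so a play may leave $U$ and still be won by entering the absorbing set $T$ later, and transferring the memoryless parity-winning strategy back to one witnessing $v \forces{\strategyname'}{U \setminus T} T$ requires a little extra care (for instance, also making vertices outside $U \cup T$ absorbing and losing) that your direct attractor argument sidesteps.
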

\begin{proof}
Observe that the implication from left to right holds by definition. So assume that
for some $\strategyname\in\memstrategy{\player}$, we have $v \forces{\strategyname}{U} T$.
Note that $v \forces{\strategyname}{U} T$ iff $v \forces{\strategyname}{U \setminus T} T$.
The truth value of the latter predicate does not depend on priorities of the vertices and only depends on the edges that originate in $U \setminus T$. Therefore, the truth value of this predicate will not change if we apply the following transformation to our graph:
\begin{itemize}
\item for all $u \in T$, replace all outgoing edges by a single edge $u \to u$.
\item set the priorities for all $u \in T$ to even iff $\player = \even$ and the priorities
of all other vertices to odd iff $\player = \even$.

\end{itemize}
Since $v \forces{\strategyname}{U \setminus T} T$, vertex $v$  is won by $\player$ in the resulting graph. As
parity games are memoryless determined, $\player$ must have a memoryless strategy
to move from $U \setminus T$ to $T$ in the resulting graph. Hence there is some $\strategyname' \in \strategy{\player}$
such that $v \forces{\strategyname'}{U \setminus T} T$ in the resulting graph,
but then also $v \forces{\strategyname'}{U \setminus T} T$ in the original graph, and
hence also the required $v \forces{\strategyname'}{U} T$.\qedhere
\end{proof}

The complement of these relations is denoted by a slashed version of the corresponding arrow, \eg, $\neg v \forces{\player}{U} T$ can be written $v \nforces{\player}{U} T$. We extend the transition relation of the parity game to sets and to sets of sets in the usual way, \ie, if $T$ is a set of vertices, and $\U$ is a set of vertex sets, then
\begin{align*}
v \to T &\isdef \exists{u\in T}{v \to u} &
v \to \U &\isdef v \to \bigcup\U 
\end{align*}
All other arrow notation is extended in the same way; if a set of sets $\U$ is given as a parameter, it is interpreted as the union of $\U$.\medskip

The notation $v \forces{\player}{U} T$ is closely related to the notion of \emph{attractor sets}~\cite{McN:93}. To formalise this correspondence, we generalise the attractor set definition along the lines of the generalisation used for the computation of the \emph{Until} in the alternating-time temporal logic ATL~\cite{AHK:02}.
\begin{definition}[Attractor set]
\label{def:attr}
We define $\battr{U}{\player}{T}$ as
$\battr[\omega]{U}{\player}{T}$ where:
\[
\begin{array}{lcl}
\battr[0]{U}{\player}{T} & \isdef & T \\
\battr[n+1]{U}{\player}{T} &\isdef & \battr[n]{U}{\player}{T}\\
& \cup & \{ v \in U \mid \getplayer{v} = \player \land \exists{v' \in \post{v}}{v' \in
\battr[n]{U}{\player}{T}} \} \\
& \cup & \{ v \in U \mid \getplayer{v} \neq i \land
\forall{v' \in \post{v}}{v' \in \battr[n]{U}{\player}{T}} \} \\
\end{array}
\]
\end{definition}
The attractor set as defined in~\cite{McN:93} is obtained for $U = V$.
In essence, the attractor set $\battr{U}{\player}{T}$ captures the subset of $U \mathop{\cup} T$ 
from which $\player$ can force the game to $T \subseteq V$, by staying within 
$U$ until $T$ is reached. Note that $\attrsym{}$ is a monotone operator; \ie 
for $T \subseteq T'$ we have
$\battr{U}{\player}{T} \subseteq \battr{U}{\player}{T'}$. The correspondence between our `forcing' arrow
notation and the (generalised) attractor is given by the following lemma.
\begin{lemma}
  \label{lem:attractor_vs_forces}
  Let $U,T \subseteq V$. Then $v \forces{\player}{U} T$ iff $v \in \battr{U}{\player}{T}$.
\end{lemma}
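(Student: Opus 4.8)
The plan is to prove both directions of the equivalence $v \forces{\player}{U} T \iff v \in \battr{U}{\player}{T}$ separately, exploiting the inductive/monotone structure of the attractor set. For the direction from right to left, I would induct on the least stage index $n$ at which $v$ enters $\battr[n]{U}{\player}{T}$. The base case $n = 0$ gives $v \in T$, and here $v \forces{\player}{U} T$ holds trivially (for any strategy, taking the witnessing index $0$). For the inductive step, $v \in \battr[n+1]{U}{\player}{T} \setminus \battr[n]{U}{\player}{T}$ means $v \in U$ and either $\getplayer{v} = \player$ with some successor $v' \in \post{v} \cap \battr[n]{U}{\player}{T}$, or $\getplayer{v} \neq \player$ with all successors in $\battr[n]{U}{\player}{T}$. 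In the first case, by the induction hypothesis there is a memoryless strategy $\strategyname'$ with $v' \forces{\strategyname'}{U} T$; extend it to $\strategyname$ by setting $\strategyname(v) = v'$ (this is well-defined and memoryless since $v \notin \battr[n]{U}{\player}{T}$, so $\strategyname'$ was not already constrained at $v$ in a conflicting way — one must be slightly careful here and instead take a uniform memoryless strategy obtained by choosing for each vertex in the attractor a successor decreasing the stage index). In the opponent case, every successor $v'$ satisfies the induction hypothesis, so one can amalgamate the corresponding memoryless strategies into one $\strategyname$ working from all successors simultaneously. The cleanest formulation is to define, once and for all, a single memoryless strategy $\strategyname_\player$ that at each $v \in \battr{U}{\player}{T} \setminus T$ owned by $\player$ picks a successor in the strictly lower attractor stage, and then argue by induction on the stage that $v \forces{\strategyname_\player}{U} T$ for every $v$ in the attractor.

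For the direction from left to right, suppose $v \forces{\player}{U} T$, witnessed by some memoryless strategy $\strategyname \in \strategy{\player}$ (by Lemma~\ref{lem:mem_vs_memless} it does not matter whether we take memoryless or general strategies, though memoryless is what we have). I would argue the contrapositive, or equivalently argue directly that every vertex from which $\strategyname$ forces the play to $T$ via $U$ lies in the attractor. Consider the set $X = \{ w \mid w \forces{\strategyname}{U} T \}$. I claim $X \subseteq \battr{U}{\player}{T}$. Suppose not, and pick $w \in X \setminus \battr{U}{\player}{T}$. Since $w \notin T$ (as $T \subseteq \battr{U}{\player}{T}$) and $w \forces{\strategyname}{U} T$, we must have $w \in U$. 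Now a stage argument: define $f(w)$ to be the minimal length over all $\strategyname$-plays from $w$ of the prefix needed to reach $T$ — this is finite for every $w \in X$ because $w \forces{\strategyname}{U} T$ means \emph{every} such play reaches $T$, and by König's lemma (the game graph is finite-branching) the supremum of these minimal reaching-times is finite, giving a well-founded rank on $X$. Pick $w \in X \setminus \battr{U}{\player}{T}$ of minimal rank. Its $\strategyname$-successors all lie in $X$ with strictly smaller rank, hence by minimality all lie in $\battr{U}{\player}{T}$. If $\getplayer{w} = \player$, the single successor $\strategyname(w)$ is in the attractor, and $w \in U$, so the first attractor clause puts $w$ in $\battr{U}{\player}{T}$ — contradiction. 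If $\getplayer{w} \neq \player$, then $w \steps{\strategyname} w'$ for \emph{every} $w' \in \post{w}$ (since $\strategyname$ does not constrain opponent vertices), so \emph{all} successors are in $X$ with smaller rank, hence all in the attractor, and the second attractor clause applies — again a contradiction. Therefore $X \subseteq \battr{U}{\player}{T}$, and since $v \in X$ we are done.

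I expect the main obstacle to be handling the strategy-amalgamation in the right-to-left direction cleanly: naively "extending" strategies from successors can create conflicts when the same vertex is reachable via multiple successors, so the robust approach is to fix one global memoryless strategy $\strategyname_\player$ up front (defined by the attractor stages: at each $\player$-owned vertex of stage $n+1$ pick a successor of stage $\leq n$) and prove $v \forces{\strategyname_\player}{U} T$ by induction on the stage. The subtlety there is that once a play following $\strategyname_\player$ reaches $T$ we stop caring, but before that it stays in $U$ by construction (every attractor vertex outside $T$ is in $U$) and the stage strictly decreases at $\player$-moves while not increasing beyond the attractor at opponent moves — so every play reaches $T$ within finitely many (at most $n$) steps while remaining in $U$. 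A second, minor technical point is the finiteness/well-foundedness argument for the rank function in the left-to-right direction; this relies on the game graph being finitely branching (true, since $V$ is finite), so that "every play reaches $T$" upgrades to "there is a uniform bound on the number of steps to reach $T$" via König's lemma applied to the tree of $\strategyname$-plays truncated at $T$.
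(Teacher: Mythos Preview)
Your proof is correct but takes a genuinely different route from the paper. The paper introduces an auxiliary bounded predicate $v \forces[n]{\player}{U} T$ (there exists a strategy such that every play of length $n{+}1$ has already reached $T$ via $U$) and proves, by a single induction on $n$, the level-wise equivalence $u \forces[n]{\player}{U} T \iff u \in \battr[n]{U}{\player}{T}$ for all $u$; both directions of the lemma fall out at once by passing to the limit. You instead treat the two directions separately: for $\Leftarrow$ you build one global memoryless attractor strategy (pick a stage-decreasing successor at each $\player$-vertex) and argue termination by stage; for $\Rightarrow$ you fix a witnessing strategy, define a rank on the set $X = \{w \mid w \forces{\strategyname}{U} T\}$ via K\"onig's lemma, and derive a contradiction from a rank-minimal element of $X \setminus \battr{U}{\player}{T}$. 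The paper's approach is more symmetric and yields the tight stage-by-stage correspondence as a byproduct; yours is more constructive on the attractor side and closer to the standard textbook treatment of attractors. One phrasing issue: you write ``the minimal length over all $\strategyname$-plays'' for $f(w)$, but what you actually need (and what your K\"onig's lemma remark shows you intend) is the \emph{supremum} over all $\strategyname$-plays of the first hitting time of $T$; with that reading, the claim that every $\strategyname$-successor has strictly smaller rank is correct.
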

\begin{proof} We first introduce some additional notation.
Let $v \forces[n]{\player}{U} T$ denote that there is a 
$\strategyname \in \strategy{\player}$ such that for all 
$\pathvar \in \paths[n+1]{\strategyname}{v}$,
there is some $m \le n$ such that $\ind{\pathvar}{m} \in T$,
and $\ind{\pathvar}{j} \in U$ for all $j < m$. Note that 
$v \forces[\omega]{\player}{U} T$ iff $v \forces{\player}{U} T$.
We can now use induction to prove $u \forces[n]{\player}{U} T$ iff 
$u \in \battr[n]{U}{\player}{T}$ for all $u$. The required property then follows.

The base case, $n = 0$, follows instantly. For $n = m+1$, our induction
hypothesis yields $u \forces[m]{\player}{U} T$ iff $u \in \battr[m]{U}{\player}{T}$ for
all $u$.
We distinguish two cases: $\getplayer{u} = \player$ and $\getplayer{u} \not=\player$.

Suppose $\getplayer{u} = \player$, and assume $u \in \battr[n+1]{U}{\player}{T}$.
Then, by definition,
$u \in \battr[n]{U}{\player}{T}$, or, since $\getplayer{u} = \player$, 
$u \in \{ v \in U \mid \exists{v' \in \post{v}}{v' \in \battr[n]{U}{\player}{T}} \}$.
But then this is equivalent to
\[
\tag{*}
\label{form_star}
u \in \battr[n]{U}{\player}{T} \text{ or }
v' \in \battr[n]{U}{\player}{T} \text{ for some $v' \in \post{u}$}.
\]
By our induction
hypothesis, \eqref{form_star} is equivalent to
$u \forces[n]{\player}{U} T$ or
$v' \forces[n]{\player}{U} T$ for some $v' \in \post{u}$. But since $\player$ then
has a strategy to play $u \to v'$, this is again equivalent to
\[
\tag{$\dagger$}
\label{form_dagger}
u \forces[n]{\player}{U} T \text{ or }
u \forces[n+1]{\player}{U} T
\]
Since $u \forces[n]{\player}{U} T$ implies $u \forces[n+1]{\player}{U}
T$ and  $u \forces[n+1]{\player}{U} T$ implies $u \forces[n]{\player}{U}
T$ or $u \forces[n+1]{\player}{U} T$, we find that~\eqref{form_dagger}
is equivalent to the desired $u \forces[n+1]{\player}{U} T$, which finishes
the proof for the case $\getplayer{u} = \player$. The case for $\getplayer{u}
\not= \player$ uses a similar line of reasoning; the only difference is
that we do not need to identify a strategy for player $\player$.\qedhere
\end{proof}

We are now ready to formalise some intuitions using our notation.
One of the most basic properties we expect to hold is that a player can force 
the play towards some given set of vertices, or otherwise her opponent can force 
the play to the complement of that set. 

In the following lemmas, let $v\in V$, $U,T,T'\subseteq V$ and $\player$ a player.

\begin{lemma}\label{lem:some-player-forces-simple} 
$v \forces{\player}{U} T \lor v \forces{\opponent{\player}}{U} V \setminus T.$
\end{lemma}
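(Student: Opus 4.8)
The plan is to prove the statement by relating the `forcing' notation to the attractor set via Lemma~\ref{lem:attractor_vs_forces}, and then showing that a vertex either lies in $\battr{U}{\player}{T}$ or its complement gives the opponent a way to force play into $V \setminus T$ while staying in $U$. Concretely, I would first apply Lemma~\ref{lem:attractor_vs_forces} to reduce the claim to the purely combinatorial statement that for every $v$, either $v \in \battr{U}{\player}{T}$ or $v \in \battr{U}{\opponent{\player}}{V \setminus T}$.

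To establish this combinatorial dichotomy, I would argue as follows. Let $A = \battr{U}{\player}{T}$, and consider an arbitrary $v \notin A$; I claim $v \in \battr{U}{\opponent{\player}}{V \setminus T}$. The key observation is a closure property of the complement of $A$: since $A$ is a fixpoint of the attractor operator (i.e.\ $\battr[n+1]{U}{\player}{T} = \battr[n]{U}{\player}{T}$ once stabilised), a vertex $v \in U \setminus A$ with $\getplayer{v} = \player$ must have \emph{no} successor in $A$ (otherwise $v$ would have been added), and a vertex $v \in U \setminus A$ with $\getplayer{v} = \opponent{\player}$ must have \emph{at least one} successor outside $A$. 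Moreover every $v \notin T$ that is not in $A$ is in particular such that $v \notin T$, so $V \setminus T \supseteq (V \setminus A) \setminus$(nothing relevant)\,; more precisely $T \subseteq A$, hence $V \setminus A \subseteq V \setminus T$. This means: from any $v \in (U \setminus A)$, the opponent owns it $\Rightarrow$ picks the successor outside $A$; the player $\player$ owns it $\Rightarrow$ \emph{all} successors are outside $A$. In either case play stays in $V \setminus A$, and if it ever leaves $U$ it has reached a vertex in $V \setminus A \subseteq V \setminus T$; and if it never reaches $V \setminus T$ then it diverges inside $U \setminus A$ — but then, crucially, every vertex visited is outside $T$, which already puts it in $V \setminus T$ only if... here I must be careful. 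The cleanest route is instead: observe $A$ is $\opponent{\player}$-closed in the sense that $\opponent{\player}$ can force the play, from any $v \notin A$, to stay forever in $V \setminus A$ (or leave $U$ into $V\setminus A \subseteq V\setminus T$). Since every vertex of $V \setminus A$ that the play passes through is in $V \setminus T$ as soon as it is \emph{not} in $A$... no: $V \setminus A \subseteq V \setminus T$ does give that every such vertex is outside $T$. So actually $V \setminus A$ consists entirely of vertices in $V \setminus T$, and from $v \notin A$ the opponent can keep the play in $V \setminus A$ (and hence trivially reaches $V \setminus T$ immediately, as $v$ itself is in $V \setminus T$). Thus $v \forces{\opponent{\player}}{U} (V \setminus T)$ holds trivially because $v \in V \setminus T$ — provided $v \notin T$, which follows from $v \notin A \supseteq T$.

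So the argument collapses pleasantly: if $v \notin \battr{U}{\player}{T}$, then since $T \subseteq \battr{U}{\player}{T}$ we get $v \notin T$, i.e.\ $v \in V \setminus T$, and then the base case of forcing ($v \in T' \Rightarrow v \forces{\player}{U} T'$, here with $T' = V \setminus T$) immediately gives $v \forces{\opponent{\player}}{U} (V \setminus T)$. Combined with Lemma~\ref{lem:attractor_vs_forces} for the other disjunct, this proves the lemma.

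I expect the main subtlety — and the one place where one could slip — is in identifying the \emph{right} combinatorial fact rather than a more elaborate strategy-stealing argument: it is tempting to build an explicit opponent strategy navigating $V \setminus A$, but the observation $T \subseteq \battr{U}{\player}{T}$ makes the negative branch immediate since $v \notin T$ is precisely what is needed for the trivial forcing. I would double-check the edge cases ($v \in T$, $v \in U$, $v \notin U$) and confirm that "$v \in T'$ implies $v \forces{\player}{U} T'$" is exactly the remark recorded just after the definition of $\forces{\strategyname}{U}{}$, so that no further work is needed for $U \not\ni v$. The only real content is invoking Lemma~\ref{lem:attractor_vs_forces} and the monotone-operator remark that $T \subseteq \battr{U}{\player}{T}$.
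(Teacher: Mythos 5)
Your proposal is correct and, once the exploratory detour collapses, it lands on exactly the paper's argument: from $v \nforces{\player}{U} T$ one gets $v \notin T$, so $v$ itself lies in $V \setminus T$ and $v \forces{\opponent{\player}}{U} V \setminus T$ holds trivially by the base case of forcing. The attractor machinery (Lemma~\ref{lem:attractor_vs_forces} together with $T \subseteq \battr{U}{\player}{T}$) is a harmless but unnecessary route to $v \notin T$, which the paper reads off directly from the definition.
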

\begin{proof}
We prove the equivalent $v \nforces{\player}{U} T \implies v \forces{\opponent{\player}}{U}
V \setminus T.$ Assume that $v \nforces{\player}{U} T$. Observe that it follows directly that
$v \not \in T$, and hence $v \in V \setminus T$. Therefore, we immediately find that $v \forces{\opponent{\player}}{U}
V \setminus T.$\qedhere
\end{proof}

In a similar train of thought, we expect that if from a single vertex, 
each player can force play towards some target set, then the players' target 
sets must overlap.
\begin{lemma}\label{lem:force-contradiction1}
$
\begin{array}[t]{l}
v \forces{\player}{U} T \land v \forces{\opponent{\player}}{U} T' \implies \\
\qquad \exists{u\in T,u'\in T'}{u = u' \lor u \in U \lor u' \in U}.
\end{array}$
\end{lemma}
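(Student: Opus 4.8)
The plan is to let the two strategies play against each other, obtaining a single infinite play on which both forcing conditions must hold, and then to extract the required overlap from that play.

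First I would use the definitions of $v \forces{\player}{U} T$ and $v \forces{\opponent{\player}}{U} T'$ to fix memoryless strategies $\strategyname \in \strategy{\player}$ with $v \forces{\strategyname}{U} T$ and $\strategynamealt \in \strategy{\opponent{\player}}$ with $v \forces{\strategynamealt}{U} T'$. (If one preferred to phrase the hypotheses with memory strategies, Lemma~\ref{lem:mem_vs_memless} would reduce matters to this case, but that is not needed.) Next I would construct an infinite play $\pathvar \in \paths[\omega]{}{v}$ consistent with both: since the vertex set is partitioned into the vertices owned by $\player$ and those owned by $\opponent{\player}$, and $\to$ is total, one can build $\pathvar$ step by step, extending a prefix ending in $w$ by $\strategyname(w)$ if $\getplayer{w} = \player$ and $\strategyname$ is defined at $w$, by $\strategynamealt(w)$ if $\getplayer{w} = \opponent{\player}$ and $\strategynamealt$ is defined at $w$, and by an arbitrary successor otherwise (one exists by totality). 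Because a move out of a vertex owned by $\player$ places no constraint on $\strategynamealt$ and vice versa, the resulting play satisfies $\strategyname \allows \pathvar$ and $\strategynamealt \allows \pathvar$.

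Then I would apply the two forcing conditions to $\pathvar$: from $v \forces{\strategyname}{U} T$ I get an index $n$ with $\ind{\pathvar}{n} \in T$ and $\ind{\pathvar}{j} \in U$ for all $j < n$, and from $v \forces{\strategynamealt}{U} T'$ an index $m$ with $\ind{\pathvar}{m} \in T'$ and $\ind{\pathvar}{j} \in U$ for all $j < m$. Taking $u = \ind{\pathvar}{n} \in T$ and $u' = \ind{\pathvar}{m} \in T'$, a case split finishes the proof: if $n = m$ then $u = u'$; if $n < m$ then $u = \ind{\pathvar}{n} \in U$; and if $m < n$ then $u' = \ind{\pathvar}{m} \in U$. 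In every case $u = u' \lor u \in U \lor u' \in U$, which is exactly the claim.

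I do not expect a genuine obstacle here. The only step that needs a little care is the construction of a single play that is simultaneously consistent with $\strategyname$ and $\strategynamealt$; this relies on each vertex being owned by exactly one player (so exactly one of the two strategies governs each move) together with totality of $\to$ (so the play never gets stuck). Everything after that is the immediate case analysis above.
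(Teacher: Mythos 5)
Your proof is correct and follows essentially the same route as the paper: fix strategies witnessing both forcing conditions, consider a single play allowed by both, obtain the two indices, and conclude by the same three-way case split. Your explicit step-by-step construction of the joint play merely spells out what the paper leaves implicit when it says to consider a play consistent with both strategies.
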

\begin{proof} 
Assume $v \forces{\player}{U} T \land v \forces{\opponent{\player}}{U} T'$. 
Then there must be strategies $\strategyname\in\strategy{\player}$ and 
$\strategyname'\in\strategy{\opponent{\player}}$ such that $v 
\forces{\strategyname}{U} T \land v \forces{\strategyname'}{U} 
T'$.
Let $\strategyname$ and $\strategyname'$ be such, and consider a play 
$\pathvar$ such that $\strategyname \allows \pathvar$ and $\strategyname' 
\allows \pathvar$. For this play, there must be $m$ and $n$ such that 
$\ind{\pathvar}{m} \in T \land \forall{j<m} \ind{\pathvar}{j} \in U$, and 
$\ind{\pathvar}{n} \in T' \land \forall{j<n} \ind{\pathvar}{j} \in U$. If 
$m=n$, then this witnesses $\exists{u\in T,u'\in T'} u=u'$. If $m<n$, then 
$\ind{\pathvar}{m} \in T \land \ind{\pathvar}{m} \in U$, and if $n<m$, then 
$\ind{\pathvar}{n} \in T' \land \ind{\pathvar}{n} \in U$. \qedhere\end{proof}

The above lemmata reason about players being able to reach sets of vertices. The 
following lemma is essentially about avoiding sets of vertices: it states that
if one player can force divergence within a set, then this is the same as saying that the 
opponent cannot force the play outside this set.

\begin{lemma}
  \label{lem:force-vs-div}
  \label{lem:force-contradiction2}
  $
  v \diverges{\player}{U}
  \iff
  v \nforces{\opponent{\player}}{U} V \setminus U
  $
\end{lemma}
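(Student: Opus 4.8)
The plan is to prove the two implications separately. Write $A \isdef \battr{U}{\opponent{\player}}{V \setminus U}$ for the $\opponent{\player}$-attractor of $V \setminus U$ within $U$; by Lemma~\ref{lem:attractor_vs_forces} we have $v \forces{\opponent{\player}}{U} V \setminus U$ iff $v \in A$, so the statement is equivalent to ``$v \diverges{\player}{U}$ iff $v \notin A$''. The implication $v \diverges{\player}{U} \implies v \notin A$ is the easy one: I would argue by contradiction, assuming $v \diverges{\strategyname}{U}$ for some $\strategyname \in \strategy{\player}$ and $v \forces{\strategyname'}{U} V \setminus U$ for some $\strategyname' \in \strategy{\opponent{\player}}$. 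Since $\strategyname$ and $\strategyname'$ govern disjoint sets of vertices, one can build an infinite play $p$ from $v$ allowed by both (at each vertex, follow the strategy of its owner, or take an arbitrary successor if that strategy is undefined there). Then $\strategyname$ forces $\ind{p}{j} \in U$ for every $j$, while $\strategyname'$ forces $\ind{p}{n} \in V \setminus U$ for some $n$ --- a contradiction.

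For the converse, $v \notin A \implies v \diverges{\player}{U}$, I would analyse the set $B \isdef V \setminus A$. Since $V \setminus U \subseteq A$ (it already belongs to stage $0$ of the attractor construction), $B \subseteq U$. From the fixpoint equation satisfied by $A$ --- for $w \in U$: $w \in A$ iff ($\getplayer{w} = \opponent{\player}$ and some successor of $w$ lies in $A$) or ($\getplayer{w} = \player$ and all successors of $w$ lie in $A$) --- I would derive, by negating it, that $B$ is closed in the following sense: every $w \in B$ owned by $\player$ has a successor in $B$, and every $w \in B$ owned by $\opponent{\player}$ has all of its successors in $B$ (a successor of $w$ outside $A$ cannot be in $V \setminus U$, hence lies in $U \setminus A = B$). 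I would then fix a memoryless strategy $\strategyname \in \strategy{\player}$ that selects such a $B$-successor at every $\player$-owned vertex of $B$; a routine induction shows that every $\strategyname$-allowed play starting in any $w \in B$ stays inside $B \subseteq U$ forever, so $w \diverges{\player}{U}$, and in particular $v \diverges{\player}{U}$.

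The step I expect to be the main obstacle is the bookkeeping in the converse direction: getting the quantifier alternation in the negated fixpoint equation exactly right, and exploiting that $A$ already contains all of $V \setminus U$ so that a successor lying outside $A$ automatically lies inside $U$ --- this is what makes $B$ trap plays inside $U$, rather than merely outside $A$. The corner case $v \notin U$ is harmless and worth noting explicitly: then $v \diverges{\player}{U}$ is vacuously false, and $v \in V \setminus U \subseteq A$ gives $v \nforces{\opponent{\player}}{U} V \setminus U$, so both sides fail. (As an alternative to the attractor argument, the lemma also follows in one stroke from memoryless determinacy, applied to the parity game obtained by turning every vertex of $V \setminus U$ into a self-loop and colouring the vertices of $U$ and of $V \setminus U$ with priorities whose parities favour $\player$ and $\opponent{\player}$ respectively --- the same device used in the proof of Lemma~\ref{lem:mem_vs_memless}.)
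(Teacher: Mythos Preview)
Your proposal is correct, and amusingly your parenthetical ``alternative'' at the end is exactly the argument the paper gives: it modifies the game by turning each vertex of $V \setminus U$ into a self-loop, assigns priorities so that $\player$ wins precisely the plays confined to $U$, and then appeals to determinacy to conclude that exactly one of $v \diverges{\player}{U}$ and $v \forces{\opponent{\player}}{U} V \setminus U$ holds. Your main argument is genuinely different and more elementary: by unpacking the attractor fixpoint you exhibit $B = V \setminus A$ directly as a $\player$-trap contained in $U$, which yields the diverging strategy without invoking determinacy at all. This buys independence from a fairly heavy theorem, at the price of a little explicit bookkeeping; the paper's route is shorter but leans on determinacy (which, to be fair, is already used elsewhere in the section). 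One small slip in your corner case: from $v \in V \setminus U \subseteq A$ you should conclude $v \forces{\opponent{\player}}{U} V \setminus U$, not $v \nforces{\opponent{\player}}{U} V \setminus U$; this makes the right-hand side of the biconditional \emph{fail}, which is exactly the ``both sides fail'' conclusion you then (correctly) draw.
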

\begin{proof}
Again, note that the truth values of $v \diverges{\player}{U}$ and $v \nforces{\opponent{\player}}{U} V \setminus U$ only depend on edges that originate in $U$, and that these truth values do not depend on priorities at all. Therefore, the truth value of these predicates will not change if we apply the following transformations to our graph:
\begin{itemize} 
\item For all $u \in V \setminus U$, replace all outgoing edges by a single edge $u \to u$.
\item Make the priorities of all vertices in $U$ such that they are even iff $\player=\even$, and the priorities of all other vertices odd iff $\player=\even$.
\end{itemize}
In the resulting graph, player $\player$ wins if and only if $v \diverges{\player}{U}$, and player $\opponent{\player}$ wins if and only if $v \forces{\opponent{\player}}{U} V \setminus U$. Since parity games are determined, \ie $v$ can only be won by one player, the desired result follows. \qedhere\end{proof}%

Next, we formalise the idea that if a player can force the play to a
first set of vertices, and from there he can force the play to a second set of
vertices, then he must be able to force the play to that second set. 
\begin{lemma}\label{lem:glue-strategies}
  $
  (v \forces{\player}{U} T \land \forall{u \in T}{u\forces{\player}{U} T'})
    \implies v \forces{\player}{U} T'
  $
\end{lemma}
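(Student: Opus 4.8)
The plan is to combine the two given forcing strategies into a single strategy. By Lemma~\ref{lem:mem_vs_memless} it suffices to produce \emph{some} (possibly memory-using) strategy $\strategyname^\ast\in\memstrategy{\player}$ with $v\forces{\strategyname^\ast}{U}T'$. From the hypothesis $v\forces{\player}{U}T$ pick a memoryless $\strategyname_0\in\strategy{\player}$ with $v\forces{\strategyname_0}{U}T$, and for each $u\in T$ pick a memoryless $\strategyname_u\in\strategy{\player}$ with $u\forces{\strategyname_u}{U}T'$. Define $\strategyname^\ast$ on a play-prefix as follows: follow $\strategyname_0$ until the first moment the token lands in $T$, say at vertex $u_1\in T$; from that point on follow $\strategyname_{u_1}$, except that whenever the token \emph{re-enters} $T$ at a new vertex $u_2$ we switch to $\strategyname_{u_2}$, and so on. (One has to be a little careful about what "first moment in $T$" means when $v$ itself is in $T$; then $\strategyname^\ast$ simply starts as $\strategyname_v$.)

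Next I would verify $v\forces{\strategyname^\ast}{U}T'$ directly from the definition of the forcing arrow: take any play $p$ with $\strategyname^\ast\allows p$ and exhibit an $n$ with $\ind{p}{n}\in T'$ and $\ind{p}{j}\in U$ for all $j<n$. Split into two cases. \emph{Case 1: the token enters $T$ only finitely often.} Then after the last switch, to $\strategyname_{u_k}$ at position $m$, the suffix $\ind{p}{m}\ind{p}{m+1}\ldots$ is a play consistent with $\strategyname_{u_k}$ starting in $u_k\in T$, so since $u_k\forces{\strategyname_{u_k}}{U}T'$ there is some $\ell$ with $\ind{p}{m+\ell}\in T'$ and $\ind{p}{m+j}\in U$ for $j<\ell$; meanwhile for $j<m$ we have $\ind{p}{j}\in U$ because the prefix up to the first entry in $T$ is an initial segment of a play consistent with $\strategyname_0$ witnessing $v\forces{\strategyname_0}{U}T$ (hence all its vertices before reaching $T$ lie in $U$), and between consecutive re-entries the segments are initial segments of plays consistent with the relevant $\strategyname_{u_i}$ witnessing $u_i\forces{\strategyname_{u_i}}{U}T'$ (note $T\subseteq U\cup T'$ is not needed since in this case the re-entry points $u_1,\ldots,u_{k-1}$ do precede $\ind{p}{m+\ell}$ and each lies in $U$ or else already equals a point of $T'$; more simply, each $u_i\in T$ and all vertices strictly before $\ind{p}{m+\ell}\in T'$ along the witnessing segment for $\strategyname_{u_{i}}$ are in $U$, including $u_{i}$ itself unless $u_i\in T'$ in which case we could have stopped earlier). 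So $n=m+\ell$ works. \emph{Case 2: the token enters $T$ infinitely often.} Then there is a first entry at some position $n$ with $\ind{p}{n}\in T$; but $v\forces{\strategyname_0}{U}T$ already gives $\ind{p}{j}\in U$ for all $j<n$. Here I would instead use the \emph{second} entry, or observe that from $\ind{p}{n}\in T$ we are following $\strategyname_{\ind{p}{n}}$, which forces $T'$ via $U$, so some later position is in $T'$ with all intervening vertices in $U$ — since $T\subseteq U\cup T'$ we must show the re-entry points themselves are in $U$; they are, because any vertex strictly before the first $T'$-hit along a $\strategyname_u$-witness lies in $U$.

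The one genuine subtlety — and the step I expect to be the main obstacle — is the bookkeeping in the case where the play oscillates in and out of $T$ without the current sub-strategy's witnessing segment having yet reached $T'$: one must argue that along the play consistent with $\strategyname_{u_i}$, every vertex occurring before the first $T'$-vertex is in $U$, so that re-entering $T$ is harmless (as $T\cap(V\setminus U)$ vertices, if any, would already be in $T'$ only by coincidence). The clean way to handle this is to strengthen the witnesses at the outset: replace each $\strategyname_u$ by one witnessing $u\forces{\strategyname_u}{U\setminus T'}T'$, which is equivalent to $u\forces{\strategyname_u}{U}T'$, so that along any consistent play the \emph{first} $T'$-vertex is well-defined and all strictly earlier vertices lie in $U\setminus T'\subseteq U$; the switch points $u_i$ are then automatically in $U$ unless $u_i\in T'$, in which case we are already done. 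With that normalisation the two cases above go through mechanically, and an appeal to Lemma~\ref{lem:mem_vs_memless} converts the resulting memory strategy back to a memoryless one, completing the proof. Alternatively, one can bypass all of this by the attractor characterisation of Lemma~\ref{lem:attractor_vs_forces} and a straightforward monotonicity/fixpoint argument: $\forall{u\in T}{u\in\battr{U}{\player}{T'}}$ gives $T\subseteq\battr{U}{\player}{T'}$, hence $\battr{U}{\player}{T}\subseteq\battr{U}{\player}{\battr{U}{\player}{T'}}$, and one shows $\battr{U}{\player}{\battr{U}{\player}{T'}}=\battr{U}{\player}{T'}$ (idempotence of the generalised attractor), so $v\in\battr{U}{\player}{T}\subseteq\battr{U}{\player}{T'}$ — I would actually prefer this route, as the idempotence of $\attrsym$ is a clean induction and sidesteps the strategy-stitching entirely.
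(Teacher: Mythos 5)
Your closing alternative is exactly the proof the paper gives: by Lemma~\ref{lem:attractor_vs_forces} the hypothesis yields $T \subseteq \battr{U}{\player}{T'}$, monotonicity gives $\battr{U}{\player}{T} \subseteq \battr{U}{\player}{\battr{U}{\player}{T'}}$, idempotence of the generalised attractor (which the paper likewise just asserts) collapses the right-hand side to $\battr{U}{\player}{T'}$, and then $v \in \battr{U}{\player}{T} \subseteq \battr{U}{\player}{T'}$ gives the claim via Lemma~\ref{lem:attractor_vs_forces} again. So your preferred route is sound and coincides with the paper's.

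Your primary plan --- stitching strategies with a switch at every re-entry into $T$ --- has a genuine flaw, however. Because you abandon $\sigma_{u_i}$ as soon as the token re-enters $T$, no sub-strategy is followed long enough for its guarantee $u_i \forces{\sigma_{u_i}}{U} T'$ to bite; the step in your Case~2 asserting that after the first entry into $T$ ``some later position is in $T'$'' tacitly assumes the rest of the play stays consistent with $\sigma_{\ind{p}{n}}$, which your switching rule destroys. Concretely, take vertices $a,b,x,y,c$, all owned by $\player$, edges $a\to x$, $a\to c$, $x\to b$, $b\to y$, $b\to c$, $y\to a$, $c\to c$, and $U=\{a,b,x,y\}$, $T=\{a,b\}$, $T'=\{c\}$, $v=a$. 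The memoryless strategies $\sigma_a$ with $\sigma_a(a)=x$, $\sigma_a(b)=c$ and $\sigma_b$ with $\sigma_b(b)=y$, $\sigma_b(a)=c$ witness the hypotheses ($v\in T$, and each of $a,b$ forces $T'$ via $U$), yet your combined strategy yields the unique play $(a\,x\,b\,y)^\omega$, which never reaches $T'$; the $U\setminus T'$ normalisation does not help, since it concerns membership in $U$, not termination. The repair is simple and keeps your argument a genuinely different, more operational proof than the paper's attractor computation: switch \emph{once}, at the first entry into $T$, say at $u_1$, and follow $\sigma_{u_1}$ forever after. If $u_1 \in T'$ you are done at that position; otherwise $u_1 \in U$ (as you correctly observe, any witness of $u_1 \forces{\sigma_{u_1}}{U} T'$ with $u_1\notin T'$ forces this), the prefix before $u_1$ lies in $U$ by the guarantee of $\sigma_0$, and the entire suffix from $u_1$ is consistent with $\sigma_{u_1}$, so it hits $T'$ with all intermediate vertices in $U$ --- later re-entries into $T$ are simply irrelevant. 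Lemma~\ref{lem:mem_vs_memless} then converts the resulting finite-memory strategy back to a memoryless one, as you intended.
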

\begin{proof}
By Lemma~\ref{lem:attractor_vs_forces}, $\forall{u \in T}{u \forces{\player}{U} T'}$ implies $T \subseteq \battr{U}{\player}{T'}$.
By monotonicity of $\attrsym{}$ we have $\battr{U}{\player}{T} \subseteq \battr{U}{\player}{\battr{U}{\player}{T'}}$.
Since $\battr{U}{\player}{\battr{U}{\player}{T'}} = \battr{U}{\player}{T'}$, we thus have $\battr{U}{\player}{T} \subseteq \battr{U}{\player}{T'}$.
By the same token, from $v \forces{\player}{U} T$ we find $v \in \battr{U}{\player}{T}$. 
Combined, we find $v \in \battr{U}{\player}{T'}$ which, by Lemma~\ref{lem:attractor_vs_forces}, yields the desired $v \forces{\player}{U} T'$. \qedhere\end{proof}
Finally, we state two results that relate a player's capabilities to reach a set of vertices
to the capabilities of the vertices that are able to leave a set of vertices in a single step.
\begin{lemma}\label{lem:exits}
Let $S = \{ u \in U ~|~ \post{u} \cap T \neq \emptyset \}$ and $v \notin T$. 
Then $v \forces{\player}{U} T$ implies
$V_{\player} \cap S \neq \emptyset$ or
for some $u \in S$,  $\post{u} \subseteq T$.
\end{lemma}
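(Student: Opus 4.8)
The plan is to translate the hypothesis into the attractor formulation via Lemma~\ref{lem:attractor_vs_forces}, and then inspect the \emph{first} level of the attractor fixed-point iteration at which a vertex outside $T$ is added; that vertex will be the witness we need.

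Concretely, first observe that by Lemma~\ref{lem:attractor_vs_forces} the hypothesis $v \forces{\player}{U} T$ gives $v \in \battr{U}{\player}{T}$. Since $\battr{U}{\player}{T} = \bigcup_{n} \battr[n]{U}{\player}{T}$, the sequence $\battr[0]{U}{\player}{T} \subseteq \battr[1]{U}{\player}{T} \subseteq \cdots$ is increasing (immediate from the definition), and $\battr[0]{U}{\player}{T} = T$ while $v \notin T$, there is a least index $n \geq 1$ with $\battr[n]{U}{\player}{T} \neq T$; for this $n$ we have $\battr[n-1]{U}{\player}{T} = T$, and we may pick some $u \in \battr[n]{U}{\player}{T} \setminus T$.

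Next, unfold the definition of $\battr[n]{U}{\player}{T}$ at this $u$. Because $u \notin T = \battr[n-1]{U}{\player}{T}$, the vertex $u$ must have been added by one of the two ``new'' clauses, so $u \in U$ and either (i) $\getplayer{u} = \player$ and $\post{u}$ meets $\battr[n-1]{U}{\player}{T} = T$, or (ii) $\getplayer{u} \neq \player$ and $\post{u} \subseteq \battr[n-1]{U}{\player}{T} = T$. In case (i), $u \in U$ with $\post{u} \cap T \neq \emptyset$ puts $u$ in $S$, and $\getplayer{u} = \player$ puts $u$ in $V_{\player}$, so $V_{\player} \cap S \neq \emptyset$. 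In case (ii), totality of $\to$ gives $\post{u} \neq \emptyset$, hence $\post{u} \cap T \neq \emptyset$ and $u \in S$, and $\post{u} \subseteq T$ is exactly the second disjunct. Either way the conclusion of the lemma holds.

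There is little real difficulty here; the only points requiring care are (a) arguing that the attractor chain really is stationary at $T$ below level $n$, so that $\battr[n-1]{U}{\player}{T}$ equals $T$ and not merely some superset of it, which is where the minimality of $n$ is used, and (b) remembering to invoke totality of the edge relation in case (ii) to conclude $\post{u} \cap T \neq \emptyset$. A more elementary alternative would avoid Lemma~\ref{lem:attractor_vs_forces} altogether: take a memoryless $\strategyname$ with $v \forces{\strategyname}{U} T$ and argue by contraposition that if no $\player$-vertex of $U$ had an edge into $T$ and every vertex of $S$ had an edge leaving $T$, then the opponent could keep any $\strategyname$-consistent play out of $T$ while it stays in $U$, contradicting $v \forces{\strategyname}{U} T$; this needs a little extra bookkeeping about plays that leave $U$, so the attractor route is the cleaner one.
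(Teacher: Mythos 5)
Your proposal is correct and takes essentially the same route as the paper's proof: both go through the attractor characterisation of Lemma~\ref{lem:attractor_vs_forces} and rest on the observation that $\battr{U}{\player}{T}$ can only grow beyond $T$ via a vertex of $S$ that is either owned by $\player$ or has all its successors in $T$. The paper phrases this contrapositively (if no such vertex exists, then $\battr{U}{\player}{T} = T$, contradicting $v \notin T$ and $v \forces{\player}{U} T$), whereas you extract the witness directly from the first iteration stage exceeding $T$; the difference is purely one of presentation.
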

\begin{proof} Assume $v \forces{\player}{U} T$ and suppose $V_{\player}
\cap S = \emptyset$. Assume that for all $u \in
S$, not $\post{u} \subseteq T$.  Then $\battr{U}{\player}{T} = T$
follows from $S \subseteq V_{\opponent{\player}}$.  Since $v \notin
T$ this implies $v \notin \battr{U}{\player}{T}$.  By
Lemma~\ref{lem:attractor_vs_forces} we then have $v \nforces{\player}{U} T$.
Contradiction. So there is some $u \in S$ such that $\post{u} \subseteq T$.\qedhere
\end{proof}

\begin{lemma}\label{lem:exits_restrict}
For $v \in U$, $v \forces{\player}{U} T$ implies $v \forces{\player}{U} T'$
whenever $T' \subseteq T$ and
$\post{w} \subseteq U \cup T'$ for all $w \in U$.
\end{lemma}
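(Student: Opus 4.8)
The plan is to recast the statement in terms of attractor sets via Lemma~\ref{lem:attractor_vs_forces} and argue by induction on the attractor level. Writing $v \forces{\player}{U} T$ as $v \in \battr{U}{\player}{T}$, I would aim to prove the slightly stronger containment $U \cap \battr{U}{\player}{T} \subseteq \battr{U}{\player}{T'}$. The lemma is then immediate: from $v \in U$ and $v \forces{\player}{U} T$ we get $v \in U \cap \battr{U}{\player}{T}$, hence $v \in \battr{U}{\player}{T'}$, \ie $v \forces{\player}{U} T'$.

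To prove the containment I would induct on $n$, establishing $U \cap \battr[n]{U}{\player}{T} \subseteq \battr{U}{\player}{T'}$ for every $n$. In the inductive step, take $u \in U$ that enters $\battr[n+1]{U}{\player}{T}$; the key observation is that $u \in U$ forces $\post{u} \subseteq U \cup T'$, so whichever successor or successors of $u$ witness its membership in $\battr[n+1]{U}{\player}{T}$ — one of them when $\getplayer{u} = \player$, all of them otherwise — each such successor lies in $T' \subseteq \battr{U}{\player}{T'}$ or in $U \cap \battr[n]{U}{\player}{T} \subseteq \battr{U}{\player}{T'}$ (induction hypothesis). Since $\battr{U}{\player}{T'}$ is closed under a single attractor step, the same ownership/quantification that placed $u$ in $\battr[n+1]{U}{\player}{T}$ also places it in $\battr{U}{\player}{T'}$. (Equivalently, one can argue at the level of plays: reuse a memoryless strategy $\strategyname \in \strategy{\player}$ with $v \forces{\strategyname}{U} T$ and note that along any $\strategyname$-play the first departure from $U$ goes into $T'$, so the first visit to $T$ is already a visit to $T'$.)

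The step I expect to be the main obstacle is the base case $n = 0$, \ie the claim $U \cap T \subseteq T'$: a forcing play may touch $T$ for the first time at a vertex that is still in $U$, and in general such a vertex need not lie in $T'$. This is exactly where one uses that $U$ and $T$ are disjoint (so $U \cap T = \emptyset \subseteq T'$), as is the case in every application of the lemma. Once the base case is in hand the inductive step is routine, relying only on the inclusion $\post{w} \subseteq U \cup T'$ for $w \in U$ and on monotonicity and closure of the attractor operator, in the same spirit as the proof of Lemma~\ref{lem:glue-strategies}.
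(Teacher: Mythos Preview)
Your primary approach (attractor induction on $n$) differs from the paper's. The paper takes exactly the play-level route you sketch parenthetically: fix $\strategyname$ witnessing $v \forces{\strategyname}{U} T$, take any $\strategyname$-allowed play from $v$, find an index $j > 0$ with $\ind{\pathvar}{j} \in T$ and $\ind{\pathvar}{k} \in U$ for $k < j$, and conclude from $\post{\ind{\pathvar}{j-1}} \subseteq U \cup T'$ that $\ind{\pathvar}{j} \in T'$. The play argument is a one-liner; your attractor induction is a bit heavier but makes the step structure explicit and ties the result directly to Lemma~\ref{lem:attractor_vs_forces}.

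More importantly, you are right to flag the base case. The lemma as stated is in fact false without $U \cap T = \emptyset$: take $V = U = T = \{v\}$ with a self-loop and $T' = \emptyset$; then $v \forces{\player}{U} T$ trivially but $v \nforces{\player}{U} \emptyset$. The paper's proof slides past this same point: asserting $j > 0$ needs $v \notin T$, and concluding $\ind{\pathvar}{j} \in T'$ from $\ind{\pathvar}{j} \in U \cup T'$ needs $\ind{\pathvar}{j} \notin U$. Both follow once $U \cap T = \emptyset$, and both applications of the lemma in the paper (Theorems~\ref{th:fmist2} and~\ref{th:fmist3}) indeed have $U = \class{v}{\R}$ and $T = V \setminus \class{v}{\R}$, so disjointness holds. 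Your identification of the missing hypothesis is correct and is needed for either argument to go through.
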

\begin{proof}
Choose $\sigma \in \strategy{\player}$
such that $v \forces{\strategyname}{U} T$, and let 
$\pathvar$ be a path for which $\ind{\pathvar}{0} = v \in U$ and
$\strategyname \allows \pathvar$. Then for some $j > 0$,
$\ind{\pathvar}{j} \in T$ and
$\ind{\pathvar}{k} \in U$ for all $k < j$. 
Since $\post{\ind{\pathvar}{j-1}} \subseteq
U \cup T'$, also $\ind{\pathvar}{j} \in T'$ and therefore
$v \forces{\strategyname}{U} T'$.  Thus
$v \forces{\player}{U} T'$.\qedhere
\end{proof}

\section{A Lattice of Parity Game Relations}
\label{sec:lattice}

In the rest of this paper, we study relations on parity games. 
We forego a formal treatment and present an overview of the studied relations and
how these are related in this section.
%
%

\subsection{Relations}
Let $\R$ be a relation over a set $V$, \ie ${\R} \subseteq {V \times V}$.
For $v, w \in V$ we write $v \R w$ to denote $(v, w) \in {\R}$. For a relation 
$\R$ and vertex $v \in V$ we define ${v \R} \isdef \{ w \in V \mid v \R w \}$, and 
likewise $\R v \isdef \{ w \in V \mid w \R v \}$. We also generalise this notation to sets of vertices
such that, for $U \subseteq V$, ${U \R} \isdef \bigcup_{u \in U} u \R$, and $\R U \isdef \bigcup_{u \in U} \R u$.

A relation $\R$ is a \emph{preorder} if it is reflexive and transitive. If, in addition, $\R$ is symmetric, then it is an \emph{equivalence} relation. Note that for an equivalence relation $\R$, and vertex $v \in V$, we have ${v \R} = {\R v}$. In this case we also write $\class{v}{\R} \isdef \{ v \in V \mid v \R w \}$, and call this the \emph{equivalence class} of $v$ under $\R$. By abuse of notation, for a subset $V' \subseteq V$, we write $\class{V'}{\R}$ for the set of equivalence classes with respect to $V$, \ie the set $\{\class{v}{\R} ~|~ v \in V' \}$. The set of equivalence classes of $V$ under $\R$ is denoted $\partition{V}{\R}$, and defined as $\{ \class{v}{\R} \mid v \in V \}$.

\subsection{Introducing a Lattice of Equivalences}
Preorders for parity games are particularly (and perhaps only)
interesting if they allow one to approximate the winning regions
of a game.  A preorder $\R$ approximates the winning region of a
game if, whenever $v \R w$, and player \even has a winning strategy
from $v$, then she also has a winning strategy from $w$.  For
equivalence relations, this requirement is stronger and often more
useful: we require that if $v \R w$, then \even has a winning
strategy from $v$ if and only if she has a winning strategy from
$w$.  The finest natural equivalence relation on $V$ is 
\emph{graph isomorphism}, denoted $\isomorphic$.
\begin{definition}[Isomorphism]
  Let $(V, \to, \priosym, \getplayername)$ be a parity game.
  Vertices $v,w \in V$ are \emph{isomorphic}, denoted $v \isomorphic w$
  iff $\phi(v) = w$ for some bijection $\phi \colon V \to V$ that satisfies, for all $\bar{v} \in V$:
  \begin{itemize}
    \item $\prio{\bar{v}} = \prio{\phi(\bar{v})}$,
    \item $\getplayer{\bar{v}} = \getplayer{\phi(\bar{v})}$, and
    \item $\bar{v} \to \bar{v}'$ if and only if $\phi(\bar{v}) \to \phi(\bar{v}')$.
  \end{itemize}
\end{definition}
The coarsest sensible equivalence on parity games is the
equivalence induced by the determinacy of parity games, \viz
the equivalence that exactly relates only and exactly those
vertices won by the same player.
\begin{definition}[Winner equivalence] 
  Let $(V, \to, \priosym, \getplayername)$ be a parity game. Vertices $v, w \in 
  V$ are \emph{winner equivalent}, denoted $v \winner w$ iff $v$ and 
  $w$ are won by the same player.
\end{definition}
Deciding winner equivalence of parity games is equivalent to paritioning the vertices
in the parity game into winning sets.\medskip

%
Winner equivalence and isomorphism are the extreme points in the lattice
of equivalence relations shown in Figure~\ref{fig:lattice}.
Between the extremal points in the lattice of Figure~\ref{fig:lattice} we list the other parity game equivalences that we study in more detail in the subsequent sections:
\begin{itemize}
  \item Strong bisimilarity ($\bisim$) \cite{Kei:13,KW:09}, see Section~\ref{sec:governed};
  \item Strong direct simulation equivalence ($\strongdirectsimeq$), see Section~\ref{sec:direct};
  \item Direct simulation equivalence ($\directsimeq$) \cite{FW:06,FW:02,GW:12}, see Section~\ref{sec:direct};
  \item Delayed simulation equivalence ($\delaysimeq$) \cite{FW:06}, see Section~\ref{sec:delayed};
  \item Delayed simulation equivalence, \even-biased ($\delaysimeeq$) \cite{FW:06}, see Section~\ref{sec:delayed_even};
  \item Delayed simulation equivalence, \odd-biased ($\delaysimoeq$) \cite{FW:06}, see Section~\ref{sec:delayed_odd};
  \item Governed bisimilarity ($\gov$) \cite{Kei:13,KW:09}, see Section~\ref{sec:governed};
  \item Stuttering bisimilarity ($\stut$) \cite{CKW:11}, see Section~\ref{sec:governed};
  \item Governed stuttering bisimilarity ($\gstut$) \cite{CKW:12}, see Section~\ref{sec:governed_stuttering};
\end{itemize}
In the lattice, an arrow from one equivalence to the other indicates that the first equivalence is finer than the latter. The number on an arrow refers to the theorem in this paper that claims this \emph{strictly finer-than} relation between the equivalences.
\medskip

\begin{figure}
\begin{center}
\parbox{.5\textwidth}
{
\begin{tikzpicture}[node distance=30pt]

  \node (isomorphism) {$\isomorphic$};
  \node [below of=isomorphism] (bisim) {$\bisim$};
  \node [below of=bisim] (fmib) {$\gov$};
  \node [below of=bisim, right of=bisim] (sdirsimeq) {$\strongdirectsimeq$};
  \node [below of=bisim, left of=bisim] (stut) {$\stut$};
  \node [below of=stut] (gstut) {$\gstut$};
  
  \node [below of=sdirsimeq] (dirsimeq) {$\directsimeq$};
  \node [below of=dirsimeq, left of=dirsimeq] (delaysimoeq) {$\delaysimoeq$};
  \node [below of=dirsimeq, right of=dirsimeq] (delaysimeeq) {$\delaysimeeq$};
  \node [below of=delaysimeeq, left of=delaysimeeq] (delaysimeq) {$\delaysimeq$};
  
  \node [below of=delaysimeq, left of=delaysimeq,xshift=-30pt] (winner) {$\winner$};
  
  \path[->] (isomorphism) edge node[left] {\scriptsize \ref{iso_lattice}} (bisim)
            (bisim) edge node[right] {\scriptsize \ref{strong_lattice}} (fmib)
            (bisim) edge node[above left] {\scriptsize \ref{strong_lattice}} (stut)
            (bisim) edge node[above right] {\scriptsize \ref{strong_lattice}} (sdirsimeq)
            (sdirsimeq) edge node[right] {\scriptsize \ref{strong_direct_lattice}} (dirsimeq)
            
            (fmib) edge node[below right] {\scriptsize \ref{governed_bisim_lattice}} (gstut)
            (fmib) edge node[above right] {\scriptsize \ref{governed_bisim_lattice}} (dirsimeq)
            (stut) edge node[left] {\scriptsize \ref{stut_lattice}} (gstut)
            (gstut) edge node[below left] {\scriptsize \ref{gstut_lattice}} (winner)
            (dirsimeq) edge node[above right] {\scriptsize \ref{direct_lattice}} (delaysimeeq)
            (dirsimeq) edge node[above left] {\scriptsize \ref{direct_lattice}} (delaysimoeq)
            (delaysimeeq) edge node[below right] {\scriptsize \ref{delay_lattice}} (delaysimeq)
            (delaysimoeq) edge node[above right] {\scriptsize \ref{delay_lattice}} (delaysimeq)
            (delaysimeq) edge node[below right] {\scriptsize \ref{delay_lattice}} (winner);

\end{tikzpicture}
}
\parbox{.45\textwidth}
{
\begin{enumerate}
  \item \label{iso_lattice} Theorem~\ref{thm:iso_refines_strong}, page~\pageref{thm:iso_refines_strong};
  \item \label{strong_lattice} Theorem~\ref{thm:strong_refines}, page~\pageref{thm:strong_refines};
  \item \label{strong_direct_lattice} Theorem~\ref{thm:strong_direct_refines}, page~\pageref{thm:strong_direct_refines};
  \item \label{governed_bisim_lattice} Theorem~\ref{thm:governed_bisim_refines}, page~\pageref{thm:governed_bisim_refines};
  \item \label{direct_lattice} Theorem~\ref{thm:direct_refines}, page~\pageref{thm:direct_refines}; 
  \item \label{delay_lattice} Theorem~\ref{thm:delay_refines}, page~\pageref{thm:delay_refines};
  \item \label{stut_lattice} Theorem~\ref{thm:stuttering_refines}, page~\pageref{thm:stuttering_refines};
  \item \label{gstut_lattice} Theorem~\ref{thm:gstut_refines}, page~\pageref{thm:gstut_refines}; 
\end{enumerate}
}
\caption{Lattice of equivalences for parity games. The numbers on the edges
refer to the legend shown to the right, which in turn refers to the theorems
that witness the existence of the edge.}
\label{fig:lattice}
\end{center}
\end{figure}

The original definitions of the equivalences listed above vary in nature. 
Strong-, governed-, stuttering-, and governed stuttering bisimilarity are defined coinductively, whereas direct simulation and all variations of delayed simulation are defined as simulation games. 
Furthermore, the direct- and delayed simulation games define a preorder, whereas the others define an equivalence relation; the preorders are lifted to equivalence relations in the standard way. 

\section{Direct and Delayed Simulation Equivalence}
\label{sec:preorders}

We introduce the direct simulation preorder and the induced direct simulation equivalence in Section~\ref{sec:direct}.
In Section~\ref{sec:delayed}, we recall the delayed simulation preorder, the induced delayed simulation equivalence and two \emph{biased} versions of the delayed simulation preorder and equivalence.
Throughout these sections, we assume that $\game = (V,\to,\priosym,\getplayername)$ is an arbitrary parity game.

\subsection{Direct Simulation and Direct Simulation Equivalence}
\label{sec:direct}

Direct simulation for parity games is one of the most basic preorders studied for parity games. 
It is difficult to trace the exact origins of the definition, but it was suggested (though not formally defined) in~\cite{FW:06} and appeared earlier in the setting of alternating B\"uchi automata~\cite{FW:02}. We here follow the game-based definition as given in~\cite{GW:12}.

\begin{definition}[Direct simulation game]
\label{def:direct-simulation-game}
The \emph{direct simulation game} is played on \emph{configurations} drawn from $V \times V$, and it is played in rounds.
A round of the game proceeds as follows:
\begin{enumerate}
  \item The players move from $(v, w)$ according to the rules in Table~\ref{tab:simulation_game_rules};
  \item Play continues in the next round from the newly reached position.
\end{enumerate}
An infinite play $(v_0, w_0), (v_1, w_1), \ldots$ is won by \duplicator if $\prio{v_j} = \prio{w_j}$ for all $j$,
\ie, \duplicator was able to mimic every move from \spoiler with a move to a vertex with equal priority. In all other cases \spoiler wins the play.
\end{definition}
We say that $v$ is \emph{directly simulated} by $w$, denoted $v \directsimg w$ whenever \duplicator has a winning strategy from $(v,w)$ in the direct simulation game.
\begin{table}[ht]
    \centering
    \begin{tabular}{cc||cc|cc}
        $\getplayer{v}$ & $\getplayer{w}$ & $1^\text{st}$ move & plays on & $2^\text{nd}$ move & 
        plays on \\
        \hline\hline
        $\even$ & $\even$ & $\spoiler$ & $v$ & $\duplicator$ & $w$\\
        $\even$ & $\odd$  & $\spoiler$ & $v$ & $\spoiler$ & $w$\\
        $\odd$  & $\even$ & $\duplicator$ & $w$ & $\duplicator$ & $v$\\
        $\odd$  & $\odd$  & $\spoiler$ & $w$ & $\duplicator$ & $v$
    \end{tabular} 
    \caption{Allowed moves in a (bi)simulation game.}
    \label{tab:simulation_game_rules}
\end{table}
\begin{example}
In the parity game in Figure~\ref{fig:directsim_example}, $v_0 \directsimg v_1$. Observe that from $(v_0, v_1)$, \duplicator can choose both successors in the direct simulation game. We do not have $v_1 \directsimg v_0$, since from configuration $(v_1, v_0)$, \spoiler's move $v_1 \to v_3$ cannot be matched from $v_0$. Note that additionally we have $v_0 \directsimg v_2$ and $v_2 \directsimg v_0$.
\begin{figure}
\begin{center}
\begin{tikzpicture}
  \node[odd,label=below:{$v_0$}] (v0) {1};
  \node[even, below of=v0,label=below:{$v_1$}] (v1) {1};
  \node[odd, right of=v0,label=below:{$v_2$}] (v2) {1};
  \node[even, right of=v1,label=below:{$v_3$}] (v3) {0};
  
  \path[->] (v0) edge (v2)
            (v1) edge (v2)
            (v1) edge (v3)
            (v2) edge[loop right] (v2)
            (v3) edge[loop right] (v3);
\end{tikzpicture}
\end{center}
\caption{Parity game with $v_0 \directsimg v_1$, $v_0 \directsimg v_2$, $v_2 \directsimg v_0$, and for all $v_i$, $v_i \directsimg v_i$.}
\label{fig:directsim_example}
\end{figure}
\end{example}
Direct simulation is a preorder: reflexivity is easily seen to hold (\duplicator can follow a copy-cat strategy), but transitivity is more involved. 
In the setting of alternating B\"uchi automata, direct simulation was shown to be transitive using strategy composition, see~\cite{FW:02,Fri:05}.
Following essentially the same technique one can show transitivity of direct simulation for parity games.
We use the direct simulation preorder to obtain direct simulation equivalence in the standard way.
\begin{definition}[Direct simulation equivalence {\cite{FW:06,GW:12}}]
    Vertices $v$ and $w$ are \emph{direct simulation equivalent}, denoted 
    $v \directsimeq w$, iff $v \directsimg w$ and $w \directsimg v$.
\end{definition}

The alternative coinductive definition of direct simulation which we present next, allows for a more straightforward proof of transitivity.
Our definition below was taken from~\cite{GW:16}, where it is also referred to as \emph{governed simulation}.
\begin{definition}[Direct simulation relation {\cite{GW:16}}]
\label{def:direct-simulation}
A relation ${\R} \subseteq V \times V$ is a \emph{direct simulation} if and only if $v \R w$ 
implies
\begin{itemize}
\item $\prio{v} = \prio{w}$;
\item if $v \in V_{\even}$, then for each $v' \in \post{v}$, $w \steps{\even} v' \R$;
\item if $v \in V_{\odd}$, then $w \steps{\even} \post{v} \R$.
\end{itemize}
We say that vertex $v$ is directly simulated by $w$, denoted $v \directsim w$, if and only if there is a direct simulation relation $\R$ such that $v \R w$.
\end{definition}

The theorem below states that the game-based and coinductive definitions of direct simulation coincide. 

\begin{theorem}
\label{thm:direct-sim-game-vs-coinductive}
For all $v, w \in V$, we have $v \directsim w$ if and only if $v \directsimg w$.
\end{theorem}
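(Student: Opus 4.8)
The plan is to prove the two inclusions $\directsim \subseteq \directsimg$ and $\directsimg \subseteq \directsim$ separately, exploiting the standard dictionary between a coinductively defined relation and its associated game.

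\emph{From the coinductive relation to the game.} Suppose $v \directsim w$, witnessed by a direct simulation $\R$ with $v \R w$. I would exhibit a strategy $\zeta$ for \duplicator in the direct simulation game started at $(v,w)$ and prove, by induction on the number of completed rounds, that $\zeta$ keeps every reached configuration $(v_j, w_j)$ inside $\R$. The base case is immediate. For the inductive step I fix a configuration $(v_j, w_j)$ with $v_j \R w_j$ and analyse one round according to the four rows of Table~\ref{tab:simulation_game_rules}: in each row, the relevant clause of Definition~\ref{def:direct-simulation} for $v_j \R w_j$ --- that is, $w_j \steps{\even} v'\R$ when $v_j \in V_{\even}$, and $w_j \steps{\even} \post{v_j}\R$ when $v_j \in V_{\odd}$ --- says precisely that, whatever moves \spoiler makes on the two tokens, \duplicator has a completing move keeping the new configuration in $\R$. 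Since $v_j \R w_j$ also forces $\prio{v_j} = \prio{w_j}$, every configuration along a $\zeta$-consistent play has equal priorities, so $\zeta$ is winning for \duplicator and hence $v \directsimg w$.

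\emph{From the game to the coinductive relation.} Here I would show that ${\R} := \{(v,w) \mid v \directsimg w\}$ is a direct simulation; since $\directsim$ is the union of all direct simulations, this immediately gives $v \directsimg w \implies v \directsim w$. Fix $v \R w$ and a \duplicator strategy $\zeta$ winning from $(v,w)$; note that the residual of $\zeta$ after any $\zeta$-consistent play prefix ending in a configuration $c$ is winning from $c$, so membership in $\R$ propagates along $\zeta$-plays. First, $\prio{v} = \prio{w}$, since otherwise \spoiler wins already at the initial configuration. For the two move clauses I again split on the owners of $v$ and $w$: when $v \in V_{\even}$, \spoiler may push the $v$-token to any $v' \in \post{v}$, after which the $w$-token is moved either by \duplicator (who, following $\zeta$, picks some $w' \in \post{w}$) or by \spoiler (who may pick any $w' \in \post{w}$); in either case the resulting configuration is still won by \duplicator, so $v' \R w'$, which is exactly what $w \steps{\even} v'\R$ demands. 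The case $v \in V_{\odd}$ is handled symmetrically and yields $w \steps{\even} \post{v}\R$.

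The main obstacle is the book-keeping in the case analysis when $v$ and $w$ have different owners (rows $2$ and $3$ of Table~\ref{tab:simulation_game_rules}), where both token moves of a round fall to the same meta-player. The point that must be handled carefully is that the one-step modality $\steps{\even}$ of Definition~\ref{def:direct-simulation} is the \emph{controllable} one-step predecessor: \eg{} $w \steps{\even} v'\R$ with $w \in V_{\odd}$ means that \emph{every} successor of $w$ lies in $v'\R$, which matches the fact that in those rows it is \spoiler, and not \duplicator, who resolves the move on $w$. With that reading each clause of Definition~\ref{def:direct-simulation} corresponds to exactly one row of the table, and both implications go through. A minor point worth recording is that \duplicator's winning condition in this game is a safety condition, hence the game is (positionally) determined, so ``being won by \duplicator'' is a property of the current configuration alone --- which is what legitimises defining $\R$ from winning configurations.
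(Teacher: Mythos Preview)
Your proposal is correct and follows essentially the same approach as the paper: in both directions you and the paper maintain the invariant that the current configuration lies in the relation, doing the same case split on the owners of $v$ and $w$ (the four rows of Table~\ref{tab:simulation_game_rules}) and reading the clauses of Definition~\ref{def:direct-simulation} as exactly the one-round transfer conditions. Your explicit remarks about $\steps{\even}$ being the controllable predecessor and about the safety winning condition justifying that ``won by \duplicator'' is configuration-local are a bit more careful than the paper's presentation, but they do not change the structure of the argument.
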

\begin{proof}
  We prove both implications separately.
  \begin{itemize}
    \item [$\Rightarrow$] We prove that from a pair of vertices $v \directsim w$ the game can always be played such that we again end up in related vertices after one round in the direct simulation game. Since $\prio{v} = \prio{w}$, it immediately follows that \duplicator has a winning strategy in the direct simulation game, and hence $v \directsimg w$.
    
    Let $v,w$ be such that $v \directsim w$. We distinguish two cases:
    \begin{itemize}
      \item $v \in V_{\even}$. Then \spoiler first chooses some $v \to v'$. Since $v \directsim w$, we know that $w \steps{\even} v' \directsim$, hence if $w \in V_{\even}$, \duplicator can choose a successor $w \to w'$ such that $v' \directsim w'$, and if $w \in V_{\odd}$, all successors $w \to w'$ that \spoiler may choose are such that $v' \directsim w'$.
      
      \item $v \in V_{\odd}$. Then $w \steps{\even} \post{v} \directsim$, so if $w \in V_{\even}$, there exist $w \to w'$ and $v \to v'$ such that $v' \directsim w'$, and \duplicator can play such that he picks those. In case $w \in V_{\odd}$, then for all $w \to w'$ there exists a $v \to v'$ such that $v' \directsim w'$. Since \spoiler plays first on $w$, \duplicator can match with the appropriate $v \to v'$.
    \end{itemize}
    
    \item [$\Leftarrow$] We prove that $\directsimg$ is a direct simulation relation. 
    
    Let $v,w$ be arbitrary such that $v \directsimg w$. Consider a winning strategy for \duplicator in the direct simulation game from $(v, w)$. Observe that $\prio{v} = \prio{w}$ follows trivially. We again distinguish two cases:
     \begin{itemize}
     \item $v \in V_{\even}$. Then \spoiler first chooses some $v \to v'$. If $w \in V_{\even}$, then \duplicator matches this with some $w \to w'$ according to his strategy. Since \duplicator's strategy is winning from $(v,w)$, it is also winning from $(v',w')$, hence $v' \directsimg w'$.
    If $w \in V_{\odd}$, any choice of successor $w \to w'$ that \spoiler makes is such that $v' \directsimg w'$. In both cases we have shown $w \steps{\even} v' \directsimg$.
    
    \item $v \in V_{\odd}$. If $w \in V_{\even}$, \duplicator plays both on $v$, say $v \to v'$, and on $w$, say $w \to w'$, such that $v' \directsimg w'$. If $w \in V_{\odd}$, \spoiler plays first on $w$, say $w \to w'$, and given this choice, \duplicator responds according to his strategy with $v \to v'$ such that $v' \directsimg w'$. Since \duplicator's strategy is winning, he is able to match any choice made by \spoiler. In both cases the strategy witnesses $w \steps{\even} \post{v} \directsimg$. 
\qedhere
   \end{itemize}
  
  \end{itemize}
\end{proof}

Given that both definitions of direct simulation coincide, we can now say that $v$ and $w$ are \emph{direct simulation equivalent}, \ie $v \directsimeq w$, if and only if $v \directsim w$ and $w \directsim v$. 
\begin{proposition}
\label{prop:preorder_directsim}
The relations $\directsim$ and $\directsimeq$ are a preorder and an equivalence relation, respectively. Moreover, $\directsim$ itself is a direct simulation relation.
\end{proposition}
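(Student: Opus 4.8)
The plan is to prove everything from the coinductive Definition~\ref{def:direct-simulation}, using the fact that $\directsim$ is, by construction, the union of all direct simulation relations. I would start with the ``moreover'' part, since it is the cleanest and is reused afterwards: an arbitrary union of direct simulation relations is again a direct simulation relation. This is a routine monotonicity check --- the priority clause $\prio{v}=\prio{w}$ does not mention the relation, and the two successor clauses are monotone in the relation, because enlarging $\R$ only enlarges the sets $v'\R$ and $\post{v}\R$, and $w \steps{\even}(\cdot)$ is monotone in its set argument. Hence if $v\R_0 w$ for some member $\R_0$ of the family, all three clauses already hold with $\R_0$, and therefore with the union. Consequently $\directsim$ is itself a direct simulation relation, and in fact the greatest one with respect to $\subseteq$.

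For reflexivity I would exhibit the identity relation $I=\{(v,v)\mid v\in V\}$ as a direct simulation. The priority clause is trivial. If $v\in V_{\even}$, then every $v'\in\post{v}$ is related to itself and $v\to v'$, so $v \steps{\even} v'I$; if $v\in V_{\odd}$, totality of $\to$ provides some $v'\in\post{v}$, and $v\to v'\in\post{v}I$, so $v \steps{\even}\post{v}I$. Hence $I\subseteq\directsim$, i.e.\ $\directsim$ is reflexive.

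Transitivity is the substantial step. I would show that the composition $\RnaS$ of two direct simulation relations $\R$ and $\S$ is again a direct simulation relation; applied with $\R=\S=\directsim$ and combined with maximality, this yields $\directsim\circ\directsim\subseteq\directsim$. So let $v\R u$ and $u\S w$; the priority clause follows from $\prio{v}=\prio{u}=\prio{w}$, and for the successor clauses one chases a move on the $v$-side across to the $w$-side through $u$: a successor $v'$ of $v$ is matched via $v\R u$ by a successor $u'$ of $u$ with $v'\R u'$, and this $u'$ is in turn matched via $u\S w$ by a successor $w'$ of $w$ with $u'\S w'$, whence $v' \mathrel{(\R\circ\S)} w'$. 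Carrying out this bookkeeping correctly requires a case distinction on the owners of $v$, $u$ and $w$ --- that is, on whether each relevant successor is chosen existentially or universally --- and reconciling these quantifier alternations through the intermediate vertex $u$ is the one genuinely delicate point, and hence the main obstacle. (Alternatively one could transport transitivity across Theorem~\ref{thm:direct-sim-game-vs-coinductive} and compose \duplicator's winning strategies in the direct simulation game, mirroring the argument for alternating B\"uchi automata, but the coinductive route above is the ``more straightforward'' one advertised earlier.)

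Finally, for $\directsimeq$: by definition $v\directsimeq w$ iff $v\directsim w$ and $w\directsim v$, so symmetry is immediate, reflexivity is inherited from reflexivity of $\directsim$, and transitivity follows since $v\directsimeq u$ and $u\directsimeq w$ give $v\directsim u\directsim w$ and $w\directsim u\directsim v$, whence $v\directsim w$ and $w\directsim v$ by transitivity of $\directsim$. Thus $\directsimeq$ is an equivalence relation.
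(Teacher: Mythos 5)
Your proposal is correct and follows essentially the same route as the paper, whose entire proof is the observation that the composition $R \circ S$ of two direct simulation relations is again a direct simulation relation (with reflexivity, closure under union, and the lifting to $\directsimeq$ left implicit). You simply spell out those standard ingredients explicitly, and your sketch of the owner-by-owner chase through the intermediate vertex, using monotonicity of $w \steps{\even}(\cdot)$ in its target set, is the right way to discharge the case analysis the paper dismisses with ``one can check''.
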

\begin{proof} One can check that for direct simulation relations $R$ and $S$, the relation $R \circ S$, defined as $v \mathrel{(R \circ S)} w$ iff there is some $u$ such that $v \R u$ and $u \S w$, is again a direct simulation relation. \qedhere\end{proof}

\paragraph{Strong Direct Simulation.} If we impose an additional
constraint on direct simulation, \viz we do not allow to relate
vertices owned by different players, we obtain a notion that resembles
\emph{alternating refinement}~\cite{AHKV:98}. Clearly, this notion again
is a preorder. We write $v \strongdirectsim w$ iff there is some
\emph{strong direct simulation} relation that relates $v$ and $w$, and we
write $v \strongdirectsimeq w$ iff $v \strongdirectsim w$ and $w
\strongdirectsim v$. Note that in the parity game in Figure~\ref{fig:directsim_example}, 
we still have $v_0 \strongdirectsimeq v_2$, but $v_0 \not \strongdirectsim v_1$.

\subsection{Delayed simulation}
\label{sec:delayed}

Direct simulation equivalence is limited in its capability to relate
vertices.  The reason for this is that in each step of the simulation
game, \duplicator is required to match with a move to a vertex with
exactly the same priority.  Following Etessami \etal~\cite{EWS:05},
in~\cite{FW:06}, a more liberal notion of simulation called \emph{delayed simulation}
is considered. In this notion matching may be delayed.  The idea
is that in the winning condition of a play of a parity game, only
the priorities that occur infinitely often are of importance.
Therefore, intuitively it is allowed to delay matching a given
(dominating) priority for a finite number of rounds. 

The delayed simulation game is, like the direct simulation game,
played on an arena consisting of configurations that contain a pair
of vertices.  These configurations are now extended with a third
parameter which is used to keep track of the obligation that still
needs to be met by \duplicator.

An obligation in a delayed simulation game is either a natural number or the symbol $\checkmark$; the latter is used to indicate the absence of obligations.
We denote the set of obligations by $K$.  
Given two priorities and an existing obligation, a new obligation is obtained using the function $\update \colon \nat \times \nat \times K \to K$, where:
\begin{align*}
\update(n,m,\checkmark) & = \begin{cases}
  \checkmark & \text{ if } m \prioordereq n\\
  \min\{n,m\} & \text{otherwise}
\end{cases}\\
\update(n,m,k) & = \begin{cases}
  \checkmark & \text{if }m\prioordereq n \text{ and} \begin{cases}
    n \text{ odd and } n \leq k,\text {or}\\
    m \text{ even and } m \leq k
  \end{cases}\\
  \min\{ n,m,k \} & \text{otherwise}
\end{cases}
\end{align*}
By abuse of notation we will typically write 
$\update(v,w,k)$, for vertices $v,w \in V$ and obligation $k \in K$, to denote $\update(\prio{v}, \prio{w}, k)$.\medskip

The intuition behind this update is as follows. Either, with the new priorities, the pending obligation is fulfilled, and the new configuration does not give rise to a new obligation, in which case the result is $\checkmark$. Otherwise the new obligation is the minimum of the priorities passed to the function, or the current obligation. This signifies the most significant obligation that still needs to be met by \duplicator.
Note that there are two ways to fulfil the pending obligation. Either the first argument renders the pending obligation superfluous since it is smaller and odd, or the second argument is such that it matches the pending obligation.

%

\begin{definition}[Delayed simulation game {\cite{FW:06}}]
\label{def:delay_sim_game}
A \emph{delayed simulation game} is a game played by players \spoiler and \duplicator on an arena consisting of positions drawn from $V \times V$ and obligations taken from $K$.
The game is played in rounds. Assuming $(v, w)$ is the current position, and $k$ the current obligation, a round of the game proceeds as follows:
\begin{enumerate}
  \item \spoiler and \duplicator propose moves $v \to v'$ and $w \to w'$ according to the rules in Table~\ref{tab:simulation_game_rules}.
  \item The game continues from $(v', w')$ with obligation $\update(v', w', k)$.
\end{enumerate}
An infinite play $(v_0, w_0, k_0), (v_1, w_1, k_1), \ldots$ is won by \duplicator iff $k_j = \checkmark$ for infinitely many $j$. This means 
that \duplicator was always able to eventually fulfil all pending obligations. In all other cases \spoiler wins the game.
\end{definition}
We say that $v$ is \emph{delayed simulated} by $w$, denoted $v \delaysim w$ just whenever \duplicator has a winning strategy from $(v,w)$ with obligation $\update(v, w, \checkmark)$ in the delayed simulation game.
\begin{example}
In the parity game in Figure~\ref{fig:delaysim_example}, $v_i \delaysim v_j$ for all $i,j$. Observe that, with respect to direct simulation, vertices cannot be related to each other.
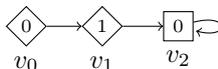
\begin{figure}
\begin{center}
\begin{tikzpicture}
  \node[even,label=below:{$v_0$}] (v0) {0};
  \node[even, right of=v0,label=below:{$v_1$}] (v1) {1};
  \node[odd, right of=v1,label=below:{$v_2$}] (v2) {0};
  
  \path[->] (v0) edge (v1)
            (v1) edge (v2)
            (v2) edge[loop right] (v2);
\end{tikzpicture}
\end{center}
\caption{Parity game in which all vertices are delay simulation equivalent.}
\label{fig:delaysim_example}
\end{figure}
\end{example}
Delayed simulation is, like direct simulation, a preorder.
The proof thereof is substantially more involved than the proof that direct simulation is a preorder, requiring an analysis of 24 cases, some of which are rather intricate.
For details, we refer to~\cite{Fri:05}; we here only repeat this result.
\begin{proposition}
The relation $\delaysim$ is a preorder.
\end{proposition}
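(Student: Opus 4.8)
The plan is to establish the two defining properties of a preorder, reflexivity and transitivity, separately; reflexivity is routine, while transitivity is the real work.

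\textbf{Reflexivity.} I would fix a vertex $v$ and have \duplicator play the \emph{copy-cat} strategy in the game from $(v,v)$: whenever \spoiler moves on one component, \duplicator replies with the identical move on the other. Since in every configuration reached this way the two components carry the same vertex $u$, only the $\even\even$ and $\odd\odd$ rows of Table~\ref{tab:simulation_game_rules} are ever in force, so the copy-cat reply is always legal. A straightforward induction then shows that every reached configuration has the form $(u,u,\checkmark)$: indeed $\prio{u}\prioordereq\prio{u}$, hence $\update(u,u,\checkmark)=\checkmark$. Thus $k_j=\checkmark$ for all $j$ and \duplicator wins; since also $\update(v,v,\checkmark)=\checkmark$, this shows $v\delaysim v$.

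\textbf{Transitivity.} Assume $v\delaysim w$ and $w\delaysim x$, and fix winning \duplicator strategies $\sigma_{1}$ from $(v,w,\update(v,w,\checkmark))$ and $\sigma_{2}$ from $(w,x,\update(w,x,\checkmark))$. Following the strategy-composition technique used for alternating automata in~\cite{FW:02,Fri:05}, I would let \duplicator play the $(v,x)$-game while privately maintaining two shadow plays, one of the $(v,w)$-game (governed by $\sigma_{1}$) and one of the $(w,x)$-game (governed by $\sigma_{2}$), glued along a common mediating vertex. In each round, \duplicator relays information between the three games: an opponent move in the $(v,x)$-game is fed into whichever shadow game accepts it as a \spoiler move; \duplicator consults the corresponding $\sigma_{i}$ to obtain the mediator's move; that move is fed into the other shadow game as a \spoiler move; \duplicator consults the other $\sigma_{i}$; and the result is output in the $(v,x)$-game. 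The non-trivial point here is that the who-moves-first conventions of Table~\ref{tab:simulation_game_rules} differ across the three games, so this relay must be defined and checked for each of the $2^{3}$ ownership combinations of $\getplayer{v},\getplayer{w},\getplayer{x}$ — each splitting further according to move order — which yields the $24$ cases of~\cite{Fri:05}. In some combinations the mediator's move is pinned down by $\sigma_{1}$, in others by $\sigma_{2}$, and one must verify that the choice can always be made consistently and output at the position the $(v,x)$-game demands.

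\textbf{The main obstacle.} The part I expect to be hardest is showing that the resulting composite play is actually \emph{won} by \duplicator, i.e.\ that $\checkmark$ recurs infinitely often in the $(v,x)$-game. This does not follow from a naive ``intersection of infinitely-recurring sets'' argument, because the composite obligation is recomputed directly from $\prio{v_{j}}$ and $\prio{x_{j}}$ and is not a function of the two shadow obligations. The plan is to carry along an invariant relating the composite obligation $k^{vx}_{j}$ to the shadow obligations $k^{vw}_{j}$ and $k^{wx}_{j}$ — tracked step by step through the definition of $\update$ and the reward order $\prioordereq$ — whose upshot is that a $\checkmark$ on one shadow game, followed (in either order) by a $\checkmark$ on the other, forces a $\checkmark$ in the composite, because clearing the pending minimal $v$-side priority against the mediator and then the mediator's against the $x$-side transitively clears it against the $x$-side. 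Since $\sigma_{1}$ and $\sigma_{2}$ each yield infinitely many $\checkmark$'s, such interleavings occur infinitely often, and hence so does the composite $\checkmark$. The delicate bookkeeping is the case where the pending obligation is a number rather than $\checkmark$; a well-founded induction on the value $\min\{\prio{v_{j}},\prio{x_{j}},k^{vx}_{j}\}$ of the composite obligation along the stretch between two consecutive composite $\checkmark$'s is the natural device to close it. A cleaner alternative, which we pursue separately when giving a coinductive characterisation of $\delaysim$, is to prove transitivity by showing the coinductive defining condition is closed under relational composition, thereby avoiding the strategy relay altogether; here we simply import the result of~\cite{Fri:05}.
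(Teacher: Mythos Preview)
Your proposal is correct and takes essentially the same approach as the paper: both defer to the 24-case strategy-composition argument of~\cite{Fri:05} for transitivity, with reflexivity handled by the obvious copy-cat strategy. The paper in fact gives no proof at all beyond citing~\cite{Fri:05} and noting the 24 cases, so your sketch of how that composition works and where the difficulty lies (propagating the $\checkmark$ through the composite obligation) is already more detailed than what the paper provides.
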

We obtain delayed simulation equivalence in the standard way.
\begin{definition}[Delayed simulation equivalence {\cite{FW:06}}]
\label{def:delayed_simulation_preorder}
Vertices $v$ and $w$ are delayed simulation equivalent, denoted $v \delaysimeq w$, iff $v \delaysim w$ and $w \delaysim v$.
\end{definition}

Next, we give an alternative, coinductive definition for delayed simulation.
Since the moves in the game for delayed simulation and direct simulation match, one may expect that such a characterisation can be obtained by a more-or-less straightforward enhancement of the direct simulation relation. 
This is partly true: indeed, the moves of the game are captured in a way similar to how this is done for direct simulation. 
However, the winning condition of delayed simulation requires that infinitely often all obligations are met.
This requires `non-local' reasoning that must somehow be captured through a coinductive argument.
Meeting an obligation is typically a progress property, requiring an inductive argument rather than a coinductive argument.

To combine both aspects in a single, coinductive definition, we draw inspiration from Namjoshi's notion of well-founded bisimulation~\cite{Nam:97}.
Well-founded bisimulation is a relation which is equivalent to stuttering equivalence, but which permits local reasoning by introducing a well-foundedness criterion.
We use a similar well-foundedness requirement in our coinductive definition, ensuring progress is made towards fulfilling obligations.
This moreover requires, as can be expected, that our coinductive relation ranges not only over pairs of vertices but also over obligations. 
For a relation ${\R} \subseteq V \times K \times V$, we write $v \R[k] w$ if $v$ and $w$ are related under pending obligation $k$.
The well-foundedness restriction thus enables us to express that $v\R[k] w$ holds if we can build a coinductive argument that ultimately depends on pairs of vertices $v',w'$ related under obligation $\checkmark$; \viz $v'\R[\checkmark] w'$.

\begin{definition}[Well-founded delayed simulation]
  \label{def:wf_delay_dim}
A relation ${\R} \subseteq V \times \obligations \times V$ is a well-founded delayed simulation iff
  there is a well-founded order $\precdot$ on $V \times V \times \obligations$ such
that
for all $v,w \in V$ and $\obligation \in \obligations$ for which $v \R[\obligation] w$ holds, also:
  \begin{itemize}
    \item $v \in V_{\even}$ implies for all $v' \in \post{v}$,
    $w \steps{\even} \{ w' \in V \mid \ell =  
    \update(v',w',\obligation) \land v' \R[\ell] w' \land 
    (\obligation = \checkmark \lor (v', w', \ell) \precdot 
    (v,w,\obligation) ) \} $;
    
    \item $v \in V_{\odd}$ implies $w \steps{\even} \{ w' \in V \mid \exists{v' \in \post{v}}{\ell = \update(v',w',\obligation) \land v' \R[\ell] w' 
    \land (\obligation = \checkmark \lor (v',w',\ell) \precdot 
    (v,w,\obligation) ) } \} $.
  \end{itemize}
  Vertex $v$ is \emph{well-founded delayed simulated} by $w$, denoted $v 
  \delaysimc w$, iff there exists a well-founded delayed simulation $\R$ 
  such that $v \R[\update(v,w,\checkmark)] w$.
\end{definition}

In the remainder of this section we show that this definition is equivalent to the game-based definition.

\begin{lemma}\label{lem:wf_implies_delay}
  For $v,w \in V$, $v \delaysimc w$ implies $v \delaysim w$.
\end{lemma}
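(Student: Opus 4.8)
The plan is to turn a witnessing well-founded delayed simulation directly into a winning strategy for \duplicator. Assume $v \delaysimc w$; then there is a well-founded delayed simulation $\R$, with an associated well-founded order $\precdot$, such that $v \R[\update(v,w,\checkmark)] w$. I would have \duplicator play so that, round after round, the current position $(a,b)$ together with the current obligation $k$ satisfies $a \R[k] b$; the starting configuration $(v,w)$ with obligation $\update(v,w,\checkmark)$ satisfies this by assumption. Showing that such a strategy exists and is winning then reduces to a ``one-round'' claim combined with an appeal to well-foundedness.

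First I would prove the one-round claim: whenever $a \R[k] b$, \duplicator can play a round from $(a,b)$ with obligation $k$ so that every resulting position $(a',b')$ --- whose obligation is necessarily $k' = \update(a',b',k)$, exactly as the game prescribes --- satisfies $a' \R[k'] b'$, and moreover $(a',b',k') \precdot (a,b,k)$ whenever $k \neq \checkmark$. This is the usual case distinction on $\getplayer{a}$ and $\getplayer{b}$ against Table~\ref{tab:simulation_game_rules}, carried out exactly as in the $\Rightarrow$ direction of the proof of Theorem~\ref{thm:direct-sim-game-vs-coinductive}: if $a \in V_{\even}$, \spoiler picks $a \to a'$ and the first clause of Definition~\ref{def:wf_delay_dim} gives $b \steps{\even} X$, where $X$ is precisely the set of $b'$ for which both $a' \R[\update(a',b',k)] b'$ and, when $k \neq \checkmark$, the $\precdot$-descent hold; \duplicator then either chooses such a $b' \in X$ (if $b \in V_{\even}$), or all of \spoiler's choices on $b$ already lie in $X$ (if $b \in V_{\odd}$). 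The case $a \in V_{\odd}$ is symmetric, using the second clause of Definition~\ref{def:wf_delay_dim} and additionally letting \duplicator pick the matching $a \to a'$. A straightforward induction on the round index then shows that every play $(v_0,w_0,k_0),(v_1,w_1,k_1),\ldots$ consistent with \duplicator's strategy satisfies $v_j \R[k_j] w_j$ for all $j$, where $k_0 = \update(v,w,\checkmark)$ and $k_{j+1} = \update(v_{j+1},w_{j+1},k_j)$.

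Finally I would show the strategy is winning, i.e.\ that $k_j = \checkmark$ for infinitely many $j$ on every consistent play. If not, there is an $N$ with $k_j \neq \checkmark$ for all $j \geq N$, whence the one-round claim gives $(v_{j+1},w_{j+1},k_{j+1}) \precdot (v_j,w_j,k_j)$ for every $j \geq N$ --- an infinite $\precdot$-descending chain, contradicting well-foundedness of $\precdot$. So \duplicator wins from $(v,w)$ with obligation $\update(v,w,\checkmark)$, which is exactly $v \delaysim w$. I expect the bookkeeping in the one-round claim to be the main obstacle: one has to go through all four ownership combinations, check that the obligation $k'$ produced by the game really equals the $\ell = \update(a',b',k)$ appearing in Definition~\ref{def:wf_delay_dim}, and read the disjunct ``$k = \checkmark$ or $(a',b',\ell) \precdot (a,b,k)$'' accurately, so that strict $\precdot$-descent is available precisely in the rounds whose obligation is not yet $\checkmark$. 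This part is routine but easy to get subtly wrong; everything else is immediate.
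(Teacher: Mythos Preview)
Your proposal is correct and follows essentially the same approach as the paper: maintain the invariant $a \R[k] b$ round by round via the four-way ownership case split, then use well-foundedness of $\precdot$ to conclude that $\checkmark$ recurs infinitely often. The paper's presentation differs only cosmetically, phrasing the well-foundedness argument as ``this can only be repeated finitely many times'' rather than as an explicit infinite-descent contradiction.
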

\begin{proof}
  We prove the stronger statement that if there is a well-founded delayed simulation $\R$ such that $v \R[\update(v, w, \obligation)] w$. for $k \in \obligations$, then \duplicator has a strategy to win the delayed simulation game from $(v, w, \update(v, w, \obligation))$. The result then follows immediately from the observation that $v \R[\update(v, w, \checkmark)] w$, and hence \duplicator wins the game from $(v, w, \update(v, w, \checkmark))$.
  
  We first show that \duplicator has a strategy to move between positions $(v,w)$ with obligation $\obligation$ for which $v \R[\obligation] w$ to positions $(v',w')$ and obligation $\obligation'$ for which $v' \R[\obligation'] w'$.
  Assume that $v \R[k] w$. 
  We distinguish four cases based on the owner of $v$ and $w$.
  \begin{itemize}
    \item $(v,w) \in V_{\even} \times V_{\even}$. In the delayed simulation 
    game, this corresponds to the vertex $(v, w, \obligation)$, in which \spoiler is to move first. \spoiler first plays an arbitrary move $v \to v'$. 
    By definition of the well-founded delayed simulation, there is $w \to w'$
    such that $v' \R[\update(v', w', \obligation)] w'$; \duplicator matches with this $w'$.
    
    \item $(v,w) \in V_{\even} \times V_{\odd}$. In the delayed simulation game,
    from position $(v, w, \obligation)$, \spoiler is to make both moves, so there is no
    \duplicator strategy to be defined. Observe that well-founded delayed simulation guarantees that for all $v \to v'$ and  $w \to w'$, $v' \R[\update(v', w', \obligation)] w'$.
    
    \item $(v,w) \in V_{\odd} \times V_{\even}$. \duplicator plays twice in the delayed simulation game.
    According to the well-founded delayed simulation, there exist $v \to v'$ and $w \to w'$ such that $v' \R[\update(v', w', \obligation)] w'$. \duplicator plays such moves.
    
    \item $(v,w) \in V_{\odd} \times V_{\odd}$. In the delayed simulation game, \spoiler is to move
    first, say $w \to w'$. From the well-founded 
    delayed simulation, we find that for all such moves, there exists some $v 
    \to v'$ such that $v' \R[\update(v', w', \obligation)] w'$. \duplicator plays to
    this $w'$.
  \end{itemize}
  
  It remains to be shown that for all configurations $(v, w, \obligation)$ such that $v \R[\obligation] w$, this \duplicator-strategy is, indeed, winning for \duplicator. Observe that it suffices to show that, if $\obligation \neq \checkmark$, eventually a configuration $(v', w', \checkmark)$ is reached. This follows, since in every round in the game above moves are made from $(v, w, \obligation)$ to $(v', w', \obligation')$ such that $(v', w', \obligation') \precdot (v, w, \obligation)$. Since $\precdot$ is a well-founded order, this can only be repeated finitely many times, and eventually all obligations are met.\qedhere
\end{proof}

Before we show the converse, we first show that a winning strategy for player \duplicator in the delayed simulation game induces a well-founded order on those configurations won by \duplicator.
\begin{lemma}\label{lem:wellfounded_exists}
  The winning strategy for \duplicator in the delayed simulation game induces a well-founded order on $V \times V \times K$ for those $(v,w,\obligation)$ for which \duplicator wins position $(v,w)$ with obligation $\obligation$.
\end{lemma}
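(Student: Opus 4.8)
The goal is to show that a winning strategy for \duplicator in the delayed simulation game induces a well-founded order on the \duplicator-won configurations. The plan is to define the order via a rank function measuring ``how long until the next checkmark'', and then show this is well-defined exactly because \duplicator wins.

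\textbf{Approach.} Fix a memoryless winning strategy $\strategyname$ for \duplicator (such a strategy exists since the delayed simulation game, having a parity-style winning condition — $\checkmark$ infinitely often — is itself a parity game and hence memoryless determined; alternatively appeal to the standard theory for such games). For a configuration $(v,w,\obligation)$ won by \duplicator, consider the tree of all plays consistent with $\strategyname$ starting from $(v,w,\obligation)$. I would define $\rankname(v,w,\obligation)$ to be the supremum over all $\strategyname$-consistent plays of the number of rounds until the first configuration with obligation $\checkmark$ is reached (counting $0$ if $\obligation = \checkmark$ already). The first step is to argue this supremum is finite: since $\strategyname$ is winning, every $\strategyname$-consistent play from $(v,w,\obligation)$ with $\obligation\neq\checkmark$ must reach a $\checkmark$-configuration in finitely many rounds (otherwise $\checkmark$ would occur finitely often on some play, contradicting that \duplicator wins). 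Then, because the game is finitely branching (each vertex has finitely many successors, and $V\times V\times K$ restricted to reachable configurations from a fixed start is finite anyway since $K$ effectively only ever takes finitely many values here), König's lemma bounds the depth uniformly, so $\rankname(v,w,\obligation)\in\nat$ is well-defined.

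\textbf{Defining the order.} With the rank in hand, define $(v',w',\obligation')\precdot(v,w,\obligation)$ iff $\obligation\neq\checkmark$ and either $\rankname(v',w',\obligation')<\rankname(v,w,\obligation)$, or $\rankname(v',w',\obligation')=\rankname(v,w,\obligation)$ but $(v',w',\obligation')$ is a $\strategyname$-successor lying strictly closer on some witnessing path — but actually a cleaner choice is simply $(v',w',\obligation')\precdot(v,w,\obligation)$ iff $\rankname(v',w',\obligation')<\rankname(v,w,\obligation)$. Since $\rankname$ maps into $\nat$ with its standard well-founded order, the pullback along $\rankname$ is automatically well-founded: any strictly $\precdot$-descending sequence would yield a strictly decreasing sequence of naturals. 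The only thing to verify is that this order has the property that moving one round along $\strategyname$ from a non-$\checkmark$ configuration strictly decreases the rank, i.e. if $\obligation\neq\checkmark$ and $(v,w,\obligation)\to_\strategyname(v',w',\obligation')$ then $\rankname(v',w',\obligation')<\rankname(v,w,\obligation)$; this holds because any $\strategyname$-consistent play through $(v',w',\obligation')$ extends a $\strategyname$-consistent play through $(v,w,\obligation)$ one round earlier, so its distance-to-$\checkmark$ is exactly one less, giving the strict inequality on the suprema. (Configurations reached from $(v,w,\obligation)$ are still won by \duplicator, so their ranks are defined.)

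\textbf{Main obstacle.} The delicate point is the finiteness of the rank, which rests on König's lemma and therefore on finite branching. The branching in vertices is finite by the standing assumption on parity games, but I should double-check that the obligation component $K$ does not cause infinite branching or unbounded reachable obligations: $\update$ only ever produces $\checkmark$ or a $\min$ of priorities/current obligation, so from a fixed start only finitely many obligation values arise, and branching is determined solely by successor choices in $V$, hence finite. A secondary subtlety is the precise formulation in Definition~\ref{def:wf_delay_dim}, which only requires $(v',w',\ell)\precdot(v,w,\obligation)$ when $\obligation\neq\checkmark$; defining $\precdot$ as the $\rankname$-pullback on all triples (with rank $0$ assigned when obligation is $\checkmark$) is harmless since the well-founded-simulation definition never invokes $\precdot$ from a $\checkmark$-configuration. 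I would close by remarking that this $\rankname$ (restricted appropriately) is exactly the witness needed in the converse direction, namely to show $\directsim$-style: that $\delaysim$ implies $\delaysimc$ — which is presumably the content of the lemma that follows this one.
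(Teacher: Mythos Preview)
Your proposal is correct and follows essentially the same approach as the paper: fix a memoryless winning strategy for \duplicator (justified by the B\"uchi winning condition), and measure each won configuration by the worst-case distance to the next $\checkmark$ along strategy-consistent plays, then pull back the natural order along this rank. The paper phrases the same idea as extracting, for each won configuration, the finite tree obtained by unfolding the solitaire game (after fixing \duplicator's strategy) and pruning each branch at its first $\checkmark$, then ordering configurations by the height of their tree; your rank is exactly this tree height, and your explicit appeal to K\"onig's lemma for the finiteness step is a point the paper leaves implicit.
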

\begin{proof}
  Observe that the delayed simulation game has a B\"uchi winning condition. 
  Hence those configurations in the game play that \duplicator can win, can be won using a memoryless strategy. For the remainder of the proof, fix such a winning memoryless strategy.
  
  For each position $(v,w)$ and obligation $\obligation$ that is won by \duplicator, we extract a finite tree from the solitaire game that is induced by \duplicator's strategy by taking the (infinite) unfolding of the game starting in $(v, w, \obligation)$, and pruning each branch at the first node with obligation $\checkmark$. 
Since the strategy is \duplicator-winning, this tree is finite. 
Furthermore, if $(v, w, \obligation)$ appears in the tree of a different configuration, the subtree rooted in $(v, w, \obligation)$ in that particular subtree is identical to the tree of $(v, w, \obligation)$.
  
  These trees determine a well-founded order $\precdot$ when we set $(v', w', \ell) \precdot (v, w, \obligation)$  iff the height of the tree rooted in $(v', w', \ell)$ is less than that of the tree rooted in $(v, w, \obligation)$. \qedhere
\end{proof}

The following corollary immediately follows from the existence of the well-founded order.
\begin{corollary} \label{cor:wf_decrease}
In the delayed simulation game, for every position $(v,w)$ with obligation $\obligation$ from which \duplicator has a winning strategy, if the game proceeds according to this strategy to some position $(v', w')$ with obligation $\ell = \update(v',w', \obligation)$, we have $\obligation = \checkmark$ or $(v', w', \ell) \precdot (v, w, \obligation)$.
\end{corollary}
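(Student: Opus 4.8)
The plan is to read the statement off directly from the tree construction in the proof of Lemma~\ref{lem:wellfounded_exists}. Fix the memoryless \duplicator-winning strategy used there, and for every configuration $(v,w,\obligation)$ that \duplicator wins let $t(v,w,\obligation)$ denote the finite tree obtained by unfolding the solitaire game induced by that strategy from $(v,w,\obligation)$ and pruning each branch at the first node carrying obligation $\checkmark$; write $h(v,w,\obligation)$ for the height of this tree. Recall that the well-founded order is precisely $(v',w',\ell) \precdot (v,w,\obligation)$ iff $h(v',w',\ell) < h(v,w,\obligation)$, so everything reduces to comparing two heights.

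First I would dispose of the trivial case: if $\obligation = \checkmark$, the first disjunct of the claim already holds and there is nothing to prove. So assume $\obligation \neq \checkmark$. Since the delayed simulation game is turn-based on a total arena, the root of $t(v,w,\obligation)$ has at least one child, and because its obligation is not $\checkmark$ it is itself not pruned; hence $h(v,w,\obligation) \geq 1$. Now observe that any position $(v',w')$ with obligation $\ell = \update(v',w',\obligation)$ reachable from $(v,w,\obligation)$ by a single round played according to the fixed strategy is exactly a child of the root in $t(v,w,\obligation)$.

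Next I would invoke the coherence remark established in Lemma~\ref{lem:wellfounded_exists}: the subtree of $t(v,w,\obligation)$ rooted at a node $(v',w',\ell)$ coincides with $t(v',w',\ell)$. This splits into two subcases. If $\ell = \checkmark$, then $(v',w',\ell)$ is a leaf of $t(v,w,\obligation)$, so $t(v',w',\ell)$ is the single-node tree and $h(v',w',\ell) = 0 < 1 \leq h(v,w,\obligation)$. If $\ell \neq \checkmark$, then $t(v',w',\ell)$ is the subtree hanging below a child of the root of $t(v,w,\obligation)$, so its height is at most $h(v,w,\obligation) - 1 < h(v,w,\obligation)$. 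In either subcase $h(v',w',\ell) < h(v,w,\obligation)$, i.e.\ $(v',w',\ell) \precdot (v,w,\obligation)$, which is the second disjunct.

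The only part that genuinely needs care is the bookkeeping around tree heights: one must not overlook the $\ell = \checkmark$ subcase, where the relevant "subtree" is a pruned leaf rather than a genuine continuation of play, and one must make sure the corollary is read with respect to the specific memoryless winning strategy fixed in the proof of Lemma~\ref{lem:wellfounded_exists} (not an arbitrary winning strategy). Once the tree construction and the coherence remark are in hand, the corollary is essentially immediate, which is presumably why the authors state it as a corollary rather than a lemma.
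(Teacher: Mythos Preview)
Your proof is correct and follows exactly the approach the paper intends: the paper states this corollary as an immediate consequence of the tree construction in Lemma~\ref{lem:wellfounded_exists} and gives no further argument, whereas you have carefully unpacked the height comparison that makes the implication go through. Your handling of the two subcases ($\ell = \checkmark$ versus $\ell \neq \checkmark$) and your remark about fixing the specific memoryless strategy are both appropriate and fill in precisely the details the paper leaves implicit.
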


Finally, we show that any delayed simulation is also a well-founded delayed
simulation.
\begin{lemma}\label{lem:delay_implies_wf}
  For $v, w \in V$, $v \delaysim w$ implies $v \delaysimc w$.
\end{lemma}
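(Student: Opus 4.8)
The plan is to take for $\R$ the set of all configurations won by \duplicator, and for the well-founded order the one supplied by Lemma~\ref{lem:wellfounded_exists}. Concretely, assume $v \delaysim w$, so \duplicator wins the position $(v,w)$ with obligation $\update(v,w,\checkmark)$. Since the delayed simulation game has a B\"uchi winning condition, \duplicator's whole winning region is won by a single memoryless strategy $\tau$ (as already used in Lemma~\ref{lem:wellfounded_exists}); let $D \subseteq V \times V \times K$ denote this winning region and set $a \R[k] b$ iff $(a,b,k) \in D$. By Lemma~\ref{lem:wellfounded_exists}, $\tau$ induces a well-founded order on $D$, which we extend arbitrarily to a well-founded order $\precdot$ on $V \times V \times K$ (e.g.\ by leaving configurations outside $D$ pairwise incomparable). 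Since $(v, w, \update(v,w,\checkmark)) \in D$ by hypothesis, $v \R[\update(v,w,\checkmark)] w$, so it remains to verify that $\R$ is a well-founded delayed simulation with respect to $\precdot$.

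Two standard facts drive the verification. First, $D$ is closed under play: any \spoiler move from a configuration in $D$ stays in $D$, and the move prescribed by $\tau$ for a \duplicator move from $D$ stays in $D$ (the usual closure of a winning region under opponent moves and under the winning strategy). Second, Corollary~\ref{cor:wf_decrease} states that whenever the play proceeds from $(a,b,k)$ to $(a',b',\ell)$ with $\ell = \update(a',b',k)$ --- \duplicator following $\tau$, \spoiler moving arbitrarily --- we have $k = \checkmark$ or $(a',b',\ell) \precdot (a,b,k)$. Taken together, these two facts say precisely that any $b'$ reachable this way belongs to the sets appearing in Definition~\ref{def:wf_delay_dim}; what is left is the \emph{move} requirement, i.e.\ that \duplicator can push $b$ into those sets. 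This is settled by a case split on the owners of $a$ and $b$ that mirrors Table~\ref{tab:simulation_game_rules} exactly, running the argument of Lemma~\ref{lem:wf_implies_delay} in the opposite direction: whoever plays first on $a$ generates the quantification over $\post a$, and whoever plays on $b$ determines whether $b \steps{\even}(\cdot)$ is witnessed by a single \duplicator-chosen successor ($b \in V_{\even}$) or holds because \emph{all} successors of $b$ land in the relevant set ($b \in V_{\odd}$); in all four combinations the reached configuration lies in $D$ and either $k = \checkmark$ or it $\precdot$-decreases, which is exactly what the set membership requires. Hence $\R$ is a well-founded delayed simulation and $v \delaysimc w$.

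The difficulty here is bookkeeping rather than anything deep. The one point that needs care is \emph{uniformity}: a single memoryless \duplicator strategy $\tau$ and a single well-founded order $\precdot$ must serve all configurations in $D$ at once --- which is exactly what Lemma~\ref{lem:wellfounded_exists}, together with the ``identical subtree across configurations'' observation in its proof, provides --- and in the owner-case analysis the player who supplies each move according to Table~\ref{tab:simulation_game_rules} must be matched with the correct (existential for $V_{\even}$, universal for $V_{\odd}$) reading of $\steps{\even}$ applied to a set. Once these are aligned, the lemma follows from the closure of \duplicator's winning region and Corollary~\ref{cor:wf_decrease}.
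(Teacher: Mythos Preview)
Your proposal is correct and follows essentially the same approach as the paper: define $\R$ as \duplicator's winning region, invoke Lemma~\ref{lem:wellfounded_exists} for the well-founded order, fix a memoryless winning strategy, and verify the transfer conditions via the four owner-cases using Corollary~\ref{cor:wf_decrease} for the decrease. The paper spells out the four cases explicitly whereas you compress them into a uniform description, but the underlying argument is the same.
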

\begin{proof}
  Let ${\R} \subseteq V \times \obligations \times V$ be such that $v \R[\obligation] w$ if
  \duplicator wins the delay simulation game from $(v, w)$ with obligation $\obligation$.
  Observe that, in particular, since $v \delaysim w$, we have $v \R[\update(v,w,\checkmark)] w$.
  We show that $\R$ is a well-founded delayed simulation.
  
  For configurations $(v, w)$ with obligation $\obligation$, \duplicator has a winning strategy. Since the delay simulation game has a B\"uchi winning condition, \duplicator also has a memoryless winning strategy. For the remainder of the proof, fix this strategy. Observe that a well-founded order on configurations won by \duplicator exists, according to Lemma~\ref{lem:wellfounded_exists}.

  Next we show that this relation satisfies the transfer conditions. 
  Let $v, w, \obligation$ be arbitrary, such that $v \R[\obligation] w$.
  We distinguish four cases.
  \begin{itemize}
    \item $(v, w) \in V_{\even} \times V_{\even}$. According to Definition~\ref{def:delay_sim_game}, for all moves $v \to v'$ made by \spoiler, \duplicator matches with a move $w \to w'$, and the game continues from configuration $(v', w')$ with obligation $\ell = \update(v',w',\obligation)$. Since \duplicator's strategy is winning from $(v,w)$ with obligation $\obligation$, the strategy is also winning from $(v', w')$ with obligation $\ell$ hence $v' \R[\ell] w'$. Furthermore, if $\obligation \neq \checkmark$ we have $\rank{v'}{w'}{\ell} \precdot \rank{v}{w}{\obligation}$ according to Corollary~\ref{cor:wf_decrease}.

    \item $(v,w) \in V_{\even} \times V_{\odd}$. For all moves $v \to v'$ and $w \to w'$ made by \spoiler, the game continues from configuration $(v', w')$ with obligation $\ell = \update(v',w',\obligation)$ from which again \duplicator's strategy is winning, hence $v' \R[\ell] w'$, and again we have $\rank{v'}{w'}{\ell} \precdot \rank{v}{w}{\obligation}$ if $\obligation \neq \checkmark$ according to Corollary~\ref{cor:wf_decrease}.

    \item $(v,w) \in V_{\odd} \times V_{\even}$. \duplicator plays a move $w \to w'$ and $v \to v'$ and continues from $(v', w')$ with obligation $\ell = \update(v', w', \obligation)$, from which her strategy is again winning, hence $v' \R[\ell] w'$. Using the same argument as before, also $\rank{v'}{w'}{\ell} \precdot \rank{v}{w}{\obligation}$ if $\obligation \neq \checkmark$ according to Corollary~\ref{cor:wf_decrease}.

    \item $(v,w) \in V_{\odd} \times V_{\odd}$. For all moves $w \to w'$ made by \spoiler, \duplicator's strategy matches with some $v \to v'$ such that \duplicator wins from $(v',w')$ with obligation $\ell = \update(v',w',\obligation)$. Again $\rank{v'}{w'}{\ell} \precdot \rank{v}{w}{\obligation}$ if $\obligation \neq \checkmark$, hence $w \to_{\even} \{ w' \in V \mid \exists{v \to v'}{\ell = \update(v',w',\obligation) \land v' \R[\ell] w' 
    \land (\obligation = \checkmark \lor \rank{v'}{w'}{\ell} \precdot \rank{v}{w}{\obligation})} \}$.
    
  \end{itemize}
  In each of the cases above, the requirements for well-founded delayed 
  simulation are satisfied, hence $\R$ is a well-founded delayed simulation 
  relation, and $v \delaysimc w$. \qedhere
\end{proof}

The following theorem, stating that delayed simulation and well-founded delay simulation coincide, now follows directly.
\begin{theorem}\label{thm:delay_sim}
For all $v,w \in V$ we have $v \delaysim w$ if and only if $v \delaysimc w$.
\end{theorem}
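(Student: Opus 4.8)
The plan is to obtain Theorem~\ref{thm:delay_sim} as an immediate consequence of the two implications already established: the direction $v \delaysim w \implies v \delaysimc w$ is precisely Lemma~\ref{lem:delay_implies_wf}, and the direction $v \delaysimc w \implies v \delaysim w$ is precisely Lemma~\ref{lem:wf_implies_delay}. Both lemmas are stated for arbitrary $v, w \in V$, so combining them yields the biconditional for all $v, w \in V$. In other words, the proof is a one-line appeal to the two lemmas, and the only thing I would double-check is that the universal quantifier over $v, w$ is preserved through both implications, which it is.

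For orientation it is worth recalling where the actual work sits. The inclusion $\delaysimc \subseteq \delaysim$ (Lemma~\ref{lem:wf_implies_delay}) is handled by reading the transfer conditions of Definition~\ref{def:wf_delay_dim} as a recipe for a \duplicator strategy that keeps the play inside $\R$-related pairs; the well-founded order $\precdot$ then guarantees that between consecutive configurations with obligation $\checkmark$ only finitely many rounds elapse, which is exactly the B\"uchi condition of Definition~\ref{def:delay_sim_game}. The converse inclusion $\delaysim \subseteq \delaysimc$ (Lemma~\ref{lem:delay_implies_wf}) is the harder half: one takes $\R[k]$ to relate those $(v,w)$ won by \duplicator under obligation $k$, fixes a memoryless winning strategy (available since the winning condition is B\"uchi), and extracts the required well-founded order from the heights of the finite trees obtained by unfolding that strategy and pruning each branch at the first $\checkmark$-node (Lemma~\ref{lem:wellfounded_exists} and Corollary~\ref{cor:wf_decrease}).

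Consequently, there is no remaining obstacle at the level of the theorem itself — the difficulty has already been absorbed into Lemmas~\ref{lem:wf_implies_delay}, \ref{lem:wellfounded_exists} and~\ref{lem:delay_implies_wf}. I would therefore simply write: ``$v \delaysim w$ implies $v \delaysimc w$ by Lemma~\ref{lem:delay_implies_wf}, and $v \delaysimc w$ implies $v \delaysim w$ by Lemma~\ref{lem:wf_implies_delay}; hence the two relations coincide.''
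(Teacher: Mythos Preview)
Your proposal is correct and matches the paper's own proof exactly: the theorem is obtained as an immediate consequence of Lemmas~\ref{lem:wf_implies_delay} and~\ref{lem:delay_implies_wf}. Your additional paragraph recapping where the real work lies is accurate and helpful, but not needed for the proof itself.
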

\begin{proof}
Follows immediately from lemmata \ref{lem:wf_implies_delay} and \ref{lem:delay_implies_wf}.
\end{proof}

\subsubsection{Biased delayed simulations.}
\label{sec:biased_delayed}

As observed in~\cite{FW:06}, quotienting is problematic for delayed simulation: no sensible definition of quotienting appears to exist such that it guarantees that the quotient is again delayed simulation equivalent to the original game.
Fritz and Wilke `mend' this by introducing two variations (so called \emph{biased} delayed simulations) on delayed simulation which do permit some form of quotienting
although these are not unique.
We briefly describe these variations below. 

\paragraph{Even-biased delayed simulation.}
\label{sec:delayed_even}

The even-biased delayed simulation game, and its coinductive variant, are identical to their delayed simulation and well-founded delayed simulation counterparts. The only difference lies in the update function on obligations. Given two priorities and an existing obligation, a new obligation is obtained using the update function $\update[e] \colon \nat \times \nat \times \obligations \to \obligations$, where:
\begin{align*}
\update[e](n,m,\obligation) & = \begin{cases}
\obligation & \text{if } m \prioordereq n, n \text{ odd}, n \leq \obligation, \\
            & \text{ and } (m \text{ odd or } \obligation < m) \\
\update(n,m,\obligation) & \text{otherwise}
\end{cases}
\end{align*}
We again abbreviate $\update[e](\prio{v},\prio{w},k)$ by $\update[e](v,w,k)$.

Using the new update function in the delayed simulation game ensures that a pending obligation is only changed back to $\checkmark$ by a small even priority; a small odd priority does not change the obligation.
We say that $v$ is \emph{even-biased delayed simulated} by $w$, denoted $v \delaysime w$ iff \duplicator has a winning strategy from $(v, w)$ with obligation $\update[e](v, w, \checkmark)$ in the even-biased delayed simulation game.

Likewise, we obtain \emph{well-founded, even-biased delayed simulation} by replacing all occurrences of $\update$ by $\update[e]$ in Definition~\ref{def:wf_delay_dim}. Vertex $v$ is \emph{well-founded, even-biased delayed simulated} by $w$, denoted $v \delaysimce w$, iff there exists a well-founded, even-biased delayed simulation preorder $\R$ such that $v \R[{\update[e](v, w, \checkmark)}] w$.

\paragraph{Odd-biased delayed simulation.}
\label{sec:delayed_odd}
Odd-biased delayed simulation is defined in a similar way as the even-biased delayed simulation.
Instead of small even priorities leading to an update of a pending obligation, small odd priorities lead to a change in the obligation. 
Given two priorities and an existing obligation, a new obligation is obtained using the update function $\update[o] \colon \priosym \times \priosym \times \obligations \to \obligations$, where:
\begin{align*}
\update[o](n,m,\obligation) & = \begin{cases}
\obligation & \text{if } m \prioordereq n, m \text{ even}, m \leq \obligation, \\
            & \text{ and } (n \text{ even or } \obligation < n) \\
\update(n,m,\obligation) & \text{otherwise}
\end{cases}
\end{align*}
The game-based and coinductive definitions are analogous to the even-biased version.

\section{Governed Bisimulation and Governed Stuttering Bisimulation}
\label{sec:bisimulations}

In this section we consider essentially two notions of bisimulation for parity games, and some derived notions.
First, in Section~\ref{sec:governed}, we introduce governed bisimulation, which was studied under various guises in \eg~\cite{GW:12,Kei:13,KW:09}.
Governed bisimulation is, as we demonstrate in that section, closely related to direct simulation.
Next, in Section~\ref{sec:governed_stuttering}, governed stuttering bisimulation~\cite{Cra:15,CKW:12,Kei:13} is introduced.

\subsection{Governed bisimulation}
\label{sec:governed}

Our definition of governed bisimulation, as presented below, is based on the one from~\cite{KW:09} where it is defined in the closely related setting of \emph{Boolean equation systems}; because of its capabilities to relate \emph{conjunctive} and \emph{disjunctive} equations, it was dubbed \emph{idempotence identifying bisimulation}.
It was rephrased for parity games in~\cite{Kei:13} and there named \emph{governed bisimulation}.

\begin{definition}[Governed bisimulation]
\label{def:govbisim}
A symmetric relation ${\R} \subseteq V\times V$ is a \emph{governed bisimulation} iff $v \R w$ 
implies
\begin{itemize}
\item $\prio{v} = \prio{w}$;
\item if $\getplayer{v} \ne \getplayer{w}$, then $v' \R w'$ for all $v' \in \post{v}$ and $w' \in \post{w}$;
\item for all $v' \in \post{v}$ there is some $w' \in \post{w}$ such that $v' \R w'$.
\end{itemize}
Vertices $v$ and $w$ are said to be \emph{governed bisimilar}, denoted $v 
\gov w$, if and only if there is a governed bisimulation $\R$ such that $v \R 
w$.
\end{definition}
\begin{example}
In the parity game in Figure~\ref{fig:govbisim_example}, we have for all $i$, $v_i \gov v_i$, and furthermore, $v_2 \gov v_3$ and $v_0 \gov v_6$. Observe that we have $v_0 \directsimeq v_1$, where $v_0 \directsim v_1$ is witnessed by relation $R_1 = \{ (v_i, v_i) \mid 0 \leq i \leq 5 \} \cup \{ (v_0, v_1) \}$, and $v_1 \directsim v_0$ is witnessed by $R_2 = \{ (v_i, v_i) \mid 0 \leq i \leq 5 \} \cup (v_1, v_0), (v_4, v_2), (v_4, v_3) \}$. We do, however, not have $v_0 \gov v_1$, since the latter would require $v_4$ to be related to $v_2$, but from $v_4$ the step $v_2 \to v_5$ cannot be mimicked.
\begin{figure}
\begin{center}
\begin{tikzpicture}
  \node[even, label=below left: $v_0$] (v0) {0};
  \node[odd, left of=v0, label=below left: $v_6$] (v6) {0};
  \node[even, right of=v0, label=right: $v_1$] (v1) {0};
  \node[even, below of=v0, label=below left: $v_2$] (v2) {1};
  \node[even, below of=v1, label=below right: $v_3$] (v3) {1};
  \node[even, below of=v1, right of=v1, label=below right: $v_4$] (v4) {1};
  \node[odd, below of=v2, label=below left: $v_5$] (v5) {2};
  
  \path[->] (v0) edge (v2)
            (v1) edge (v2)
            (v6) edge (v2)
            (v1) edge (v4)
            (v2) edge (v5)
            (v2) edge[bend left] (v3)
            (v3) edge[bend left] (v2)
            (v3) edge (v5)
            (v4) edge[loop right] (v4)
            (v5) edge[loop left] (v5);
\end{tikzpicture}
\end{center}
\caption{Parity game in which $v_2$ and $v_3$ are governed bisimilar. Vertices $v_0$, $v_1$ and $v_6$ are direct simulation equivalent. Vertices $v_0$ and $v_6$ are governed bisimilar but not strong direct simulation equivalent. Vertices $v_0$ and $v_1$ are strong direct simulation equivalent, but not governed bisimilar.}
\label{fig:govbisim_example}
\end{figure}
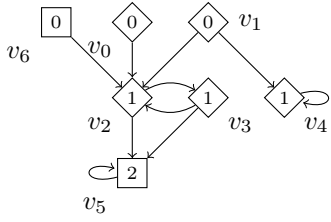
\end{example}
Governed bisimulation is such that vertices owned by different players can only be related whenever all their successors are. 
It turns out that this is exactly what is obtained when imposing a symmetry requirement on direct simulation.
As a result, we have the following theorem.
\begin{theorem}
\label{thm:alternative_govbisim} We have $v \gov w$ iff there is a symmetric direct simulation relation $\R$ such that
$v \R w$.
\end{theorem}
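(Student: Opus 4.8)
The plan is to prove both directions of the biconditional, treating this as a translation between the coinductive governed bisimulation and symmetric direct simulations.

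\medskip

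\noindent\textbf{($\Leftarrow$)} Suppose $\R$ is a symmetric direct simulation with $v \R w$. I claim $\R$ is already a governed bisimulation, which immediately gives $v \gov w$. The priority condition $\prio{v}=\prio{w}$ is shared by both definitions. For the third clause of Definition~\ref{def:govbisim} (every $v' \in \post{v}$ is matched by some $w' \in \post{w}$ with $v' \R w'$), I split on the owner of $v$: if $v \in V_{\even}$, the direct-simulation clause says $w \steps{\even} v'\R$, i.e.\ there is $w' \in \post{w}$ with $v' \R w'$ (whether $w$ is owned by \even or \odd, such a $w'$ exists because $\steps{\even}$ ranges over $\post{w}$); if $v \in V_{\odd}$, the clause $w \steps{\even} \post{v}\R$ gives some $w' \in \post{w}$ and some $v' \in \post{v}$ with $v' \R w'$ — but to cover \emph{all} $v' \in \post{v}$ I instead invoke symmetry: $w \R v$, and apply the appropriate direct-simulation clause for $w$ to the successor $v' \in \post{v}$ of $v$. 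This needs a small case analysis on $\getplayer{w}$, but in each case one obtains $w' \in \post{w}$ with $w' \R v'$, hence $v' \R w'$ by symmetry. The remaining clause — if $\getplayer{v}\neq\getplayer{w}$ then $v'\R w'$ for all $v'\in\post{v}$, $w'\in\post{w}$ — is where the content lies: say $v \in V_{\even}$, $w \in V_{\odd}$. Fix $v' \in \post{v}$ and $w' \in \post{w}$. From $v \R w$ and $v \in V_{\even}$: $w \steps{\even} v'\R$, and since $w \in V_{\odd}$ the relation $\steps{\even}$ from $w$ is \emph{universal} over $\post{w}$, so every successor of $w$, in particular $w'$, satisfies $v' \R w'$. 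The symmetric case ($v\in V_{\odd}$, $w\in V_{\even}$) is handled by applying the same argument to $w \R v$.

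\medskip

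\noindent\textbf{($\Rightarrow$)} Suppose $v \gov w$, witnessed by a governed bisimulation $\R$. I claim $\R$ is a symmetric direct simulation (it is symmetric by definition), so it suffices to verify the two transfer clauses of Definition~\ref{def:direct-simulation}. Fix $v \R w$; the priority condition transfers directly. If $v \in V_{\even}$: take $v' \in \post{v}$. If $\getplayer{w}=\even$, the third governed-bisimulation clause yields $w' \in \post{w}$ with $v' \R w'$, and \even can play $w \steps{\even} w'$, so $w \steps{\even} v'\R$. If $\getplayer{w}=\odd$, then $\getplayer{v}\neq\getplayer{w}$, so the second clause gives $v' \R w'$ for \emph{all} $w' \in \post{w}$; since $\post{w}\neq\emptyset$ (totality) and $\steps{\even}$ from an \odd-vertex is universal, $w \steps{\even} v'\R$ holds. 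If $v \in V_{\odd}$: I need $w \steps{\even} \post{v}\R$, i.e.\ some $w' \in \post{w}$ and $v' \in \post{v}$ with $v' \R w'$, reachable by an \even-strategy at $w$. If $\getplayer{w}=\even$: pick any $v'\in\post{v}$; the third clause gives $w'\in\post{w}$ with $v'\R w'$, and \even plays $w \steps{\even} w'$. If $\getplayer{w}=\odd$: again $\getplayer{v}\neq\getplayer{w}$, so the second clause makes $v'\R w'$ for all $v'\in\post{v}$ and all $w'\in\post{w}$; picking arbitrary such $v',w'$ and noting $\steps{\even}$ from $w\in V_{\odd}$ is universal gives $w \steps{\even}\post{v}\R$.

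\medskip

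\noindent I expect the main subtlety to be bookkeeping the interplay between the \emph{asymmetric} conditions of Definition~\ref{def:direct-simulation} (which are phrased from $v$'s perspective only) and the \emph{symmetric} condition of Definition~\ref{def:govbisim} (the "different owners" clause, which is already two-sided). The key structural fact that makes everything go through is that $\steps{\even}$ originating from an \odd-owned vertex is the full successor relation $\post{\cdot}$, so the universally-quantified clauses of one definition feed exactly the existentially-phrased $\steps{\even}$-clauses of the other; once this observation is isolated, each direction reduces to a routine owner-by-owner case check.
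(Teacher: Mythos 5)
Your overall strategy is the same as the paper's: show that a governed bisimulation, being symmetric, is already a direct simulation, and conversely that a symmetric direct simulation satisfies the clauses of Definition~\ref{def:govbisim}. Your ($\Leftarrow$) direction is correct, including the use of symmetry ($w \R v$) for the case $v,w \in V_{\odd}$, and your ($\Rightarrow$) direction is correct in three of the four owner combinations.

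There is, however, a genuine error in the ($\Rightarrow$) direction in the subcase $v \in V_{\odd}$, $\getplayer{w} = \odd$: you write ``again $\getplayer{v}\neq\getplayer{w}$'' and invoke the different-owners clause of Definition~\ref{def:govbisim}, but here both vertices are owned by \odd, so that clause does not apply and the claim that $v' \R w'$ holds for all pairs is unjustified. Note also that this is precisely the subcase where the obligation $w \steps{\even} \post{v}\R$ is demanding: since $w$ is owned by \odd, player \even cannot steer at $w$, so \emph{every} $w' \in \post{w}$ must lie in $\post{v}\R$; exhibiting one pair $v',w'$ would not suffice. The repair is the same symmetry trick you already use in the other direction (and which the paper uses here): from $w \R v$ and the third clause of Definition~\ref{def:govbisim}, every $w' \in \post{w}$ admits some $v' \in \post{v}$ with $w' \R v'$, hence $v' \R w'$ by symmetry, giving $\post{w} \subseteq \post{v}\R$ as required. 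With that subcase rewritten, your argument coincides with the paper's proof.
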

\begin{proof}  
We prove both implications separately.
\begin{itemize}
   \item[$\Rightarrow$] Let $\R$ be a governed bisimulation, and let $v, w$ be arbitrary such that $v \R w$. 
        Since $\R$ is symmetric, it suffices to show that $\R$ is a direct simulation.
   \begin{itemize}
     \item $\prio{v} = \prio{w}$ follows immediately.
     \item Suppose $v \in V_{\even}$. Let $v' \in \post{v}$. 
           In case $w \in V_{\odd}$ we have $v' \R w'$ for all $w' \in \post{w}$
           If $w \in V_{\even}$, there is some $w' \in \post{w}$ such that $v' \R w'$.
           Both cases lead to the desired $w \steps{\even} v'\R$. 

     \item Suppose $v \in V_{\odd}$. We again distinguish two cases.
     In case $w \in V_{\even}$ we have for all $v' \in \post{v}$ and $w' \in \post{w}$, $v' \R w'$, \ie $w \steps{\even} \post{v}\R$. 
     Suppose $w \in V_{\odd}$. Pick an arbitrary $w'\in \post{w}$. Since $\R$ is symmetric, also $w \R v$.
     Hence,  there exists a $v' \in \post{v}$ such that $w' \R v'$, which implies $v' \R w'$. 
     Thus, both cases lead to $w \steps{\even} \post{v}\R$.
   \end{itemize}

   \item[$\Leftarrow$] Let $\R$ be a symmetric direct simulation relation. Pick arbitrary $v, w$ for which $v \R w$.
   \begin{itemize}
     \item $\prio{v} = \prio{w}$ follows immediately.
     \item Suppose $\getplayer{v} \neq \getplayer{w}$. 
     Because $\R$ is symmetric, we may assume, without loss of generality, that $v \in V_{\even}$ and $w \in V_{\odd}$. 
     Pick an arbitrary $v' \in \post{v}$.
     Since $\R$ is a direct simulation, we find $w \steps{\even} v'\R$. 
     As $w \in V_{\odd}$, we thus find that $v' \R w'$ for all $w' \in \post{w}$.

     \item Let $v' \in \post{v}$. By the previous case, it suffices to consider only the case that $\getplayer{v} = \getplayer{w}$. 
     Suppose $v,w \in V_{\even}$. Then $w \steps{\even} v'\R$; \ie there is some $w' \in \post{w}$ such that $v' \R w'$.
     Now assume $v,w \in V_{\odd}$. By symmetry, we have $w \R v$. Then we have $v \steps{\even} \post{w}\R$. 
     Thus, for every $v'' \in \post{w}$ there is some $w' \in \post{w}$ such that $w' \R v''$.
     In particular, we have $w' \R v'$ for some $w' \in \post{w}$. By symmetry, we then also have $v' \R w'$ for some
     $w' \in \post{w}$.\qedhere

   \end{itemize}
 \end{itemize}
\end{proof}
As a consequence, we immediately find that governed bisimilarity is an equivalence relation.
\begin{theorem}\label{thm:pg_fmib_equivalence}
$\gov$ is an equivalence relation on parity games.
\end{theorem}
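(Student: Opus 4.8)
The statement to prove is that $\gov$ is an equivalence relation on parity games. The natural approach is to exploit Theorem~\ref{thm:alternative_govbisim}, which we have just established: $v \gov w$ if and only if there is a symmetric direct simulation relation $\R$ with $v \R w$. So the task reduces to showing that the union of all symmetric direct simulation relations is itself reflexive, symmetric, and transitive.

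\textbf{Reflexivity and symmetry.} For reflexivity, I would observe that the identity relation $\mathit{Id} = \{ (v,v) \mid v \in V\}$ is trivially a direct simulation relation (the copy-cat strategy: $\prio{v} = \prio{v}$, and each $v' \in \post{v}$ is matched by $v'$ itself, using $v \steps{\even} v'$ which holds since $\even$ has, in particular, a strategy allowing $v \to v'$), and it is obviously symmetric. Hence $v \gov v$ for all $v$. For symmetry, suppose $v \gov w$; then there is a symmetric direct simulation $\R$ with $v \R w$, and since $\R$ is symmetric, $w \R v$ as well, so $w \gov v$. Both of these are immediate from the characterisation.

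\textbf{Transitivity — the main obstacle.} Suppose $v \gov w$ and $w \gov x$. Then there are symmetric direct simulations $\R$ and $\S$ with $v \R w$ and $w \S x$. The natural candidate witnessing $v \gov x$ is the composition $\R \circ \S$, defined by $v \mathrel{(\R\circ\S)} x$ iff $v \R u \S x$ for some $u$. I would argue (i) that $\R \circ \S$ is symmetric — this uses symmetry of both $\R$ and $\S$: if $v \R u \S x$ then $x \S u \R v$, so $x \mathrel{(\S \circ \R)} v$; one needs to be slightly careful here, since $\R \circ \S$ and $\S \circ \R$ are a priori different, so I would instead take the witnessing relation to be $(\R \circ \S) \cup (\S \circ \R)$, or simpler, note that since $\R$ and $\S$ are both symmetric, $\inv{(\R \circ \S)} = \inv{\S} \circ \inv{\R} = \S \circ \R$; and (ii) that $\R \circ \S$ is a direct simulation — this is exactly the closure-under-composition fact already asserted in Proposition~\ref{prop:preorder_directsim} (``one can check that for direct simulation relations $R$ and $S$, the relation $R \circ S$ \ldots is again a direct simulation relation''), so I may invoke it directly.

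The one genuine subtlety, which is where I expect to spend the most care, is reconciling the symmetry issue: a composition of two symmetric relations need not be symmetric, so the cleanest route is to let $\T = (\R \circ \S) \cup (\S \circ \R)$, check that $\T$ is symmetric (immediate, since $\inv{\T} = \inv{(\R\circ\S)} \cup \inv{(\S\circ\R)} = (\S\circ\R)\cup(\R\circ\S) = \T$ using symmetry of $\R,\S$), check that $\T$ is a direct simulation (a union of two direct simulations is again a direct simulation, since the defining conditions are universally quantified and hence preserved under union — and each of $\R\circ\S$, $\S\circ\R$ is a direct simulation by the composition fact), and conclude $v \mathrel\T x$, hence $v \gov x$. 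Assembling reflexivity, symmetry, and transitivity then gives the theorem.
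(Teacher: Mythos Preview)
Your proposal is correct and follows essentially the same approach as the paper: the paper's proof is the single line ``Follows from combining Theorem~\ref{thm:alternative_govbisim} and Proposition~\ref{prop:preorder_directsim},'' and you have spelled out exactly how that combination works. In particular, you correctly identify and handle the one point the paper leaves implicit, namely that the composition of two symmetric direct simulations need not be symmetric, and you resolve it cleanly by passing to the symmetric union $(\R\circ\S)\cup(\S\circ\R)$; the two auxiliary facts you rely on (closure of direct simulations under composition and under arbitrary union) are precisely what underlies Proposition~\ref{prop:preorder_directsim}.
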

\begin{proof}
Follows from combining Theorem~\ref{thm:alternative_govbisim} and Proposition~\ref{prop:preorder_directsim}.\qedhere\end{proof}
Additionally, we immediately obtain a game-based definition for governed bisimulation:
we only need to require that \spoiler can switch to a symmetric position in the game play.
\begin{definition}[Governed bisimulation game]
\label{def:governed-bisim-game}
The \emph{governed bisimulation game} is played on \emph{configurations} drawn from $V \times V$, and it is played in rounds.
A round of the game proceeds as follows:
\begin{enumerate}
  \item \spoiler chooses $(u_0, u_1) \in \{ (v,w), (w,v) \}$;
  \item The players move from $(u_0, u_1)$ according to the rules in Table~\ref{tab:simulation_game_rules}
  \item Play continues in the next round from the newly reached position.
\end{enumerate}
An infinite play $(v_0, w_0), (v_1, w_1), \ldots$ is won by \duplicator if $\prio{v_j} = \prio{w_j}$ for all $j$,
\ie, \duplicator was able to mimic every move from \spoiler with a move to a vertex with equal priority. In all other cases \spoiler wins the play.
\end{definition}
We write $v \govg w$ whenever \duplicator has a winning strategy from $(v,w)$ in the governed bisimulation game.

\begin{theorem}
\label{thm:governed-bisim-game-vs-coinductive}
For all $v, w \in V$, we have $v \gov w$ if and only if $v \govg w$.
\end{theorem}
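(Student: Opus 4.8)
The statement to prove is Theorem~\ref{thm:governed-bisim-game-vs-coinductive}: $v \gov w$ iff $v \govg w$. The plan is to reduce this to results already in hand, namely Theorem~\ref{thm:alternative_govbisim} (governed bisimilarity coincides with symmetric direct simulation) and Theorem~\ref{thm:direct-sim-game-vs-coinductive} (coinductive and game-based direct simulation coincide). The key observation is that the governed bisimulation game is exactly the direct simulation game with one extra \spoiler-move per round, namely the option to swap the two components of the configuration. So the governed bisimulation game played from $(v,w)$ is, after \spoiler's swap choice, the direct simulation game played from either $(v,w)$ or $(w,v)$, with \spoiler free to re-choose the orientation each round.

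First I would prove the easy direction, $v \gov w \Rightarrow v \govg w$. Using Theorem~\ref{thm:alternative_govbisim}, pick a symmetric direct simulation $\R$ with $v \R w$. I claim the following \duplicator-strategy wins the governed bisimulation game from $(v,w)$: maintain the invariant that the current configuration $(u_0,u_1)$ satisfies $u_0 \R u_1$. Whenever \spoiler swaps to $(u_1,u_0)$, symmetry of $\R$ gives $u_1 \R u_0$, so the invariant is restored; then the argument from the $\Rightarrow$ direction of Theorem~\ref{thm:direct-sim-game-vs-coinductive} shows \duplicator can move to a new configuration that again lies in $\R$, and since $\R$ is a direct simulation, related vertices have equal priority, so the priority-equality winning condition is met in every round. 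Hence \duplicator wins and $v \govg w$.

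For the converse, $v \govg w \Rightarrow v \gov w$, I would show that $\govg$ is itself a symmetric direct simulation relation, and then invoke Theorem~\ref{thm:alternative_govbisim}. Symmetry of $\govg$ is immediate: from $(w,v)$ \spoiler's first move can be to swap to $(v,w)$, after which \duplicator's winning strategy from $(v,w)$ applies, so $v \govg w$ implies $w \govg v$. For the direct-simulation transfer conditions, fix $v \govg w$ and a \duplicator winning strategy from $(v,w)$. Priority equality $\prio{v}=\prio{w}$ is forced in the first round. For the move conditions, consider the line of play in which \spoiler does \emph{not} swap in the first round (so plays on $(v,w)$) and follows Table~\ref{tab:simulation_game_rules}; this is exactly the first round of a direct simulation game, and \duplicator's responses land in configurations still won by \duplicator in the governed bisimulation game, i.e.\ in $\govg$. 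This is precisely the reasoning in the $\Leftarrow$ direction of Theorem~\ref{thm:direct-sim-game-vs-coinductive}, and it yields $w \steps{\even} v' \govg$ when $v \in V_\even$ and $w \steps{\even} \post{v}\,\govg$ when $v \in V_\odd$. Thus $\govg$ is a symmetric direct simulation, so by Theorem~\ref{thm:alternative_govbisim}, $v \gov w$.

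The main obstacle, such as it is, is being careful that in the converse direction one is entitled to read off the direct-simulation conditions by looking only at the non-swapping sub-play: one must check that a configuration reachable in a non-swapping round from a \duplicator-won position is again \duplicator-won (so that $\govg$ holds there), which is just the usual fact that a winning strategy remains winning from any position it reaches. Everything else is a transcription of the two already-proven coincidence results, so I expect the proof to be short, essentially reading ``follows from Theorems~\ref{thm:alternative_govbisim} and~\ref{thm:direct-sim-game-vs-coinductive} together with the observation that the governed bisimulation game augments the direct simulation game with a \spoiler swap, which corresponds exactly to the symmetry closure.''
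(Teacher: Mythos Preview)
Your proposal is correct and matches the paper's approach: the paper's proof is literally the one-liner ``Along the lines of the proof of Theorem~\ref{thm:direct-sim-game-vs-coinductive}'', and your write-up is exactly that line of reasoning fleshed out, with the helpful extra step of invoking Theorem~\ref{thm:alternative_govbisim} so that the backward direction amounts to checking that $\govg$ is a symmetric direct simulation. Your symmetry argument for $\govg$ (the swap option makes the starting configurations $(v,w)$ and $(w,v)$ interchangeable) is the right observation, and the rest is a direct transcription of the two directions of Theorem~\ref{thm:direct-sim-game-vs-coinductive}.
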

\begin{proof}
Along the lines of the proof of Theorem~\ref{thm:direct-sim-game-vs-coinductive}.\qedhere\end{proof}

\paragraph{Strong Bisimulation.} If we again impose the additional
constraint on governed bisimulation that we do not allow to relate
vertices owned by different players, we obtain a notion called
\emph{strong bisimulation}~\cite{Kei:13}. The derived notion of
\emph{strong bisimilarity}, denoted $v \bisim w$ and defined as $v
\bisim w$ iff there is some strong bisimulation relation that relates
$v$ and $w$, is an equivalence relation.

\subsection{Governed stuttering bisimulation}
\label{sec:governed_stuttering}

The (bi)simulation games discussed so far all have in common that the game-play proceeds in `lock-step': \duplicator must match every move proposed by \spoiler with a proper countermove.
In a sense, this ignores the fact that the parity condition is not sensitive to finite repetitions of priorities but only cares about infinite repetitions.
The insensitivity of the parity condition to finite repetitions is reminiscent to the notion of \emph{stuttering} in process theory.
Indeed, as we demonstrate in what follows, governed bisimulation can be weakened such that it becomes insensitive to finite stuttering, but remains sensitive to infinite stuttering.
The resulting relation is called \emph{governed stuttering bisimulation}.
\medskip

Essentially, governed stuttering bisimulation is obtained by porting
stuttering equivalence for Kripke structures to the setting of
parity games.  Intuitively, governed stuttering bisimulation requires
that a move from a vertex $v$ to $v'$ is matched by a finite (and
potentially empty) sequence of moves from $w$, through vertices that remain
related to $v$, to arrive at some $w'$ that is
related to $v'$. In addition,
every \emph{divergent play} from a vertex $v$ (\ie a play that remains
confined to a single equivalence class) should be matched with a
divergent play from a related vertex $w$.

Details, however, are subtle. In stuttering equivalence it suffices
to have the \emph{ability} to move or diverge and match such moves
or divergences with some move or a divergence.  In contrast, in the
parity game setting we are concerned with player's capabilities.
Only moves and divergences that can be \emph{forced} by a player
count, and matching of such moves and divergences must be done
through moves or divergences that the same player can force. 
Figure~\ref{fig:fake-divergence} illustrates some of these concepts. While in the
depicted parity game there is an infinite play that
passes through the two left-most vertices with priority~$0$, neither \even nor \odd
can force such an infinite play. As a result, we may ignore such
infinite plays, and in this sense, the abilities (for both players) from those two
vertices are no different from the abilities both players have from
the two right-most vertices with priority~$0$.

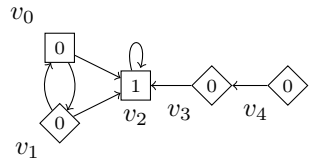
\begin{figure}[h!]
\begin{center}
\begin{tikzpicture}
  \node[odd, label=above left:$v_0$] (n1) {0};
  \node[even, below of=n1, label=below left:$v_1$] (n2) {0};
  \node[odd, right of=n1, yshift=-0.5cm, label=below:$v_2$] (n3) {1};
  \node[even, right of=n3, label=below left:$v_3$] (n4) {0};
  \node[even, right of=n4, label=below left:$v_4$] (n5) {0};
  
  
  \path[->] (n1) edge (n3)  
                 edge[bend left] (n2)
            (n2) edge (n3)
                 edge[bend left] (n1)
            (n3) edge[loop above] (n3)
            (n4) edge (n3)
            (n5) edge (n4);
\end{tikzpicture}
\end{center}
\sbox1{$\gstut$}
\caption{Equal priorities are related by \usebox1. Neither player can force 
play to visit only vertices with priority $0$.}
\label{fig:fake-divergence}
\end{figure}

The definition of governed stuttering bisimulation presented below is based on~\cite{CKW:12,Cra:15}.
For our definition, we strongly rely on our notation to denote that a player is able to `force play'.
\begin{definition}[Governed stuttering bisimulation]
\label{def:fmist}
Let ${\R} \subseteq 
V \times V$ be an equivalence relation. Then $\R$ is a \emph{governed 
stuttering bisimulation} if and only if $v \R w$ implies
\begin{enumerate}
\item[a)] $\prio{v} = \prio{w}$;
\item[b)] $v \to \C$ implies $w \forces{\getplayer{v}}{\R} \C$, for all $\C \in 
\partition{V}{\R} \setminus \{\class{v}{\R}\}$. \label{def:fmist-transfer-cond}
\item[c)] $v \diverges{\player}{\R}$ implies $w\diverges{\player}{\R}$ for 
$\player \in \{\even,\odd\}$. \label{def:fmist-div-cond}
\end{enumerate}
Vertices $v$ and $w$ are \emph{governed stuttering bisimilar}, denoted $v 
\gstut w$, iff a governed stuttering bisimulation $\R$ exists such that $v \R 
w$.
\end{definition}
\begin{example}
The parity game in Figure~\ref{fig:fake-divergence} nicely illustrates the key properties of governed stuttering bisimulation: for $v_j$ in $\{ v_0, v_1, v_3, v_4 \}$ we have neither $v_j \diverges{\even}{\gstut}$ nor $v_j \diverges{\odd}{\gstut}$. Furthermore, for all these vertices, both players can force the game to reach vertex $v_2$. Therefore, all vertices with the same priorities are related by $\gstut$. Also note that the vertices with priority $0$ are not related by, \eg, governed bisimulation since the latter is sensitive to counting, and $v_0$ and $v_1$ can reach multiple equivalence classes.
\end{example}
Proving that governed stuttering bisimilarity is an equivalence
relation that is again a governed stuttering bisimulation relation 
is technically involved. In particular, all standard
proof techniques for doing so break down or become too complex to
manage. Instead of a large monolithic proof of the result, we proceed
in small steps by gradually rephrasing the above definition to one
that is ultimately more easily seen to be an equivalence. Our first step
in this direction is to remove the asymmetry in clause \emph{b)} of 
the definition of governed stuttering bisimulation. Before we do so,
we state a useful lemma that allows us to strengthen the conclusion
of Lemma~\ref{lem:exits}.
\begin{lemma}\label{lem:exits_gen}
Let $R$ be a governed stuttering bisimulation. Let $\U \subseteq \partition{V}{R}
\setminus \{\class{v}{R}\}$.
If $\post{v} \subseteq \bigcup \U$,
then  $\post{u} \setminus \class{v}{R} \subseteq \bigcup\U$
for all $u \in \class{v}{R}$.
\end{lemma}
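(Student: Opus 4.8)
The plan is to prove the statement directly: fix $u \in \class{v}{R}$ and a successor $u' \in \post{u}$ with $u' \notin \class{v}{R}$, and show $u' \in \bigcup\U$. Put $\C = \class{u'}{R}$. Since $R$ is an equivalence and $u' \notin \class{v}{R}$ we have $\C \neq \class{v}{R}$, hence $v \notin \C$; and $u' \in \post{u}$ gives $u \to \C$. Because $\U$ is a set of equivalence classes, all different from $\class{v}{R}$, its members are pairwise disjoint and disjoint from $\class{v}{R}$; in particular it suffices to show $\C \in \U$, for then $u' \in \C \subseteq \bigcup\U$.

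The next step is to invoke clause~\emph{b)} of Definition~\ref{def:fmist} on the pair $u \R v$ (which holds since $u \in \class{v}{R}$ and $R$ is symmetric), instantiating the first component with $u$ and the second with $v$: from $u \to \C$ with $\C \neq \class{v}{R}$ we get $v \forces{\getplayer{u}}{\R} \C$, i.e.\ player $\getplayer{u}$ has a strategy $\sigma$ forcing every play from $v$ to reach $\C$ while remaining inside $\class{v}{R}$ beforehand. The crucial observation is that, under the hypothesis $\post{v} \subseteq \bigcup\U$, this forced play must reach $\C$ in a single step: take any play $p$ consistent with $\sigma$ (one exists since $\to$ is total); since $v \notin \C$ the play reaches $\C$ only after $n \geq 1$ steps, and if $n \geq 2$ then $\ind{p}{1} \in \class{v}{R}$, contradicting $\ind{p}{1} \in \post{v} \subseteq \bigcup\U$ together with the disjointness of $\bigcup\U$ from $\class{v}{R}$. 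Hence $n = 1$, so $\ind{p}{1} \in \post{v} \cap \C$, and since $\ind{p}{1} \in \post{v} \subseteq \bigcup\U$ the class $\C$ shares an element with $\bigcup\U$; as the members of $\U$ are equivalence classes, $\C$ must be one of them, i.e.\ $\C \in \U$, which finishes the proof.

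I do not anticipate a genuine obstacle. The only points that require care are the bookkeeping around clause~\emph{b)} — it must be applied with the vertices in the order $u \R v$, so that it is $v$ (whose successors we control via the hypothesis) that does the forcing — and making explicit that ``forcing via $\R$'' means forcing along a path confined to $\class{v}{R}$; it is exactly this confinement, combined with $\post{v} \subseteq \bigcup\U$ being disjoint from $\class{v}{R}$, that collapses the forced play to one step. One could instead route the middle step through Lemma~\ref{lem:exits} with $U = \class{v}{R}$ and $T = \C$ to obtain a vertex of $\class{v}{R}$ with a successor in $\C$, but the direct argument is shorter since it exploits $\post{v} \subseteq \bigcup\U$ at the particular vertex $v$.
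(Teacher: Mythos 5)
Your proof is correct and follows essentially the same route as the paper's: you apply clause~\emph{b)} of Definition~\ref{def:fmist} to the pair $u \R v$ to obtain $v \forces{\getplayer{u}}{\R} \C$ for the class $\C$ of the offending successor, and then use $\post{v} \subseteq \bigcup\U$ together with the confinement to $\class{v}{R}$ to pin down where the forced play can go. The only difference is presentational: the paper concludes by contradiction (such a $\C \notin \U$ would give $v \nforces{\getplayer{u}}{\R} \C$), whereas you argue directly that the forced play must enter $\C$ in one step and hence $\C \in \U$, which in fact makes explicit the step the paper leaves implicit.
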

\begin{proof} Let $v$ be such that $v \step \U$ for some $\U \subseteq
\partition{V}{R} \setminus \{\class{v}{R}\}$.
Suppose $u \step \C$ for some $\C \notin \U \cup \{\class{v}{R}\}$.
Since $v \gstut u$, by Definition~\ref{def:fmist}, we have
$v \forces{\getplayer{u}}{\R} \C$. But $\post{v} \subseteq \bigcup \U$ and
$\C \notin \U$ so $v \nforces{\getplayer{u}}{\R} \C$. Contradiction.\qedhere
\end{proof}

\begin{theorem}\label{th:fmist2}
  Let ${\R} \subseteq V \times V$ and $v,w\in V$. Then $\R$ is a \gstutname 
  iff $\R$ is an equivalence relation and $v \R w$ implies:
  \begin{enumerate}
    \item[a)] $\Omega(v) = \Omega(w)$;
    \item[b)] $v \forces{\player}{\R} \C$ iff $w \forces{\player}{\R} \C$ for 
    all $\player
    \in 
    \{\even,\odd\}, \C\in \partition{V}{\R} \setminus \{\class{v}{\R}\}$;
    \item[c)] $v \diverges{\player}{\R}$ iff $w\diverges{\player}{\R}$ for all 
    $\player \in 
    \{\even,\odd\}$.
  \end{enumerate}
\end{theorem}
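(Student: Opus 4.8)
The plan is to prove the two implications of Theorem~\ref{th:fmist2} separately, observing up front that clauses~\emph{a)} and~\emph{c)} transfer almost for free and that all of the content sits in clause~\emph{b)} of the ``$\Rightarrow$'' direction. For ``the three symmetric conditions imply governed stuttering bisimulation'', assume $\R$ is an equivalence satisfying them. Clause~\emph{a)} is verbatim the same, and the ``iff'' in the symmetric clause~\emph{c)} trivially entails the one-directional clause~\emph{c)} of Definition~\ref{def:fmist}. For clause~\emph{b)}: if $v \R w$ and $v \to \C$ with $\C \in \partition{V}{\R} \setminus \{\class{v}{\R}\}$, then the memoryless strategy for $\getplayer{v}$ that moves $v$ to a successor in $\C$ already witnesses $v \forces{\getplayer{v}}{\R} \C$, so the symmetric clause~\emph{b)} instantiated with player $\getplayer{v}$ yields $w \forces{\getplayer{v}}{\R} \C$. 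For the converse, a governed stuttering bisimulation is an equivalence by definition, clause~\emph{a)} is unchanged, and clause~\emph{c)} follows from symmetry of $\R$ by applying the one-directional divergence transfer both to $v \R w$ and to $w \R v$. Thus everything reduces to upgrading clause~\emph{b)} from its single-step form to the symmetric ``forcing'' form.

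For that, I would fix $v \R w$, a player $\player$, and $\C \in \partition{V}{\R} \setminus \{\class{v}{\R}\}$, write $D = \class{v}{\R} = \class{w}{\R}$ (so $v, w \notin \C$, as distinct classes are disjoint), and use symmetry of $\R$ to reduce to proving that $v \forces{\player}{D} \C$ implies $w \forces{\player}{D} \C$. Assuming $v \forces{\player}{D} \C$, I would invoke Lemma~\ref{lem:exits} with the sets $D$ and $\C$: for $S = \{ u \in D \mid \post{u} \cap \C \neq \emptyset \}$, either $V_{\player} \cap S \neq \emptyset$, or some $s \in S$ satisfies $\post{s} \subseteq \C$. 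In the first case, pick $s \in V_{\player} \cap S$; then $s \to \C$ and $\getplayer{s} = \player$, so clause~\emph{b)} of Definition~\ref{def:fmist} applied to $s \R w$ gives $w \forces{\player}{D} \C$ directly (indeed, every vertex of $D$ can then $\player$-force into $\C$). In the second case, Lemma~\ref{lem:exits_gen} applied to $s$ with $\U = \{\C\}$ (legitimate since $\post{s} \subseteq \C$ and $\C \neq D$) shows $\post{u} \subseteq D \cup \C$ for all $u \in D$, \ie the only way to leave $D$ is into $\C$; hence for $x \in \{v, w\}$ the predicate $x \forces{\player}{D} \C$ is equivalent to $x \forces{\player}{D} (V \setminus D)$ (monotonicity in the target gives one direction; for the other, any first vertex outside $D$ is a successor of a $D$-vertex, hence lies in $\C$), which by Lemma~\ref{lem:force-vs-div} is equivalent to $\neg(x \diverges{\opponent{\player}}{D})$. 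Since clause~\emph{c)} of Definition~\ref{def:fmist}, applied to $v \R w$ and to $w \R v$, gives $v \diverges{\opponent{\player}}{D}$ iff $w \diverges{\opponent{\player}}{D}$, we conclude $v \forces{\player}{D} \C$ iff $w \forces{\player}{D} \C$, and in particular $w \forces{\player}{D} \C$.

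The step I expect to be the main obstacle is the second case above. A naive induction on the attractor level of $v$ in $\battr{D}{\player}{\C}$, attempting to transport the witnessing structure to $w$, breaks down: a ``stuttering'' move inside $D$ makes no progress in the quotient and re-introduces the very class $D$ one is inducting over. The way around it is exactly the observation packaged in Lemma~\ref{lem:exits_gen}: a \emph{forced} exit $\post{s} \subseteq \C$ constrains the whole class $D$ to escape only into $\C$, which lets one rewrite the reachability statement ``$\player$ can force the play to $\C$'' as the divergence statement ``$\opponent{\player}$ cannot keep the play inside $D$'' (Lemma~\ref{lem:force-vs-div}), and divergence \emph{is} transported across a class by clause~\emph{c)}. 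Two smaller points I would still need to pin down: that the via-argument in $v \forces{\player}{\R} \C$ is read as the source's equivalence class $\class{v}{\R}$ (this is what makes both the single-step witness in the ``$\Leftarrow$'' direction and the invocations of Lemmas~\ref{lem:exits} and~\ref{lem:exits_gen} typecheck), and the trivial edge cases ($v = w$, or neither $v$ nor $w$ able to $\player$-force into $\C$), where the ``iff'' holds immediately.
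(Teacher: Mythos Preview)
Your proposal is correct and follows essentially the same approach as the paper: split via Lemma~\ref{lem:exits} into the case where some $\player$-vertex in the class has an edge to $\C$ (handled directly by clause~\emph{b)} of Definition~\ref{def:fmist}) and the case where a forced exit $\post{s}\subseteq\C$ exists, then use Lemma~\ref{lem:exits_gen} to conclude the class can only leave into $\C$, rewrite the forcing statement as non-divergence of $\opponent{\player}$ via Lemma~\ref{lem:force-vs-div}, and transport that by clause~\emph{c)}. The only cosmetic difference is that where you argue the equivalence $x \forces{\player}{D} \C \iff x \forces{\player}{D} (V\setminus D)$ inline, the paper invokes Lemma~\ref{lem:exits_restrict} for the nontrivial direction.
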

\begin{proof}
The proof for the implication from right to left follows immediately. We focus on
the implication from left to right. Assume that $\R$ is a governed stuttering
bisimulation. We prove the second condition only;
the other two conditions follow immediately from Definition~\ref{def:fmist} and
symmetry of $R$. Let
$\player$ be an arbitrary player and assume that $v \forces{\player}{\R} \C$ for
given $v \in V$ and $\C \in \partition{V}{\R} \setminus \{\class{v}{\R}\}$.
Let $S = \{u \in \class{v}{R} ~|~ \post{u} \cap \C \neq \emptyset\}$.
We distinguish two cases.
\begin{itemize}
\item Case $V_{\player} \cap S \neq \emptyset$. Let $u
\in S \cap V_{\player}$. Since $u \step \C$,
$w \forces{\player}{R} \C$ follows from Definition~\ref{def:fmist}. 

\item Case $S \subseteq V_{\opponent{\player}}$. By Lemma~\ref{lem:exits},
there is a $u \in S$ for which $\post{u} \subseteq \C$ and 
by Lemma~\ref{lem:exits_gen} (for $\U = \{\C\}$), 
$\post{u} \setminus \class{v}{R} \subseteq \C$ 
for \emph{all} $u \in \class{v}{R}$.
Furthermore, by
Lemma~\ref{lem:force-vs-div}, $v \forces{\player}{R} \C$ implies
$v \nforces{\opponent{\player}}{R}$. Then, by Definition~\ref{def:fmist},
$w \nforces{\opponent{\player}}{R}$ and by Lemma~\ref{lem:force-vs-div},
$w \forces{\player}{R} V \setminus \class{v}{R}$. But since
$\post{u} \subseteq \C \cup \class{v}{R}$ for all $u$, the desired
$w \forces{\player}{R} \C$ follows from
Lemma~\ref{lem:exits_restrict}.
\qedhere\end{itemize}
\end{proof}
While the above alternative characterisation of \gstutname is now
fully symmetric, the restriction on the class $\C$ that is considered
in clause \emph{b)} turns out to be too strong to facilitate an insightful proof
that $\gstut$ is an equivalence relation. We therefore further generalise this clause
such that it is phrased in terms of \emph{sets} of classes. \medskip

A perhaps surprising side-result of this
generalisation is that the
divergence requirement of clause \emph{c)} becomes superfluous.
Note that this generalisation is
not trivial, as $v \forces{\player}{\R} \{\C_1, \C_2\}$ is in general
neither equivalent to saying that $v \forces{\player}{\R} \C_1$ and $v 
\forces{\player}{\R}
\C_2$, nor $v \forces{\player}{\R} \C_1$ or $v \forces{\player}{\R}
\C_2$.

\begin{theorem} \label{th:fmist3}
  Let ${\R} \subseteq V \times V$ and $v,w\in V$. Then $\R$ is a \gstutname 
  iff $\R$
  is an equivalence relation and $v \R w$ implies:
  \begin{enumerate}
    \item[a)] $\prio{v} = \prio{w}$; \label{th:fmist3-prio}
    \item[b)] $v \forces{\player}{\R} \U$ iff $w\forces{\player}{\R} \U$ for 
    all $\player \in 
    \{\even,\odd\}, \U \subseteq \partition{V}{\R} \setminus 
    \{\class{v}{\R}\}$. 
    \label{th:fmist3-transfer}
  \end{enumerate}
\end{theorem}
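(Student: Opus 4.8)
The plan is to prove both directions of the equivalence by transiting through the characterisation in Theorem~\ref{th:fmist2}. Throughout, whenever $v \R w$ I write $\C = \class{v}{\R} = \class{w}{\R}$ and $\U_{\max} = \partition{V}{\R} \setminus \{\C\}$; recall that $\bigcup \U_{\max} = V \setminus \C$ and that $v \forces{\player}{\R} T$ says that $\player$ can force the play from $v$ to $T$ while staying inside $\C$. Relative to Theorem~\ref{th:fmist2}, Theorem~\ref{th:fmist3} drops the divergence clause~c) and replaces the single class $\C'$ of clause~b) by an arbitrary set $\U$ of classes, and the heart of the proof is to show that these two changes compensate each other.

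For the implication from right to left, assume $\R$ is an equivalence relation satisfying a)--b) of Theorem~\ref{th:fmist3}, and check a)--c) of Theorem~\ref{th:fmist2}. Clause~a) is verbatim the same, and clause~b) there is the instance $\U = \{\C'\}$ of clause~b) here. For clause~c), I would use Lemma~\ref{lem:force-vs-div} to rewrite $v \diverges{\player}{\R}$ as $v \nforces{\opponent{\player}}{\R} \U_{\max}$, and likewise for $w$; since $v$ and $w$ share $\C$, hence $\U_{\max}$, the instance $\U = \U_{\max}$ of clause~b) here gives $v \forces{\opponent{\player}}{\R} \U_{\max}$ iff $w \forces{\opponent{\player}}{\R} \U_{\max}$, and negating this yields clause~c). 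This is precisely the point at which the divergence requirement is absorbed; the degenerate instance $\U_{\max} = \emptyset$ is harmless.

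For the implication from left to right, assume $\R$ is a \gstutname, so that by Theorem~\ref{th:fmist2} it is an equivalence satisfying a)--c) there. Fix $v \R w$, a player $\player$ and $\U \subseteq \partition{V}{\R} \setminus \{\C\}$; since $\R$ is symmetric it suffices to prove $v \forces{\player}{\R} \U \implies w \forces{\player}{\R} \U$. I would apply Lemma~\ref{lem:exits} to $v \forces{\player}{\C} \bigcup \U$ (noting $v \notin \bigcup \U$): either \textbf{(A)} some $\player$-owned $u \in \C$ has a successor in a class $\C' \in \U$, or \textbf{(B)} some $u \in \C$ satisfies $\post{u} \subseteq \bigcup \U$. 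In case~(A), $u \to \C'$ with $\C' \neq \C$, so clause~b) of Definition~\ref{def:fmist} applied to $u \R w$ gives $w \forces{\getplayer{u}}{\R} \C'$; as $\getplayer{u} = \player$ and $\C' \subseteq \bigcup \U$ this is the desired $w \forces{\player}{\R} \U$. In case~(B), Lemma~\ref{lem:exits_gen} applied to $u$ propagates $\post{u} \subseteq \bigcup \U$ to $\post{u'} \subseteq \C \cup \bigcup \U$ for \emph{every} $u' \in \C$, i.e.\ the play can leave $\C$ only into $\bigcup \U$; hence by Lemma~\ref{lem:exits_restrict} the statement $v \forces{\player}{\R} \U$ is equivalent to $v \forces{\player}{\R} \U_{\max}$, which by Lemma~\ref{lem:force-vs-div} amounts to $\neg(v \diverges{\opponent{\player}}{\R})$. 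Clause~c) of Theorem~\ref{th:fmist2} transfers this to $w$, and running the same equivalences backwards (again using $\post{u'} \subseteq \C \cup \bigcup \U$ for all $u' \in \C$) yields $w \forces{\player}{\R} \U$, finishing this direction.

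The main obstacle is exactly clause~b) of Theorem~\ref{th:fmist3} for non-singleton $\U$: as the excerpt already warns, $v \forces{\player}{\R} \U$ is in general neither the conjunction nor the disjunction of the statements $v \forces{\player}{\R} \C'$ over $\C' \in \U$, so there is no componentwise reduction to Theorem~\ref{th:fmist2}b). The device that makes case~(B) work is the dichotomy provided by Lemma~\ref{lem:exits}: either a single matched exit step already suffices, or the class $\C$ is ``$\bigcup \U$-closed'', in which case reaching $\bigcup \U$ coincides with the failure of $\opponent{\player}$ to diverge inside $\C$, so the divergence clause of Theorem~\ref{th:fmist2} can be used to close the argument. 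I expect verifying the exact hypotheses of Lemmas~\ref{lem:exits}, \ref{lem:exits_gen} and~\ref{lem:exits_restrict} in case~(B) (the bookkeeping around $\C$, $\bigcup\U$ and $V \setminus \C$) to be the most error-prone part.
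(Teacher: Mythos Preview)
Your proposal is correct and follows essentially the same route as the paper's own proof: both directions transit through Theorem~\ref{th:fmist2}, the right-to-left direction recovers clause~c) via Lemma~\ref{lem:force-vs-div} at $\U = \U_{\max}$, and the left-to-right direction splits via Lemma~\ref{lem:exits} into a ``player-owned exit'' case (handled by a single-class transfer) and a ``$\U$-closed'' case (handled through divergence and Lemma~\ref{lem:exits_restrict}). The only cosmetic differences are that the paper makes the case split exclusive ($S \cap V_{\player} \neq \emptyset$ versus $S \subseteq V_{\opponent{\player}}$) and invokes Theorem~\ref{th:fmist2}b) rather than Definition~\ref{def:fmist}b) in case~(A); neither affects the argument.
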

\begin{proof}
We show that the second condition is equivalent to the conjunction of the last
two conditions in Theorem~\ref{th:fmist2}. We split the proof into an \emph{if}-part
and an \emph{only-if}-part.

\begin{itemize}
\item[$\Leftarrow$] The second condition from Theorem~\ref{th:fmist2} is equivalent to
the second condition above if we let $\U$ range only over singleton sets (if
$v \forces{\player}{\R} \C$, take $\U = \{\C\}$). The third condition is equivalent
to the second condition above, where $\U = \partition{V}{\R} \setminus \{\class{v}{\R}\}$.
This can be seen by appealing to Lemma~\ref{lem:force-vs-div}.

\item[$\Rightarrow$] Let $\R$ be a governed stuttering bisimulation and let $v,w \in V$
such that $v \R w$. Assume that $v \forces{\player}{\R} \U$ for some $\U \subseteq
\partition{V}{\R} \setminus \{\class{v}{\R}\}$. Let $S = \{u \in \class{v}{\R} ~|~ u \step \U\}$.
By Lemma~\ref{lem:exits}, either $S \cap V_{\player} \neq \emptyset$ or
there is some $u \in S$ such that $\post{u} \subseteq \U$. We consider both cases
separately.
\begin{itemize}
\item Case $S \cap V_{\player} \neq \emptyset$. Pick some $u$ from this set.
There is a class
$\C \in \U$ such that $u \step \C$ (in particular, $u \forces{\player}{\R} \C$ since $u \in V_{\player}$). By Theorem~\ref{th:fmist2}
then also $w \forces{\player}{\R} \C$, from which
$w \forces{\player}{\R} \U$ follows immediately.

\item Case $S \cap V_{\player} = \emptyset$. Then $\post{u} \subseteq
\U$ for some $u \in S$, but then, by Lemma~\ref{lem:exits_gen},
$\post{t} \setminus \class{v}{R} \subseteq \U$ for all $t \in \class{v}{R}$.
From $v \forces{\player}{\R} \U$ we derive, using
Lemma~\ref{lem:force-vs-div}, that not $v
\diverges{\opponent{\player}}{\R}$. By Theorem~\ref{th:fmist2} it
follows that not $w \diverges{\opponent{\player}}{\R}$, and by
Lemma~\ref{lem:force-vs-div} again $w \forces{\player}{\R} V \setminus
\class{v}{\R}$. Since for all $t \in \class{v}{R}$,
$\post{t} \subseteq \class{v}{\R} \cup \bigcup\U$ and $\bigcup\U \subseteq
V \setminus \class{v}{\R}$, by Lemma~\ref{lem:exits_restrict},
also $w \forces{\player}{\R} \U$.\qedhere
\end{itemize}

\end{itemize}
\end{proof}
In the previous theorem, we lifted the notion
of forcing play via the current equivalence class towards a target class, to the notion of forcing
a play via the current equivalence class towards a set of target classes. This is still not sufficient
for easily proving transitivity of governed stuttering bisimulation. Therefore, in the theorem below,
we introduce a final generalisation; rather than forcing play towards a set of target classes
via the current equivalence class, we now allow the play to be forced to that set via a set
of equivalence classes.
\begin{theorem} \label{th:fmist4}
  Let ${\R} \subseteq V \times V$ and $v,w\in V$. Then $\R$ is a \gstutname 
  iff $\R$ is an equivalence relation and $v \R w$ implies:
  \begin{enumerate}
    \item[a)] $\prio{v} = \prio{w}$; \label{th:fmist4-prio}
    \item[b)] $v \forces{\player}{\U} \T$ iff $w\forces{\player}{\U} \T$ for 
    all $\player \in 
    \{\even,\odd\}, \U,\T \subseteq \partition{V}{\R}$ such that
    $\class{v}{\R} \in \U$ and $\class{v}{\R} \notin \T$. 
    \label{th:fmist4-transfer}
  \end{enumerate}
\end{theorem}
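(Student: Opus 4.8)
\emph{The plan is to prove the two implications separately; the right-to-left direction is immediate, and essentially all the work lies in the left-to-right direction.} \emph{From right to left:} assume $\R$ is an equivalence relation satisfying a) and b). Instantiating b) with $\U := \{\class{v}{\R}\}$ is legitimate, since $\class{v}{\R}\in\{\class{v}{\R}\}$ and $\class{v}{\R}\notin\T$ for every $\T\subseteq\partition{V}{\R}\setminus\{\class{v}{\R}\}$; because forcing ``via'' the singleton $\{\class{v}{\R}\}$ is exactly the relation written $\forces{\player}{\R}$, this instance of b) is literally condition b) of Theorem~\ref{th:fmist3}. Condition a) is unchanged, so Theorem~\ref{th:fmist3} gives that $\R$ is a governed stuttering bisimulation. (The side conditions $\class{v}{\R}\in\U$ and $\class{v}{\R}\notin\T$ are only relevant here.)

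\emph{From left to right,} I would reduce the whole direction to the following claim: \textbf{Claim.} \emph{If $\R$ is a governed stuttering bisimulation, $\player\in\{\even,\odd\}$ and $\U,\T\subseteq\partition{V}{\R}$, then $\battr{\bigcup\U}{\player}{\bigcup\T}$ is a union of $\R$-classes.} Granting the Claim, the theorem follows at once: conditions a) and c) of Theorem~\ref{th:fmist2} hold by Definition~\ref{def:fmist} and symmetry of $\R$, and for b) note that, by Lemma~\ref{lem:attractor_vs_forces}, $v\forces{\player}{\U}\T$ is equivalent to $v\in\battr{\bigcup\U}{\player}{\bigcup\T}$; since an $\R$-saturated set cannot contain $v$ without containing every $w$ with $v\R w$, we obtain $v\forces{\player}{\U}\T\Rightarrow w\forces{\player}{\U}\T$, and the converse is symmetric. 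Theorem~\ref{th:fmist2} then finishes the ``only if'' part.

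To prove the Claim, write $X:=\battr{\bigcup\U}{\player}{\bigcup\T}$ and let $T':=\bigcup\{\C\in\partition{V}{\R}\mid\C\subseteq X\}$, the largest $\R$-saturated subset of $X$; note $\bigcup\T\subseteq T'\subseteq X$. Assume for contradiction $T'\neq X$ and pick $u\in X\setminus T'$ of least attractor rank $n$, i.e.\ $u\in\battr[n]{\bigcup\U}{\player}{\bigcup\T}$ with $n$ minimal; since $\battr[0]{\bigcup\U}{\player}{\bigcup\T}=\bigcup\T\subseteq T'$ we have $n\geq 1$ and $\battr[n-1]{\bigcup\U}{\player}{\bigcup\T}\subseteq T'$. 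Let $C:=\class{u}{\R}$. Then $C\cap X\neq\emptyset$ but $C\not\subseteq X$ (else $C\subseteq T'$ and $u\in T'$), $C\in\U$ (as $u\in X\subseteq\bigcup\U\cup\bigcup\T$ while $u\notin\bigcup\T$), and $C$ is disjoint from $T'$. If $\getplayer{u}=\player$, then $u$ has a successor $u_1\in\battr[n-1]{\bigcup\U}{\player}{\bigcup\T}\subseteq T'$, so $\class{u_1}{\R}\subseteq T'\subseteq X$ and $\class{u_1}{\R}\neq C$; for each $u'\in C$, Definition~\ref{def:fmist}(b) gives $u'\forces{\player}{\R}\class{u_1}{\R}$, hence $u'\forces{\player}{\bigcup\U}\class{u_1}{\R}$, and since every vertex of $\class{u_1}{\R}$ forces via $\bigcup\U$ to $\bigcup\T$, Lemma~\ref{lem:glue-strategies} yields $u'\in X$, so $C\subseteq X$, a contradiction. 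If $\getplayer{u}=\opponent{\player}$, then $\post{u}\subseteq\battr[n-1]{\bigcup\U}{\player}{\bigcup\T}\subseteq T'$, and since $C\not\subseteq X$, Lemma~\ref{lem:exits_gen} (applied with the class set $\{\C\mid\C\subseteq X\}$, which omits $C$) gives $\post{u'}\subseteq C\cup T'$ for all $u'\in C$. As $u\in X$ and $\bigcup\T$ is disjoint from $C$, any $\player$-strategy witnessing $u\in X$ drives every play from $u$ out of $C$, its first vertex outside $C$ lying in $T'$ by the successor property; hence $u\forces{\player}{C}T'$, so $u\forces{\player}{C}V\setminus C$, and by Lemma~\ref{lem:force-vs-div} not $u\diverges{\opponent{\player}}{\R}$. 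By Definition~\ref{def:fmist}(c) and symmetry, not $u'\diverges{\opponent{\player}}{\R}$ for every $u'\in C$, so $u'\forces{\player}{C}V\setminus C$ by Lemma~\ref{lem:force-vs-div}, and then $u'\forces{\player}{C}T'$ by Lemma~\ref{lem:exits_restrict}. Hence $C\subseteq\battr{C}{\player}{T'}\subseteq\battr{\bigcup\U}{\player}{T'}\subseteq\battr{\bigcup\U}{\player}{X}=X$ by monotonicity and idempotence of $\attrsym$, again contradicting $C\not\subseteq X$. This proves the Claim.

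The main obstacle is the case $\getplayer{u}=\opponent{\player}$. In Theorem~\ref{th:fmist3} the ``through'' region is the single class $\class{v}{\R}$, so one step from $u$ already leaves it; here the ``through'' region spans many classes and a related vertex $u'$ may step back into its own class $C$, so $\post{u'}\subseteq X$ cannot be read off directly. The divergence argument circumvents this: inside $C$, either $\player$ forces every vertex out into the already-saturated set $T'$, or $\opponent{\player}$ forces a divergence within $C$ from $u$ — impossible since $u\in X$. This is precisely where clause c) of Definition~\ref{def:fmist}, which disappeared from the statement of Theorem~\ref{th:fmist3}, is used in an essential way.
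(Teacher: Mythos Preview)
Your argument is correct, and it takes a genuinely different route from the paper's. The paper proves the forward direction by induction on the length of the longest $\strategyname$-prefix of a play from $v$ to $\T$: it peels off the first step out of $\class{v}{\R}$, invokes Theorem~\ref{th:fmist3} to transport that single-class exit to $w$, applies the induction hypothesis at the exit points, and glues via Lemma~\ref{lem:glue-strategies}. By contrast, you prove a structural invariant --- that $\battr{\bigcup\U}{\player}{\bigcup\T}$ is $\R$-saturated --- by contradiction on the minimal attractor rank of an ``unsaturated'' vertex, appealing directly to Definition~\ref{def:fmist} (both the transfer clause and the divergence clause) together with Lemmata~\ref{lem:exits_gen}, \ref{lem:force-vs-div}, \ref{lem:exits_restrict}, and monotonicity/idempotence of $\attrsym$. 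Your approach bypasses Theorem~\ref{th:fmist3} entirely and yields a slightly stronger, reusable statement (the Claim); the paper's approach is more operational, explicitly threading the strategy through the sequence of classes. Both split on the owner of the critical vertex and handle the $\opponent{\player}$-owned case via the divergence condition, so the underlying mechanism is the same even though the packaging differs.

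One cosmetic remark: your sentence ``conditions a) and c) of Theorem~\ref{th:fmist2} hold \ldots\ Theorem~\ref{th:fmist2} then finishes the `only if' part'' is a little muddled --- what you actually need, and what you establish, is condition~b) of the present theorem via the Claim, with condition~a) immediate from Definition~\ref{def:fmist}. The references to Theorem~\ref{th:fmist2} are extraneous here (though harmless).
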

\begin{proof}
We show that the second condition is equivalent to the second condition in
Theorem~\ref{th:fmist3}. We split the proof into an \emph{if}-case and an
\emph{only-if}-part.
\begin{itemize}
\item[$\Leftarrow$] The second condition from Theorem~\ref{th:fmist3} is equivalent
to the second condition above if we fix $\U = \{\class{v}{\R}\}$.

\item[$\Rightarrow$] Let $\R$ be a governed stuttering bisimulation and let
$\player, v, w, \U$ and $\T$ be as described. Assume that $v \forces{\player}{\U} \T$;
under this assumption we will prove that $w \forces{\player}{\U} \T$. The proof for
the implication in the other direction is completely symmetric. Let $\strategyname \in \strategy{\player}$ be
such that $v \forces{\strategyname}{\U} \T$ and consider the set of paths originating
in $v$ that are allowed by $\strategyname$. All these paths must have a prefix
$v \dots v', u$ such that $v, \dots, v' \notin \bigcup \T$ but $u \in \bigcup \T$. Call these
prefixes the $\strategyname$-prefixes of $v$. 

We proceed by induction on the length of the longest such prefix. 
If the longest
prefix has length 2, then all prefixes have length 2, implying that $v \steps{\player} \T$.
In particular, $v \forces{\player}{\R} \T$ and by Theorem~\ref{th:fmist3} also
$w \forces{\player}{\R} \T$, which proves $w \forces{\player}{\U} \T$.

As the induction hypothesis, assume that if $u \R u'$, $u \forces{\strategyname}{\U} \T$
and the longest $\strategyname$-prefix of $u$ is shorter than the longest $\strategyname$-prefix
of $v$, then $u' \forces{\player}{\U} \T$. Note that every $\strategyname$-prefix
$\pathvar$ of $v$ must have a first position $n$ such that $\ind{\pathvar}{n} \notin
\class{v}{\R}$. Collect all these $\ind{\pathvar}{n}$ in a set $U$, and notice that
for all $u \in U$, also $u \forces{\strategyname}{\U} \T$.
Furthermore, $v \forces{\strategyname}{\R} U$. 

By Theorem~\ref{th:fmist3},
$w \forces{\player}{\R} \class{U}{\R}$. Now consider an arbitrary $u' \in \bigcup
\class{U}{\R}$. Because there is some $u \in U$ such that $u \R u'$, its longest $\strategyname$-prefix is shorter than the longest $\strategyname$-prefix of $v$, and
because $u \forces{\strategyname}{\U} \T$ for such $u$, we can use the induction
hypothesis to derive that $u' \forces{\player}{\U} \T$.

The above in particular implies two facts: $w \forces{\player}{\U} \class{U}{\R}$, and
$u' \forces{\player}{\U} \T$ for all $u' \in \bigcup \class{U}{\R}$. Using these,
we can now apply Lemma~\ref{lem:glue-strategies} to conclude 
$w \forces{\player}{\U} \T$.\qedhere

\end{itemize}
\end{proof}

With this last characterisation, it is now straightforward to prove that
governed stuttering bisimilarity is an equivalence relation. We do so by
showing that the transitive closure of the union of two governed stuttering
bisimulations $R$ and $S$ is again a governed stuttering bisimulation.
The generalisation from classes to sets of classes allows us to view equivalence
classes in $(R \cup S)^*$ as the union of sets of equivalence classes of
$R$ (or $S$), giving us an easy way to compare the effect of the second
requirement of Theorem~\ref{th:fmist4} on $(R \cup S)^*$ with its effect on
$R$ and $S$.
\begin{theorem}\label{th:fmist-equiv}
  $\gstut$ is an equivalence relation.
\end{theorem}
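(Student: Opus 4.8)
The plan is to lean entirely on the reformulation from Theorem~\ref{th:fmist4}, which characterises governed stuttering bisimulations through just two clauses: equal priorities, and ``$v \forces{\player}{\U} \T$ iff $w \forces{\player}{\U} \T$'' for every player $\player$ and all families of classes $\U,\T \subseteq \partition{V}{\R}$ with $\class{v}{\R} \in \U$ and $\class{v}{\R} \notin \T$. With this in hand, reflexivity and symmetry of $\gstut$ are cheap: the identity relation on $V$ satisfies both clauses vacuously (each class is a singleton and $v \R w$ forces $v=w$), so $\gstut$ is reflexive; and since every governed stuttering bisimulation is by definition an equivalence, hence symmetric, the union $\gstut$ of all of them is symmetric. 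All the effort goes into transitivity, which I would obtain from the stronger claim that whenever $R$ and $S$ are governed stuttering bisimulations, so is $\Q \isdef \RSt$. Given $v \R w$ and $w \S u$ this yields $v \Q u$, and therefore $v \gstut u$.

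To check that $\Q$ meets the conditions of Theorem~\ref{th:fmist4}, I would first observe that $\Q$ is an equivalence relation (it is the transitive closure of the reflexive, symmetric relation $\RS$), and record the structural fact that, because $R \subseteq \Q$ and $S \subseteq \Q$, every $\Q$-class is a disjoint union of $R$-classes and, independently, a disjoint union of $S$-classes. Hence for any $\U \subseteq \partition{V}{\Q}$ the vertex set $\bigcup \U$ is at once a union of $\Q$-classes, a union of $R$-classes, and a union of $S$-classes; writing $\U_R$ (resp.\ $\U_S$) for the set of $R$-classes (resp.\ $S$-classes) contained in $\bigcup\U$ we get $\bigcup\U_R = \bigcup\U_S = \bigcup\U$, and similarly for $\T$. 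Clause (a) for $\Q$ is immediate, since priorities are preserved along every $\RS$-step and hence along every $\Q$-path. For clause (b), given $v \Q u$ and $\U,\T$ with $\class{v}{\Q} \in \U$ and $\class{v}{\Q} \notin \T$, I would take a chain $v = x_0 \RS x_1 \RS \cdots \RS x_n = u$ and establish ``$x_i \forces{\player}{\U} \T$ iff $x_{i+1} \forces{\player}{\U} \T$'' link by link. For a link $x_i \R x_{i+1}$ (the $S$-case being symmetric), every $x_i$ shares the class $\class{v}{\Q}$, so $x_i \in \bigcup\U \setminus \bigcup\T$, whence $\class{x_i}{R} \in \U_R$ and $\class{x_i}{R} \notin \T_R$; since the forcing predicate depends only on the underlying vertex sets, $x_i \forces{\player}{\U} \T$ iff $x_i \forces{\player}{\U_R} \T_R$, and by Theorem~\ref{th:fmist4}(b) applied to $R$ this is equivalent to $x_{i+1} \forces{\player}{\U_R} \T_R$, i.e.\ to $x_{i+1} \forces{\player}{\U} \T$. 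Chaining the links finishes clause (b).

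The routine parts are the bookkeeping that $\Q$-classes decompose into $R$- and $S$-classes, and the checks that the ``current class'' membership conditions transfer correctly between $\U$ and $\U_R$ (resp.\ $\U_S$). The one genuine obstacle — and precisely the reason Theorem~\ref{th:fmist4} is phrased with sets of classes on both sides of the forcing arrow — is clause (b): the predicate ``force a play, staying inside a set of classes, to a set of classes'' has to be compared across the three equivalences $R$, $S$ and $\Q$, whose classes are genuinely different. What makes the comparison go through is that the predicate only ever inspects the two vertex sets $\bigcup\U$ and $\bigcup\T$, and those are simultaneously expressible as unions of $R$-classes and as unions of $S$-classes, so a single $\RS$-step collapses to exactly one application of the transfer condition for $R$ or for $S$; without the set-of-classes generalisation, splitting a $\Q$-class into the finer classes of $R$ or $S$ would fall outside the scope of the transfer condition of Definition~\ref{def:fmist} and the argument would stall.
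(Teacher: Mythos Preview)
Your proposal is correct and follows essentially the same route as the paper: both arguments establish that $(R \cup S)^*$ is again a governed stuttering bisimulation by invoking Theorem~\ref{th:fmist4}, decomposing the $\Q$-class families $\U,\T$ into the finer $R$- and $S$-class families with the same underlying vertex sets, and then propagating the forcing condition along an $R$--$S$ chain link by link. Your write-up is slightly more explicit about reflexivity, symmetry and the decomposition of $\T$, but the key idea and the reliance on the set-of-classes generalisation are identical.
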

\begin{proof}
  We show that $(R \cup S)^*$ is a governed stuttering bisimulation if $R$ and
$S$ are, by showing that $(R \cup S)^*$ satisfies the conditions of Theorem~\ref{th:fmist4}
if $R$ and $S$ do. If $v,w \in V$ are related under $(R \cup S)^*$, then there exists
a sequence of vertices $u_0, \dots, u_n$ such that $v \R u_0 \S \dots \R u_n \S w$
(the strict alternation between the two relations can always be achieved because
$R$ and $S$ are reflexive). By transitivity of $=$ we then have $\prio{v} = \prio{w}$,
so the first property is satisfied.

For the second property, assume that $v \forces{\player}{\U} \T$ for some $\player
\in \{\even,\odd\}$ and some $\U,\T \subseteq \partition{V}{(R \cup S)^*}$ such that
$\class{v}{(R \cup S)^*} \in \U$ and $\class{v}{(R \cup S)^*} \notin \T$. We need to
prove that $w \forces{\player}{\U} \T$. Note that $R$ and $S$ both refine $(R \cup S)^*$,
so we can find sets $\U_R \subseteq \partition{V}{\R}$ and $\U_S \subseteq \partition{V}{\S}$
such that $\bigcup \U_R = \bigcup \U_S = \bigcup \U$. Because
$v \forces{\player}{\U} \T$, also $v \forces{\player}{\U_R} \T$, and by
Theorem~\ref{th:fmist4} then $u_0 \forces{\player}{\U_R} \T$, which is equivalent
to $u_0 \forces{\player}{\U_S} \T$. By a simple inductive argument we now arrive
at $w \forces{\player}{\U_S} \T$, which is equivalent to $w \forces{\player}{\U} \T$.
\qedhere 
\end{proof}
As a side-result of the proof of Theorem~\ref{th:fmist-equiv}, we
find that the union of \emph{all} governed stuttering bisimulations
is again a governed stuttering bisimulation, which coincides with
governed stuttering bisimilarity.\medskip

In order to better understand the differences between governed
stuttering bisimulation and, \eg delayed simulation equivalence,
we next provide a game-based characterisation of the relation. While
in this new game, \spoiler and \duplicator still move according to the
same rules as in the delayed simulation game, \duplicator now has
more freedom to choose a new configuration: she can now also choose
to `roll-back' one of the proposed moves. This allows her to postpone
matching a move. Of course, such moves may not be postponed
indefinitely, so some additional mechanism is needed to keep track of
\duplicator's
progress so as to prevent \duplicator from becoming too powerful.
For this, we use a system of challenges and rewards: a $\dagger$-challenge
indicates \duplicator decided to match a move by \spoiler by not
moving; a $\checkmark$-reward indicates \duplicator matched a move by
\spoiler by making a countermove, and a challenge $(k,u)$ taken from $\{0,1\} \times V$
indicates that \duplicator is in the process of matching a move to
vertex $u$. We let $C$ denote the set of challenges $(\{0,1\} \times V) \cup
\{\dagger,\checkmark\}$.

\begin{definition}[Governed Stuttering bisimulation game]
\label{def:gstut_game}
The governed stuttering bisimulation game is played on an arena of
configurations drawn from $(V \times V) \times C$, and it is
played in rounds. A round of the game starting in a configuration
$((v,w),c)$ proceeds as follows:

\begin{enumerate}
\item \spoiler chooses to play from $(u_0,u_1) \in \{(v,w), (w, v)\}$;

\item the players move from $(u_0,u_1)$ to $(t_0,t_1)$ according to
the rules in Table~\ref{tab:simulation_game_rules};

\item \duplicator selects a new configuration drawn from the following set:
\[\{ ((t_0,t_1),\checkmark),\
((u_0,t_1),\update(c,(0,t_0),v,u_0)),\
((t_0,u_1),\update(c,(1,t_1),w,u_1)) \}
\] 
where update $\update$ is defined as follows:
\[\update(c,c',u,t) = \left \{ \begin{array}{ll}
 c' & \text{if \spoiler played on $t$, $u = t$ and $c \in \{\dagger,\checkmark,c'\}$} \\
 \checkmark &\text{if $u \neq t$, or \spoiler played on $t$ and 
$c \notin \{\dagger,\checkmark,c'\}$} \\
 \dagger &\text{otherwise}
\end{array}
\right .
\] \\

\end{enumerate}

An infinite play $((v_0,w_0),c_0), ((v_1,w_1),c_1), \dots$ is won by
\duplicator if and only if $\prio{v_j} = \prio{w_j}$ for all $j$ and $c_k = \checkmark$
for infinitely many $k$.
\duplicator wins the governed stuttering bisimulation game for a
position $(v,w)$ iff she
has a strategy that wins all plays starting in configuration $((v,w),\checkmark)$.  
\end{definition}
We write $v \gstutg w$ whenever \duplicator wins the governed stuttering bisimulation game
for position $(v,w)$.

Observe that in the governed stuttering game, \duplicator earns, as explained
before, a $\checkmark$ reward whenever she continues playing in the position
determined at the end of step~2. However, she also earns a $\checkmark$ whenever
\spoiler decides to drop a pending challenge or, in step~1 of a round, switch
positions. The example below illustrates some of the intricacies in the game
play.
\begin{example}
Consider the parity game depicted in Figure~\ref{fig:fake-divergence}. In this parity game, all vertices with priority $0$ are related by $\gstut$. The game illustrates why \duplicator gains a $\checkmark$ reward whenever \spoiler does not respect a pending challenge. This can be seen as follows: consider the game starting in $((v_1, v_3), \checkmark)$ and suppose \spoiler decides to play $v_1 \to v_2$. The only suitable response by \duplicator is to play $v_3 \to v_4$. New configurations $((v_2, v_4), \checkmark)$ and $((v_2, v_3, \checkmark))$ are not an option for \duplicator since he immediately loses due to the different priorities of $v_2$ and $v_4$ or $v_3$ respectively. The new configuration chosen by \duplicator will hence be $((v_1, v_3), (0, v_2))$, challenging \spoiler to play $v_1 \to v_2$ again in the next round. From this configuration, if \spoiler indeed plays $v_1 \to v_2$, \duplicator can match with $v_4 \to v_2$, and play stays in $((v_2, v_2), \checkmark)$ indefinitely, leading to a win from duplicator. Now, let us consider what happens if \spoiler plays $v_1 \to v_0$ instead. \spoiler did not respect the challenge, and \duplicator matches with $v_4 \to v_3$, and we end up in $((v_1, v_3), \checkmark)$ again. If \duplicator would not have earned a \checkmark reward in this case, play would have ended up in $((v_1, v_3), (0, v_0))$ instead, and, if in the next round \spoiler again ignores the challenge, play can alternate indefinitely between $((v_1, v_3), (0, v_0))$ and $((v_1, v_4), (0, v_2))$, which would result in a win for \spoiler. This is undesirable since we already observed that $v_1, v_3$ and $v_4$ are governed stuttering bisimilar.
\end{example}

For the remainder of this section we turn our attention to proving that the
governed stuttering bisimulation game and governed stuttering bisimulation
coincide. Our next result states that whenever vertices $v,w$ are
governed stuttering bisimilar, \duplicator wins all plays starting in
configuration $((v,w),\checkmark)$. We sketch the main ideas behind the 
proof; details can be found in the Appendix.

\begin{proposition}\label{prop:soundness_gstut}
  For all $v,w \in V$ if $v \gstut w$ then $v \gstutg w$.
\end{proposition}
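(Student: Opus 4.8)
The plan is to convert a governed stuttering bisimulation witnessing $v \gstut w$ into a winning strategy for \duplicator in the game of Definition~\ref{def:gstut_game}. I would take the witness to be $\gstut$ itself, which by the side-result of Theorem~\ref{th:fmist-equiv} is a governed stuttering bisimulation and therefore enjoys the \emph{symmetric} characterisations of Theorems~\ref{th:fmist2}, \ref{th:fmist3} and~\ref{th:fmist4}; symmetry is exactly what lets us treat uniformly the two orientations $(u_0,u_1)\in\{(v,w),(w,v)\}$ that \spoiler may pick in step~1 of a round. Write $\R$ for $\gstut$. The strategy maintains the invariant that every configuration $((v_j,w_j),c_j)$ it produces satisfies $v_j \R w_j$ -- hence $\prio{v_j}=\prio{w_j}$ by clause~a), which discharges the priority half of the winning condition once and for all -- and that, whenever $c_j\neq\checkmark$, \duplicator has committed to a target class $\C_j\in\partition{V}{\R}\setminus\{\class{v_j}{\R}\}$ and to a memoryless strategy for some player $\player_j$ witnessing that the currently ``unfrozen'' component $t$ of the pair satisfies $t\forces{\player_j}{\R}\C_j$; the attractor rank of $t$ in $\battr{\class{v_j}{\R}}{\player_j}{\C_j}$ (Lemma~\ref{lem:attractor_vs_forces}) serves as a progress measure.

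In a round starting in $((v,w),c)$, after \spoiler fixes an orientation and the players have moved to $(t_0,t_1)$ according to Table~\ref{tab:simulation_game_rules}, \duplicator proceeds as follows. If $t_0 \R t_1$ she plays $((t_0,t_1),\checkmark)$, re-establishing the invariant and scoring a $\checkmark$; when $c=\checkmark$ and the two components sit in the same class, such a related successor pair always exists by clause~b) of Definition~\ref{def:fmist} (this is the one place the four owner cases of Table~\ref{tab:simulation_game_rules} need to be checked in detail). Otherwise one component has left $\class{v}{\R}$ into some class $\C$; \duplicator rolls that move back, records the corresponding challenge, and installs $\C$ as the new target class. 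Because the exiting vertex is related to the still-in-class vertex, the transfer condition (clause~b), combined with the symmetric reformulation of Theorem~\ref{th:fmist3}) yields a forcing statement $\forces{\player}{\R}\C$ for the still-in-class component, for the appropriate $\player$; by Lemmas~\ref{lem:mem_vs_memless} and~\ref{lem:attractor_vs_forces} \duplicator fixes a memoryless witness and its attractor rank. As long as \spoiler keeps honouring the challenge, \duplicator follows this witness on the unfrozen component and rolls the frozen one back each round; the attractor rank then strictly decreases while the play stays inside $\class{v}{\R}$, so after finitely many rounds the unfrozen component enters $\C$, \duplicator honours the challenge, and the update function of Definition~\ref{def:gstut_game} awards her a $\checkmark$. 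If at any point \spoiler stops honouring the challenge, the same update function awards a $\checkmark$ immediately and \duplicator returns to the $c=\checkmark$ regime.

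To see the strategy wins, observe first that the invariant is preserved in every case, so priorities always agree. For the B\"uchi part, suppose towards a contradiction that from some round onwards no $\checkmark$ occurs. Just after the last $\checkmark$ the configuration carries $\checkmark$; by the strategy the next round either produces a related pair (another $\checkmark$, contradiction) or installs a challenge, after which every round must be honoured -- a dishonoured round, or a round producing a related pair, would yield a $\checkmark$. But then the attractor rank strictly decreases at every round, which is impossible for a well-founded, indeed finite, measure. Hence $\checkmark$ occurs infinitely often, \duplicator wins all plays from $((v,w),\checkmark)$, and $v \gstutg w$. Re-establishing the invariant after a $\checkmark$, gluing the partial forcing strategies when needed (Lemma~\ref{lem:glue-strategies}), and finiteness of the rank are routine.

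I expect the main obstacle to be the bookkeeping that links the roll-back/challenge mechanism of Definition~\ref{def:gstut_game} to the forcing statements of Theorems~\ref{th:fmist3} and~\ref{th:fmist4}: for each of the four owner combinations of Table~\ref{tab:simulation_game_rules}, and for both orientations, one must verify that \duplicator controls precisely the moves she needs -- enough to execute the memoryless forcing witness on the unfrozen component while the frozen component is reverted -- and that every vertex visited meanwhile lies in $\class{v}{\R}$, so that clause~a) keeps the priorities equal throughout and the transfer condition keeps applying. A second delicate point is ruling out that \spoiler traps \duplicator in an infinite stutter with a permanently pending challenge; this is exactly excluded by the divergence clause~c) of Definition~\ref{def:fmist}, whose content is already absorbed (via Lemma~\ref{lem:force-vs-div}) into the symmetric transfer condition of Theorem~\ref{th:fmist4}, so that the witness \duplicator fixes genuinely reaches $\C$ rather than diverging inside $\class{v}{\R}$.
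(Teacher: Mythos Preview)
Your plan is the paper's plan: maintain the invariant $u_0 \gstut u_1$ across rounds and use an attractor-based rank to show that $\checkmark$ recurs infinitely often. Where you diverge from the paper is in the measure, and this is where the sketch has a genuine gap. Your progress argument is entirely organised around a \emph{numbered} challenge $(j,t)$: a component left to some class $\C$, \duplicator rolls it back, and the rank in $\battr{\class{v}{\R}}{\player}{\C}$ on the unfrozen side decreases until $\C$ is reached. But the update function of Definition~\ref{def:gstut_game} also produces the challenge $\dagger$, and it does so precisely when \duplicator rolls back a move that \emph{she} made in step~2 rather than \spoiler. This happens unavoidably, for example in the owner combination $(\odd,\even)$ when no pair of related successors exists: \duplicator plays both moves, cannot land in a related pair, must roll one back, and the resulting challenge is $\dagger$. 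In that situation there is no target class $\C$ at all, so your measure is undefined and your termination argument does not apply. The paper's detailed proof (in the appendix) therefore uses a \emph{two-component} lexicographic measure, whose first component in the $\dagger$ case is $\frontname(u_j,u_{1-j})$, the $\opponent{\getplayer{u_j}}$-distance to \emph{leaving} the current class, not the distance to any particular $\C$.

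Two smaller points. First, the sentence ``when $c=\checkmark$ and the two components sit in the same class, such a related successor pair always exists by clause~b)'' is false as written: the need for roll-back is exactly that a related pair may fail to exist in one step. Second, your case ``one component has left $\class{v}{\R}$'' tacitly assumes that in the $(\even,\odd)$ owner case, where \spoiler controls both moves, the two components cannot simultaneously leave to distinct classes. This is true, but it is a consequence of clause~b) (if $t_0\in V_\even$ steps to $\C$ then $t_1\forces{\even}{\R}\C$, and for $t_1\in V_\odd$ this forces every out-of-class successor of $t_1$ into $\C$) that deserves to be stated rather than assumed.
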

\begin{proof} The proof proceeds by showing that \duplicator has
  a strategy that ensures 1) that plays allowed by this strategy move
  along configurations of the form $((u_0,u_1),c)$ for which $u_0 \gstut u_1$
  and 2) \duplicator never gets stuck playing according to this strategy
  and 3) there is a strictly decreasing measure between two consecutive
  non-$\checkmark$ configurations on any play allowed by this strategy.
  Together, this implies that \duplicator has a winning strategy for
  configurations $((v,w),\checkmark)$.\qedhere
\end{proof}
We next establish that vertices related through the
governed stuttering bisimulation game are related by governed stuttering
bisimulation. A straightforward proof thereof is hampered by the fact that
any purported governed stuttering bisimulation relation is, by definition,
required to be an equivalence relation. However, proving that the governed
stuttering bisimulation game induces an equivalence relation is rather
difficult. The strategy employed to prove the stated result is to use contraposition;
this requires showing that for any given pair of non-governed stuttering
bisimilar vertices we can construct a strategy that is winning for \spoiler.
Note that we can do so because the governed stuttering bisimulation game
has a B\"uchi winning condition, which implies the game is determined.
This strategy is based on a fixpoint characterisation of governed
stuttering bisimilarity, given below.

\begin{definition}
Let ${\R} \subseteq V \times V$ be an equivalence relation on $V$. The
predicate transformer 
$\mathcal{F} : V \times V \to V \times V$ is defined as follows:
\[
\mathcal{F}(\R) =
\begin{array}[t]{ll}
\{
(v,w) \in R \mid &
\prio{v} = \prio{w} \wedge
\forall{\player \in \{\even,\odd\}, \mathcal{U},\mathcal{T} \subseteq \partition{V}{\R}}\\
& \class{v}{\R} \in \mathcal{U} \wedge \class{v}{\R} \notin \mathcal{T} 
\implies
v \forces{\player}{\mathcal{U}}{\mathcal{T}} 
\Leftrightarrow
w \forces{\player}{\mathcal{U}}{\mathcal{T}} \}
\end{array}
\]
\end{definition}
The predicate transformer $\mathcal{F}$ has the following properties.
\begin{lemma}
$\mathcal{F}(R)$ is an equivalence relation for any equivalence relation $R$ on
$V$.
\end{lemma}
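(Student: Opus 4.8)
The plan is to verify the three defining properties of an equivalence relation --- reflexivity, symmetry, and transitivity --- directly against the definition of $\mathcal{F}$, using two structural facts: first, $\mathcal{F}(R) \subseteq R$ by construction; and second, $R$ is itself an equivalence relation, so that $(v,w) \in R$ forces $\class{v}{\R} = \class{w}{\R}$. This second fact is the linchpin: it guarantees that the guard conditions $\class{v}{\R} \in \U$ and $\class{v}{\R} \notin \T$ appearing in the definition of $\mathcal{F}$ are unchanged when $v$ is replaced by any $R$-equivalent vertex. It matters, too, that the quantified sets $\U, \T$ range over $\partition{V}{\R}$ --- the classes of the \emph{input} relation --- so that all bookkeeping stays phrased in terms of $R$.

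For reflexivity I would note that for any $v$ we have $(v,v) \in R$ by reflexivity of $R$, that $\prio{v} = \prio{v}$, and that $v \forces{\player}{\U}{\T} \Leftrightarrow v \forces{\player}{\U}{\T}$ holds vacuously, so $(v,v) \in \mathcal{F}(R)$. For symmetry, given $(v,w) \in \mathcal{F}(R)$ I would use symmetry of $R$ to obtain $(w,v) \in R$ and $\class{w}{\R} = \class{v}{\R}$, so that every admissible pair $\U,\T$ for $w$ is also admissible for $v$; the biconditional $v \forces{\player}{\U}{\T} \Leftrightarrow w \forces{\player}{\U}{\T}$ supplied by membership in $\mathcal{F}(R)$ is itself symmetric in $v$ and $w$, which yields $(w,v) \in \mathcal{F}(R)$. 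For transitivity, from $(v,w), (w,x) \in \mathcal{F}(R)$ I would get $(v,x) \in R$, $\class{v}{\R} = \class{w}{\R} = \class{x}{\R}$, and $\prio{v} = \prio{w} = \prio{x}$; then, for each player $\player$ and each $\U, \T \subseteq \partition{V}{\R}$ admissible for $v$ (hence, by equality of classes, for $w$ and $x$ as well), the two biconditionals $v \forces{\player}{\U}{\T} \Leftrightarrow w \forces{\player}{\U}{\T}$ and $w \forces{\player}{\U}{\T} \Leftrightarrow x \forces{\player}{\U}{\T}$ compose to give $v \forces{\player}{\U}{\T} \Leftrightarrow x \forces{\player}{\U}{\T}$, so $(v,x) \in \mathcal{F}(R)$.

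There is no genuine obstacle here, and no appeal to any of the earlier forcing lemmata is needed. The only subtlety --- and the reason the statement is not entirely vacuous --- is the bookkeeping around the guards on $\U$ and $\T$: since these are stated via $\class{v}{\R}$, one must first observe their stability under $R$-equivalence before transporting a biconditional established at $v$ to its partner, or chaining such biconditionals through an intermediate vertex. This is precisely the step at which reflexivity, symmetry, and transitivity of $R$ are consumed.
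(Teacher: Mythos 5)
Your proposal is correct and follows essentially the same route as the paper's proof: all three properties are checked directly from the definition, with the key observation in both arguments being that $(v,w)\in R$ gives $\class{v}{R}=\class{w}{R}$, so the guards on $\U$ and $\T$ are stable under $R$-equivalence and the forcing biconditionals can be transported or chained. Your write-up merely spells out the reflexivity case and the priority bookkeeping in more detail than the paper does.
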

\begin{proof}
Let $R$ be an equivalence relation over $V$. Reflexivity of
$\mathcal{F}(R)$ follows from the fact that $R$ is reflexive.
Symmetry of $\mathcal{F}(R)$ follows from symmetry of $R$ and
from the fact that for all $(v,w) \in R$ we have $\class{v}{R} = \class{w}{R}$. 
For transitivity, we observe that for all pairs $(u,v), (v,w) \in \mathcal{F}(R)$ and
all $\mathcal{U},\mathcal{T}$ for which $\class{u}{R} \in \mathcal{U}$ and
$\class{u}{R} \notin \mathcal{T}$, if
$u \forces{\player}{\mathcal{U}}{\mathcal{T}}$ then also 
$v \forces{\player}{\mathcal{U}}{\mathcal{T}}$. Since $\class{u}{R} = \class{v}{R}$,
we immediately conclude $w \forces{\player}{\mathcal{U}}{\mathcal{T}}$.
The implication from right to left follows from symmetric arguments.
Thus, $\mathcal{F}(\R)$ is an equivalence relation.\qedhere
\end{proof}

\begin{lemma} $\mathcal{F}$ is a monotone operator on the complete lattice
of equivalence relations on $V$.
\end{lemma}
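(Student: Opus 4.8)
The lemma has two ingredients: that the equivalence relations on $V$, ordered by inclusion as subsets of $V \times V$, form a complete lattice, and that $\mathcal{F}$ respects this order. The first ingredient is the standard fact that an arbitrary intersection of equivalence relations is again an equivalence relation, so the equivalence relations on $V$ form a complete lattice in which the meet of a family is its intersection, the top is $V \times V$, and the bottom is the identity relation; together with the preceding lemma, which guarantees that $\mathcal{F}(R)$ is an equivalence relation whenever $R$ is, this shows $\mathcal{F}$ is indeed an operator on this lattice. The plan is therefore to concentrate on monotonicity: fixing equivalence relations $R \subseteq R'$, I would prove $\mathcal{F}(R) \subseteq \mathcal{F}(R')$.

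To that end I would take an arbitrary $(v,w) \in \mathcal{F}(R)$ and check the three defining clauses of $\mathcal{F}(R')$. The membership $(v,w) \in R'$ is immediate from $\mathcal{F}(R) \subseteq R \subseteq R'$, and $\prio{v} = \prio{w}$ is already part of $(v,w) \in \mathcal{F}(R)$. For the forcing clause, fix $\player \in \{\even,\odd\}$ and sets $\mathcal{U}',\mathcal{T}' \subseteq \partition{V}{R'}$ with $\class{v}{R'} \in \mathcal{U}'$ and $\class{v}{R'} \notin \mathcal{T}'$; the goal is $v \forces{\player}{\mathcal{U}'}{\mathcal{T}'} \Leftrightarrow w \forces{\player}{\mathcal{U}'}{\mathcal{T}'}$. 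Since $R \subseteq R'$, every $R'$-class is a union of $R$-classes, so the sets $\mathcal{U} = \{\C \in \partition{V}{R} \mid \C \subseteq \bigcup \mathcal{U}'\}$ and $\mathcal{T} = \{\C \in \partition{V}{R} \mid \C \subseteq \bigcup \mathcal{T}'\}$ satisfy $\bigcup \mathcal{U} = \bigcup \mathcal{U}'$ and $\bigcup \mathcal{T} = \bigcup \mathcal{T}'$. As the forcing relation interprets a set-of-sets argument as its union, this yields $v \forces{\player}{\mathcal{U}'}{\mathcal{T}'} \Leftrightarrow v \forces{\player}{\mathcal{U}}{\mathcal{T}}$, and likewise for $w$. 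Moreover $\class{v}{R'} \in \mathcal{U}'$ gives $v \in \bigcup \mathcal{U}$ and hence $\class{v}{R} \in \mathcal{U}$, while $\class{v}{R'} \notin \mathcal{T}'$ gives $v \notin \bigcup \mathcal{T}$ and hence $\class{v}{R} \notin \mathcal{T}$. Thus $\player,\mathcal{U},\mathcal{T}$ constitute a legitimate instance of the forcing clause in the definition of $\mathcal{F}(R)$, so $(v,w) \in \mathcal{F}(R)$ yields $v \forces{\player}{\mathcal{U}}{\mathcal{T}} \Leftrightarrow w \forces{\player}{\mathcal{U}}{\mathcal{T}}$; transporting this back through the equalities of unions gives the required biconditional for $\mathcal{U}',\mathcal{T}'$, so $(v,w) \in \mathcal{F}(R')$.

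I expect the only delicate point to be the bookkeeping of the side conditions on the quantified sets: the definition of $\mathcal{F}$ only promises a biconditional for those $\mathcal{U},\mathcal{T}$ that respectively contain and avoid the current equivalence class, so one has to verify that passing from $R'$-classes to $R$-classes while preserving the underlying vertex sets exactly preserves ``$\class{v}{\cdot} \in \mathcal{U}$'' and ``$\class{v}{\cdot} \notin \mathcal{T}$''. Everything else is routine once one observes that $v \forces{\player}{U}{T}$ depends on $U$ and $T$ solely through the vertex sets they denote, which is immediate from the definition of the forcing relation.
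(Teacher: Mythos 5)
Your proof is correct and follows essentially the same route as the paper's: replace the coarser-relation classes in $\mathcal{U}',\mathcal{T}'$ by the finer-relation classes covering the same vertex sets, apply the defining condition of $\mathcal{F}(R)$, and transfer back using the fact that $v \forces{\player}{\mathcal{U}}{\mathcal{T}}$ depends only on the unions $\bigcup\mathcal{U}$ and $\bigcup\mathcal{T}$. Your version is in fact slightly more careful than the paper's, which leaves the membership $(v,w)\in S$ and the preservation of the side conditions $\class{v}{\cdot}\in\mathcal{U}$, $\class{v}{\cdot}\notin\mathcal{T}$ implicit.
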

\begin{proof} 
By the previous lemma, it follows that $\mathcal{F}$ is an operator on the
lattice of equivalence relations on $V$. We next show that the operator
is monotone. Let $R,S$ be arbitrary equivalences on $V$. Suppose
$R \subseteq S$, and consider some pair $(v,w) \in \mathcal{F}(R)$.
From this, it follows that $\prio{v} = \prio{w}$. Let
$\player \in \{\even,\odd\}$, $\mathcal{U},\mathcal{T} \subseteq \partition{V}{S}$,
such that $\class{v}{S} \in \mathcal{U}$ and $\class{v}{S} \notin \mathcal{T}$.
Define $\bar{\mathcal{U}}$ as the set $\{ \class{u}{R} \mid \class{u}{S} \in \mathcal{U}\}$
and define $\bar{\mathcal{T}}$ as the set $\{ \class{u}{R} \mid \class{u}{S} \in \mathcal{T}\}$.
Since $(v,w) \in \mathcal{F}(R)$, we have:
\[v \forces{\player}{\bar{\mathcal{U}}}{\bar{\mathcal{T}}} 
\Leftrightarrow
w \forces{\player}{\bar{\mathcal{U}}}{\bar{\mathcal{T}}}
\]
Since $\bigcup\bar{\mathcal{U}} =\bigcup{\mathcal{U}}$ and
$\bigcup\bar{\mathcal{T}} =\bigcup{\mathcal{U}}$, we immediately have:
\[v \forces{\player}{\mathcal{U}}{\mathcal{T}}
\Leftrightarrow
w \forces{\player}{\mathcal{U}}{\mathcal{T}}
\]
This proves that $(v,w) \in \mathcal{F}(S)$. Thus $\mathcal{F}$ is a monotone
operator.\qedhere\end{proof}
\begin{corollary}\label{cor:fmist_fixpoint}
We have $\gstut{} = \nu \mathcal{F}$.
\end{corollary}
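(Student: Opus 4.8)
The plan is to derive this immediately from the Knaster--Tarski fixed-point theorem together with Theorem~\ref{th:fmist4}. The two preceding lemmas already establish that $\mathcal{F}$ is a monotone operator on the complete lattice of equivalence relations on $V$ (ordered by inclusion, with meet given by intersection and join by the transitive closure of the union, top $V \times V$ and bottom the identity). Knaster--Tarski then guarantees that $\nu\mathcal{F}$ exists and, moreover, that it is the greatest \emph{post}-fixed point of $\mathcal{F}$, \ie the largest equivalence relation $\R$ satisfying $\R \subseteq \mathcal{F}(\R)$. So it remains to identify this largest post-fixed point with $\gstut$.

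First I would unfold the definition of $\mathcal{F}$: for an equivalence relation $\R$, the inclusion $\R \subseteq \mathcal{F}(\R)$ holds exactly when every pair $(v,w) \in \R$ satisfies $\prio{v} = \prio{w}$ and, for all $\player \in \{\even,\odd\}$ and all $\U, \T \subseteq \partition{V}{\R}$ with $\class{v}{\R} \in \U$ and $\class{v}{\R} \notin \T$, we have $v \forces{\player}{\U} \T$ iff $w \forces{\player}{\U} \T$. By Theorem~\ref{th:fmist4}, this is precisely the statement that the equivalence relation $\R$ is a governed stuttering bisimulation. Hence, within the lattice of equivalence relations, the post-fixed points of $\mathcal{F}$ are exactly the governed stuttering bisimulations.

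Finally I would invoke the side-result recorded directly after the proof of Theorem~\ref{th:fmist-equiv}, namely that the union of all governed stuttering bisimulations is itself a governed stuttering bisimulation and coincides with $\gstut$. This makes $\gstut$ the greatest governed stuttering bisimulation, hence the greatest post-fixed point of $\mathcal{F}$, and therefore $\gstut = \nu\mathcal{F}$. I expect no real obstacle here: everything substantial has been discharged in Theorems~\ref{th:fmist4} and~\ref{th:fmist-equiv} and in the monotonicity lemmas, and the only point that deserves a moment's care is the bookkeeping that the ambient complete lattice is that of equivalence relations (so joins are transitive closures of unions, and ``post-fixed point'' quantifies over equivalences) and that $\R \subseteq \mathcal{F}(\R)$ really unfolds verbatim to clause (b) of Theorem~\ref{th:fmist4}, not to something weaker.
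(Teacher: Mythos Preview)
Your proposal is correct and follows essentially the same approach as the paper: both rest on the observation that, for an equivalence relation $\R$, the condition $\R \subseteq \mathcal{F}(\R)$ (equivalently $\R = \mathcal{F}(\R)$, since $\mathcal{F}(\R) \subseteq \R$ by construction) coincides with the characterisation of governed stuttering bisimulation in Theorem~\ref{th:fmist4}, whence the greatest fixed point equals the greatest governed stuttering bisimulation, which is $\gstut$. Your write-up is simply more explicit than the paper's one-line proof, spelling out the Knaster--Tarski step and the appeal to the side-result after Theorem~\ref{th:fmist-equiv}.
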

\begin{proof}
Follows from the fact that for $R = \nu \mathcal{F}$ and 
$\nu \mathcal{F} = \mathcal{F}(\nu \mathcal{F})$ the definition
of $\mathcal{F}$ reduces to the definition of governed stuttering
bisimulation.\qedhere
\end{proof}
We finally state our completeness result. Again, we only outline
the main steps of the proof; details can be found in the Appendix.

\begin{proposition}
\label{prop:completeness_gstut}
For all $v,w \in V$ if $v \gstutg w$ then $v \gstut w$.
\end{proposition}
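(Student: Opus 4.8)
The plan is to establish the contrapositive: assuming $v \not\gstut w$, I would show that \duplicator has no winning strategy in the governed stuttering bisimulation game from $((v,w),\checkmark)$. As the game has a B\"uchi winning condition (priorities must always match \emph{and} $\checkmark$ must occur infinitely often) it is determined, so it suffices to hand \spoiler an explicit winning strategy. To structure that strategy I would first use Corollary~\ref{cor:fmist_fixpoint}: since $\gstut = \nu\mathcal{F}$ and $V$ is finite, $\gstut = \mathcal{F}^{N}(V\times V)$ for some $N$; writing $R_0 = V\times V$ and $R_{i+1} = \mathcal{F}(R_i)$ yields a finite descending chain of equivalences $R_0 \supseteq \dots \supseteq R_N = {\gstut}$, and I assign every non-$\gstut$ pair $(a,b)$ the \emph{rank} $\min\{\, i \mid (a,b) \notin R_i \,\}\ge 1$. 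The key unfolding is: a rank-$i$ pair $(a,b)$ lies in $R_{i-1}$, so $\class{a}{R_{i-1}} = \class{b}{R_{i-1}}$, yet $(a,b)\notin\mathcal{F}(R_{i-1})$; hence either $\prio{a}\ne\prio{b}$, or there are a player $\player$ and sets $\U,\T\subseteq\partition{V}{R_{i-1}}$ with $\class{a}{R_{i-1}}\in\U$, $\class{a}{R_{i-1}}\notin\T$, and on one concrete side $a\forces{\player}{\U}\T$ while $b\nforces{\player}{\U}\T$. For $i=1$ only the priority clash can arise, because $\partition{V}{R_0}=\{V\}$ and forcing into the empty target set is impossible.

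\spoiler's strategy maintains the invariant that, after each round, the current configuration $((a,b),c)$ has $a\not\gstut b$, and that every $\checkmark$-labelled configuration encountered is charged to a strict decrease of the rank of its pair. Whenever the current $\checkmark$-configuration $((a,b),\checkmark)$ has $\prio{a}=\prio{b}$, \spoiler fixes a rank-$i$ witness $\player,\U,\T$ as above and uses step~1 of the round to orient the position so that the vertex with the extra forcing power --- the \emph{rich} side --- is placed in the slot whose $\player$-owned vertices are exactly the ones \spoiler moves according to Table~\ref{tab:simulation_game_rules} (the left slot when $\player=\even$, the right one when $\player=\odd$), leaving the other vertex --- the \emph{poor} side --- in the remaining slot, where \spoiler then controls precisely the $\opponent{\player}$-owned vertices. \spoiler drives the rich side by a memoryless $\player$-strategy realising $a\forces{\player}{\U}\T$ (such a strategy exists by Lemma~\ref{lem:attractor_vs_forces}), and the poor side by a memoryless $\opponent{\player}$-strategy obtained --- via the transformed-game construction of the proof of Lemma~\ref{lem:force-vs-div}, applied to $b\nforces{\player}{\U}\T$ --- so that the poor side remains in $\bigcup\U$ or leaves it into $V\setminus(\bigcup\U\cup\bigcup\T)$, and is never in $\bigcup\T$ while the rich side still lies in $\bigcup\U$. \spoiler never re-orients while the rank is unchanged and always honours pending challenges, so the bookkeeping component $c$ can only return to $\checkmark$ through a genuine full match by \duplicator (or, at most once per rank level, through a forced re-orientation when the rich side flips).

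The verification I would then carry out runs as follows. If \duplicator only ever rolls back moves, no further $\checkmark$ appears and \spoiler wins outright; moreover the update rule on $c$ rules out the rich side being frozen indefinitely without \duplicator forfeiting all later $\checkmark$'s, so I may assume the rich side advances infinitely often and hence, by the $\player$-strategy, enters $\bigcup\T$ after finitely many rounds. Consider the first full match \duplicator makes, reaching a pair $(t_0,t_1)$: if $\prio{t_0}\ne\prio{t_1}$ then \duplicator loses at once; otherwise, by the choice of the two memoryless strategies, $t_0$ and $t_1$ lie in distinct $R_{i-1}$-classes --- the rich side has progressed into, or towards, a class of $\T$ whereas the poor side is in $\bigcup\U$ or has already escaped $\bigcup\U\cup\bigcup\T$, hence is never in the rich side's $\T$-class --- so $(t_0,t_1)\notin R_{i-1}$, i.e.\ it is a non-$\gstut$ pair of rank strictly below $i$, from which \spoiler restarts its scheme (using, by Lemma~\ref{lem:mem_vs_memless}, memoryless strategies at the new level). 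Consequently, along every play consistent with \spoiler's strategy either a priority clash occurs or the rank of the current non-$\gstut$ pair strictly decreases each time a $\checkmark$ is produced; either way, $\checkmark$ occurs only finitely often and \duplicator loses, giving $v\not\gstutg w$.

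I expect the main obstacle to be the bookkeeping carried by $c$ and its update $\update$, which is why the authors relegate the full argument to an appendix. Two points need a careful case analysis over the ownership patterns of Table~\ref{tab:simulation_game_rules} and the possible values of $c$: first, that \spoiler can in fact run the $\player$-strategy on the rich side and the $\opponent{\player}$-strategy on the poor side simultaneously --- which hinges on the fact that, once the position is oriented, \spoiler's moves are exactly the $\even$-moves of the left vertex and the $\odd$-moves of the right vertex; and second, that $\update$ really does punish any premature full match by which \duplicator would collect a $\checkmark$ while both sides stay in the current equivalence class without provoking a priority clash, so that no such match is ever available to \duplicator against \spoiler's strategy.
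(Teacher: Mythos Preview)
Your contrapositive-plus-fixpoint-approximation plan, and your orientation of the position so that \spoiler runs a $\player$-strategy on the rich side and an $\opponent{\player}$-strategy on the poor side, match the paper's appendix proof exactly. The gap is in your progress accounting. It is \emph{not} true that the rank drops at the first full match. After a full match both sides have advanced one step; the rich side has taken one step of its $\player$-strategy $\sigma$ but may still lie deep inside $\bigcup\U$, and nothing prevents the poor side from sitting in the \emph{same} $R_{i-1}$-class. Your clause ``hence is never in the rich side's $\T$-class'' only excludes the poor side from $\bigcup\T$; it does not exclude both sides from sharing a class of $\U$, so the conclusion $(t_0,t_1)\notin R_{i-1}$ does not follow. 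Concretely, if the rich vertex is several $\sigma$-steps away from $\bigcup\T$, \duplicator can legitimately collect a $\checkmark$ at each of the intermediate full matches without the rank moving at all.

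The paper closes this with a secondary measure. Writing $\delta$ for the length of the longest $\sigma$-consistent path from the current rich vertex to $\bigcup\T$, every time the rich side actually advances (and it does advance at every full match) $\delta$ strictly decreases; hence at a fixed rank level there are at most $\delta$ many $\checkmark$'s before the rich side lands in $\bigcup\T$, at which point the pair leaves $R_{i-1}$ and the induction hypothesis takes over. Equivalently, replace ``rank strictly decreases at each $\checkmark$'' by ``the lexicographic pair $(\text{rank},\delta)$ strictly decreases at each $\checkmark$''. With that amendment your outline becomes the paper's proof; the remaining case analysis over ownerships and the values of $c$ that you flag as the main obstacle is exactly what the appendix spells out.
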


\begin{proof}
We essentially prove the contrapositive of the statement, \ie for all $v,w \in
V$, if $v \not\gstut w$, then also $v \not\gstutg w$. Let $v
\not\gstut w$. By Corollary~\ref{cor:fmist_fixpoint}, then also
$(v,w) \notin \nu \mathcal{F}$. By the Tarski-Kleene fixpoint
approximation theorem, we thus have $(v,w) \notin \bigcap\limits_{k
\ge 1} \mathcal{F}^k(V \times V)$. Using induction, one can prove,
for $R^k = \bigcap_{l \le k} R^l$, that for all $k \ge 1$:
\[
\tag{IH}
\begin{array}{l}
\text{\spoiler wins the governed stuttering bisimulation game} \\
\text{for all configurations 
$((u_0,u_1),c)$ for which $(u_0,u_1) \notin R^k$}
\\
\end{array}
\]
For the inductive case, one can construct a strategy for
\spoiler that guarantees he never gets stuck and for which
every play allowed by the strategy either 1)
visits some configuration $((t_0,t_1),c')$
for which the induction hypothesis applies, or 2) is such that 
there are only a finite number of $\checkmark$ rewards along the play.
\qedhere
\end{proof}
Propositions~\ref{prop:soundness_gstut} and~\ref{prop:completeness_gstut}
lead to the following theorem.
\begin{theorem}\label{thm:gstut_game_is_gstut}
For all $v,w \in V$ we have $v \gstut w$ iff $v \gstutg w$.
\end{theorem}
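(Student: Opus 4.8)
The plan is to observe that the theorem is nothing more than the combination of the two implications already established as Propositions~\ref{prop:soundness_gstut} and~\ref{prop:completeness_gstut}. Proposition~\ref{prop:soundness_gstut} supplies the soundness direction: whenever $v \gstut w$, \duplicator has a strategy that wins every play from the configuration $((v,w),\checkmark)$, which is exactly $v \gstutg w$. Proposition~\ref{prop:completeness_gstut} supplies completeness: whenever \duplicator wins the governed stuttering bisimulation game for the position $(v,w)$, the two vertices are governed stuttering bisimilar, i.e.\ $v \gstut w$. Chaining these gives the biconditional $v \gstut w \iff v \gstutg w$ for every $v,w \in V$, so the concrete proof I would write is a single line: ``Immediate from Propositions~\ref{prop:soundness_gstut} and~\ref{prop:completeness_gstut}.''

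Consequently there is essentially no mathematical content left at the level of the theorem statement itself; all the difficulty has been pushed into the two propositions, both of which are assumed here. Had I needed to prove those from scratch, I expect the main obstacle would be twofold. For soundness, one must exhibit the \duplicator strategy that (1) keeps plays on pairs $((u_0,u_1),c)$ with $u_0 \gstut u_1$, (2) never gets stuck, and (3) admits a strictly decreasing measure between two consecutive non-$\checkmark$ configurations; the delicate ingredient is constructing that measure in terms of the forcing/attractor machinery of Section~\ref{sec:notation} and the characterisation of Theorem~\ref{th:fmist4}, since it must certify that challenges cannot be postponed forever. For completeness, the work is the contrapositive construction of a \spoiler-winning strategy driven by the Tarski--Kleene approximants $\mathcal{F}^k(V \times V)$ of the greatest fixpoint from Corollary~\ref{cor:fmist_fixpoint}, invoking determinacy of the (Büchi) game to turn ``\duplicator does not win'' into ``\spoiler wins''.

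In short, the theorem follows at once; the interesting steps all live in the preceding propositions, and the only thing the theorem's proof does is package those two one-directional results into an ``iff''.
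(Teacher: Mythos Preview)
Your proposal is correct and matches the paper's approach exactly: the paper likewise states that the theorem follows directly from Propositions~\ref{prop:soundness_gstut} and~\ref{prop:completeness_gstut}, with no further argument.
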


\paragraph{Stuttering Bisimulation.} When we impose the additional
constraint on governed stuttering bisimulation that we do not allow to relate
vertices owned by different players, we obtain a notion called
\emph{stuttering bisimulation}~\cite{CKW:11}. The derived notion of
\emph{stuttering bisimilarity}, denoted $v \stut w$ and defined as $v
\stut w$ iff there is some stuttering bisimulation relation that relates
$v$ and $w$, is an equivalence relation.

\section{Quotienting}
\label{sec:quotients}

Simulation and bisimulation equivalences are often used to reduce
the size of graphs by factoring out vertices that are equivalent,
\ie by computing \emph{quotient structures}. This can be particularly
interesting if computationally expensive algorithms must be run on
the graph: whenever the analysis such algorithms perform on the
graphs are insensitive to (bi)simulation equivalence, they can be
run on the smaller quotient structures instead. In our setting, the
same reasoning applies: typically, parity game solving is expensive
and it may therefore pay off to first compute a quotient structure
and only then solve the resulting quotient structure.

In this section, we show that most of the (bi)simulation relations
we studied in the previous two sections have unique quotient
structures. A fundamental property of quotienting is that the
resulting quotient structure of a game should again be equivalent
to the original game. This requires that we lift the equivalences
defined on game graphs to equivalences between two different
game graphs. We do so in the standard way.
\begin{definition}
Let $\game_j = (V_j,\to_j,\priosym_j,\getplayername_j)$, for $j = 1, 2$,
be arbitrary parity games. We say that $\game_1 \sim \game_2$, for an
equivalence relation $\sim$ defined on the vertices of a parity game, whenever
in the disjoint union of $\game_1$ and $\game_2$, 
for all $v_1 \in V_1$ there is some $v_2 \in V_2$ such that
$v_1 \sim v_2$ and for all $\bar{v}_2 \in V_2$ there is some $\bar{v}_1 \in V_1$ such
that $\bar{v}_1 \sim \bar{v}_2$.
\end{definition}

\subsection{Simulation Equivalence Quotients}
\label{sec:direct_sim_quotient}

Quotienting for delayed simulation equivalence is, as observed
in~\cite{Fri:05,FW:06}, problematic, and only the biased versions
admit some form of quotienting.  However, the quotients for biased
delayed simulation equivalences are not unique, see also Lemma 3.5
in~\cite{Fri:05}.  We therefore only consider quotienting for direct
simulation equivalence.

The equivalence classes of direct simulation equivalence determine
the set of vertices of the quotient structure. Defining the transition
relation of the quotient structure is a bit more subtle. As observed
in~\cite{BG:03}, a unique quotient structure of simulation
equivalence for Kripke structures exists, but requires that vertices
have no transitions to a pair of vertices, one of which is sometimes
referred to as a \emph{`little brother'} of the other one (a vertex that is simulated
by, but not equivalent to the other vertex). 

While in the setting
of Kripke structures, only transitions to \emph{maximal} successor
vertices must be retained, depending on the owner of the source
vertex, we need to consider maximal or \emph{minimal} successor
vertices. 

\begin{definition} Let $V' \subseteq V$ be an arbitrary non-empty
set of vertices.  An element $v$ is:
\begin{itemize}
\item \emph{minimal among $V'$} iff for
all $u \in V'$ for which $u \directsim v$, also $v \directsim u$;
\item \emph{maximal among $V'$} iff for all $u \in V'$ for which $v
\directsim u$, also $u \directsim v$.
\end{itemize}
For a given vertex $v$, a successor $v' \in \post{v}$ is in the set
$\minsucc{v}$ iff $v'$ is minimal among $\post{v}$; likewise, $v' \in \post{v}$
is in the set $\maxsucc{v}$ iff $v'$ is maximal among $\post{v}$.

\end{definition}
Since $\directsim$ is a preorder, 
$\minsucc{v}$ and $\maxsucc{v}$ are non-empty sets.

An additional complication in defining a unique quotient
structure is that a single equivalence class may contain vertices
owned by \even and vertices owned by \odd. It turns out that the
owner of such equivalence classes can be chosen arbitrarily: we
prove that such classes have a unique successor equivalence class.
For equivalence classes with exactly one successor, we can assign a
unique owner; we choose to assign such classes to player \even.

\begin{definition}[Direct simulation equivalence quotient]
\label{def:pg_dsim_quotient}
The direct simulation equivalence quotient of $(V,\step,\priosym,\getplayername)$
is the structure
$(\partition{V}{\directsimeq}, \to', \priosym', \getplayername')$, where, for
$\C,\C' \in \partition{V}{\directsimeq}$:

\begin{itemize}
  \item $\priosym'(\C) = \min\{\prio{v} ~|~ v \in \C\}$,
  \item $\getplayername'(\C) = \begin{cases}
   \ensuremath{\odd} & \text{if $\C \subseteq V_{\odd}$ and for all $u \in \C$,
 $|\class{\minsucc{u}}{\directsimeq}| > 1$} \\
   \ensuremath{\even} & \text{otherwise}
   \end{cases} $
  \item $\C \to' \C'$ iff
   $\begin{cases}
    \forall{v \in \C} \exists{v' \in \minsucc{v}} v' \in \C' & \text{ if $\C \subseteq V_{\odd}$} \\
    \forall{v \in \C \cap V_{\even}} \exists{v' \in \maxsucc{v}} v' \in \C' & \text{ otherwise}
 \end{cases} $
    
\end{itemize}
\end{definition}
Observe that it is not obvious that $\to'$ is a total edge relation. The lemma 
below allows us to establish that this is the case.
\begin{lemma}\label{lem:to_minimal_maximal}
Let $\C \in \partition{V}{\directsimeq}$. Then:
\begin{itemize}
\item If $\C \subseteq V_{\odd}$
then 
$\class{\minsucc{v}}{\directsimeq} =
\class{\minsucc{w}}{\directsimeq}$
for all $v,w \in \C$, 

\item If $\C \subseteq V_{\even}$
then 
$\class{\maxsucc{v}}{\directsimeq} =
\class{\maxsucc{w}}{\directsimeq}$
for all $v,w \in \C$, 

\item If $\C \not\subseteq V_{\odd}$ and $\C \not\subseteq V_{\even}$ then for all
$v \in \C \cap V_{\even}$ and $w \in \C \cap V_{\odd}$  we have
$\class{\maxsucc{v}}{\directsimeq} = \class{\minsucc{w}}{\directsimeq}$.

\end{itemize}
\end{lemma}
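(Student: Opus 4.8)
The plan is to lean on Proposition~\ref{prop:preorder_directsim}, which tells us $\directsim$ is both a preorder and a direct simulation relation; hence for $v \directsimeq w$ we may unfold Definition~\ref{def:direct-simulation} in both directions. The other ingredient, used repeatedly, is finiteness of $V$: every non-empty $W \subseteq \post{u}$ has a $\directsim$-minimal and a $\directsim$-maximal element, and, by transitivity, if $W$ is downward- (resp. upward-) closed under $\directsim$ within $\post{u}$, then a $\directsim$-minimal (resp. $\directsim$-maximal) element of $W$ is already minimal (resp. maximal) in all of $\post{u}$, hence lies in $\minsucc{u}$ (resp. $\maxsucc{u}$). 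I would first isolate this order-theoretic observation and then treat the three items; in the first two the hypotheses are symmetric in $v$ and $w$, so a single inclusion between the sets of classes suffices.

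For item~1 ($v,w \in \C \subseteq V_{\odd}$, so $v \directsimeq w$ with both \odd-owned), unfolding $v \directsim w$ gives that every $w' \in \post{w}$ is $\directsim$-above some $v' \in \post{v}$, and unfolding $w \directsim v$ gives that every $v' \in \post{v}$ is $\directsim$-above some $w' \in \post{w}$. Given $v' \in \minsucc{v}$, pick $w' \in \post{w}$ with $w' \directsim v'$ and then a $\directsim$-minimal such $w''$; since $\{\,w' \in \post{w} \mid w' \directsim v'\,\}$ is downward-closed, $w'' \in \minsucc{w}$. Feeding $w''$ back through the other direction yields $v'' \in \post{v}$ with $v'' \directsim w'' \directsim v'$, so $v'' \directsim v'$, and minimality of $v'$ upgrades this to $v' \directsim w''$, whence $v' \directsimeq w''$; this proves $\class{\minsucc{v}}{\directsimeq} \subseteq \class{\minsucc{w}}{\directsimeq}$, and swapping $v,w$ gives equality. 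Item~2 is the order-dual, with $\max$ for $\min$ and the \even/\even transfer conditions.

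Item~3 ($v \in \C \cap V_{\even}$, $w \in \C \cap V_{\odd}$) is where the interesting collapse occurs. Unfolding $v \directsim w$ with $v$ \even and $w$ \odd --- the configuration in which \duplicator controls neither vertex --- forces $v' \directsim w'$ for \emph{all} $v' \in \post{v}$ and \emph{all} $w' \in \post{w}$; unfolding $w \directsim v$ merely supplies some $\hat{v} \in \post{v}$ and $\hat{w} \in \post{w}$ with $\hat{w} \directsim \hat{v}$. Together these make $\hat{v}$ a $\directsim$-greatest element of $\post{v}$ (each $v' \directsim \hat{w} \directsim \hat{v}$) and $\hat{w}$ a $\directsim$-least element of $\post{w}$ (each $\hat{w} \directsim \hat{v} \directsim w'$), whence (again by transitivity) $\maxsucc{v}$ is exactly the set of $\post{v}$-successors $\directsimeq$-equivalent to $\hat{v}$ and $\minsucc{w}$ the set of $\post{w}$-successors $\directsimeq$-equivalent to $\hat{w}$. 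Finally $\hat{v} \directsim \hat{w}$ (as $\hat{w} \in \post{w}$) together with $\hat{w} \directsim \hat{v}$ gives $\hat{v} \directsimeq \hat{w}$, so $\class{\maxsucc{v}}{\directsimeq} = \{\class{\hat{v}}{\directsimeq}\} = \class{\minsucc{w}}{\directsimeq}$.

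The main obstacle is reading off the right instances of the transfer condition in Definition~\ref{def:direct-simulation}: it is asymmetric in the owners of the two vertices, and it is precisely the $V_{\even}\times V_{\odd}$ instance --- which degenerates into a statement quantified universally over \emph{all} pairs of successors --- that drives item~3, and in passing explains why a mixed equivalence class has a unique successor class and may therefore be given an owner arbitrarily. Everything else is routine order theory over a finite set.
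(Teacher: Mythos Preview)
Your proof is correct and follows essentially the same route as the paper's: for item~1 you show, exactly as the paper does, that any $v' \in \minsucc{v}$ is $\directsimeq$-equivalent to some $w'' \in \minsucc{w}$ by chasing the transfer conditions for $w \directsim v$ and $v \directsim w$ once each, and for item~3 you extract the same pair $\hat v, \hat w$ witnessing $w \directsim v$, then use the all-pairs consequence of $v \directsim w$ (the $V_\even \times V_\odd$ instance) to collapse both $\maxsucc{v}$ and $\minsucc{w}$ to a single class containing $\hat v \directsimeq \hat w$. Your explicit isolation of the downward-closure argument for why the minimal element of $\{w' \in \post w \mid w' \directsim v'\}$ already lies in $\minsucc{w}$ is a nice touch; the paper makes the same point in one line without naming the principle.
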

\begin{proof}
We prove the first and the third statement; the proof for the second statement is analogous to
that of the first.
\begin{itemize}
\item Suppose $\C \subseteq V_{\odd}$.  Pick $v,w \in \C$. Let
$v' \in \minsucc{v}$. Since $v' \in \post{v}$
and $w \directsim v$, we have $w' \directsim v'$ for some $w' \in \post{w}$. 
This implies that there is some $w'' \in \minsucc{w}$ such that $w'' \directsim v'$;
for, if $w' \notin \minsucc{w}$, then there must be some $w'' \in \minsucc{w}$ such that
$w'' \directsim w'$. But then also $w'' \directsim v'$.

We next show that also $v' \directsim w''$. Since $v \directsim w$ and $w'' \in
\post{w}$ we have $v'' \directsim w''$ for some $v'' \in \post{v}$. Since
$w'' \directsim v'$ and $v'' \directsim w''$, we have $v'' \directsim v'$. But since $v' \in \minsucc{v}$, this implies $v' \directsimeq v''$.
But from $v' \directsim v''$ and $v'' \directsim w''$ we obtain $v' \directsim w''$.

Hence, $v' \directsimeq w''$ for some $w'' \in \minsucc{w}$.

\item Suppose $\getplayer{v} \neq \getplayer{w}$ for some $v,w \in \C$. Pick $v,w \in \C$
such that $v \in V_{\even}$ and $w \in V_{\odd}$. Since $w \directsim v$,
there must be $w' \in \post{w}$ and $v' \in \post{v}$ such that $w' \directsim v'$.
Fix such $v'$ and $w'$. Since $v \directsim w$ we find that for all $v'' \in \post{v}$
and $w'' \in \post{w}$ we have $v'' \directsim w''$. In particular, $v' \directsim w'$.
So $v' \directsimeq w'$.

Next, since for all $w'' \in \post{w}$ we have $v' \directsim w''$ and $v' \directsimeq w'$,
we also have $w' \directsim w''$ for all $w'' \in \post{w}$. But this implies
$w' \in \minsucc{w}$, and, in particular, $|\class{\minsucc{w}}{\directsimeq}| = 1$.
Likewise, we can deduce $v' \in \maxsucc{v}$ and 
$|\class{\maxsucc{v}}{\directsimeq}| = 1$.

We thus find $\class{\maxsucc{v}}{\directsimeq} = 
\{ \class{v'}{\directsimeq} \} = \{ \class{w'}{\directsimeq}
\} = \class{\minsucc{w}}{\directsimeq}$. \qedhere
\end{itemize}

\end{proof}
As a consequence of the above lemma, we obtain the following two results:
\begin{corollary}
\label{cor:exists_implies_all}
Let $(\partition{V}{\directsimeq},\to',\priosym', \getplayername')$ be a
direct simulation equivalence quotient of some parity game $(V,\to,\priosym,\getplayername)$.
Then for all $\C,\C' \in \partition{V}{\directsimeq}$:
\begin{itemize}
\item if $\C \subseteq V_{\odd}$ and for some $v \in \C$, $v' \in \C'$ 
also $v' \in \minsucc{v}$, then $\C \to' \C'$.

\item if $\C \cap V_{\even} \neq \emptyset$ and for some $v \in
\C \cap V_{\even}$, $v' \in \C'$ also $v' \in \maxsucc{v}$,
then $\C \to' \C'$.

\end{itemize}
\end{corollary}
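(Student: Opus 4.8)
The plan is to read off, from Definition~\ref{def:pg_dsim_quotient}, exactly which universally quantified statement must be established in each of the two cases, and then to obtain it as a one-line consequence of Lemma~\ref{lem:to_minimal_maximal}. The point of that lemma is precisely that, within a single class $\C$, the $\directsimeq$-classes of the relevant successor sets ($\minsucc{\cdot}$ for odd-owned classes, $\maxsucc{\cdot}$ for even-owned ones) coincide for all vertices of $\C$; the hypotheses of the corollary only supply a single witness vertex, and the lemma propagates it to all the others.

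For the first item, assume $\C \subseteq V_{\odd}$ and fix $v \in \C$ and $v' \in \C'$ with $v' \in \minsucc{v}$, so that $\C' = \class{v'}{\directsimeq} \in \class{\minsucc{v}}{\directsimeq}$. By the first clause of Lemma~\ref{lem:to_minimal_maximal}, $\class{\minsucc{w}}{\directsimeq} = \class{\minsucc{v}}{\directsimeq}$ for every $w \in \C$, hence $\C' \in \class{\minsucc{w}}{\directsimeq}$, i.e.\ some $w' \in \minsucc{w}$ satisfies $w' \in \C'$. Since $\C \subseteq V_{\odd}$, this is exactly the defining condition for $\C \to' \C'$.

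For the second item, assume $\C \cap V_{\even} \neq \emptyset$ and fix $v \in \C \cap V_{\even}$, $v' \in \C'$ with $v' \in \maxsucc{v}$. Because $\C \cap V_{\even} \neq \emptyset$ we have $\C \not\subseteq V_{\odd}$, so $\C \to' \C'$ is governed by the ``otherwise'' clause, which requires: for every $w \in \C \cap V_{\even}$ there is $w' \in \maxsucc{w}$ with $w' \in \C'$. I would split on whether $\C \subseteq V_{\even}$: if so, the second clause of Lemma~\ref{lem:to_minimal_maximal} gives $\class{\maxsucc{w}}{\directsimeq} = \class{\maxsucc{v}}{\directsimeq} \ni \C'$ for all $w \in \C$; if not, then $\C$ is mixed, so picking some $\bar{w} \in \C \cap V_{\odd}$ and applying the third clause of Lemma~\ref{lem:to_minimal_maximal} (once with $v$, once with an arbitrary even $w$) yields $\class{\maxsucc{w}}{\directsimeq} = \class{\minsucc{\bar{w}}}{\directsimeq} = \class{\maxsucc{v}}{\directsimeq} \ni \C'$ for every $w \in \C \cap V_{\even}$. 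Either way the required witness $w'$ exists, so $\C \to' \C'$.

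There is essentially no real obstacle: the only subtlety is bookkeeping, namely reading off the correct quantified statement from Definition~\ref{def:pg_dsim_quotient} and, in the second item, routing the mixed-class sub-case through an odd vertex and the third clause of the lemma rather than the second. Once that is in view, each case collapses to a direct invocation of Lemma~\ref{lem:to_minimal_maximal}.
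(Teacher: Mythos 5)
Your proof is correct and follows exactly the route the paper intends: the paper states Corollary~\ref{cor:exists_implies_all} as an immediate consequence of Lemma~\ref{lem:to_minimal_maximal} without writing out the details, and your argument fills in precisely that propagation of the single witness vertex to all vertices of $\C$ (including the mixed-class sub-case via the lemma's third clause). Nothing is missing.
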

\begin{corollary}
The direct simulation reduced quotient structure associated to a 
parity game $(V, \to \priosym, \getplayername)$ is again a parity game.
\end{corollary}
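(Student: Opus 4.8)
The plan is to verify, one condition at a time, that the structure $(\partition{V}{\directsimeq}, \to', \priosym', \getplayername')$ of Definition~\ref{def:pg_dsim_quotient} meets the four requirements in the definition of a parity game. Three of these are essentially immediate: $\partition{V}{\directsimeq}$ is a finite set because $V$ is finite and every equivalence class is nonempty; $\priosym'(\C) = \min\{\prio{v} \mid v \in \C\}$ is a well-defined natural number since each $\C$ is a nonempty finite subset of $V$; and $\getplayername'$ is, by construction, a total function from $\partition{V}{\directsimeq}$ to $\{\even,\odd\}$. The only non-trivial obligation, and the expected main obstacle, is to show that $\to'$ is a \emph{total} edge relation, i.e.\ that every class has at least one $\to'$-successor.

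For totality I would fix an arbitrary $\C \in \partition{V}{\directsimeq}$ and split on the owners occurring in $\C$. If $\C \subseteq V_{\odd}$, pick any $v \in \C$; since the edge relation $\to$ of the original game is total, $\post{v} \neq \emptyset$, and hence (because $\directsim$ is a preorder) $\minsucc{v} \neq \emptyset$. Choosing $v' \in \minsucc{v}$ and setting $\C' = \class{v'}{\directsimeq}$, the first item of Corollary~\ref{cor:exists_implies_all} yields $\C \to' \C'$. If instead $\C \not\subseteq V_{\odd}$, then $\C \cap V_{\even} \neq \emptyset$; pick $v \in \C \cap V_{\even}$, note $\post{v} \neq \emptyset$ and hence $\maxsucc{v} \neq \emptyset$, choose $v' \in \maxsucc{v}$ and put $\C' = \class{v'}{\directsimeq}$; the second item of Corollary~\ref{cor:exists_implies_all} then gives $\C \to' \C'$. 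Since the two cases are exhaustive, $\to'$ is total.

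Combining the four verified conditions shows that $(\partition{V}{\directsimeq}, \to', \priosym', \getplayername')$ is a parity game. It is worth stressing that all the genuine work behind this conclusion has already been carried out earlier: the well-definedness of the two branches in the clause for $\to'$ on classes that mix \even- and \odd-owned vertices rests on Lemma~\ref{lem:to_minimal_maximal}, which guarantees that $\class{\maxsucc{\cdot}}{\directsimeq}$ (respectively $\class{\minsucc{\cdot}}{\directsimeq}$) does not depend on the representative chosen within the class, and Corollary~\ref{cor:exists_implies_all} packages exactly what is needed so that exhibiting a single suitable successor vertex suffices to witness $\C \to' \C'$.
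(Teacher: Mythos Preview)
Your proposal is correct and follows precisely the route the paper intends: the paper states this corollary without proof immediately after Corollary~\ref{cor:exists_implies_all}, having remarked before Lemma~\ref{lem:to_minimal_maximal} that ``it is not obvious that $\to'$ is a total edge relation'' and that ``the lemma below allows us to establish that this is the case.'' Your case split on $\C \subseteq V_{\odd}$ versus $\C \cap V_{\even} \neq \emptyset$, together with the nonemptiness of $\minsucc{v}$/$\maxsucc{v}$ and an appeal to Corollary~\ref{cor:exists_implies_all}, is exactly the argument the paper leaves implicit.
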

We next establish that the direct simulation quotient of a parity game is
equivalent to the original parity game. 

\begin{theorem}\label{thm:direct_sim_quotient}
Let $\game = (V, \to, \priosym, \getplayername)$ be a parity game
and $\game_q = (\partition{V}{\directsimeq}, \to', \priosym',
\getplayername')$ its direct simulation quotient. Then $\game_q
\directsimeq \game$.

\end{theorem}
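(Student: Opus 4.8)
The plan is to reduce the statement to the single claim that, in the disjoint union $\game \disjunion \game_q$, every vertex $v \in V$ satisfies both $v \directsim \class{v}{\directsimeq}$ and $\class{v}{\directsimeq} \directsim v$. Granting this, $\game_q \directsimeq \game$ is immediate: every $v \in V$ is $\directsimeq$-related to the vertex $\class{v}{\directsimeq}$ of $\game_q$, and every class $\C \in \partition{V}{\directsimeq}$ is non-empty, so picking any $v \in \C$ yields $\C = \class{v}{\directsimeq} \directsimeq v$.

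To obtain the two simulations I would exhibit, on the disjoint union, two direct simulation relations in the sense of Definition~\ref{def:direct-simulation}:
\begin{align*}
R_1 &= \{\, (v, \C) \mid v \in V,\ \C \in \partition{V}{\directsimeq},\ \existssym w \in \C : v \directsim w \,\}, \\
R_2 &= \{\, (\C, v) \mid v \in V,\ \C \in \partition{V}{\directsimeq},\ \existssym w \in \C : w \directsim v \,\}.
\end{align*}
Since $v \directsim v$ (Proposition~\ref{prop:preorder_directsim}) we have $(v, \class{v}{\directsimeq}) \in R_1$ and $(\class{v}{\directsimeq}, v) \in R_2$, so it suffices to check that $R_1$ and $R_2$ meet the transfer conditions. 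The priority clause is immediate: a direct simulation preserves priorities, hence all vertices of a class $\C$ share a priority, and this priority equals $\priosym'(\C)$ by definition of the quotient. For the move clauses I would lean on three facts: (i) $\directsim$ is itself a direct simulation relation (Proposition~\ref{prop:preorder_directsim}), so every move from $v$ can be transferred through $v \directsim w$ to a move from $w$, with $\steps{\even}$ read throughout as a one-step force; (ii) in the finite preorder $(\post{u}, \directsim)$ every successor lies $\directsim$-below some element of $\minsucc{u}$ and $\directsim$-above some element of $\maxsucc{u}$; and (iii) Corollary~\ref{cor:exists_implies_all}, which turns a single witnessing minimal (resp.\ maximal) successor of a vertex of $\C$ into an edge $\C \to' \C'$ of $\game_q$. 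For $R_1$, given $v' \in \post{v}$ one transfers the move $v \to v'$ at $(v, w)$ to get a successor $w'$ of some representative of $\C$ with $v' \directsim w'$; replacing $w'$, via~(ii), by a maximal successor (when the pertinent vertex is \even-owned or $\C \not\subseteq V_{\odd}$) or a minimal successor (when $\C \subseteq V_{\odd}$), and applying~(iii), produces an edge of $\game_q$ to the class containing that successor, which by monotonicity of $\steps{\even}$ in its set argument lies in the required target. The argument for $R_2$ is dual; there one descends to $\minsucc{}$ and shows that the set $\{\,x \mid \existssym w'' \in \post{w} : w'' \directsim x\,\}$ that $\directsim$ forces $v$ into is contained in $\post{\C}\, R_2$.

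The main obstacle is the case analysis forced by the quotient owner function $\getplayername'$: a class may contain both \even- and \odd-owned vertices, and whether $\game_q$ keeps the minimal or the maximal successors of $\C$ depends on $\getplayername'(\C)$. Thus the verification splits on $\getplayer{v}$, on $\getplayername'(\C)$, on whether $\C \subseteq V_{\odd}$, and on the owner of a witnessing representative $w \in \C$ (when $\C \not\subseteq V_{\odd}$, one chooses $w \in \C \cap V_{\even}$). The one truly delicate case is $\C \subseteq V_{\odd}$ with $\getplayername'(\C) = \even$: here Lemma~\ref{lem:to_minimal_maximal} shows that all vertices of $\C$ share the same, \emph{singleton}, class $X$ of minimal successors, so $\C$ has a unique outgoing edge $\C \to' X$ in $\game_q$, and one verifies $\{\,x \mid \existssym w'' \in \post{w}: w'' \directsim x\,\} \subseteq X\, R_2$ (anything $\directsim$-simulated by a successor of $w$ is already $\directsim$-simulated by a minimal successor, which lies in $X$); the over-approximating force into that set supplied by~(i) then witnesses the needed $\steps{\even}$. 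Lemma~\ref{lem:to_minimal_maximal} together with Corollary~\ref{cor:exists_implies_all} likewise reconciles mixed-owner classes in the remaining cases, after which each case reduces to a short, mechanical check.
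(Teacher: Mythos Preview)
Your proposal is correct and follows essentially the same approach as the paper: the relations $R_1$ and $R_2$ are precisely the two relations $H$ the paper exhibits, and the verification of the transfer conditions proceeds by the same case analysis on $\getplayer{v}$ and $\getplayername'(\C)$, relying on Lemma~\ref{lem:to_minimal_maximal} and Corollary~\ref{cor:exists_implies_all} in the same way. You even isolate explicitly the subcase $\C \subseteq V_{\odd}$ with $\getplayername'(\C) = \even$, which the paper's write-up leaves implicit.
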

\begin{proof}
  We first prove $\game_q \directsim \game$. 
  Let $H \subseteq \partition{V}{\directsimeq} \times V$ be
  the relation $H = \{ (\C, v) \mid \exists{w \in \C} w
  \directsim v \}$. We prove $H$ is a direct simulation relation.
  Let $\C, v$ be arbitrary such that $(\C,v) \in H$. By definition,
  there is some $w \in \C$ such that $w \directsim v$. For the
  remainder of this proof, we fix such a $w$. Clearly,
  $\priosym'(\C) = \prio{v}$ follows directly from $w \directsim
  v$ and the fact that for all $u,u' \in \C$ we have
  $\prio{u} = \prio{u'}$. We next prove the transfer condition. For this,
  we distinguish four cases.

  \begin{itemize}
   \item Case $\getplayer{\C} = \odd$ and $\getplayer{v} = \odd$.
   Let $v' \in \post{v}$. Since $v' \in \post{v}$,
   also $w' \directsim v'$ for some $w' \in \post{w}$. Let $w^-
   \in \minsucc{w}$ such that $w^- \directsim w'$. 
   Then by Corollary~\ref{cor:exists_implies_all},
   $\C \to \class{w^-}{\directsimeq}$.
   Moreover, by transitivity, $w^- \directsim v'$, so 
   we have $(\class{w^-}{\directsimeq}, v') \in H$, as required.
   
   \item Case $\getplayer{\C} = \odd$ and $\getplayer{v} = \even$.
   Note that $\getplayer{\C} = \odd$ implies $\C \subseteq V_{\odd}$; 
   hence $\getplayer{w} = \odd$.
   But then for some $w' \in \post{w}, v' \in \post{v}$, we have
   $w' \directsim v'$. Let $v',w'$ be such. Then again for some
   $w^- \in \minsucc{w}$ satisfying $w^- \directsim w'$ we have
   $\C \to \class{w^-}{\directsimeq}$ and by transitivity, we 
   have $(\class{w^-}{\directsimeq}, v') \in H$.

   \item Case $\getplayer{\C} = \even$ and $\getplayer{v} =\even$.
   Pick $\C' \in \post{\C}$. We must show that $(\C',v') \in H$ for some
   $v' \in \post{v}$. Since $\getplayer{\C} = \even$, there must be some
   $u \in \C \cap V_{\even}$; pick such a $u$. Since
   $u \directsimeq w \directsim v$, for all $u' \in \post{u}\cap \C'$  there
   is some $v' \in \post{v}$ such that $u' \directsim v'$. Hence,
   there is some $v' \in \post{v}$ such that $(\C',v') \in H$.

   \item Case $\getplayer{\C} = \even$ and $\getplayer{v} = \odd$.
   Pick $\C' \in \post{\C}$. We must show that $(\C',v') \in H$ for all
   $v' \in \post{v}$. Then the argument is similar to the previous case.
   \end{itemize}
  Next, to prove $\game \directsim \game_q$,
   we show that $H \subseteq V \times \partition{V}{\directsimeq}$, given by
  $H = \{ (v,\C) \mid \exists{w \in \C} v
  \directsim w \}$, is a direct simulation relation.
  Let $\C, v$ be arbitrary such that $(v,\C) \in H$. By definition,
  there is some $w \in \C$ such that $v \directsim w$. We again fix such
  a $w$.  Following similar arguments as above,
  $\priosym'(\C) = \prio{v}$.
  We again prove the transfer condition by distinguishing four
  cases.

  \begin{itemize}
  \item Case $\getplayer{\C} = \even$ and $\getplayer{v} = \even$. Pick
  $v' \in \post{v}$. We must show that $(v',\C') \in H$ for some $\C' \in \post{\C}$.
  We distinguish two cases.
  \begin{itemize}
  \item Case $\C \subseteq V_{\even}$.  
  Since $v \directsim w$, also $v' \directsim w'$ for some
  $w' \in \post{w}$.  Consider $w^+ \in \maxsucc{w}$ such that $w' \directsim w^+$. 
  Then by
  Corollary~\ref{cor:exists_implies_all}, $\C \to
  \class{w^+}{\directsimeq}$. By transitivity $v' \directsim w'\directsim
  w^+$; hence  $(v', \class{w^+}{\directsimeq}) \in H$.

  \item Case $\C \cap V_{\odd} \neq \emptyset$ and $\C \cap V_{\even}
  \neq \emptyset$.  Then, by Lemma~\ref{lem:to_minimal_maximal},
  $\C \to \C'$ for some unique $\C'$.  Fix this $\C'$. Let $u \in \C \cap
  V_{\odd}$. Then $u' \in \C'$ for some $u' \in \post{u}$. Pick such a $u'$.
  Since $u \directsimeq w$, also $v \directsim u$. As a result, $v' \directsim u'$. 
  Therefore $(v', \C') \in H$.

  \end{itemize} 

  \item Case $\getplayer{\C} = \even$ and $\getplayer{v} = \odd$.
  We must show that $(v',\C') \in H$ for some $v' \in \post{v}, \C' \in \post{C}$.
  Then the argument is similar to the previous case.


  %

  \item Case $\getplayer{\C} = \odd$ and $\getplayer{v} = \even$.
  Let $v' \in \post{v}$ and $\C' \in \post{\C}$. 
  Since $v \directsim w$ and $\getplayer{w} =
  \odd$, we have $v' \directsim w'$ for all $w' \in \post{w} \cap \C'$.  
  Therefore also $(v',\C') \in H$.

  \item Case $\getplayer{\C} = \odd$ and $\getplayer{v} = \odd$. Let $\C' \in \post{\C}$
  and let $w' \in \post{w} \cap \C'$. Since $v \directsim w$, there must be some
  $v' \in \post{v}$ such that $v' \directsim w'$. Fix this $v'$. Then also
  $(v', \C') \in H$.\qedhere\end{itemize}

\end{proof}
We finally prove that the quotient is unique.
\begin{theorem}\label{thm:unique_quotient_dsim}
Let $\game,\game'$ be two parity games and let $\game_q$ and $\game_q'$ be
their direct simulation equivalence quotients, respectively. Then 
$\game \directsimeq \game'$ iff
the two structures $\game_q = (\partition{V}{\directsimeq}, \step_q,
\priosym_q, \getplayername_q)$ and $\game_q' = (\partition{V'}{\directsimeq}, 
\step'_q, \priosym'_q, \getplayername'_q)$ are isomorphic.
\end{theorem}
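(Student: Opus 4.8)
The plan is to prove the two implications separately. The direction from right to left is short: an isomorphism between $\game_q$ and $\game_q'$ in particular witnesses $\game_q \directsimeq \game_q'$ (a bijection preserving priorities, owners and edges, together with its inverse, trivially gives mutual direct simulations), and combining this with $\game \directsimeq \game_q$ and $\game' \directsimeq \game_q'$ from Theorem~\ref{thm:direct_sim_quotient} and transitivity of $\directsimeq$ (Proposition~\ref{prop:preorder_directsim}), applied in the disjoint union of all four games, yields $\game \directsimeq \game'$.

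For the converse I would build the isomorphism canonically from the $\directsimeq$-classes of $\game \sqcup \game'$. The first observation is that a relation on $V$ is a direct simulation on $\game$ exactly when it is a direct simulation on $\game \sqcup \game'$ that relates only vertices of $V$, since the transfer conditions of Definition~\ref{def:direct-simulation} mention only a vertex, its partner, and their successors (and $V$ is closed under $\post{\cdot}$ in the disjoint union). Hence $\directsim$, and therefore $\directsimeq$, on $\game$ is exactly the restriction to $V \times V$ of the corresponding relation on $\game \sqcup \game'$, and likewise for $\game'$. Consequently every $\directsimeq$-class $D$ of $\game \sqcup \game'$ intersects $V$ in a single $\directsimeq$-class of $\game$ and $V'$ in a single $\directsimeq$-class of $\game'$, and because $\game \directsimeq \game'$ both intersections are non-empty; so $\phi$ defined by $\phi(D \cap V) = D \cap V'$ is a well-defined bijection $\partition{V}{\directsimeq} \to \partition{V'}{\directsimeq}$. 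I would also record that for $v \in V$ the sets $\minsucc{v}$, $\maxsucc{v}$ and the class-sets $\class{\minsucc{v}}{\directsimeq}$, $\class{\maxsucc{v}}{\directsimeq}$ are the same whether computed in $\game$ or in $\game \sqcup \game'$, which lets me reason uniformly in the union.

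It then remains to show $\phi$ preserves priorities, owners and edges. Priorities are immediate, since direct simulation forces equal priorities, so every $\directsimeq$-class is mono-priority and $\priosym_q(\C) = \priosym'_q(\phi(\C))$. For owners and edges I would prove the stronger claim that, for a $\directsimeq$-class $D$ of $\game \sqcup \game'$, both the owner the quotient construction assigns to $D \cap V$ and the set of classes $D \cap V$ has a $\to_q$-edge to depend only on $D$ and not on the side. This is where Lemma~\ref{lem:to_minimal_maximal}, applied to $\game \sqcup \game'$, does the work: if $D \subseteq V_{\odd}$ (on both sides) then all members of $D$ share the same $\class{\minsucc{\cdot}}{\directsimeq}$, so the owner condition ``$|\class{\minsucc{u}}{\directsimeq}| > 1$ for all $u$'' and the set of successor classes agree across $D$; if $D$ is wholly even the owner is \even on both sides and the $\maxsucc$-classes agree; and if $D$ is mixed, the lemma (with the singleton consequence established in its proof) gives that $D$ has a unique successor class and that the $\minsucc$-class of any odd member equals the $\maxsucc$-class of any even member, so again the $\to_q$-edge out of $D \cap V$ is forced to the single class determined by $D$ and the owner is \even on both sides. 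Corollary~\ref{cor:exists_implies_all} is used to pass from ``some $v \in \C$ has a minimal (resp.\ maximal) successor in $\C_2$'' to ``$\C \to_q \C_2$''. Feeding these facts through $\phi$ gives $\getplayername_q(\C) = \getplayername'_q(\phi(\C))$ and $\C \to_q \C_2 \iff \phi(\C) \to_q' \phi(\C_2)$, so $\phi$ is an isomorphism.

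I expect the main obstacle to be exactly the owner/edge bookkeeping in the mixed-class case: the quotient's edge and owner definitions branch on whether the source class is contained in $V_{\odd}$, while a $\directsimeq$-class of $\game \sqcup \game'$ may be purely odd on one side and mixed on the other (or mixed on one side and purely even on the other), so one must verify that each applicable branch of Definition~\ref{def:pg_dsim_quotient} computes the same value. This rests on the singleton statement extracted in the proof of Lemma~\ref{lem:to_minimal_maximal}, which may need to be stated explicitly (or promoted to a small corollary) to be cited cleanly here.
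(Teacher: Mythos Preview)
Your proposal is correct and the backward direction, as well as the construction of the bijection, coincide with the paper's argument (the paper's relation $f$ with $(\C',\C)\in f$ iff $\C'\directsimeq\C$ is precisely your $\phi^{-1}$ read through the disjoint union). The structure-preservation part, however, is organised differently. The paper handles owners by a direct contradiction argument: assume $\getplayername'_q(\C)=\odd$ but $\getplayername_q(f(\C))=\even$, pick a witness $w\in f(\C)$ that is either even or has a singleton $\class{\minsucc{w}}{\directsimeq}$, and derive a contradiction with $|\class{\minsucc{v}}{\directsimeq}|>1$ for $v\in\C$ by chasing the simulation transfer conditions. For edges, the paper instead reasons at the level of the two quotients, using $\game_q\directsimeq\game'_q$ (implicitly via Theorem~\ref{thm:direct_sim_quotient}) and the fact that the quotient has no ``little-brother'' edges: if $\C\step'_q\C'$ and $f(\C)\step_q\mathcal{D}$ with $\C'\directsim\mathcal{D}\directsim\C''$ for some $\C''$ with $\C\step'_q\C''$, then maximality forces $\C'=\C''$ and hence $\mathcal{D}=f(\C')$.

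Your route---applying Lemma~\ref{lem:to_minimal_maximal} to $\game\sqcup\game'$ and arguing that the owner and the set of $\to_q$-successors of $D\cap V$ depend only on the $\directsimeq$-class $D$ of the union---is more uniform: owners and edges fall out of the same structural observation, and you never need to invoke simulation between the quotient games. The price is exactly what you anticipate: the singleton consequence hidden in the proof of Lemma~\ref{lem:to_minimal_maximal}'s third item (that for mixed classes $|\class{\maxsucc{v}}{\directsimeq}|=|\class{\minsucc{w}}{\directsimeq}|=1$) must be stated so it can be cited, and the sentence about the class-sets $\class{\minsucc{v}}{\directsimeq}$ being ``the same'' in $\game$ and in $\game\sqcup\game'$ should be sharpened to: the map $\C\mapsto\C\cup\phi(\C)$ is a bijection between $\game$-classes and union-classes, so cardinalities of these class-sets agree.
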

\begin{proof}
The proof that isomorphism of $\game_q$ and $\game_q'$ implies
$\game \directsimeq \game'$ follows essentially from
Theorem~\ref{thm:direct_sim_quotient} and that isomorphic structures
are also direct simulation equivalent.

The proof that $\game \directsimeq \game'$ implies that $\game_q$
and $\game_q'$ are isomorphic structures follows the following
steps.  Assume that $\game \directsimeq \game'$.  Let $f \subseteq
\partition{V'}{\directsimeq} \times \partition{V}{\directsimeq}$
be defined as $(\C', \C) \in f$ iff $\C' \directsimeq \C$.  Note
that for all $(\C',\C) \in f$ we have $\priosym'_q(\C') =
\priosym_q(\C)$.

We first show that $f$ is a total bijective function
from $\partition{V'}{\directsimeq}$ to $\partition{V}{\directsimeq}$.
For injectivity and functionality of $f$ we reason as follows. Suppose $f$
is not functional. Then there is some $v' \in V'$ and two $v,\bar{v}
\in V$ such that $\class{v}{\directsimeq} \neq
\class{\bar{v}}{\directsimeq}$,
$(\class{v'}{\directsimeq},\class{v}{\directsimeq}) \in f$ and
$(\class{v'}{\directsimeq},\class{\bar{v}}{\directsimeq}) \in f$. Then
by definition, $v' \directsimeq v$ and $v' \directsimeq \bar{v}$.
But then also $v \directsimeq \bar{v}$, contradicting that
$\class{v}{\directsimeq} \neq \class{\bar{v}}{\directsimeq}$.  So
$f$ is functional. The proof that $f^{-1}$ is a function from
$\partition{V}{\directsimeq}$ to $\partition{V'}{\directsimeq}$ is
similar. We may therefore conclude that $f$ is an injective function.

For surjectivity of $f$, we observe that by definition of $\game
\directsimeq \game'$, for each $v \in V$ there is some $v' \in V'$
such that $v \directsimeq v'$.
Hence, for each $\class{v}{\directsimeq} \in \partition{V}{\directsimeq}$
there is some $\class{v}{\directsimeq} \in \partition{V'}{\directsimeq}$
such that $(\class{v'}{\directsimeq}, \class{v}{\directsimeq}) \in
f$. Similarly, we can show that $f^{-1}$ is surjective and therefore
$f$ is total bijection.

We next prove that $\getplayername_q'(\C) = \getplayername_q(f(\C))$.
Towards the contrary, assume that $\getplayername_q'(\C) = \odd$ whereas
$\getplayername_q(f(\C)) = \even$ for some $\C$. Then $\C \subseteq V'_{\odd}$
and for all $v \in \C$ we have $|\minsucc{v}| > 1$, and
there is some $w \in f(\C)$ satisfying either $w \in  V_{\even}$, or 
$|\minsucc{w}| = 1$. Let $w \in f(\C)$ be such and pick an
arbitrary $v \in \C$. We distinguish two
cases. 
\begin{itemize}
\item Case $w \in V_{\even}$. Since $w \directsimeq f(\C) \directsimeq \C
\directsimeq v$ we have $w \directsim v$ in particular. Pick an arbitrary
$w' \in \post{w}$. Then, since $\getplayer{v} = \odd$, we have $w' \directsim v'$ for all $v' \in \post{v}$;
more specifically, we have $w' \directsim v_1'$ and $w' \directsim v_2'$
for $v_1',v_2' \in \minsucc{v}$ such that $v_1' \not\directsimeq v_2'$.
Since $v_1',v_2'$ are minimal elements, we thus also have $v_1' \directsim w'$
and $v_2' \directsim w'$ and hence $v_1' \directsimeq w'$ and
$v_2' \directsimeq w'$. But from this we obtain $v_1' \directsimeq v_2'$.
Contradiction.

\item Case $|\minsucc{w}| = 1$. Without loss of generality we may assume
that $w \in V_{\odd}$. Since $w \directsimeq f(\C) \directsimeq \C \directsimeq v$
we also have $w \directsim v$. Let $v_1',v_2' \in \minsucc{v}$ be such that
$v_1' \not\directsimeq v_2'$. Then there must be some $w_1', w_2' \in
\post{w}$ such that $w_1' \directsim v_1'$ and $w_2' \directsim v_2'$.
Let $w_1',w_2'$ be such; without loss of generality, we may assume that 
$w_1'$ and $w_2'$ are minimal.
Since $v_1',v_2'$ are minimal, we find that $v_1' \directsim w_1'$ and
$v_2' \directsim w_2'$ and hence $v_1' \directsimeq w_1'$ and
$v_2' \directsimeq w_2'$. But because $v_1' \not\directsimeq v_2'$ we
have $w_1' \not\directsimeq w_2'$. Since $w_1'$ and $w_2'$ are minimal
we have $|\class{\minsucc{w}}{\directsimeq}| \ge 2$. Contradiction.

\end{itemize}
Hence, $\getplayername'_q(\C) = \getplayername_q(f(\C))$.

Finally, we prove that $\C \step'_q \C'$ iff  $f(\C) \step_q f(\C')$.
Suppose $\C \step'_q \C'$ but not $f(\C) \step_q f(\C')$.  Assume
$\getplayername_q'(\C) = \getplayername_q(f(\C)) = \even$. The case
where $\getplayername_q'(\C) = \getplayername_q(f(\C)) = \odd$ is
similar.  Since $\C \directsimeq f(\C)$, there must be some $\mathcal{D}$
such that $f(\C) \step_q \mathcal{D}$ and $\C' \directsim \mathcal{D}$.
But then also $\C \step_q' \C''$ and $\mathcal{D} \directsim \C''$
for some $\C''$. Then $\C' \directsim \C''$. Distinguish two cases:
\begin{itemize}
\item Case $\C' = \C''$. Then $f(\C') = f(\C'') = \mathcal{D}$, contradicting
our assumption that $f(\C) \nstep_q f(\C')$.

\item Case $\C' \neq \C''$. Then we have $\C \step_q' \C'$ and
$\C \step_q' \C''$ and $\C' \directsim \C''$. But this means that
vertices in $\C'$ are not maximal. Hence, 
$\game_q'$ does not have a transition $\C \step_q' \C'$. \qedhere
\end{itemize}

\end{proof}

\begin{corollary} The direct simulation equivalence quotient of $\game$
is a unique (up-to isomorphism) parity game direct 
simulation equivalent to $\game$.
\end{corollary}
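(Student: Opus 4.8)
The plan is to combine the two main results of this section. First I would recall that $\game_q$ is a genuine parity game: the priority and player-assignment functions are well defined by construction (Definition~\ref{def:pg_dsim_quotient}), and totality of the edge relation $\to'$ follows from Lemma~\ref{lem:to_minimal_maximal} together with Corollary~\ref{cor:exists_implies_all} --- this is precisely the content of the corollary stated just before Theorem~\ref{thm:direct_sim_quotient}. That $\game_q$ is direct simulation equivalent to $\game$ is then exactly Theorem~\ref{thm:direct_sim_quotient} (using symmetry of $\directsimeq$ to also read it as $\game \directsimeq \game_q$). So the only thing left is the uniqueness claim.

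For uniqueness I would first show that $\game_q$ is, up to isomorphism, its own direct simulation equivalence quotient. Since $\game \directsimeq \game_q$, applying Theorem~\ref{thm:unique_quotient_dsim} to the pair $(\game, \game_q)$ gives that the quotients of these two games --- namely $\game_q$ and the quotient of $\game_q$ --- are isomorphic. Hence $\game_q$ is `reduced': it coincides, up to isomorphism, with its own quotient.

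Next I would take an arbitrary reduced parity game $\game'$ that is direct simulation equivalent to $\game$, that is, a parity game with $\game' \directsimeq \game$ and $\game'$ isomorphic to its own direct simulation equivalence quotient $\game'_q$. Applying Theorem~\ref{thm:unique_quotient_dsim} to the pair $(\game, \game')$ yields $\game_q \isomorphic \game'_q$, and combining this with $\game' \isomorphic \game'_q$ gives $\game' \isomorphic \game_q$. Thus every reduced parity game direct simulation equivalent to $\game$ is isomorphic to $\game_q$, which is the claimed uniqueness; together with the first two paragraphs this proves the corollary.

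I do not expect any real obstacle: the argument is pure bookkeeping on top of Theorems~\ref{thm:direct_sim_quotient} and~\ref{thm:unique_quotient_dsim}. The only point that needs a little care is to make explicit in what sense the quotient is `unique up to isomorphism', namely that it is the canonical (reduced) representative of the $\directsimeq$-class of $\game$ --- which is why the first step, establishing idempotence of quotienting, is needed at all.
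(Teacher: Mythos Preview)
Your proposal is correct and follows the same route as the paper, which states the corollary without proof as an immediate consequence of Theorems~\ref{thm:direct_sim_quotient} and~\ref{thm:unique_quotient_dsim}. Your extra care in making precise the intended meaning of ``unique'' (namely, unique among reduced representatives, via idempotence of quotienting) is a useful clarification that the paper leaves implicit.
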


\subsection{Governed Bisimulation and Governed Stuttering Bisimulation Quotients}

We first define the governed bisimilarity quotients and claim some elementary
results of these. We then define the governed stuttering bisimilarity quotient.
\begin{definition}[Governed bisimulation quotient]
\label{def:pg_gbisim_quotient}
The governed bisimulation quotient of $(V,\step,\priosym,\getplayername)$
is the structure
$(\partition{V}{\gov}, \to', \priosym', \getplayername')$, where, for
$\C,\C' \in \partition{V}{\gov}$:

\begin{itemize}
\item $\priosym'(\C) = \min\{\prio{v} ~|~ v \in \C \}$,
\item $\getplayername'(\C) = \begin{cases}
   \ensuremath{\odd} & \text{if $\C \subseteq V_{\odd}$ and for all $u \in \C$,
 $|\class{\post{u}}{\gov}| > 1$} \\
   \ensuremath{\even} & \text{otherwise}
   \end{cases} $
  \item $\C \to' \C'$ if and only if 
    $\forall{v \in \C} \exists{v' \in \post{v}}{v' \in \C'}$ 

\end{itemize}
\end{definition}
\begin{theorem} Let $\game= (V, \step, \priosym,\getplayername)$ be a parity game
and $\game_q = (\partition{V}{\gov}, \step',\priosym',\getplayername')$ be
its governed bisimulation quotient. Then $\game \gov \game'$.
\end{theorem}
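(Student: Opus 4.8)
The plan is to follow the template of the proof of Theorem~\ref{thm:direct_sim_quotient}, but more cheaply: since a governed bisimulation need only be symmetric (and not, as for governed stuttering bisimulation, an equivalence), I would exhibit a \emph{single} symmetric relation $R$ on the disjoint union of the vertex sets $V$ and $\partition{V}{\gov}$ that is a governed bisimulation and that relates every $v \in V$ to $\class{v}{\gov}$ and every class to (any of) its members. Such an $R$ witnesses $\game \gov \game_q$ by the definition of $\sim$ on games.

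Before defining $R$ I would record the one observation that carries most of the weight: for every $\C \in \partition{V}{\gov}$, all members of $\C$ share one priority and one set of successor classes. The priority part is the first clause of Definition~\ref{def:govbisim}; for the successor classes, if $v \gov u$ then either $\getplayer{v} = \getplayer{u}$, and applying the third clause of Definition~\ref{def:govbisim} in both directions yields $\class{\post{v}}{\gov} = \class{\post{u}}{\gov}$, or $\getplayer{v} \neq \getplayer{u}$, and the second clause forces $\post{v} \cup \post{u}$ into one class. Writing $\mathcal{S}_\C$ for this common, non-empty set, unwinding Definition~\ref{def:pg_gbisim_quotient} gives $\C \to' \C'$ iff $\C' \in \mathcal{S}_\C$; hence $\to'$ is total (so $\game_q$ is genuinely a parity game), and $\priosym'(\C) = \prio{v}$ and $\post{\C} = \class{\post{v}}{\gov}$ in $\game_q$ for every $v \in \C$.

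With this in hand I would take $R$ to be: $u \mathrel R u'$ iff $u \gov u'$ for $u, u' \in V$; $v \mathrel R \C$ and $\C \mathrel R v$ iff $v \in \C$; and $\C \mathrel R \C'$ iff $\C = \C'$. Symmetry and the coverage requirement are immediate, and the transfer conditions of Definition~\ref{def:govbisim} split into easy cases. For pairs of $\game$-vertices they hold because $\gov$ is itself a governed bisimulation (the union of governed bisimulations is again one, the transfer clauses being monotone) and $\game$'s edges are untouched in the disjoint union; for $(\C,\C)$ they are trivial. For the mixed pairs $(v,\C)$ and $(\C,v)$ with $v \in \C$, equal priority is the observation, and the third clause follows since every $v' \in \post{v}$ has $\class{v'}{\gov} \in \mathcal{S}_\C = \post{\C}$ with $v' \mathrel R \class{v'}{\gov}$, while every $\C' \in \post{\C} = \class{\post{v}}{\gov}$ admits some $v' \in \post{v}$ with $v' \in \C'$.

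The main obstacle, and the only place that needs genuine care, is the second clause of Definition~\ref{def:govbisim} for the mixed pairs, \ie the case $\getplayer{v} \neq \getplayername'(\C)$: one must show every successor of $v$ is related to every successor of $\C$, which by the observation reduces to showing $\post{\C} = \class{\post{v}}{\gov}$ is a singleton (it automatically contains $\class{v'}{\gov}$ for each $v' \in \post{v}$). The only possible mismatch is $\getplayer{v} = \odd$ with $\getplayername'(\C) = \even$, since $\getplayername'(\C) = \odd$ forces $\C \subseteq V_{\odd}$. By Definition~\ref{def:pg_gbisim_quotient}, $\getplayername'(\C) = \even$ means either $\C$ contains some $w \in V_{\even}$ — and then $v \gov w$ with distinct owners, so the second clause of Definition~\ref{def:govbisim} pushes all of $\post{v}$ into one class — or $\C \subseteq V_{\odd}$ and $|\class{\post{u_0}}{\gov}| = 1$ for some $u_0 \in \C$ — and then $\class{\post{v}}{\gov} = \mathcal{S}_\C = \class{\post{u_0}}{\gov}$ has size one. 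Either way $\class{\post{v}}{\gov}$ is a singleton, which closes the argument. I expect this final case split to be the part of the write-up most prone to slips, since it is exactly where the somewhat ad hoc ownership convention for quotient vertices in Definition~\ref{def:pg_gbisim_quotient} has to pull its weight.
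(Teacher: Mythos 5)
Your proposal is correct and takes essentially the same route as the paper: the paper's proof simply asserts that the relation $\{(v,\C),(\C,v)\mid v\in\C\}$ is a governed bisimulation, and your argument (with the harmless addition of the $\gov$-pairs on $V$ and the identity on classes) fills in exactly the details needed to justify that claim, including the ownership case analysis for mixed pairs.
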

\begin{proof}
Follows from the fact that the relation 
$R  = \{(v,\C), (\C,v)~|~ v \in \C\}$, is a governed bisimulation relation.\qedhere\end{proof}
\begin{theorem}
Let $\game,\game'$ be two parity games and let $\game_q$ and $\game_q'$ be
their direct simulation equivalence quotients, respectively. Then 
$\game \gov \game'$ iff
the two structures $\game_q = (\partition{V}{\gov}, \step_q,
\priosym_q, \getplayername_q)$ and $\game_q' = (\partition{V'}{\gov}, 
\step'_q, \priosym'_q, \getplayername'_q)$ are isomorphic.
\end{theorem}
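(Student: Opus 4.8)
The plan is to follow the proof of Theorem~\ref{thm:unique_quotient_dsim} almost verbatim; it is even a little smoother here, because governed bisimulation has no `little brother' phenomenon: every successor class of a vertex is retained in the quotient, and $\gov$ preserves the whole set of successor classes of a vertex. For the direction from isomorphism to equivalence, I would note that isomorphic parity games are in particular governed bisimilar, and that the preceding theorem (applied to $\game$ and to $\game'$) gives $\game \gov \game_q$ and $\game' \gov \game_q'$; transitivity and symmetry of $\gov$ (Theorem~\ref{thm:pg_fmib_equivalence}) then yield $\game \gov \game'$.

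For the converse, assume $\game \gov \game'$ and argue inside the disjoint union of the two games. I would define $f \subseteq \partition{V'}{\gov} \times \partition{V}{\gov}$ by $(\C',\C) \in f$ iff $\C' \gov \C$. The proof that $f$ is a total bijection is identical to the corresponding step in Theorem~\ref{thm:unique_quotient_dsim}: functionality and injectivity follow from transitivity of $\gov$, and totality and surjectivity from the defining property of $\game \gov \game'$. Preservation of priorities, $\priosym'_q(\C') = \priosym_q(f(\C'))$, is immediate, since each quotient priority is the minimum priority occurring in its class and $\gov$-related vertices carry equal priority (clause a) of Definition~\ref{def:govbisim}).

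Both remaining clauses rest on one small invariance observation, the governed-bisimulation analogue of Lemma~\ref{lem:to_minimal_maximal}: if $v \gov w$ then the third clause of Definition~\ref{def:govbisim} together with symmetry gives $\class{\post{v}}{\gov} = \class{\post{w}}{\gov}$, and if moreover $\getplayer{v} \neq \getplayer{w}$, the second clause forces all of $\post{v}$ (and of $\post{w}$) into a single class, so $|\class{\post{v}}{\gov}| = 1$. Given this, edge preservation is direct: $\C' \step'_q \mathcal{D}'$ says every vertex of $\C'$ has a successor in $\mathcal{D}'$; for $v \in f(\C')$ I pick $v' \in \C'$ with $v' \gov v$, and a successor of $v'$ landing in $\mathcal{D}'$ is matched by a successor of $v$ landing in $f(\mathcal{D}')$, whence $f(\C') \step_q f(\mathcal{D}')$, the converse being symmetric. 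For ownership, $\getplayername'_q(\C') = \odd$ precisely when $\C' \subseteq V'_{\odd}$ and every $u \in \C'$ has at least two successor classes; the `at least two successor classes' condition transfers to $f(\C')$ by the invariance observation, and if $\C' \subseteq V'_{\odd}$ while $f(\C')$ contained a vertex owned by \even, the invariance observation would give a vertex of $\C'$ with exactly one successor class, a contradiction, so $f(\C') \subseteq V_{\odd}$ as well and $\getplayername_q(f(\C')) = \odd$. The symmetric argument yields $\getplayername'_q(\C') = \getplayername_q(f(\C'))$ in every case, so $f$ is a priority-, owner- and edge-preserving bijection, i.e.\ an isomorphism.

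I expect the only genuinely non-routine step to be the ownership clause: one has to distil the `unique successor class' behaviour into an invariance lemma and then handle the built-in asymmetry of the definition of $\getplayername'$ (the side condition that the class be contained in $V_{\odd}$). Everything else is a direct transcription of the corresponding parts of the proof of Theorem~\ref{thm:unique_quotient_dsim}.
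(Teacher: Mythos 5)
Your proposal is correct and follows exactly the route the paper intends: the paper's own proof consists of the single remark that it is ``similar to the proof of Theorem~\ref{thm:unique_quotient_dsim}'', and your write-up is precisely that adaptation, with the invariance observation $\class{\post{v}}{\gov} = \class{\post{w}}{\gov}$ (and the single-successor-class consequence when owners differ) correctly playing the role that Lemma~\ref{lem:to_minimal_maximal} plays for direct simulation.
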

\begin{proof}
Similar to the proof of Theorem~\ref{thm:unique_quotient_dsim}.\qedhere\end{proof}
\begin{corollary} The governed bisimulation quotient of $\game$
is a unique (up-to isomorphism) parity game that is governed 
bisimilar to $\game$.
\end{corollary}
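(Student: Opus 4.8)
The plan is to derive this corollary directly from the two theorems immediately preceding it; the only real work is to make the phrase ``unique up to isomorphism'' precise. For existence there is nothing new to prove: the theorem stating $\game \gov \game_q$ already exhibits the governed bisimulation quotient $\game_q$ of $\game$ as a parity game (totality of the edge relation $\to'$ of Definition~\ref{def:pg_gbisim_quotient} is routine, since all vertices of a class reach exactly the same set of classes) that is governed bisimilar to $\game$.

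For uniqueness, I would first observe that $\gov$, lifted to parity games via the disjoint union, is symmetric; this follows at once from its definition together with symmetry of $\gov$ on vertices (Theorem~\ref{thm:pg_fmib_equivalence}). Hence from $\game \gov \game_q$ we also get $\game_q \gov \game$, and the preceding theorem --- which characterises $\game_1 \gov \game_2$ by isomorphism of the quotients $(\game_1)_q$ and $(\game_2)_q$ --- applied to the pair $\game$ and $\game_q$ shows that $(\game_q)_q$ is isomorphic to $\game_q$, \ie the quotient is already ``reduced''. Next, for any parity game $\game'$ with $\game' \gov \game$, the same characterisation theorem gives that $(\game')_q$ is isomorphic to $\game_q$. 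Combining these, $\game_q$ is a parity game that is governed bisimilar to $\game$, is isomorphic to its own governed bisimulation quotient, and is isomorphic to the governed bisimulation quotient of every parity game governed bisimilar to $\game$. Any other parity game $\game''$ with the first two of these properties is governed bisimilar to $\game$, so by the characterisation theorem applied to $\game''$ and $\game_q$ their quotients are isomorphic; since both games are isomorphic to their own quotients, $\game''$ and $\game_q$ are themselves isomorphic. This is precisely the uniqueness claim.

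I expect no genuine obstacle here: all the mathematical substance is carried by the two cited theorems, and the only point that benefits from being spelled out is the intended reading of the uniqueness statement, which the two short observations above --- symmetry of the lifted relation and idempotence of quotienting --- settle.
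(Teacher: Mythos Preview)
Your proposal is correct and follows the paper's intended route: the paper states this corollary without proof, treating it as an immediate consequence of the two preceding theorems, and your argument does exactly that. Your only addition is to spell out the intended reading of ``unique up to isomorphism'' (as unique among reduced games governed bisimilar to $\game$), which the paper leaves implicit; your derivation of idempotence of quotienting from the characterisation theorem is the right way to make this precise.
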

We next define the governed stuttering bisimulation quotient. It requires some 
subtlety to properly deal with divergences and ensure that
a unique player is assigned to an equivalence class.

%
%
%

\begin{definition}[Governed stuttering bisimulation quotient]
\label{def:pg_gstut_quotient}
The governed stuttering bisimulation quotient of $(V,\step,\priosym,\getplayername)$
is the structure
$(\partition{V}{\gstut}, \to', \priosym', \getplayername')$, where, for
$\C,\C' \in \partition{V}{\gstut}$:

\begin{itemize}
\item $\priosym'(\C) = \min\{\prio{v} ~|~ v \in \C \}$,
\item $\getplayername'(\C) = \begin{cases}
  {\even} & \text{if for all $v\in \C$, $v \diverges{\even}{\gstut}$, or
   for some $v \in \C, \C' \neq \C$, $v \steps{\even} \C'$ } \\
   {\odd} & \text{otherwise}
  \end{cases}$
 
\item $\C \to' \C'$ if and only if $\begin{cases}
    \exists{\player \in \{\even,\odd\}} \forall{v \in \C} v \diverges{\player}{\gstut} & \text{ if  $\C = \C'$} \\
    \exists{\player \in \{\even,\odd\}} \forall{v \in \C} v \forces{\player}{\gstut} \C' & \text{ if $\C \neq \C'$} \\
    \end{cases} $

\end{itemize}
\end{definition}

\begin{theorem} Let $\game= (V, \step, \priosym,\getplayername)$ be a parity game
and $\game_q = (\partition{V}{\gstut}, \step',\priosym',\getplayername')$ be
its governed stuttering bisimulation quotient. Then $\game \gstut \game_q$.
\end{theorem}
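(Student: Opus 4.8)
The plan is to produce one relation on the disjoint union $\game \disjunion \game_q$ and verify that it is a governed stuttering bisimulation by checking the conditions of Theorem~\ref{th:fmist4}, which is the most convenient characterisation because it is symmetric and stated entirely in terms of the forcing predicate. The relation I would use is
\[
R \isdef {\gstut} \cup \{(v,\C) \mid v \in \C\} \cup \{(\C,v) \mid v \in \C\} \cup \{(\C,\C) \mid \C \in \partition{V}{\gstut}\} ,
\]
whose equivalence classes are exactly the sets $\widehat{\C} \isdef \C \cup \{\C\}$, one for every $\C \in \partition{V}{\gstut}$; that $R$ is an equivalence relation is routine, using that $\gstut$ is one (Theorem~\ref{th:fmist-equiv}). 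Condition~a) of Theorem~\ref{th:fmist4} is immediate, since all vertices of a class $\C$ share their priority (as $\gstut$ equates priorities) and $\priosym'(\C)$ is defined to be that common value; the pair $(\C,\C)$ is trivial.

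For condition~b), I would first note that $\game \disjunion \game_q$ has no edges between $V$ and $\partition{V}{\gstut}$. Hence, given sets $\mathcal{A},\mathcal{B}$ of $R$-classes with $\widehat{\C} \in \mathcal{A}$ and $\widehat{\C} \notin \mathcal{B}$ — put $\U = \{\mathcal{D} \mid \widehat{\mathcal{D}} \in \mathcal{A}\}$ and $\T = \{\mathcal{D} \mid \widehat{\mathcal{D}} \in \mathcal{B}\}$ — a play from a vertex $v \in \C$ never leaves $V$, so $v \forces{\player}{\mathcal{A}} \mathcal{B}$ is the same as $v \forces{\player}{\bigcup\U} \bigcup\T$ evaluated in $\game$, while a play from the quotient vertex $\C$ never leaves $\partition{V}{\gstut}$, so $\C \forces{\player}{\mathcal{A}} \mathcal{B}$ is the same as $\C \forces{\player}{\U} \T$ evaluated in $\game_q$. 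So condition~b) reduces to: (i) for $v \gstut v'$, the predicate $v \forces{\player}{\bigcup\U} \bigcup\T$ holds in $\game$ iff $v' \forces{\player}{\bigcup\U} \bigcup\T$ does — which is precisely condition~b) of Theorem~\ref{th:fmist4} for the governed stuttering bisimulation $\gstut$ — and (ii) for $v \in \C$ with $\C \in \U$ and $\C \notin \T$, the predicate $v \forces{\player}{\bigcup\U} \bigcup\T$ holds in $\game$ iff $\C \forces{\player}{\U} \T$ holds in $\game_q$. The case of two $R$-related quotient vertices is trivial.

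Claim (ii) is the core of the proof. The key observation is that a single move $\C_j \to' \C_{j+1}$ of $\game_q$ with $\C_j \neq \C_{j+1}$ is, by the definition of the quotient edge relation, present exactly when some player can, from every vertex of $\C_j$, force play to $\C_{j+1}$ while staying inside $\C_j$ (and $\C_j \subseteq \bigcup\U$ because $\C_j \in \U$), whereas a self-loop $\C_j \to' \C_j$ is present exactly when some player can force a divergence inside $\C_j$. For the direction from $\game_q$ to $\game$, I would take a memoryless $\player$-strategy in $\game_q$ witnessing $\C \forces{\player}{\U} \T$, replace each traversed quotient edge by the corresponding forcing (resp.\ divergence) witness in $\game$, argue that a self-loop at an $\opponent{\player}$-owned class is never traversed before $\T$ is reached (otherwise $\opponent{\player}$ could loop there forever, contradicting that the quotient strategy forces $\T$), and glue the pieces together with Lemma~\ref{lem:glue-strategies}. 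For the converse, I would induct on the attractor rank of $v$ in $\battr{\bigcup\U}{\player}{\bigcup\T}$ (via Lemma~\ref{lem:attractor_vs_forces}): at a quotient vertex owned by $\player$ one selects the edge towards the next target class supplied by the $\game$-strategy, the owner-assignment of Definition~\ref{def:pg_gstut_quotient} being tailored exactly so that the player who chooses in $\game_q$ is the player that can force the matching macro-move in $\game$; divergent classes and Lemma~\ref{lem:force-vs-div} handle the self-loop cases.

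The step I expect to be the real obstacle is precisely claim (ii), and inside it the bookkeeping that the rather delicate owner-assignment and edge relation of Definition~\ref{def:pg_gstut_quotient} — especially its treatment of classes that mix $\even$- and $\odd$-owned vertices and of divergent, self-looping classes — really do make the choice structure of $\game_q$ match the forcing capabilities of the two players in $\game$ as captured by Theorems~\ref{th:fmist2}--\ref{th:fmist4}. Once claim (ii) is established, $R$ meets all requirements of Theorem~\ref{th:fmist4}, so it is a governed stuttering bisimulation; since it relates every $v \in V$ to its class $\class{v}{\gstut}$ and every class to each of its members, we get $v \gstut \class{v}{\gstut}$ for all $v$ in the disjoint union, which is exactly $\game \gstut \game_q$.
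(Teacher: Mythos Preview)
Your proposal is correct and uses the same witnessing relation $R$ as the paper, but you verify it against a different characterisation: you check condition~b) of Theorem~\ref{th:fmist4} (forcing through arbitrary sets of classes), whereas the paper verifies the original Definition~\ref{def:fmist} directly (the single-step transfer clause~b) and the divergence clause~c)), doing a case analysis on $\getplayername'(\C)$ for each of the pairs $v \R \C$ and $\C \R v$.

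The difference is real but the underlying work is largely the same. Your route factors the argument through the strong ``forcing correspondence'' claim~(ii), which is a clean conceptual statement and, once proved, makes condition~b) of Theorem~\ref{th:fmist4} immediate. However, actually establishing claim~(ii) --- in particular showing that when $\getplayername'(\C) = \player$ and $\C \to' \C'$ the player $\player$ can force that macro-step in $\game$, and conversely that $\opponent{\player}$'s quotient moves are matched by $\opponent{\player}$'s capabilities in $\game$ --- requires exactly the owner-assignment case analysis the paper carries out (the paper's $\C \R v$ case essentially proves that $\C \to' \C'$ implies $v \forces{\getplayername'(\C)}{\gstut} \C'$, and its $v \R \C$ case proves the converse direction plus the divergence correspondence). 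So the paper's approach is more economical: it checks only the weak local conditions of Definition~\ref{def:fmist}, which need less than your full forcing correspondence, and it avoids the extra induction and gluing via Lemma~\ref{lem:glue-strategies}. Your approach buys a reusable lemma (claim~(ii)) that says the quotient faithfully reflects forcing, at the cost of a longer proof of that lemma; the paper gets to the theorem faster but does not isolate that statement.
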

\begin{proof}
Consider the 
relation $R \subseteq (V \cup \partition{V}{\gstut})
\times (V \cup \partition{V}{\gstut})$, defined as follows:
\[R  = \{(v,\C), (\C,v), (v,w), (\C,\C)~|~ v,w \in \C\}
\]
Then $R$ is a governed stuttering bisimulation 
relation. Note that $R$ is an equivalence relation. It thus suffices
to prove that $R$ meets the remaining conditions of governed stuttering 
bisimulation. We do so by proving that $R$ is a governed stuttering bisimulation
for the following cases: $v \R w$, $\C \R \C'$, $v \R \C$ and $\C \R v$.
Observe that for $v,w \in V$ we have $v \R w$ iff
in $\game$ we have $v \gstut w$, and for $\C,\C' \in \partition{V}{\gstut}$
we have $\C \R \C'$ iff $\C = \C'$. As a result, for these cases $R$ is
immediately a governed stuttering bisimulation relation. We therefore focus
on the cases $v \R \C$ and $\C \R v$. Both cases are addressed separately.\\

\noindent
Suppose that $v \R \C$. We reason as follows:
\begin{itemize}
\item By definition, $\prio{v} = \priosym'(\C)$ as all $w \in \C$ are
such that $\prio{w} = \prio{v}$.

\item Suppose $v \step \C'$ for some $\C' \in \partition{V}{R}
\setminus \{\class{v}{R}\}$. Then by definition of $\gstut$ we 
have $w \forces{\getplayer{v}}{\gstut} \C'$ for all $w \in \C$ and therefore
$\C \step' \C'$.
\begin{itemize}
\item  Case $\getplayer{v} =\even$. Since for all
$w \in \C$ we have $w \forces{\even}{\gstut} \C'$, there is some $w \in \C$ such that
$w \steps{\even} \C'$ and therefore 
$\getplayername'(\C) = \even$. Because $\C \step' \C'$, also $\C \forces{\even}{R} \C'$.

\item
Case $\getplayer{v} = \odd$.  Suppose $\getplayername'(\C) = \odd$. Since $\C \step' \C'$, also
$\C \forces{\odd}{R} \C'$. 
Next, suppose $\getplayername'(\C) = \even$. 
Then there
must be some $w \in \C$ and some $\C'' \neq \C$ such that $w \steps{\even} \C''$, as
$w \diverges{\even}{\gstut}$ would conflict with $w \forces{\odd}{\gstut} \C'$.
Let $w$ be
such.  We also have $w \forces{\odd}{\gstut} \C'$. This can only be the case if
$\getplayer{w} = \odd$ and $\post{w} \subseteq \C'$.
But then $\C'$ is the only successor of $\C$, \ie 
$\C \step' \C''$ implies
$\C'' = \C'$, and therefore $\C \forces{\odd}{R} \C'$.

\end{itemize}

\item Suppose $v \diverges{\player}{R}$. Then $w \diverges{\player}{R}$
for all $w \in \C$. But then also $\C \step'
\C$. 
\begin{itemize}
\item Case $\player = \even$. Then also $\getplayername'(\C) = \even$ and
since $\C \step' \C$, also $\C \diverges{\player}{R}$.

\item Case $\player = \odd$. If $\getplayername'(\C) = \odd$, then, since
$\C \step' \C$, also $\C \diverges{\player}{R}$.
If $\getplayername'(\C) = \even$, then also $v \diverges{\even}{\gstut}$, since
$u \steps{\even} \C'$ for some $u \in \C$ and $\C' \neq \C$ contradicts
$u \diverges{\odd}{\gstut}$. But $v \diverges{\even}{\gstut}$ and $v \diverges{\odd}{\gstut}$
implies that for all $\C'$ such that $\C \step' \C'$, we have $\C' = \C$. Hence,
also $\C \diverges{\player}{R}$.

\end{itemize}

\end{itemize}

\noindent
Assume that $\C \R v$. We now reason as follows:
\begin{itemize}
\item $\priosym'(\C) = \prio{v}$ follows from the same arguments as before.

\item Suppose $\C \step' \C'$ for some $\C' \neq \C$. Then there is some $\player \in \{ \even, \odd \}$ such that for all $w \in \C$, we have $w \forces{\player}{\gstut} \C'$. We distinguish two cases.
\begin{itemize}
  \item Assume $\getplayername'(\C) = \even$. 
  \begin{itemize}
    \item Case $\player = \even$. Then $v \forces{\even}{\gstut}$ follows immediately since $v \in \C$.
    \item Case $\player = \odd$. Hence, for all $w \in \C$, we have $w \forces{\odd}{\gstut} \C'$. Since $\getplayername'(\C) = \even$, there must be some $w \in \C$, $\C'' \neq \C$, such that $w \steps{\even} \C''$, since $w \diverges{\even}{\gstut}$ contradicts $w \forces{\odd}{\gstut} \C'$.

Let $w$ be such, and observe that $w \forces{\even}{\gstut} \C''$. This can only be the case if $\C'' = \C'$, and from $w \forces{\even}{\gstut} \C''$ and $\C'' = \C'$ we obtain $v \forces{\even}{\gstut} \C'$
  \end{itemize}

  \item Assume $\getplayername'(\C) = \odd$. Observe that this implies that $\player = \odd$, since, if $\player = \even$, then for all $w \in \C$, we have $w \forces{\even}{\gstut} \C'$, which means there is some $w \in \C$ for which $w \steps{\even} \C'$. This contradicts $\getplayername'(\C) = \odd$. So, $\player = \odd$. It then immediately follows that for all $w \in \C$, we have $w \forces{\odd}{\gstut} \C'$, and in particular $v \forces{\odd}{\gstut} \C'$.
  
  \end{itemize}

\item Assume that $\C \diverges{\player}{R}$. We distinguish two cases.
\begin{itemize}
\item Case $\player = \even$.  Observe that $\getplayername'(\C) = \even$. Suppose that $\getplayername'(\C) = \odd$,
then $\C \diverges{\even}{R}$ implies that for all $\C'$ such that $\C \step' \C'$ we have $\C' = \C$ and therefore,
for all $w \in \C$ we have both $w \diverges{\odd}{\gstut}$ and $w \diverges{\even}{\gstut}$. This contradicts $\getplayername'(\C) = \odd$, hence $\getplayername'(\C) = \even$.

Now, towards a contradiction, assume that $w \ndiverges{\even}{\gstut}$ for all $w \in \C$.
Because $\getplayername'(\C) = \even$, there must be some $w \in \C$ such
that $w \steps{\even} \C'$ for some $\C' \neq \C$. Let $w$ be such. 
Since $w \ndiverges{\even}{\gstut}$ but $\C \step' \C$, we must have
$w \diverges{\odd}{\gstut}$. This contradicts $w \steps{\even} \C'$. So there must be some $w \in \C$ such that
$w \diverges{\even}{\gstut}$. But then also $v \diverges{\even}{\gstut}$ and therefore
$v \diverges{\even}{R}$.

\item Case $\player = \odd$.  Suppose $\getplayername'(\C') = \even$. 
Because $\C \diverges{\odd}{R}$ we find that $\C \not\step' \C'$ for
$\C' \not= \C$. Hence, for all $w \in \C$ we have $w \diverges{\even}{\gstut}$ and 
$w \diverges{\odd}{\gstut}$; in particular,
$v \diverges{\odd}{R}$.

Next, assume that $\getplayername'(\C') = \odd$. Then there must be
some $w \in \C$ such that $w \ndiverges{\even}{\gstut}$. Consequently $w \ndiverges{\even}{\gstut}$
for all $w \in \C$. Then $\C \step' \C$ can only be because for all $w \in \C$ we have
$w \diverges{\odd}{\gstut}$. In particular, $v \diverges{\odd}{\gstut}$ and therefore
$v \diverges{\odd}{R}$. \qedhere
\end{itemize}

\end{itemize}

\end{proof}
\begin{theorem}
  Let $\game,\game'$ be two parity games and let $\game_q$ and $\game_q'$ be
  their direct simulation equivalence quotients, respectively. Then 
  $\game \gstut \game'$ iff
  the two structures $\game_q = (\partition{V}{\gstut}, \step_q,
  \priosym_q, \getplayername_q)$ and $\game_q' = (\partition{V'}{\gstut}, 
  \step'_q, \priosym'_q, \getplayername'_q)$ are isomorphic.
\end{theorem}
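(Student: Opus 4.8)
The plan is to mirror the proof of Theorem~\ref{thm:unique_quotient_dsim} (and of the governed bisimulation quotient), now leaning on the strong transfer properties of $\gstut$ established in Theorems~\ref{th:fmist2} and~\ref{th:fmist3}. One fact I would use repeatedly is that in a disjoint union there are no edges between the two games, so governed stuttering bisimilarity within one game coincides with governed stuttering bisimilarity in the disjoint union; consequently every $\gstut$-class of $\game \disjunion \game'$ decomposes as $\C \disjunion \C'$ for a unique $\C \in \partition{V}{\gstut}$ and a unique $\C' \in \partition{V'}{\gstut}$, and by the definition of $\game \gstut \game'$ each such class meets both $V$ and $V'$.

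For the direction from right to left, any isomorphism between $\game_q$ and $\game_q'$ is in particular a governed stuttering bisimulation, so $\game_q \gstut \game_q'$; combining this with the previously established facts $\game \gstut \game_q$ and $\game' \gstut \game_q'$ and with transitivity of $\gstut$ (Theorem~\ref{th:fmist-equiv}) gives $\game \gstut \game'$.

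For the direction from left to right, assume $\game \gstut \game'$ and work in $\game \disjunion \game'$. By the decomposition just noted, pairing each $\C \in \partition{V}{\gstut}$ with the unique $\C' \in \partition{V'}{\gstut}$ lying in the same $\gstut$-class of $\game \disjunion \game'$ yields a bijection $f \colon \partition{V}{\gstut} \to \partition{V'}{\gstut}$, and $\priosym_q(\C) = \priosym'_q(f(\C))$ is immediate from clause a) of Definition~\ref{def:fmist}. It remains to verify that $f$ respects the owner function and the edge relation. Fixing $v \in \C$ and $w \in f(\C)$ with $v \gstut w$, note that the edge relation $\step'$ of Definition~\ref{def:pg_gstut_quotient} is defined purely through ``some player can force divergence in the current class'' and ``some player can force play to another class''; the first is $\gstut$-invariant by clause c) of Definition~\ref{def:fmist} together with symmetry of $\gstut$ (equivalently Theorem~\ref{th:fmist2}), the second by Theorem~\ref{th:fmist3}, and the ``for all $v \in \C$'' quantifiers collapse because the vertices of a class are pairwise governed stuttering bisimilar. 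Routing these equivalences through the class $\C \disjunion f(\C)$ (and using that $\game$ and $\game'$ have no common edges) gives $\C \step_q \mathcal{D}$ iff $f(\C) \step'_q f(\mathcal{D})$, treating $\mathcal{D} = \C$ and $\mathcal{D} \neq \C$ separately. The owner function is handled the same way, with the additional ingredient that ``some vertex of $\C$ has an edge leaving $\C$'' is $\gstut$-invariant (via clause b) of Definition~\ref{def:fmist} and Lemmas~\ref{lem:exits} and~\ref{lem:exits_gen}), so that the same branch of the case split defining $\getplayername'$ is taken for $\C$ and $f(\C)$. This makes $f$ an isomorphism.

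I expect the owner-preservation step to be the main obstacle. Unlike the edge relation, the definition of $\getplayername'$ in Definition~\ref{def:pg_gstut_quotient} is a case split that interleaves a divergence condition with an escape condition, and to show that $\C$ and $f(\C)$ land in the same case one cannot rely on clauses a)--c) of Definition~\ref{def:fmist} alone but has to invoke the finer reachability lemmas of Section~\ref{sec:notation} --- in particular Lemmas~\ref{lem:exits}, \ref{lem:exits_gen} and~\ref{lem:force-vs-div} --- much as in the subtleties already encountered in the proof that a game is governed stuttering bisimilar to its governed stuttering bisimulation quotient.
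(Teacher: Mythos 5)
Your proposal is correct and follows essentially the route the paper itself intends: its proof of this theorem is just the remark that it proceeds as in Theorem~\ref{thm:unique_quotient_dsim}, and your plan mirrors exactly that template (isomorphism plus the quotient theorem and transitivity for one direction; a bijection between $\partition{V}{\gstut}$ and $\partition{V'}{\gstut}$ induced by the cross-game classes for the other), with the transfer of priorities, owners and edges correctly routed through the disjoint-union classes via clauses a)--c), Theorems~\ref{th:fmist2}/\ref{th:fmist3} and Lemmas~\ref{lem:force-vs-div}, \ref{lem:exits} and~\ref{lem:exits_gen}. Your identification of owner preservation as the delicate point, and the observation that the escape condition in $\getplayername'$ is $\gstut$-invariant via clause b), fill in precisely the details the paper leaves implicit.
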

\begin{proof}
Again similar to the proof of Theorem~\ref{thm:unique_quotient_dsim}.\qedhere\end{proof}
\begin{corollary} The governed stuttering bisimulation quotient of $\game$
is a unique (up-to isomorphism) parity game that is governed stuttering
bisimilar to $\game$.
\end{corollary}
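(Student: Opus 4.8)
The plan is to follow the proof of Theorem~\ref{thm:unique_quotient_dsim} almost verbatim, replacing $\directsimeq$ by $\gstut$ throughout and substituting the governed stuttering bisimulation characterisations of Section~\ref{sec:governed_stuttering} for the direct simulation facts used there. All reasoning takes place in the disjoint union $\game \disjunion \game'$, in which no edge crosses between the two components; consequently, for a vertex of $\game$ (resp.\ $\game'$), forcing or divergence within its $\gstut$-class of the union coincides with forcing or divergence within the corresponding $\gstut$-class of $\game$ (resp.\ $\game'$), which is what lets us move the relevant predicates between the two games.

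For the implication from isomorphism to $\game \gstut \game'$, observe that isomorphic parity games are governed stuttering bisimilar (an isomorphism, together with its inverse, preserves priorities, owners and edges, hence trivially all force and divergence predicates, so its graph generates a governed stuttering bisimulation). Combining $\game_q \isomorphic \game_q'$ with $\game \gstut \game_q$ and $\game' \gstut \game_q'$ from the preceding theorem, and transitivity of $\gstut$ (Theorem~\ref{th:fmist-equiv}), yields $\game \gstut \game'$.

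For the converse, assume $\game \gstut \game'$ and define $f \subseteq \partition{V'}{\gstut} \times \partition{V}{\gstut}$ by $(\C',\C) \in f$ iff $\C' \gstut \C$. First I would show $f$ is a total bijective function: functionality and injectivity follow from transitivity of $\gstut$ together with the fact that two $\gstut$-related vertices of the same game lie in the same class, while totality and surjectivity follow directly from the definition of $\game \gstut \game'$. Priority preservation, $\priosym'_q(\C') = \priosym_q(f(\C'))$, is immediate since priorities are constant on $\gstut$-classes and preserved by $\gstut$. The two remaining obligations, that $f$ preserves $\getplayername_q$ and $\step'_q$, are discharged using the two invariance properties of $\gstut$-related vertices in $\game \disjunion \game'$: (i) they agree on $v \diverges{\player}{\gstut}$ for $\player \in \{\even,\odd\}$ (clause c) of Definition~\ref{def:fmist}), and (ii) they agree on $v \forces{\player}{\gstut} \U$ for all sets of classes $\U$ (Theorems~\ref{th:fmist3} and~\ref{th:fmist4}). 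For owner preservation, unfold Definition~\ref{def:pg_gstut_quotient}: the disjunct ``$v \diverges{\even}{\gstut}$ for all $v \in \C$'' transfers between $\C'$ and $f(\C')$ directly by (i), and the disjunct ``$v \steps{\even} \C''$ for some $v \in \C$ and $\C'' \neq \C$'' transfers because ``$\C$ is not a trap class'' is itself $\gstut$-invariant (a vertex leaving its class forces, via clause b) of Definition~\ref{def:fmist}, a $\gstut$-related vertex to reach another class, so that vertex leaves its class too), re-using the case analysis already carried out in the proof of the preceding theorem. For edge preservation, argue by contradiction exactly as in Theorem~\ref{thm:unique_quotient_dsim}: if $\C' \step'_q \mathcal{D}'$ but $f(\C') \not\step_q f(\mathcal{D}')$, then distinguishing $\C' = \mathcal{D}'$ from $\C' \neq \mathcal{D}'$, and the two possible owners of $\C'$, and applying (i) and (ii) to a vertex of $\C'$ together with a $\gstut$-related vertex of $f(\C')$, produces a force or divergence that must also hold on the $\game$-side, a contradiction; the reverse inclusion is symmetric. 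Hence $f$ is an isomorphism of $\game_q$ and $\game_q'$.

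The part I expect to be the main obstacle is owner preservation. The owner of a class in the governed stuttering quotient is defined through a delicate combination of a divergence condition and a single-step ``leaves the class'' condition, crafted precisely so as to be deterministic and $\gstut$-invariant even on classes that contain vertices of both players; verifying this invariance is not a purely mechanical translation of the direct simulation argument but leans on the same observations (notably that a class containing vertices of both players has a unique successor class) used in the proof that $\game \gstut \game_q$. Once owner preservation is in place, edge preservation and the bijectivity/priority parts are routine.
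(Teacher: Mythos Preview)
Your proposal is correct and follows precisely the route the paper takes: the corollary is an immediate consequence of the two preceding theorems, and the key theorem (isomorphism of quotients iff $\gstut$-equivalence) is proved in the paper by the one-line remark ``Again similar to the proof of Theorem~\ref{thm:unique_quotient_dsim}.'' You have essentially unfolded that remark, correctly identifying owner preservation as the place where the adaptation requires genuine care beyond mechanical substitution.
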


\section{A Comparison of Discriminating Power}
\label{sec:comparison}

In this section, we compare the discriminative power of each of the
equivalences discussed in the preceding sections, essentially
justifying the lattice we illustrated in Section~\ref{sec:lattice}.
This permits us to assess the reductive power of
each of the studied equivalences that admit (unique) quotienting.
For each of the equivalences we show which other equivalences it strictly refines.
Incomparability results are described separately.

We first focus on proving the right-hand side of the lattice we
presented in Section~\ref{sec:lattice}. That is, we first compare isomorphism, strong bisimilarity and governed bisimilarity and then
focus on the various simulation equivalences.
\begin{theorem}\label{thm:iso_refines_strong}
Isomorphism is strictly finer than strong bisimilarity.
\end{theorem}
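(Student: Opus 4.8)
The plan is to split ``strictly finer'' into the two standard halves: first that every isomorphic pair is strongly bisimilar, i.e. $\isomorphic{} \subseteq {}\bisim$, and then that this inclusion is proper by exhibiting a concrete counterexample.

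First I would prove the inclusion. Suppose $v \isomorphic w$, witnessed by a bijection $\phi\colon V \to V$ with $\phi(v) = w$ that preserves priorities, owners and edges. I would take $\R$ to be the symmetric closure of the graph of $\phi$, that is $\R \isdef \{(\bar v, \phi(\bar v)) \mid \bar v \in V\} \cup \{(\phi(\bar v), \bar v) \mid \bar v \in V\}$. Since $\phi$ is a bijection, the second set is the graph of $\phi^{-1}$, and a routine check shows that $\phi^{-1}$ again satisfies the three defining clauses of an isomorphism (apply each clause for $\phi$ to the argument $\phi^{-1}(\bar v)$). Hence it suffices to verify the transfer condition of strong bisimulation for pairs $(\bar v, \phi(\bar v))$: here $\prio{\bar v} = \prio{\phi(\bar v)}$ and $\getplayer{\bar v} = \getplayer{\phi(\bar v)}$ are immediate, and the latter guarantees we never relate vertices owned by different players (the extra constraint distinguishing strong bisimulation from governed bisimulation); moreover, for every $\bar v' \in \post{\bar v}$ we have $\bar v \step \bar v'$, so $\phi(\bar v) \step \phi(\bar v')$, i.e. $\phi(\bar v') \in \post{\phi(\bar v)}$, and $(\bar v', \phi(\bar v')) \in \R$. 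As $\R$ is symmetric by construction, it is a strong bisimulation, and it contains $(v,w)$, so $v \bisim w$.

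Next I would establish strictness with a minimal example: the parity game on $V = \{v_0, v_1\}$ with $\getplayer{v_0} = \getplayer{v_1} = \even$, $\prio{v_0} = \prio{v_1} = 0$, and edge relation $\{(v_0,v_0),(v_0,v_1),(v_1,v_1)\}$ (which is total). One checks directly that $V \times V$ is a strong bisimulation here --- every vertex is related to every vertex, priorities and owners agree everywhere, and every successor of one vertex is matched by some successor of the other --- so $v_0 \bisim v_1$. On the other hand $v_0 \not\isomorphic v_1$: any bijection $\phi$ witnessing an isomorphism preserves the edge relation in both directions and hence restricts to a bijection $\post{\bar v} \to \post{\phi(\bar v)}$ for every $\bar v$, forcing $|\post{\bar v}| = |\post{\phi(\bar v)}|$; but $|\post{v_0}| = 2 \neq 1 = |\post{v_1}|$, so no such $\phi$ with $\phi(v_0) = v_1$ can exist. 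Combined with the inclusion, this shows isomorphism is strictly finer than strong bisimilarity.

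I do not expect a genuine obstacle here; the argument is routine. The only points that need a little care are (i) observing that the inverse of an isomorphism is again an isomorphism, so that the symmetric closure of the graph of $\phi$ really is a (symmetric) strong bisimulation and not just ``one direction'' of one, and (ii) ensuring the strictness witness is a legitimate parity game (total edge relation) and that the non-isomorphism is argued uniformly over all bijections, which the out-degree count does.
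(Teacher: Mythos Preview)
Your proof is correct and follows essentially the same two-step structure as the paper: establish the inclusion and then exhibit a strictness witness. The paper dispatches the inclusion with a one-line ``Clearly\ldots'' while you spell out the construction of the bisimulation from the graph of $\phi$; your treatment is more explicit but the idea is identical. For strictness the paper uses a slightly different example (two bisimilar copies with mutual edges versus a single self-loop, considered in the disjoint union), whereas your out-degree argument on a single two-vertex game works equally well and is arguably more direct given that $\isomorphic$ is defined on a single game.
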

\begin{proof}
Clearly, every pair of isomorphic parity games is a pair of strong bisimilar
parity games. Strictness follows from a standard example:
\begin{center}
\begin{tikzpicture}
  \node[even] (n1) {0};
  \node[even, right of=n1] (n2) {0};
  \node[even, right of=n2, xshift=30pt] (n3) {0};
  
  \path[->] (n1) edge[bend left] (n2)
            (n2) edge[bend left] (n1)
            (n3) edge[loop right] (n3);
\end{tikzpicture}
\end{center}
Clearly, both vertices in the left parity game are strongly bisimilar to
the vertex in the right parity game, and \emph{vice versa}. However,
these vertices are not isomorphic.\qedhere
\end{proof}

\begin{figure}
\begin{center}
\begin{tikzpicture}
  
  \node[odd,label=below:{\scriptsize $v_1$}] (v1) {\scriptsize 0};
  \node[odd, left of=v1,label=below:{\scriptsize $v_2$}] (v2) {\scriptsize 1};
  \node[even, right of=v1,label=below:{\scriptsize $v_3$}] (v3) {\scriptsize 1};
  
  \path[->] (v1) edge (v2)
            (v1) edge (v3)
            (v2) edge[loop left]  (v2)
            (v3) edge[loop right] (v3)
;
\end{tikzpicture}
\end{center}
\caption{Parity game which is minimal with respect to strong bisimilarity. Vertices $v_2$ and $v_3$ are governed bisimilar.}  
\label{fig:strictness_strong_governed}
\end{figure}
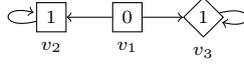

The following theorem relates strong bisimilarity to stuttering equivalence, governed bisimulation and strong direct simulation equivalence, and except for the comparison to governed bisimulation it is essentially the counterpart of the classical theorems in the setting of Kripke structures. 
\begin{theorem}
\label{thm:strong_refines}
Strong bisimilarity is strictly finer than strong direct simulation equivalence,  stuttering bisimulation equivalence and governed bisimulation equivalence.
\end{theorem}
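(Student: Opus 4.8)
The plan is to prove the three inclusions $\bisim\subseteq\gov$, $\bisim\subseteq\strongdirectsimeq$, $\bisim\subseteq\stut$ and then give a small separating example for each. Throughout I use that $\bisim$ is itself a strong bisimulation (it equals the union of all strong bisimulations, which is again a strong bisimulation) and, as already stated, an equivalence relation. For $\bisim\subseteq\gov$ it is enough to observe that every strong bisimulation is a governed bisimulation in the sense of Definition~\ref{def:govbisim}: the clause about pairs with $\getplayer{v}\neq\getplayer{w}$ is vacuous, and the other two clauses are exactly what a symmetric strong bisimulation provides. For $\bisim\subseteq\strongdirectsimeq$ I would show that every strong bisimulation $\R$ is already a direct simulation relation (Definition~\ref{def:direct-simulation}): given $v\R w$ we have $\prio{v}=\prio{w}$ and $\getplayer{v}=\getplayer{w}$; if $v\in V_{\even}$ then \even owns $w$ and can match any $v'\in\post{v}$ by moving to an $\R$-related successor, so $w \steps{\even} v'\R$; if $v\in V_{\odd}$ then, taking any $v'\in\post{v}$ (nonempty by totality) with an $\R$-matching $w'\in\post{w}$, $w \steps{\even} \post{v}\R$. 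Since $\R$ is moreover symmetric and relates only equally-owned vertices it is a strong direct simulation, so $v\bisim w$ gives $v\strongdirectsim w$ and (by symmetry) $w\strongdirectsim v$, i.e.\ $v\strongdirectsimeq w$.

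For $\bisim\subseteq\stut$ I would check that $\bisim$ is a governed stuttering bisimulation; as it never relates vertices of different owners, it is then a stuttering bisimulation. It is an equivalence, clause~(a) of Definition~\ref{def:fmist} is immediate, and clause~(b) holds by one-step matching: if $v\bisim w$ and $v\to\C$ for a class $\C\neq\class{v}{\bisim}$, pick $v'\in\post{v}\cap\C$, obtain $w'\in\post{w}$ with $v'\bisim w'$ (hence $w'\in\C$) from the transfer of $\bisim$, and note the common owner of $v$ and $w$ can move $w\to w'$, so $w \forces{\getplayer{v}}{\class{v}{\bisim}} \C$. The crux is clause~(c). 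Fixing $v\bisim w$ and setting $\C=\class{v}{\bisim}=\class{w}{\bisim}$, Lemmas~\ref{lem:force-vs-div} and~\ref{lem:attractor_vs_forces} give $v\diverges{\player}{\C}$ iff $v\notin\battr{\C}{\opponent{\player}}{V\setminus\C}$, and likewise for $w$, so it suffices to prove by induction on the attractor level $n$ that for $u\in\C$: $u\in\battr[n]{\C}{\opponent{\player}}{V\setminus\C}$ and $u\bisim u'$ imply $u'\in\battr{\C}{\opponent{\player}}{V\setminus\C}$. In the inductive step one pairs the relevant successors of $u'$ with successors of $u$ using the symmetric one-step transfer of $\bisim$, observing that a successor outside $\C$ lies in the attractor automatically and one inside $\C$ is handled by the induction hypothesis; symmetry of $\bisim$ then yields the equivalence demanded by clause~(c). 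I expect this divergence transfer to be the main obstacle: it is the one place where a single-step mimicking argument does not suffice, forcing the attractor-level induction.

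For strictness I would use three examples. For governed bisimulation, the game of Figure~\ref{fig:strictness_strong_governed} is minimal with respect to $\bisim$ yet has $v_2\gov v_3$ (equal priorities and self-loops, but distinct owners, which $\gov$ allows and $\bisim$ forbids). For strong direct simulation equivalence, Figure~\ref{fig:govbisim_example} has $v_0\strongdirectsimeq v_1$ while $v_0\not\gov v_1$, hence $v_0\not\bisim v_1$ by $\bisim\subseteq\gov$. For stuttering bisimilarity, take $V=\{v_0,v_1,v_2\}$ all owned by \even, with priorities $0,0,1$ and edges $v_0\to v_1$, $v_1\to v_1$, $v_1\to v_2$, $v_2\to v_2$; then $\{\{v_0,v_1\},\{v_2\}\}$ is a (governed) stuttering bisimulation, so $v_0\stut v_1$, whereas $v_0\not\bisim v_1$ because the successor $v_2$ of $v_1$ has priority $1$ while the only successor $v_1$ of $v_0$ has priority $0$.
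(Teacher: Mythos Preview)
Your proposal is correct and follows essentially the same strategy as the paper: show that every strong bisimulation is a relation of each weaker kind, then exhibit separating examples. The paper is much terser, dispatching all three inclusions with ``follows directly from the definitions,'' whereas you spell out the transfer conditions and, for stuttering bisimulation, give an explicit attractor-level induction to establish the divergence clause~\emph{(c)} of Definition~\ref{def:fmist}. That extra care is justified: since all vertices in a $\bisim$-class share an owner, one can alternatively argue directly by step-by-step path matching (for $\player$ owning the class, lift a diverging path; for $\opponent{\player}$ owning it, lift an escaping path to derive a contradiction), but some argument beyond ``immediate'' is indeed needed, and your attractor induction is a clean way to do it.

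For strictness you coincide with the paper on the governed bisimulation example (Figure~\ref{fig:strictness_strong_governed}) and the strong direct simulation example (Figure~\ref{fig:govbisim_example}). For stuttering bisimilarity the paper instead points to Figure~\ref{fig:fake-divergence}, using $v_3 \stut v_4$ with $v_3 \not\bisim v_4$; your bespoke three-vertex example is equally valid and arguably simpler to verify.
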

\begin{proof}
We sketch each of the refinements:
\begin{itemize}
  \item Every strong bisimulation relation is a direct simulation relation.
Since such a relation is symmetric, every pair of parity games related
via strong bisimilarity is also related via strong direct simulation equivalence.
Strictness follows from the parity game in Figure~\ref{fig:govbisim_example}, in which $v_0$ and $v_1$ are strong direct simulation equivalent but not strongly bisimilar.
  \item Every strong bisimulation relation is a stuttering bisimulation relation, this follows directly from the definitiions. Strictness follows from the parity game in Figure~\ref{fig:fake-divergence}, in which $v_3$ and $v_4$ are stuttering bisimilar, but not strongly bisimilar.
  \item Every strong bisimulation relation is a governed bisimulation relation, this follows directly from the definitions. Strictness follows from the parity game in Figure~\ref{fig:strictness_strong_governed}, which is minimal modulo strong bisimilarity, but vertices $v_2$ and $v_3$ are governed bisimilar.
  \qedhere\end{itemize}
\end{proof}
We next state, without proof, a result that essentially follows by definition.
\begin{theorem}
\label{thm:strong_direct_refines}
Strong direct simulation equivalence strictly refines direct simulation equivalence.
\end{theorem}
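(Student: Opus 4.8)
The plan is to split the statement into its two halves — that $\strongdirectsimeq$ refines $\directsimeq$, and that this refinement is strict — and to settle the first half purely from the definitions. By construction, a strong direct simulation relation is nothing but a direct simulation relation in the sense of Definition~\ref{def:direct-simulation} that carries the additional side condition of never relating a vertex in $V_{\even}$ to a vertex in $V_{\odd}$. Dropping that side condition shows that every strong direct simulation relation is in particular a direct simulation relation, so $v \strongdirectsim w$ implies $v \directsim w$, and, reading this in both directions, $v \strongdirectsimeq w$ implies $v \directsimeq w$.

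For strictness I would first extract the structural obstruction: because no strong direct simulation relation ever relates a vertex owned by \even to one owned by \odd, any $v \in V_{\even}$ and $w \in V_{\odd}$ satisfy $v \not\strongdirectsim w$ and hence $v \not\strongdirectsimeq w$, independently of the rest of the game. It then remains to point at a parity game that contains an even-owned and an odd-owned vertex which are direct simulation equivalent. The game in Figure~\ref{fig:govbisim_example} already does the job: there $v_0 \in V_{\even}$ and $v_6 \in V_{\odd}$ are both direct simulation equivalent to $v_1$, so by transitivity of $\directsimeq$ we get $v_0 \directsimeq v_6$, whereas $v_0 \not\strongdirectsimeq v_6$ because their owners differ. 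A self-contained witness is just as easy: a single even vertex $v$ and a single odd vertex $w$, each with priority $0$ and a single self-loop; the symmetric relation $\{(v,v),(w,w),(v,w),(w,v)\}$ is readily checked against Definition~\ref{def:direct-simulation} to be a direct simulation, yet $\getplayer{v} \neq \getplayer{w}$.

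I do not expect any genuine obstacle here; the only point deserving a moment's attention is that the inclusion $\strongdirectsim \subseteq \directsim$ between the induced preorders transfers to the inclusion $\strongdirectsimeq \subseteq \directsimeq$ between the induced equivalences, which is immediate once both equivalences are written as the intersections of the respective preorders with their inverses. This is exactly why the result can be stated without proof: beyond Definition~\ref{def:direct-simulation} and the definition of strong direct simulation, nothing further is needed.
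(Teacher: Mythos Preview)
Your proposal is correct. The paper itself gives no proof of this theorem, merely stating that the result ``essentially follows by definition''; your argument spells out precisely this, and your witnesses for strictness (the pair $v_0,v_6$ from Figure~\ref{fig:govbisim_example}, or the two self-loops with different owners) are exactly in the spirit of the surrounding examples.
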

The following refinement results follow a line of reasoning similar to the ones seen before.
\begin{theorem}\label{thm:governed_bisim_refines}
Governed bisimulation equivalence strictly refines direct simulation equivalence and
governed stuttering bisimulation equivalence.
\end{theorem}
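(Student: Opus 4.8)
The plan is to treat the two refinements $\gov \subseteq \directsimeq$ and $\gov \subseteq \gstut$ separately, and in each case to produce a witness for strictness from a parity game already drawn in the paper. The inclusion $\gov \subseteq \directsimeq$ is immediate from Theorem~\ref{thm:alternative_govbisim}: if $v \gov w$ then there is a \emph{symmetric} direct simulation $\R$ with $v \R w$, and this one relation witnesses both $v \directsim w$ and, by symmetry, $w \directsim v$, hence $v \directsimeq w$. For strictness I would point to Figure~\ref{fig:govbisim_example}, where $v_0 \directsimeq v_1$ holds but $v_0 \gov v_1$ does not, since relating $v_0$ and $v_1$ would force $v_4$ and $v_2$ to be related while the move $v_2 \to v_5$ cannot be matched from $v_4$.

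For $\gov \subseteq \gstut$, since $\gstut$ is the largest governed stuttering bisimulation and $\gov$ is already an equivalence by Theorem~\ref{thm:pg_fmib_equivalence}, it suffices to verify that $\gov$ satisfies conditions a)--c) of Definition~\ref{def:fmist}. Condition a) is trivial. For the transfer condition b), fix $v \gov w$ and a class $\C \neq \class{v}{\gov}$ with $v \to \C$. If $\getplayer{v} = \getplayer{w}$, a move $v \to v'$ with $v' \in \C$ is matched by $w \to w'$ with $v' \gov w'$, so $w' \in \C$, and since player $\getplayer{v}$ owns $w$ this yields $w \forces{\getplayer{v}}{\gov} \C$ in one step. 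If $\getplayer{v} \neq \getplayer{w}$, the definition of governed bisimulation forces $\post{v} \cup \post{w}$ into a single class, which must be $\C$; hence $\post{w} \subseteq \C$ and every play from $w$ reaches $\C$ after one step, again giving $w \forces{\getplayer{v}}{\gov} \C$.

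The heart of the proof, and the step I expect to be the main obstacle, is the divergence condition c). I would reduce it to the uniformity claim that for every equivalence class $D$ of $\gov$ and every player $\player$, the set $\{ u \in D \mid u \diverges{\player}{\gov} \}$ equals either $D$ or $\emptyset$. To prove this I would distinguish whether $D$ is owned by a single player. If $D$ contains vertices of both players, then pairing an \even- with an \odd-vertex of $D$ and using transitivity of $\gov$ shows that all successors of all vertices of $D$ lie in one class $\C_0$: if $\C_0 = D$ every play from $D$ stays in $D$ forever, so both sets equal $D$, and if $\C_0 \neq D$ every play leaves $D$ at once, so both sets are empty. If $D \subseteq V_{\player'}$ for a single player $\player'$, the observation is that a governed bisimulation matches each edge of a $D$-vertex that stays inside $D$ (respectively leaves $D$) by an edge of every other $D$-vertex of the same kind, so ``has a successor inside $D$'' and ``has a successor outside $D$'' are each uniform over $D$; the three surviving combinations ($D$-vertices have no in-$D$ successor; they all have an in-$D$ successor but none an out-$D$ successor; they all have both) then each yield a uniform divergence set for \even and for \odd, using that $\player'$ controls every choice inside $D$ while $\opponent{\player'}$ controls none. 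Given the uniformity claim, condition c) follows immediately: $v \gov w$ together with $v \diverges{\player}{\gov}$ places $v$, hence all of $D = \class{v}{\gov} = \class{w}{\gov}$, hence $w$, in the divergence set.

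Finally, for strictness of $\gov \subseteq \gstut$ I would invoke Figure~\ref{fig:fake-divergence}: all priority-$0$ vertices there are governed stuttering bisimilar, yet $v_0 \gov v_1$ fails because governed bisimulation is sensitive to the number of successor classes and $v_0$ reaches two distinct classes. Everything except the class-structure case analysis underpinning condition c) is routine; the care needed there is in organising that analysis cleanly rather than in any deep argument.
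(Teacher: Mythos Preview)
Your proposal is correct and follows the same route as the paper: the inclusion $\gov \subseteq \directsimeq$ via Theorem~\ref{thm:alternative_govbisim}, the inclusion $\gov \subseteq \gstut$ by checking Definition~\ref{def:fmist} directly for the equivalence $\gov$, and strictness via Figures~\ref{fig:govbisim_example} and~\ref{fig:fake-divergence}. The paper's own proof is a two-line sketch that simply asserts ``it follows from the definitions that every governed bisimulation is also a governed stuttering bisimulation''; your careful case analysis for the divergence clause~c) (uniformity of divergence across a $\gov$-class, split on whether the class is owned by a single player) supplies exactly the detail the paper omits, so your write-up is strictly more informative while remaining on the same track.
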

\begin{proof}
We again sketch both refinements.
\begin{itemize}
  \item Refinement follows directly from the observation that a governed bisimulation is a symmetric direct simulation. Strictness follows from examples similar to those discriminating strong bisimilarity and simulation equivalence.
  \item It follows from the definitions that every governed bisimulation is also a governed stuttering bisimulation. The strictness of the refinement follows from the 
example in Figure~\ref{fig:fake-divergence} in which all vertices with priority $0$ are governed stuttering bisimilar, but none are governed bisimilar.
\qedhere\end{itemize}

\end{proof}

\begin{theorem}\label{thm:governed_bisim_incomparable_strong_direct}
Governed bisimulation equivalence and strong direct simulation equivalence are incomparable.
\end{theorem}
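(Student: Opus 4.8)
The plan is to prove incomparability by exhibiting, inside a single parity game, one pair of vertices that is governed bisimilar but not strong direct simulation equivalent, and another pair that is strong direct simulation equivalent but not governed bisimilar. The parity game of Figure~\ref{fig:govbisim_example} already contains both witnesses, so the argument reduces to checking the relevant (non-)relations against the definitions.

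For the inclusion $\gov \not\subseteq \strongdirectsimeq$, I would use the pair $(v_0, v_6)$. Since $\getplayer{v_0} = \even \neq \odd = \getplayer{v_6}$ and a strong direct simulation by definition never relates vertices owned by different players, we get $v_0 \not\strongdirectsim v_6$, hence $v_0 \not\strongdirectsimeq v_6$. To see $v_0 \gov v_6$, one checks that the symmetric relation consisting of the pairs $(v_0,v_6)$, $(v_6,v_0)$ together with the identity pairs on $\{v_2,v_3,v_5\}$ is a governed bisimulation: the only interesting clause is the one for $(v_0,v_6)$, where the differing owners force every successor of $v_0$ to be related to every successor of $v_6$, which here is just $v_2 \gov v_2$, and the remaining pairs close off trivially.

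For the inclusion $\strongdirectsimeq \not\subseteq \gov$, I would use the pair $(v_0, v_1)$, both owned by \even. The relations witnessing $v_0 \directsim v_1$ and $v_1 \directsim v_0$ exhibited in the example accompanying Figure~\ref{fig:govbisim_example} relate only vertices owned by the same player, hence they are in fact strong direct simulations, so $v_0 \strongdirectsimeq v_1$. For $v_0 \not\gov v_1$, I would argue by contradiction: any governed bisimulation $\R$ with $v_0 \R v_1$ must, comparing $\post{v_0} = \{v_2\}$ with $\post{v_1} = \{v_2, v_4\}$ and using symmetry of $\R$, relate $v_4$ with $v_2$; then symmetry gives $v_2 \R v_4$, and applying the transfer clause to the successors of $v_2$ against $\post{v_4} = \{v_4\}$ forces $v_5 \R v_4$, contradicting $\prio{v_5} = 2 \neq 1 = \prio{v_4}$.

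Combining the two parts yields incomparability. I expect the only slightly delicate step to be the bookkeeping in the argument that $v_0 \not\gov v_1$, \ie tracking precisely which successor pairs the clauses of governed bisimulation impose, and in which order, so as to reach the priority clash; everything else is a direct check against the definitions.
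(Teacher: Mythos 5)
Your proposal is correct and follows essentially the same route as the paper: it uses the parity game of Figure~\ref{fig:govbisim_example} with the same witness pairs, $v_0$ and $v_6$ (governed bisimilar, not strong direct simulation equivalent, since $\getplayer{v_0}\neq\getplayer{v_6}$) and $v_0$ and $v_1$ (strong direct simulation equivalent via the relations $R_1,R_2$ from the example, not governed bisimilar via the $v_4\,\R\,v_2$ and then $v_5\,\R\,v_4$ priority clash). The only difference is that you spell out the verifications that the paper leaves as assertions about the example, and these checks are all accurate.
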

\begin{proof}
This follows from the parity game in Figure~\ref{fig:govbisim_example} in which vertices $v_0$ and $v_6$ are governed bisimilar, but not strong direct simulation equivalent. Furthermore, $v_0$ and $v_1$ are strong direct simulation equivalent, but not governed bisimilar.\qedhere
\end{proof}

To the best of our knowledge, the (elementary) result that direct simulation is strictly finer than (biased) delayed simulation has not been formally established. We therefore give a brief proof here.
\begin{theorem}
  \label{thm:direct_refines}
Direct simulation equivalences is strictly finer than even- and odd-biased delayed simulation equivalence.
\end{theorem}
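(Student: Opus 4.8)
The plan is to establish the two inclusions $\directsimeq \subseteq \delaysimeeq$ and $\directsimeq \subseteq \delaysimoeq$ and then to exhibit witnesses showing both are strict. For the inclusions I would argue at the level of the simulation games. Suppose $v \directsim w$; by Theorem~\ref{thm:direct-sim-game-vs-coinductive} \duplicator has a winning strategy $\tau$ in the direct simulation game from $(v,w)$, and by Definition~\ref{def:direct-simulation-game} every play consistent with $\tau$ visits only configurations $(v_j,w_j)$ with $\prio{v_j}=\prio{w_j}$. Since \spoiler and \duplicator move according to the same Table~\ref{tab:simulation_game_rules} in the even-biased (resp.\ odd-biased) delayed simulation game, $\tau$ is also a legal \duplicator-strategy there. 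I would then prove, by induction along an arbitrary play consistent with $\tau$, that the obligation component equals $\checkmark$ in every round: the starting obligation is $\update[e](v,w,\checkmark)$ (resp.\ $\update[o](v,w,\checkmark)$), and whenever the two current priorities coincide, say both equal $n$, the special case in the definition of $\update[e]$ (resp.\ $\update[o]$) cannot apply while the pending obligation is $\checkmark$ — that case requires a numerical bound $n \leq \obligation$ (resp.\ $m \leq \obligation$) which fails for $\obligation = \checkmark$ — so $\update[e](n,n,\checkmark) = \update(n,n,\checkmark) = \checkmark$, using $n \prioordereq n$; the inductive step is identical. Hence every play consistent with $\tau$ carries the obligation $\checkmark$ in every round and is won by \duplicator, so $v \delaysime w$ and $v \delaysimo w$. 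As $v \directsimeq w$ means $v \directsim w$ and $w \directsim v$, we obtain $v \delaysimeeq w$ and $v \delaysimoeq w$.

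For strictness of $\directsimeq \subseteq \delaysimeeq$ I would reuse the parity game of Figure~\ref{fig:delaysim_example}. There $\prio{v_0}=0 \neq 1 = \prio{v_1}$, so $v_0 \not\directsimeq v_1$ since direct simulation equivalence forces equal priorities. On the other hand $v_0 \delaysimeeq v_1$: all moves in that game are forced, and the forced plays from $(v_0,v_1)$ and from $(v_1,v_0)$ both reach the loop at $(v_2,v_2)$; because $\prio{v_2}=0$ is \emph{even}, the even-biased special update case — which requires an odd first priority — never fires there, so the obligation is reset to $\checkmark$ and stays there, and the single transient obligation of value $0$ that arises along the way is likewise cleared at $v_2$. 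This game is useless for $\delaysimoeq$: since the recurring priority $0$ is even, the odd-biased special case keeps the transient obligation $0$ pending indefinitely, and indeed $v_2 \not\delaysimo v_0$. So for strictness of $\directsimeq \subseteq \delaysimoeq$ I would instead use the dual two-vertex game with $V = \{d,e\}$, $\getplayer{d}=\getplayer{e}=\odd$, $\prio{d}=3$, $\prio{e}=1$, and edges $d \to e$ and $e \to e$. Again $\prio{d}\neq\prio{e}$ gives $d \not\directsimeq e$, while the forced plays from $(d,e)$ and from $(e,d)$ reach the loop at $(e,e)$, where $\prio{e}=1$ is \emph{odd}, so the odd-biased special case — which requires an even second priority — does not fire, the obligation becomes and stays $\checkmark$, and $d \delaysimoeq e$.

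I expect the strictness part, rather than the inclusion, to be the delicate one. The crucial observation is that the biased update functions $\update[e]$ and $\update[o]$ are genuinely asymmetric, so no single parity game can witness strictness for both $\delaysimeeq$ and $\delaysimoeq$: two dual examples are needed. The rest is routine but error-prone book-keeping — for each chosen game one must trace the obligation updates round by round along the (fortunately deterministic) plays, paying close attention to exactly when the special case of $\update[e]$ or $\update[o]$ applies.
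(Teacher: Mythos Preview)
Your proposal is correct and, for the inclusion, follows exactly the paper's approach: re-use \duplicator's direct-simulation strategy in the biased delayed games and observe that equal priorities keep the obligation at $\checkmark$ throughout. For strictness the paper is terser than you are --- it simply remarks that the biased delayed simulations can relate vertices of different priorities while direct simulation cannot --- whereas you supply concrete witnesses (Figure~\ref{fig:delaysim_example} for $\delaysimeeq$ and a dual two-vertex odd-priority game for $\delaysimoeq$) and correctly note that a single game need not witness both biased strictnesses simultaneously. Your computations check out; the extra care is unnecessary for the paper's level of detail but is not a different method.
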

\begin{proof}
  Let $\game = (V, \to, \priosym, \getplayername)$ be a parity game, with $v, w \in V$ such that $v \directsimeq w$. Observe that $v \directsim w$ and $w \directsim v$. We prove that also $v \delaysime w$; $w \delaysime v$ and both cases for $\delaysimo$ follow the exact same line of reasoning. As a result, we find that $v \delaysimeeq w$ and $v \delaysimoeq$.
  
  Observe that $\duplicator$ has a winning strategy in the direct simulation game from $(v, w)$. Therefore, in each round of the game, \duplicator was able to mimic \spoiler's move by a move to a vertex with equal priority. If \duplicator plays the same strategy in the delayed simulation game, this gives rise to plays with obligation $\checkmark$ in every configuration that is reached.
  
  Strictness follows straightforwardly from the observation that the (biased) delayed simulation relations can relate vertices with different priorities, whereas direct simulation cannot. \qedhere
  
\end{proof}
The results relating (biased) delay simulations to each other and to winner equivalence were proven by Fritz and Wilke \cite{FW:06}.
\begin{theorem}\label{thm:delay_refines}
Even- and odd-biased delay simulation are incomparable, and both are strictly finer than delay simulation equivalence. Delay simulation equivalence in turn is strictly finer than winner equivalence.
\end{theorem}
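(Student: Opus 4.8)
The plan is to break the statement into the refinements ($\delaysimeeq \subseteq \delaysimeq$, $\delaysimoeq \subseteq \delaysimeq$, and $\delaysimeq \subseteq \winner$), the matching strictness claims, and the incomparability of the two biased equivalences. I would prove the refinements among the (biased) delayed simulations directly, and lean on the results and witness games of Fritz and Wilke~\cite{FW:06} (see also~\cite{Fri:05}) for $\delaysimeq \subseteq \winner$ and for all strictness and incomparability claims, re-deriving the easy parts in our notation.

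For $\delaysimeeq \subseteq \delaysimeq$, fix $v,w$ with $v\delaysime w$ and a winning \duplicator strategy in the even-biased game from $((v,w),\update[e](v,w,\checkmark))$. I would argue at the level of a single, fixed move sequence: the legal moves in Table~\ref{tab:simulation_game_rules} do not depend on obligations, so any move sequence possible in the even-biased game is also possible in the plain delayed simulation game, and it induces obligation sequences $k^e_0,k^e_1,\dots$ and $k_0,k_1,\dots$ respectively. The key lemma is that for every $j$, $k^e_j$ is no larger than $k_j$ in the order on obligations in which $\checkmark$ is the top element and naturals are ordered as usual. This follows by induction on $j$: the base case holds because the biased branch of $\update[e]$ is never taken when the pending obligation is $\checkmark$ (it requires $n\le k$ with $k$ a natural), so $\update[e](v,w,\checkmark)=\update(v,w,\checkmark)$; the inductive step is a case analysis on whether the biased branch of $\update[e]$ fires, using the observation that whenever it fires — a small odd first-component priority together with a second component that does not itself meet the pending obligation — the plain update $\update$ discharges the obligation to $\checkmark$, and otherwise the two updates coincide or $\update$ already dominates. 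In particular $k^e_j=\checkmark$ forces $k_j=\checkmark$, so a move sequence producing infinitely many $\checkmark$-obligations in the even-biased game does so in the plain game too; replaying \duplicator's winning strategy in the plain game (keeping the biased obligations as auxiliary memory) therefore wins, so $v\delaysim w$. Applying this to both directions yields $\delaysimeeq \subseteq \delaysimeq$, and $\delaysimoeq \subseteq \delaysimeq$ is proved identically after exchanging the two parities (the biased branch now firing on small even priorities). One may equivalently run this on the well-founded characterisation of Definition~\ref{def:wf_delay_dim} via Theorem~\ref{thm:delay_sim}. The inclusion $\delaysimeq \subseteq \winner$ is the soundness of delayed simulation with respect to winning regions from~\cite{FW:06}: from $v\delaysim w$, composing a winning strategy of \even from $v$ with \duplicator's strategy in the delayed simulation game from $(v,w)$ yields a winning strategy of \even from $w$, and symmetrically from $w\delaysim v$, so $v$ and $w$ are won by the same player.

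Strictness I would witness with small games. For $\delaysimeq \subsetneq \winner$: take the parity game with two vertices, one carrying a priority-$0$ self-loop and the other a priority-$2$ self-loop (and no further edges); both vertices are won by \even, but in the delayed simulation game started from the $0$-vertex against the $2$-vertex the obligation immediately becomes $0$ and stays $0$ forever, so \spoiler wins and the two vertices are not delayed simulation equivalent. For $\delaysimeeq \subsetneq \delaysimeq$ (resp.\ $\delaysimoeq \subsetneq \delaysimeq$) one needs a game in which \duplicator can keep discharging the pending obligation only by exploiting a small odd (resp.\ even) priority, which is exactly what the biased update forbids; such games are given in~\cite{FW:06,Fri:05} and can be verified directly against Definition~\ref{def:delay_sim_game}. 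Incomparability of $\delaysimeeq$ and $\delaysimoeq$ is witnessed by a pair of games, each the parity-dual of the other: in one, the only recurring way to return the obligation to $\checkmark$ uses a small even priority, so the relevant vertices are related by $\delaysimeeq$ but not by $\delaysimoeq$, and the dual game gives the opposite separation; these are again the examples of~\cite{FW:06}.

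The main obstacle is the inductive comparison of the two obligation sequences in the refinement proof: the case analysis on $\update$ and $\update[e]$ is somewhat delicate, in particular pinning down exactly when the biased branch of $\update[e]$ is taken and verifying that in that situation the plain update genuinely resets the obligation to $\checkmark$, so that the comparison is preserved. Constructing minimal separating examples for incomparability is also fiddly, but these are essentially folklore and already present in~\cite{FW:06,Fri:05}.
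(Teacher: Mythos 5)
Your proposal is correct, but it is worth noting that the paper itself offers no proof of this theorem at all: it simply attributes all the relevant facts to Fritz and Wilke~\cite{FW:06}, exactly as your plan does for the strictness witnesses, the incomparability of $\delaysimeeq$ and $\delaysimoeq$, and the soundness of $\delaysimeq$ with respect to $\winner$. What you add beyond the paper is a self-contained argument for the inclusions $\delaysimeeq \subseteq \delaysimeq$ and $\delaysimoeq \subseteq \delaysimeq$, and this argument is sound: since the legal moves of Table~\ref{tab:simulation_game_rules} are independent of the obligation component, a fixed move sequence yields two obligation sequences, and your invariant $k^e_j \le k_j$ (with $\checkmark$ as top element) indeed propagates — the plain update $\update$ is monotone in its obligation argument, and in the only divergent case, where the biased branch of $\update[e]$ fires ($m \prioordereq n$, $n$ odd, $n \le k^e \le k$, and $m$ odd or $k^e < m$), the plain update discharges to $\checkmark$, which dominates the retained obligation $k^e$; hence $k^e_j = \checkmark$ forces $k_j = \checkmark$ and \duplicator's biased-game strategy, replayed with the biased obligations as memory, wins the plain game. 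Your explicit witness for $\delaysimeq \subsetneq \winner$ (a priority-$0$ self-loop versus a priority-$2$ self-loop) also checks out: both vertices are won by \even, while the obligation in the delayed simulation game is stuck at $0$ forever, so \spoiler wins. In short, your route buys a partial re-proof in the paper's own notation where the paper only cites; the residual reliance on~\cite{FW:06,Fri:05} for the separating examples and the winner-preservation composition is no weaker than what the paper itself does.
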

This completes the results underlying the right-hand side of the lattice we
presented in Section~\ref{sec:lattice}.\medskip

We next focus on the left-hand side of the lattice. 

\begin{theorem}
  \label{thm:stuttering_incomparability}
Stuttering bisimilarity is incomparable to governed bisimilarity, strong direct simulation, direct simulation and all delayed simulation variations.
\end{theorem}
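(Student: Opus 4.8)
The plan is to prove incomparability in the usual way: for each relation $\approx$ among $\gov$, $\strongdirectsimeq$, $\directsimeq$, $\delaysimeq$, $\delaysimeeq$ and $\delaysimoeq$ (and their underlying preorders) I would exhibit two witnesses — a pair of vertices that is stuttering bisimilar but not related by $\approx$ (so $\stut \not\subseteq \approx$), and a pair related by $\approx$ but not stuttering bisimilar (so $\approx \not\subseteq \stut$). The work is cut down drastically by the sub-lattice already in place: since $\gov$ and $\strongdirectsim$ both refine $\directsim$, which refines $\delaysim$, $\delaysime$ and $\delaysimo$, a single pair $(v,w)$ that is stuttering bisimilar yet for which \spoiler already wins the (unbiased) delayed simulation game from $(v,w)$ — i.e.\ $v \nmid_{de} w$ — is simultaneously a witness of $\stut \not\subseteq \approx$ for \emph{every} $\approx$ in the list. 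For the converse inclusions I would reuse the examples of Section~\ref{sec:governed}: in Figures~\ref{fig:strictness_strong_governed} and~\ref{fig:govbisim_example} there are vertices owned by different players that are governed bisimilar, hence direct simulation equivalent and delayed simulation equivalent in every variant, whereas $\stut$ never relates vertices with distinct owners; and in Figure~\ref{fig:govbisim_example} the vertices $v_0$ and $v_1$ are strong direct simulation equivalent but not stuttering bisimilar, since $v_1 \to v_4$ and the stuttering class of the priority-$1$ vertex $v_4$ is unreachable from $v_0$, so the force-to-a-class condition fails.

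The heart of the argument is the first witness. I would take the parity game with three odd vertices $v,w,w'$ of priority $1$ and two even self-looping vertices $d$ (priority $0$) and $e$ (priority $2$), with edges $v \to d$, $v \to e$, $w \to w'$, $w' \to d$, $w' \to e$. The relation placing $\{v,w,w'\}$ in one class and $d,e$ in singleton classes is a governed stuttering bisimulation: priorities agree, for both players the set of classes forcibly reachable from each of $v$, $w$, $w'$ is the same (as player \odd, each can force a play to the class of $d$ and to the class of $e$; no player can force any other pattern), and no vertex admits a divergence within $\{v,w,w'\}$; since all three vertices are owned by \odd, it is even a stuttering bisimulation, so $v \stut w$. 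Conversely \spoiler wins the delayed simulation game from $(v,w)$: \spoiler is forced to play $w \to w'$ first, after which \duplicator must commit the $v$-side to $d$ or to $e$; \spoiler then answers $w' \to e$, reaching configuration $((d,e),0)$ or $((e,e),1)$, from which all moves are forced and the obligation, computed from $\update$, never returns to $\checkmark$ — the simulating side is pinned at a vertex whose priority cannot discharge it. Hence $v \nmid_{de} w$, so $v$ and $w$ are related by none of $\delaysim$, $\delaysime$, $\delaysimo$, $\directsim$, $\strongdirectsim$, $\gov$, and a fortiori by none of the associated equivalences.

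The step I expect to be the main obstacle is precisely this delayed-simulation direction: delayed simulation's update mechanism is designed to tolerate finite delays, so the example has to be engineered so that the harmless one-step detour through $w'$ — invisible to the global force-reachability condition of stuttering bisimulation — nevertheless lets \spoiler observe \duplicator's commitment and steer the simulating side into a configuration carrying a permanently unfulfillable priority obligation. In the write-up I would spell out the obligation bookkeeping for the (short, forced) rounds explicitly from the definition of $\update$, and leave the routine verification that the proposed relation satisfies the conditions of Definition~\ref{def:fmist} (or, more conveniently, Theorem~\ref{th:fmist3}) to the reader. No fresh example is needed for the biased delayed simulation equivalences: $\delaysime \subseteq \delaysim$ and $\delaysimo \subseteq \delaysim$ yield $\stut \not\subseteq \delaysimeeq$ and $\stut \not\subseteq \delaysimoeq$ from the construction above, while $\directsimeq \subseteq \delaysimeeq \cap \delaysimoeq$ together with the $v_0,v_1$ pair of Figure~\ref{fig:govbisim_example} yields the reverse non-inclusions.
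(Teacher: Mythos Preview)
Your proposal is correct and follows the same overall strategy as the paper: exhibit a single witness for $\stut \not\subseteq \approx$ that works for all of the listed relations simultaneously (via the lattice inclusions), and reuse the examples of Section~\ref{sec:governed} for the reverse non-inclusions.

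The one genuine difference is the choice of witness for the forward direction. The paper simply reuses $v_3$ and $v_4$ from Figure~\ref{fig:fake-divergence}: both are \even-owned priority-$0$ vertices, $v_3 \stut v_4$, but $v_4 \not\delaysim v_3$ because Spoiler's forced move $v_4 \to v_3$ makes Duplicator step $v_3 \to v_2$, landing in $(v_3,v_2)$ with a pending obligation $0$ that can never be discharged at the self-looping priority-$1$ vertex $v_2$. This is strictly simpler than your construction: a single target class already suffices, so there is no need for the branching that forces Duplicator to ``commit'' to $d$ versus $e$. Your example is valid and your obligation calculations check out, but the extra machinery (two sinks, an odd branching vertex) is not needed. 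Conversely, your write-up is more explicit about why delayed simulation fails, whereas the paper merely asserts it; if you adopt the paper's witness you should still spell out the short obligation chase. For the reverse direction your use of $v_0,v_1$ (for $\strongdirectsimeq$) and $v_0,v_6$ (for $\gov$ and hence all coarser relations) from Figure~\ref{fig:govbisim_example} matches the paper exactly.
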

\begin{proof}
In the parity game in Figure~\ref{fig:fake-divergence}, $v_3$ and $v_4$ are stuttering bisimilar but they cannot be related under governed bisimilarity, (strong) direct simulation equivalence, nor any of the delayed simulation equivalences. For the other direction, consider vertices $v_0$, $v_1$ and $v_6$ from the parity game in Figure~\ref{fig:govbisim_example}. None of these vertices are stuttering bisimilar, whereas $v_0$ and $v_6$ are governed bisimilar, $v_0$ and $v_1$ are strong direct simulation equivalent, and all three are direct simulation equivalent, and therefore also delay simulation equivalent.
\qedhere
\end{proof}

\begin{theorem}
Governed stuttering bisimilarity is incomparable to strong direct simulation, direct simulation and all delayed simulation variations.
\end{theorem}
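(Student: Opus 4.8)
The plan is to establish the two non-inclusions that together constitute incomparability, for each relation $X$ among $\strongdirectsimeq$, $\directsimeq$, $\delaysimeq$, $\delaysimeeq$ and $\delaysimoeq$: first that $\gstut$ does not refine $X$, and then that $X$ does not refine $\gstut$. Both directions will be witnessed by parity games that already appear in the paper, so no new constructions are needed.

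For ``$\gstut$ does not refine $X$'' I would reuse the parity game of Figure~\ref{fig:fake-divergence}. As noted in the example following that figure, all priority-$0$ vertices are governed stuttering bisimilar, so in particular $v_3 \gstut v_4$. On the other hand, the proof of Theorem~\ref{thm:stuttering_incomparability} already records that $v_3$ and $v_4$ are related by none of the simulation equivalences under consideration: the unique move $v_3 \to v_2$ forces a pairing of priorities $0$ and $1$, which a direct simulation forbids, and in a (possibly biased) delayed simulation game the obligation this creates can never be discharged since play is trapped in the self-loop on $v_2$. Thus $v_3 \gstut v_4$ while $(v_3,v_4) \notin X$, giving the first non-inclusion.

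For ``$X$ does not refine $\gstut$'' I would use the parity game of Figure~\ref{fig:govbisim_example} with the pair $v_0, v_1$. The caption and example there record $v_0 \strongdirectsimeq v_1$, hence $v_0 \directsimeq v_1$; Theorems~\ref{thm:direct_refines} and~\ref{thm:delay_refines} then lift this to $v_0 \delaysimeeq v_1$, $v_0 \delaysimoeq v_1$ and $v_0 \delaysimeq v_1$. It therefore only remains to show $v_0 \not\gstut v_1$. Suppose otherwise. Since $v_1 \in V_{\even}$ and $v_1 \to v_4$ with $\class{v_4}{\gstut} \ne \class{v_1}{\gstut}$ (different priorities), player \even can force play, within $\class{v_1}{\gstut}$, to $\class{v_4}{\gstut}$ in one step, so by the symmetric characterisation of Theorem~\ref{th:fmist2} (clause b) also $v_0 \forces{\even}{\gstut} \class{v_4}{\gstut}$. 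But the only successor of $v_0$ is $v_2$, and $\prio{v_2} = 1$ differs from $\prio{v_0} = \prio{v_4} = 0$, so $v_2$ lies in neither $\class{v_0}{\gstut}$ nor $\class{v_4}{\gstut}$; for the latter, note that $v_2 \gstut v_4$ would, by clause b) of Definition~\ref{def:fmist} applied to the step $v_2 \to v_5$, require $v_4$ to force play to $\class{v_5}{\gstut}$, which is impossible because the self-loop is $v_4$'s only move. Hence every play from $v_0$ leaves $\class{v_0}{\gstut}$ after one step without ever reaching $\class{v_4}{\gstut}$, contradicting $v_0 \forces{\even}{\gstut}\class{v_4}{\gstut}$. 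So $v_0 \not\gstut v_1$, completing the second non-inclusion.

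The step I expect to be the main obstacle is this last one, ruling out $v_0 \gstut v_1$: one must take care that the reasoning is about governed stuttering bisimilarity $\gstut$ itself (so that the already-established facts about the game in Figure~\ref{fig:govbisim_example} can be quoted) and, in particular, that the auxiliary argument establishing $v_2 \not\gstut v_4$ is carried out for that maximal relation. The remaining steps are short, amounting only to citing facts from earlier examples and from Theorems~\ref{thm:direct_refines}, \ref{thm:delay_refines} and~\ref{thm:stuttering_incomparability}.
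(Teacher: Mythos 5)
Your proposal is correct and takes essentially the same route as the paper, whose proof is simply ``along the same lines as Theorem~\ref{thm:stuttering_incomparability}'': one non-inclusion via $v_3 \gstut v_4$ in Figure~\ref{fig:fake-divergence}, the other via the simulation-equivalent vertices of Figure~\ref{fig:govbisim_example}, exactly the witnesses you use. Your only addition is to spell out the verification that $v_0 \not\gstut v_1$ (through $v_2 \not\gstut v_4$), which the paper leaves implicit, and that argument is sound.
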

\begin{proof}
Along the same lines as the proof of Theorem~\ref{thm:stuttering_incomparability}. \qedhere
\end{proof}

\begin{theorem}\label{thm:stuttering_refines}
Stuttering bisimilarity strictly refines governed stuttering bisimilarity.
\end{theorem}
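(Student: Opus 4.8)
The plan is to establish two things: first, that stuttering bisimilarity refines governed stuttering bisimilarity (the "finer than" part), and second, that this refinement is strict (there exist vertices related by $\gstut$ but not by $\stut$).

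For the refinement, I would argue that every stuttering bisimulation is a governed stuttering bisimulation. Recall that a stuttering bisimulation is a governed stuttering bisimulation with the additional constraint that related vertices are owned by the same player — that is, if $v \stut w$ then $\getplayer{v} = \getplayer{w}$. So let $R$ be a stuttering bisimulation and suppose $v \R w$; I need to verify the three clauses of Definition~\ref{def:fmist}. Clause~a) (equal priorities) is immediate. For clauses~b) and~c), the point is that when $\getplayer{v} = \getplayer{w}$, the "forcing" predicates $w \forces{\getplayer{v}}{\R} \C$ and $w \diverges{\player}{\R}$ coincide with the purely reachability-based conditions that a stuttering bisimulation already guarantees: since both vertices are owned by the same player, that player controls all the moves needed, so the ability to reach/diverge translates directly into the ability to \emph{force} reaching/diverging. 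Concretely, if $v \to \C$ then (using that $R$ is a stuttering bisimulation in the classical sense) $w$ can reach $\C$ through a path staying in $\class{w}{\R}$, and since all vertices on that path are owned by the same player as $v$ (being $\R$-related to $v$), $\getplayer{v}$ can force this; similarly for divergences. Hence $R$ satisfies Definition~\ref{def:fmist}, so $\stut$ is included in $\gstut$.

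For strictness, I would reuse the parity game in Figure~\ref{fig:fake-divergence}: there all vertices with priority $0$ — in particular $v_0$ and $v_1$ — are governed stuttering bisimilar (as noted in the example following Definition~\ref{def:fmist}), but $v_0$ is owned by \odd and $v_1$ by \even, so they cannot be related by any stuttering bisimulation, which by definition only relates vertices with the same owner. This witnesses $v_0 \gstut v_1$ but $v_0 \not\stut v_1$, establishing strictness.

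The main obstacle I anticipate is the refinement direction, specifically being careful about exactly how the classical stuttering bisimulation conditions (phrased in terms of mere reachability / existence of divergent paths) upgrade to the "forced" versions in Definition~\ref{def:fmist}. The key subtlety is that this upgrade works \emph{only} because same-player ownership means the relevant player controls every relevant vertex along the matching path; one must also ensure memoryless strategies suffice, which is already handled by Lemma~\ref{lem:mem_vs_memless}. A second minor point is confirming that in Figure~\ref{fig:fake-divergence} neither $v_0$ nor $v_1$ admits a divergence that the other must match in a way that would be blocked — but since the example text already asserts all priority-$0$ vertices are $\gstut$-related, this is given.
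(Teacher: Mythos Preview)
Your proposal is correct and follows the same approach as the paper. Note, though, that given how the paper defines stuttering bisimulation (literally as a governed stuttering bisimulation satisfying the extra same-owner constraint), the refinement direction is immediate by definition and your elaboration on upgrading reachability to forcing is unnecessary; the strictness witness $v_0,v_1$ from Figure~\ref{fig:fake-divergence} is exactly what the paper uses.
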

\begin{proof}
It follows from the definitions that every stuttering bisimulation is also a governed stuttering bisimulation. Strictness follows from the parity game in Figure~\ref{fig:fake-divergence} in which $v_0$ and $v_1$ are governed stuttering bisimilar, but not stuttering bisimilar. \qedhere
\end{proof}


To complete the lattice, we next move to showing that governed stuttering bisimilarity is strictly finer than winner equivalence.
%
In order to prove this result, we must first lift the concept of governed stuttering bisimilarity to paths.

Paths of length 1 are equivalent if the vertices they consist of are equivalent. If paths $p$ and $q$ are equivalent, then $p \pathconcat \path{v} \gstut q$ iff $v$ is equivalent to the last vertex in $q$, and $p \pathconcat \path{v} \gstut q \pathconcat \path{w}$ iff $v \gstut w$. An infinite path $p$ is equivalent to a path $q$ if for all finite prefixes of $p$ there is an equivalent prefix of $q$ and \emph{vice versa}.

\begin{lemma}
\label{lem:quotient-preserves-winner}
Let $(V, {\to}, \getplayername, \priosym)$ be a parity game, and let $(\partition{V}{\gstut}, {\to'}, \getplayername', \priosym')$ be its quotient. Let $v \in V$, and $\C \in \partition{V}{\gstut}$ such that $v \in \C$. For all players $\player$, and all $\strategyname \in \strategy{\player}$ there is some $\strategynamealt \in \memstrategy{\player}$ such that for all $q \in \paths[\omega]{\strategynamealt}{\C}$ there is a $p \in \paths[\omega]{\strategyname}{v}$ such that $p \gstut q$.
\end{lemma}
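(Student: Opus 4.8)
The plan is to build the memoryful quotient strategy $\strategynamealt$ and, simultaneously, for each infinite $\strategynamealt$-play $q = \C_0\C_1\cdots$ from $\C = \C_0$, a matching infinite $\strategyname$-play $p = v_0 v_1\cdots$ from $v = v_0$, by a round-by-round construction maintaining a \emph{witness}: after $q$ has been played up to $\C_0\cdots\C_n$, the prefix of $p$ built so far is a $\strategyname$-consistent path from $v$, confined to $\C_0\cup\cdots\cup\C_n$, ending in some $u\in\C_n$, and visiting the classes $\C_0,\dots,\C_n$ in order modulo finite stuttering. Since $\strategynamealt$'s choice at a $\player$-owned class will be dictated by the current witness, memory is genuinely needed, matching the statement.

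The construction rests on two facts about the governed stuttering bisimulation quotient $\game_q = (\partition{V}{\gstut},\to',\priosym',\getplayername')$ of Definition~\ref{def:pg_gstut_quotient}. \emph{(i)} If from $u\in\C_i$ there is a $\strategyname$-consistent finite path staying in $\C_i$ and then stepping into $\C'\neq\C_i$, then $\C_i\to'\C'$: the last step $t\to t'$ satisfies $t\in\C_i$, $t'\in\C'$, so clause b) of Definition~\ref{def:fmist} gives $w\forces{\getplayer{t}}{\gstut}\C'$ for every $w\in\C_i$, witnessing the transition. \emph{(ii)} If $\getplayername'(\C_i) = \opponent{\player}$, then every non-self successor $\C'$ of $\C_i$ satisfies $w\forces{\opponent{\player}}{\gstut}\C'$ for all $w\in\C_i$, and if $\C_i\to'\C_i$ then $w\diverges{\opponent{\player}}{\gstut}$ for all $w\in\C_i$; this follows by showing the contrapositive ``$\C_i\to'\C'$ justified by $\player$ implies $\getplayername'(\C_i) = \player$'' --- from $w\forces{\player}{\gstut}\C'$ and Lemma~\ref{lem:attractor_vs_forces} one has $\C_i\subseteq\battr{\C_i}{\player}{\C'}$, so the first attractor level already contains a vertex of $\C_i$ that, owing to its owner, triggers the $\player$-branch of the case defining $\getplayername'$ --- together with the observation that $\getplayername'(\C_i) = \opponent{\player}$ excludes the $\player$-divergence justification of the self-loop.

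A round then proceeds as follows. If $\getplayername'(\C_i) = \player$, consider the $\strategyname$-plays from the witness $u$, truncated at their first exit from $\C_i$. If some exit, $\strategynamealt$ plays to one of the resulting exit classes $\C_{i+1}$ (a legal move by \emph{(i)}) and the witness is advanced along a $\strategyname$-consistent path inside $\C_i$ that reaches $\C_{i+1}$. If none exit, then $u\diverges{\strategyname}{\C_i}$, hence $u\diverges{\player}{\gstut}$, hence by clause c) of Definition~\ref{def:fmist} every vertex of $\C_i$ diverges as $\player$ and $\C_i\to'\C_i$, so $\strategynamealt$ takes the self-loop and the witness is advanced one step inside $\C_i$ along a $\C_i$-confined $\strategyname$-play. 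If $\getplayername'(\C_i) = \opponent{\player}$, whatever successor the opponent picks is, by \emph{(ii)}, $\opponent{\player}$-forcible (or the $\opponent{\player}$-divergence self-loop); fixing a memoryless $\opponent{\player}$-strategy $\strategynamealta$ witnessing this and following the $(\strategyname,\strategynamealta)$-play from $u$, the witness is advanced either to the chosen successor class, or one step inside $\C_i$ along a $\C_i$-divergent play.

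Concatenating all segments yields an infinite $\strategyname$-consistent path $p$ from $v$ that never gets stuck. If $q$ is eventually constant in a class $\C_j$, then from that point $p$ follows a divergent play inside $\C_j$ and is likewise eventually confined to $\C_j$; otherwise $p$ and $q$ visit exactly the same classes in the same order modulo stuttering. Either way $p\gstut q$ for the lifting of $\gstut$ to paths, using that in the disjoint union of $\game$ and $\game_q$ a vertex and a class are $\gstut$-related precisely when the former lies in the latter. The main obstacle is fact \emph{(ii)}: one must verify that the player assignment of Definition~\ref{def:pg_gstut_quotient} is tight enough that an $\opponent{\player}$-owned class never offers the opponent a transition only $\player$ can force --- which $\strategyname$, being fixed, might be unable to follow --- and dually that a $\player$-owned class always offers $\strategynamealt$ either an exit realised by $\strategyname$ or a genuine $\player$-divergence; the attractor-level argument above is the delicate point.
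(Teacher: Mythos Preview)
Your overall strategy is essentially the same as the paper's: build $\strategynamealt$ together with a matching $\strategyname$-path, handling the $\player$-owned case by letting $\strategyname$ dictate $\strategynamealt$'s move (your fact~(i)), and the $\opponent{\player}$-owned case by showing the opponent's chosen transition can be realised by an $\opponent{\player}$-forcing strategy $\strategynamealta$, then following the $(\strategyname,\strategynamealta)$-play. The paper packages this via auxiliary functions $\mathsf{next}(q)$ and $\mathsf{div}(q)$ and an induction on prefix length, and for the $\opponent{\player}$-owned case it derives $v'\forces{\opponent{\player}}{\gstut}\C''$ not from your fact~(ii) but directly from the already-proved theorem that $\game\gstut\game_q$ (so $v'\gstut\C'$, and the quotient edge $\C'\to'\C''$ transfers back). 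This is cleaner than establishing your fact~(ii) from scratch, but your route is more self-contained.

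There is, however, a genuine gap in your justification of fact~(ii). The contrapositive you state --- ``$\C_i\to'\C'$ justified by $\player$ implies $\getplayername'(\C_i)=\player$'' --- is \emph{false} as written when $\player=\odd$: a transition can be justified by both players, and in that case $\getplayername'(\C_i)$ may well be $\even$ (via the second disjunct of Definition~\ref{def:pg_gstut_quotient}). What you actually need is the weaker (and true) statement ``$\getplayername'(\C_i)=\opponent{\player}$ and $\C_i\to'\C'$ implies $\opponent{\player}$ justifies''. Your attractor-level argument (``the first level contains a vertex that triggers the $\player$-branch'') only works for $\player=\even$, because the definition of $\getplayername'$ is asymmetric: its $\even$-case is an explicit disjunction, but its $\odd$-case is just ``otherwise''. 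A correct proof must split on the two disjuncts of $\getplayername'(\C_i)=\even$ and, using Lemma~\ref{lem:force-vs-div} for the divergence disjunct and a short argument (any $v\steps{\even}\C''$ together with $v\forces{\odd}{\gstut}\C'$ forces $\C''=\C'$, whence $\even$ also justifies by Theorem~\ref{th:fmist2}) for the other, derive that $\even$ justifies as well. With this repair your argument goes through; alternatively, you can bypass fact~(ii) entirely by invoking the quotient-equivalence theorem as the paper does.
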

\begin{proof}
\def\allowsdiv#1{\mathsf{div}(#1)}
\def\next#1{\mathsf{next}(#1)}
Define an arbitrary complete ordering $\precdot$ on vertices, and define the following for finite paths $q$ starting in $\C$, where $\min_{\precdot} \emptyset$ is defined to be $\bot$:
\begin{align*}
\next{q} &= \min_\precdot \{ v' \in V \mid \exists{p \in \paths{\strategyname}{v}}{p \gstut q \land p \steps{\strategyname} v' \land p \pathconcat \path{v'} \not\gstut q} \} \\
\allowsdiv{q} &= \exists{p \in \paths[\omega]{\strategyname}{v}}{p \gstut q}
\end{align*}

We next show that it is possible to define a strategy 
$\strategynamealt \in \memstrategy{\player}$ for finite plays $q = \path{\C \ldots \C'}$ such that if $q \gstut p$ for some $p \in \paths{\strategyname}{v}$, then:
\[
\begin{cases}
\strategynamealt(q) = \C' & \text{ if } \allowsdiv{q} \text{ and } \C' \to' \C' \\
\strategynamealt(q) = \class{\next{q}}{\gstut} & \text{ otherwise. }
\end{cases}
\]

Let $p \in \paths{\strategyname}{v}$ be such that $q \gstut p$ for
$q = \path{\C \ldots \C'}$ and assume  $\getplayername'(\C') = \player$.
In case $\allowsdiv{q}$ and $\C' \to' \C'$, then obviously $\strategynamealt(q)$ 
can be defined as $\C'$. We proceed to show that if $\lnot \allowsdiv{q}$ or $\C' \not\to' \C'$, 
then 1) $\next{q} \neq \bot$, and 2) we can set $\strategynamealt(q) = \class{\next{q}}{\gstut}$.
We show the first by distinguishing two cases:
\begin{description}

\item[Case $\neg \allowsdiv{q}$,] then it follows straightforwardly that $\next{q} \ne \bot$.

\item[Case $\C' \not\to' \C'$.] Because $\C'$ is a vertex is a
quotient graph, $\C' \ndiverges{\player}{\gstut}$.  Consider the
path $p \in \paths{\strategyname}{v}$ for which $p \gstut q$, and
assume that $p$ is of the form $\bar{p} \pathconcat \path{u}$.
Since $\bar{p} \pathconcat \path{u} \gstut q$, also $u \gstut \C'$
and hence $u \ndiverges{\player}{\gstut}$.  Then by
Lemma~\ref{lem:force-vs-div}, $u \forces{\opponent{\player}}{\gstut}
V \setminus \class{u}{\gstut}$.  Let $\sigma' \in
\strategy{\opponent{\player}}$ be such that $u \forces{\sigma'}{\gstut}
V \setminus \class{u}{\gstut}$ and consider the unique path $r
\pathconcat \path{v'} \in \paths{\sigma'}{u}$ such that $\sigma
\allows r \pathconcat \path{v'}$, $\sigma' \allows r \pathconcat
\path{v'}$, $r \gstut u$ and $v' \in V \setminus \class{u}{\gstut}$.
Then $p \pathconcat r \in \paths{\strategyname}{v}$ is such that
$p \pathconcat r \gstut q$, $p\pathconcat r \steps{\sigma} v'$ and
$p \pathconcat r \pathconcat \path{v'} \not\gstut q$. Hence, $\next{q}
\neq \bot$.

\end{description}
Next, we show that we can set $\strategynamealt(q) = \class{\next{q}}{\gstut}$.
Since $\next{q} \ne \bot$, there must be some $\path{v
\ldots v' v''} \in \paths{\strategyname}{v}$ such that $v'' =
\next{q}$, $\path{v \ldots v'} \gstut q$ and $v' \not\gstut v''$.
Since $v' \gstut \C'$ and $v' \step \class{v''}{\gstut}$, also
$\C' \forces{\getplayer{v'}}{\gstut} \class{v''}{\gstut}$. As
$\C'$ is a vertex in a quotient graph, this implies
$\C' \steps{\getplayer{v'}} \class{v''}{\gstut}$. Hence, we can
set $\strategynamealt(q)  = \class{\next{q}}{\gstut}$.
\medskip

Now we have shown that it is always possible to define a strategy adhering to the restrictions above, let $\strategynamealt$ be such a strategy. We show using induction on $n$ that for all $n$
$$
\forall{q \in \paths[n]{\strategynamealt}{\C}}{\exists{p \in \paths{\strategyname}{v}}{p \gstut q}}.
$$
For $n=0$, this is trivial, because $v \gstut \C$. For $n = m + 1$, assume as the induction hypothesis that $\forall{\bar{q} \in \paths[m]{\strategynamealt}{\C}}{\exists{\bar{p} \in \paths{\strategyname}{v}}{\bar{p} \gstut \bar{q}}}$. Let $q \in \paths[n]{\strategynamealt}{\C}$ and let $\C',\C'' \in \partition{V}{\gstut}$ and $\bar{q} \in \paths[m]{\strategynamealt}{\C}$ such that $\bar{q} = \path{\C \ldots \C'}$ and $q = \bar{q} \pathconcat \path{\C''}$. Distinguish cases on the player who owns $\C'$.
\begin{description}
\item[Case $\getplayername'(\C') = \player.$]
Then $\C'' = \strategynamealt(\bar{q})$. The induction hypothesis yields some $\bar{p} \in \paths{\strategyname}{v}$ such that $\bar{p} \gstut \bar{q}$, therefore $\C'' = \C'$ if $\allowsdiv{\bar{q}}$ and $\C' \step' \C'$, and otherwise $\C'' = \class{\next{\bar{q}}}{\gstut}$.

If $\C' = \C''$, then $\allowsdiv{\bar{q}}$, so there must be some $p \in \paths[\omega]{\strategyname}{v}$ such that $p \gstut q$ and therefore also some $p \in \paths{\strategyname}{v}$ such that $p \gstut q$. 

If $\C'' = \class{\next{\bar{q}}}{\gstut}$, there must be some $p \in \paths{\strategyname}{v}$ such that $p = p' \pathconcat \path{v'}$ and $p' \gstut \bar{q}$ and $v' \gstut \C''$. By definition, $p \gstut q$ for such $p$.

\item[Case $\getplayername'(\C') \ne \player.$] From the induction
hypothesis, obtain a $\bar{p} \in \paths{\strategyname}{v}$ such
that $\bar{p} \gstut \bar{q}$. Without loss of generality we may
assume that $\bar{p}$ is finite.    Note that $\C' \step
\class{\C''}{\gstut}$. We distinguish two cases.
\begin{itemize}
\item Case $\C' = \C''$. Then we have $\bar{p} \gstut \bar{q} \C''$.

\item Case $\C' \neq \C''$. Let $v'$ be the last vertex in
$\bar{p}$.  Because $\bar{p} \gstut \bar{q}$, also
$v' \forces{\opponent{\player}}{\gstut} \class{\C''}{\gstut}$.  So
let $\strategyname' \in \strategy{\opponent{\player}}$ be such that
$v' \forces{\strategyname'}{\gstut} \class{\C''}{\gstut}$. Now
consider an infinite path $\bar{p} \pathconcat p$ such that
$\strategyname \allows \bar{p} \pathconcat p$ and $\strategyname'
\allows \bar{p} \pathconcat p$. For some index $k \ge 0$, it must
be the case that $p_k \gstut \C''$ and $p_l \gstut \C'$ for all $l
< k$. So $\bar{p} \pathconcat \path{p_0 \ldots p_k} \gstut q$.
\end{itemize}

\end{description}
Finally, we prove that for all $q \in \paths[\omega]{\strategynamealt}{\C}$
there is a $p \in \paths[\omega]{\strategyname}{v}$ such that $p
\gstut q$. Let $q \in \paths[\omega]{\strategynamealt}{\C}$. Then by
the above, we find that there is some $p \in \paths{\strategyname}{v}$.
Suppose $p$ is finite and $p = \bar{p} v'$ for some vertex $v'$.
Since $q$ is a path through the quotient graph,
$q$ must be of the form $\bar{q} \C^\omega$ for some $\C \gstut v'$. 
\begin{itemize}
 \item Case $\getplayername'(\C) = \player$. Then $\strategynamealt(\bar{q}
 \C) = \C$, and thus $\allowsdiv{\bar{q} \C}$ by definition of
 $\strategynamealt$.  But then there must be some $p' \in
 \paths[\omega]{\strategyname}{v}$ such that $p' \gstut \bar{q} \C
 \gstut q$.

 \item Case $\getplayername'(\C) \neq \player$. Since $\C \step' \C$
we have $\C \diverges{\opponent{\player}}{\gstut}$ and since $v'
\gstut \C$, also $v' \diverges{\opponent{\player}}{\gstut}$. Let
$\strategyname' \in \strategy{\opponent{\player}}$ be such that $v'
\diverges{\opponent{\player}}{\gstut}$. Then there is an infinite
path $p' \in \paths[\omega]{\strategyname'}{v'}$ such that $\strategyname
\allows p'$ and $p' \gstut v'$.  But then $\strategyname \bar{p}
p'$ and $q \gstut \bar{p} p'$.\qedhere
\end{itemize}

\end{proof}

\begin{theorem}\label{thm:gstut_refines}
Governed stuttering bisimularity strictly refines winner equivalence.
\end{theorem}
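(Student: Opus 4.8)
The statement has two halves: that $v \gstut w$ implies $v \winner w$, and that this inclusion is strict. For the first half the plan is to route everything through the quotient and close with determinacy there. Assume $v \gstut w$ and let $\C = \class{v}{\gstut} = \class{w}{\gstut}$ be the vertex of the quotient game $\game_q = (\partition{V}{\gstut}, \to', \priosym', \getplayername')$ containing both. If player $\player$ wins $v$ in $\game$, memoryless determinacy supplies a strategy $\strategyname \in \strategy{\player}$ winning from $v$, and Lemma~\ref{lem:quotient-preserves-winner} then yields a strategy $\strategynamealt \in \memstrategy{\player}$ in $\game_q$ every play of which from $\C$ is path-equivalent to a play of $\strategyname$ from $v$. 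Once we know path-equivalent plays have the same winner, each such play is won by $\player$, so $\strategynamealt$ is winning from $\C$ and $\player$ wins $\C$ in $\game_q$. Running the same argument for $w$ shows that the winner of $w$ in $\game$ also wins $\C$ in $\game_q$; since $\game_q$ is a parity game it is determined and $\C$ has a unique winner, so the winners of $v$ and of $w$ must coincide. (If $v$ and $w$ live in different games I would carry out the argument inside their disjoint union, whose winning regions restrict to those of the two components.)

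The fact still to be supplied is that $p \gstut q$, for infinite paths $p,q$, implies $p$ and $q$ have the same winner. Here I would first prove, by an easy induction on the sum of their lengths, that governed stuttering bisimilar \emph{finite} paths end in $\gstut$-related vertices and --- after collapsing consecutive repetitions --- traverse the same sequence of $\gstut$-equivalence classes; then, by taking the limit of the prefix-wise definition of path equivalence, $p$ and $q$ induce one and the same (finite or infinite) collapsed class-sequence. If that sequence is infinite, a class is visited infinitely often along $p$ exactly when it occurs infinitely often in the sequence (an infinite sojourn in one class would truncate the sequence), and symmetrically for $q$; if it is finite, both paths eventually remain forever in its last class. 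Either way $p$ and $q$ visit the same classes infinitely often, and since clause~(a) of Definition~\ref{def:fmist} makes all vertices of a class share a priority, the set of priorities occurring infinitely often --- hence its minimum --- is the same for both. I expect this bookkeeping with the collapsed class-sequence, and in particular the separation of the divergent case from the non-divergent one, to be the only genuinely delicate point; the remainder is just assembly of Lemma~\ref{lem:quotient-preserves-winner}, memoryless determinacy, and determinacy of $\game_q$.

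For strictness I would point to the parity game of Figure~\ref{fig:delaysim_example}: each of $v_0$, $v_1$, $v_2$ has a unique play, and every such play eventually loops forever at $v_2$, which has priority $0$, so all three vertices are won by \even and are therefore winner equivalent; yet $\prio{v_0} = 0 \neq 1 = \prio{v_1}$, so clause~(a) of Definition~\ref{def:fmist} forbids any governed stuttering bisimulation from relating $v_0$ and $v_1$, giving $v_0 \not\gstut v_1$ and hence $\winner \not\subseteq \gstut$.
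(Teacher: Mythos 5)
Your proof is correct and follows essentially the same route as the paper: pass to the governed stuttering bisimulation quotient, combine Lemma~\ref{lem:quotient-preserves-winner} with (memoryless) determinacy, and conclude that the winners of $v$ and $w$ must both coincide with the winner of their common quotient vertex. The only differences are that you spell out the collapsed-class-sequence argument for why $\gstut$-related infinite plays share the same infinitely occurring priorities (which the paper asserts in a single sentence) and that you give an explicit witness for strictness; both additions are sound.
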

\begin{proof}
Let $\game = (V, {\to}, \getplayername, \priosym)$ be a parity game, and let $v, w \in V$ such that $v \gstut w$. Let $(\partition{V}{\gstut}, {\to'}, \getplayername', \priosym')$ be the governed stuttering bisimulation quotient of $\game$, and let $\C \in \partition{V}{\gstut}$ be such that $w \gstut \C\mkern1mu$. By transitivity of $\gstut\mkern2mu$, also $v \gstut \C\mkern1mu$. Now suppose that player $\player$ has a winning strategy $\strategyname$ from $v$. Then by Lemma~\ref{lem:quotient-preserves-winner}, $\player$ has a strategy $\strategynamealt$ from $\C$ such that for every play $q \in \paths{\strategynamealt}{\C}$ there is a play $p \in \paths{\strategyname}{v}$ such that $p \gstut q$. Because the priorities occurring infinitely often on such $p$ and $q$ are the same, $\strategynamealt$ is also winning for $\player$. If $\opponent{\player}$ had a winning strategy $\strategyname'$ from $w$, then we could repeat this argument to construct a winning strategy for $\opponent{\player}$ from $\C$, but this would be contrary to the fact that parity games are determined. Therefore, $w$ must also be won by player $\player$. \qedhere\end{proof}

\section{Conclusion}
\label{sec:conclusions}

Preorders and equivalences for parity games have been studied on a
number of occasions, see~\cite{Jan:05,CKW:11,CKW:12,Cra:15,Kei:13, FW:06, Fri:05}. A major motivation for some
of these is that they provide the prospect of simplifying games
prior to solving them. In this paper, we reconsidered several of
the parity game relations previously defined by us, \viz (governed)
bisimulation and (governed) stuttering bisimulation.  More specifically,
we gave detailed proofs showing that our relations are equivalences,
they have unique parity game quotients and they approximate the
winning regions of parity games. Furthermore, we showed that our
coinductively defined equivalence relations admit game-based
definitions; the latter facilitated the comparison of our equivalences
to the game-based definitions of relations for parity games found
in the literature.  For the latter relations, we additionally gave
coinductive definitions. Finally, we showed that, unlike \eg delayed
simulation or any of its biased versions, our equivalence relations
give rise to unique quotients.

There are several natural continuations of this research. First,
the experiments that were conducted in~\cite{CKW:12,Kei:13} showed
that  parity games that could not be solved become solvable by
preprocessing the games using an $\mathcal{O}(m  n)$  stuttering
bisimulation minimisation algorithm or an $\mathcal{O}(m 
n^2)$ governed stuttering bisimulation minimisation algorithm; the
overall gain in speed otherwise was not significant. It would be
worthwhile to establish whether this is still true when using the
$\mathcal{O}(m \log{n})$ stuttering equivalence minimisation algorithm
of~\cite{GrW:16}.  Moreover, it would be interesting to see whether
the $\mathcal{O}(m n^2)$ time complexity of governed stuttering
bisimulation can be reduced using ideas from~\cite{GrW:16}. Similarly,
we believe that our coinductive rephrasing of delayed simulation
will help to devise a more efficient algorithm for computing it,
using a partition refinement approach.

Finally, an interesting line of investigation is to see whether the
incomparable notions of governed stuttering bisimulation and delayed
simulation equivalence can be married. Given that we have established
game-based and coinductive definitions for both relations, defining such a
relation now seems within reach. The resulting relation would be
closer to winning equivalence and perhaps even shed light on ways
to efficiently solve parity games in general.

\bibliographystyle{plain}
\bibliography{literature}

\appendix
\section{Detailed proofs of Propositions~\ref{prop:soundness_gstut} and~\ref{prop:completeness_gstut}}

Before we address Propositions~\ref{prop:soundness_gstut}
and~\ref{prop:completeness_gstut}, we first repeat the definition
of the variant function we will use in the proof of
Proposition~\ref{prop:soundness_gstut} and we state three lemmata
that characterise properties of this variant function.

\begin{definition}[Governed stuttering bisimulation game measure]
We define a measure with respect to $\gstut$ for a configuration
$((u_0,u_1),c)$ in the governed stuttering bisimulation game as follows:
\begin{equation*}
\begin{array}{l}
\measure{u_0, u_1, c}  \isdef\\
\begin{cases}
  (0,0) & \text{if } c = \checkmark \\
  (\infty,0) & \text{if } c = \dagger \land {\exists{v_0 \in \post{u_0}, v_1 \in
\post{u_1}}{v_0 \gstut u_0 \wedge v_1 \gstut u_1}} \\

  (\front{u_j,u_{1-j}}, 0) & \text{if } 
    c = \dagger \land \forall{v \in \post{u_j}}{u_j \not \gstut v} \\

  (0, \exit{u_j,u_{1-j},t} ) & \text{if } 
    c = (j,t) \\
\end{cases}
\end{array}
\end{equation*}
where $\front{u_0,u_1}$ denotes the number of steps before $\getplayer{u_0}$'s opponent
is forced from $\class{u_0}{\gstut}$ and $\exit{u_0,u_1,u_2}$ denotes the
number of steps it takes for $\getplayer{u_0}$ to force play to $\class{u_2}{\gstut}$
from $u_1$.  Formally, we have:
\begin{equation*}
\begin{array}{l}
\front{u_0,u_1} \isdef \dist{\opponent{\getplayer{u_0}}}{\class{u_0}{\gstut}}{u_{1}}{V \setminus \class{u_0}{\gstut}}\\
\exit{u_0,u_1,u_2} \isdef
\dist{\getplayer{u_0}}{\class{u_0}{\gstut}}{u_{1}}{\class{u_2}{\gstut}}
\end{array}
\end{equation*}
where for $U, T \subseteq V$ and $v \in U$:
\begin{equation*}
  \dist{\player}{U}{v}{T} \isdef
  \begin{cases}
  \min\{ n \mid v \in \battr[n]{U}{\player}{T} \} & \text{if } v \forces{\player}{U} T\\
  \infty & \text{otherwise}
  \end{cases}
\end{equation*}
Measures are ordered lexicographically, \ie $(m_0, m_1) < (n_0, n_1)$ iff
$m_0 < n_0 \lor (m_0 = n_0 \land m_1 < n_1)$.
\end{definition}
We first prove some basic properties for the function $\measuresym$. 
\begin{lemma}\label{lem:star_positive}
For $u,v \in V$, $(u \gstut v \land \measure{u,v,\dagger} = (m_0,m_1)) \implies m_0 > 0$.
\end{lemma}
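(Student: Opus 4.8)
The plan is to unfold the definition of the measure $\measuresym$ at a configuration of the form $((u,v),\dagger)$ and argue by a short case distinction, the only real ingredient being that $u \gstut v$ forces $u$ and $v$ into the same $\gstut$-class. So the first step is to record that $\class{u}{\gstut} = \class{v}{\gstut}$, and hence, writing the configuration in the form $((u_0,u_1),\dagger)$ with $u_0 = u$ and $u_1 = v$, that $u_{1-j} \in \class{u_j}{\gstut}$ for both $j \in \{0,1\}$.

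Next, since the challenge here is $\dagger$, the clauses $c = \checkmark$ and $c = (j,t)$ in the definition of $\measuresym$ cannot be in force, so $\measure{u,v,\dagger}$ takes one of just two shapes. If $\measure{u,v,\dagger} = (\infty,0)$ --- the clause that fires when $u$ has a successor that is $\gstut$-equivalent to $u$ and $v$ has a successor that is $\gstut$-equivalent to $v$ --- then $m_0 = \infty > 0$ and there is nothing more to do. Otherwise $\measure{u,v,\dagger} = (\front{u_j,u_{1-j}},0)$ for one of the two indices $j \in \{0,1\}$, and by the definitions of $\front{\cdot}$ and $\distname$ this means $m_0 = \dist{\opponent{\getplayer{u_j}}}{\class{u_j}{\gstut}}{u_{1-j}}{V \setminus \class{u_j}{\gstut}}$.

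For this remaining case I would appeal directly to the definition of $\distname$: the value is either $\infty$, in which case again $m_0 = \infty > 0$, or it equals $\min\{n \mid u_{1-j} \in \battr[n]{\class{u_j}{\gstut}}{\opponent{\getplayer{u_j}}}{V \setminus \class{u_j}{\gstut}}\}$. This minimum is $0$ precisely when $u_{1-j} \in \battr[0]{\class{u_j}{\gstut}}{\opponent{\getplayer{u_j}}}{V \setminus \class{u_j}{\gstut}}$, which by Definition~\ref{def:attr} is just $V \setminus \class{u_j}{\gstut}$. Since $u_{1-j} \in \class{u_j}{\gstut}$ by the observation of the first step, $u_{1-j}$ does not lie in $V \setminus \class{u_j}{\gstut}$, so the minimum is at least $1$. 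Thus in every case $m_0 > 0$.

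There is essentially no obstacle here; the statement is a bookkeeping fact used to justify the variant-function argument in the proof of Proposition~\ref{prop:soundness_gstut}. The two points requiring care are (i) observing that with a $\dagger$-challenge only the $(\infty,0)$-clause and the $\front{\cdot}$-clause of the measure can apply, and (ii) reading the base case $\battr[0]{U}{\player}{T} = T$ correctly, so that ``distance $0$'' is recognised as ``already outside the class''. The hypothesis $u \gstut v$ enters at exactly one point: to rule out $u_{1-j} \notin \class{u_j}{\gstut}$.
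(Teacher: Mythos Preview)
Your proof is correct and follows essentially the same approach as the paper's own proof: both observe that with $c = \dagger$ only the $(\infty,0)$-clause or the $\front{\cdot}$-clause can apply, and both reduce the latter to the observation that $u_{1-j} \in \class{u_j}{\gstut}$ implies $u_{1-j} \notin V \setminus \class{u_j}{\gstut}$. You unpack $\distname$ and $\battr[0]{U}{\player}{T} = T$ more explicitly than the paper does, but the argument is the same.
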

\begin{proof}
First, observe that (apart from $c = \dagger$), the conditions in the second and third clause of the definition of $\measuresym$ are complementary.
Furthermore observe that, for all $u_0, u_1$ such that $u_0 \gstut u_1$, we have $\front{u_0, u_1} > 0$ since $u_1 \not \in V \setminus \class{u_0}{\gstut}$. The result then immediately follows. \qedhere\end{proof}

\begin{lemma}\label{lem:decrease_dist}
  Let $U,T \subseteq V$, such that $U \cap T = \emptyset$ and let $u \in U$. 
  For all players $\player$, if $u \forces{\player}{U} T$ then
  $\dist{\player}{U}{u}{T} >
  \min\{ \dist{\player}{U}{v}{T} \mid u \to v \land v \in U \cup T\}$.
\end{lemma}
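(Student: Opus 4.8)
The plan is to unfold the definitions of $\distname$ and of the generalised attractor, and then perform a single-step case analysis on the owner of $u$. First I would use Lemma~\ref{lem:attractor_vs_forces} to rewrite $u \forces{\player}{U} T$ as $u \in \battr{U}{\player}{T}$, so that $\dist{\player}{U}{u}{T}$ is exactly the least index $n$ with $u \in \battr[n]{U}{\player}{T}$ (in particular it is finite). Since $U \cap T = \emptyset$ and $u \in U$, we get $u \notin T = \battr[0]{U}{\player}{T}$, so that least index has the form $m+1$ for some $m \ge 0$, and $u \in \battr[m+1]{U}{\player}{T} \setminus \battr[m]{U}{\player}{T}$.

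Next I would invoke Definition~\ref{def:attr}: because $u$ enters $\battr[m+1]{U}{\player}{T}$ but is not already in $\battr[m]{U}{\player}{T}$, it must do so via one of the last two clauses, which both require $u \in U$ and split on whether $\getplayer{u} = \player$. If $\getplayer{u} = \player$, some successor $v \in \post{u}$ satisfies $v \in \battr[m]{U}{\player}{T}$; if $\getplayer{u} \neq \player$, \emph{every} successor of $u$ lies in $\battr[m]{U}{\player}{T}$, and at least one exists since the edge relation is total. Either way we obtain $v$ with $u \to v$ and $v \in \battr[m]{U}{\player}{T}$. A trivial induction on Definition~\ref{def:attr} gives $\battr[m]{U}{\player}{T} \subseteq U \cup T$, so $v \in U \cup T$ and $\dist{\player}{U}{v}{T} \le m$. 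Hence the set $\{ v \mid u \to v \land v \in U \cup T \}$ is non-empty, its minimal $\distname$-value is at most $m$, and therefore $\dist{\player}{U}{u}{T} = m+1 > m \ge \min\{ \dist{\player}{U}{v}{T} \mid u \to v \land v \in U \cup T\}$, which is the claim.

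The argument is entirely elementary and I do not expect any genuine obstacle; the only points that warrant a line of care are observing that the minimum on the right-hand side is over a non-empty set — which holds precisely because the witness $v$ constructed above belongs to it — and recording the auxiliary inclusion $\battr[m]{U}{\player}{T} \subseteq U \cup T$. Both follow directly from Definition~\ref{def:attr}, and the totality of $\to$ is what guarantees the existence of a successor in the $\getplayer{u} \neq \player$ case.
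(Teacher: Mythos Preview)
Your proposal is correct and follows essentially the same approach as the paper: unfold $\distname$ via the attractor, note the least index is positive because $u\in U$ and $U\cap T=\emptyset$, and do a case split on $\getplayer{u}$ using Definition~\ref{def:attr} to exhibit a successor in the previous attractor level. You are in fact slightly more careful than the paper in explicitly recording $\battr[m]{U}{\player}{T}\subseteq U\cup T$ and the non-emptiness of the set over which the minimum is taken.
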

\begin{proof}
  Assume $u \forces{\player}{U} T$ and let 
  $n = \dist{\player}{U}{u}{T}$.
  Hence $u \in \battr[n]{U}{\player}{T}$
  and $u \not \in \battr[n-1]{U}{\player}{T}$. Observe that $n > 0$ since
  $u \in U$ and $U \cap T = \emptyset$.
  We proceed by a case distinction on $\getplayer{u}$.
  \begin{itemize}
    \item $\getplayer{u} = \player$. Since $n$ is such that $u \notin \battr[n-1]{U}{\player}{T}$, we have $\exists{v \in \post{u}}{v \in \battr[n-1]{U}{\player}{T}}$. Let
    $v$ be such, and observe that $\dist{\player}{U}{v}{T} \leq n - 1 < n$.
    The result then follows immediately.
    \item $\getplayer{u} \neq \player$. As $n$ is such that $u \notin \battr[n-1]{U}{\player}{T}$, we have $\forall{v \in \post{u}}{v \in \battr[n-1]{U}{\player}{T}}$. 
    Hence
    $\forall{v \in V}{u \to v \implies \dist{\player}{U}{v}{T} \leq n - 1 < n}$.
    Again the result follows immediately. \qedhere
  \end{itemize}
\end{proof}
We also prove the following stronger result in case $u$ is
owned by the opponent.
\begin{lemma}\label{lem:decrease_dist_opponent}
  Let $U,T \subseteq V$, such that $U \cap T = \emptyset$ and $u \in U \cap V_{\opponent{\player}}$. 
  Then $u \forces{\player}{U} T$
  implies
  $\dist{\player}{U}{u}{T} >
  \max\{ \dist{\player}{U}{v}{T} \mid u \to v \land v \in U \cup T \}$.
\end{lemma}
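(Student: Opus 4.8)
The plan is to reuse, essentially verbatim, the reasoning from the $\getplayer{u} \neq \player$ case in the proof of Lemma~\ref{lem:decrease_dist}, observing that for an opponent-owned vertex that argument already yields a statement about \emph{every} successor, and hence about the \emph{maximum} rather than merely the minimum. First I would assume $u \forces{\player}{U} T$ and, appealing to Lemma~\ref{lem:attractor_vs_forces}, set $n = \dist{\player}{U}{u}{T}$, so that $u \in \battr[n]{U}{\player}{T}$ while $u \notin \battr[n-1]{U}{\player}{T}$. Since $u \in U$ and $U \cap T = \emptyset$ we have $u \notin T = \battr[0]{U}{\player}{T}$, so $n \ge 1$ and $\battr[n-1]{U}{\player}{T}$ is well defined.

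Next I would exploit the ownership of $u$. Because $\getplayer{u} = \opponent{\player}$, vertex $u$ cannot have been added to $\battr[n]{U}{\player}{T}$ via the "$\getplayer{v} = \player$" clause of Definition~\ref{def:attr}, nor (being in $U$, disjoint from $T$) as an element of $\battr[0]{U}{\player}{T}$; since it is also not in $\battr[n-1]{U}{\player}{T}$, the only remaining possibility is the third clause, which gives $\forall{v \in \post{u}}{v \in \battr[n-1]{U}{\player}{T}}$. Now for each such successor $v$, note that $\battr[n-1]{U}{\player}{T} \subseteq U \cup T$ (immediate from the definition of the layers), so $v \in U \cup T$, and by Lemma~\ref{lem:attractor_vs_forces} $v \forces{\player}{U} T$ with $\dist{\player}{U}{v}{T} \le n-1 < n$. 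As the edge relation is total, $\post{u} \neq \emptyset$, so the set $\{\dist{\player}{U}{v}{T} \mid u \to v \land v \in U \cup T\}$ is non-empty (indeed it equals the set over all successors of $u$), and every one of its elements is strictly less than $n = \dist{\player}{U}{u}{T}$; hence its maximum is strictly less than $\dist{\player}{U}{u}{T}$, which is the claim.

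I do not expect a genuine obstacle here; the statement is a mild strengthening of the opponent branch of Lemma~\ref{lem:decrease_dist}. The only points needing a little care are the index bookkeeping at the base layer (ensuring $n \ge 1$ so that $\battr[n-1]{U}{\player}{T}$ makes sense, and that $\battr[0]{U}{\player}{T} = T$) and the remark that the side condition "$v \in U \cup T$" appearing in the $\max$ is automatically satisfied by every successor of $u$, so that the quantifier over successors and the quantifier in the $\max$ agree.
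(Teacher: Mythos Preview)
Your proposal is correct and follows essentially the same argument as the paper's proof: both set $n = \dist{\player}{U}{u}{T}$, use $u \in \battr[n]{U}{\player}{T} \setminus \battr[n-1]{U}{\player}{T}$, and invoke the opponent clause of Definition~\ref{def:attr} to conclude that every successor $v$ of $u$ lies in $\battr[n-1]{U}{\player}{T}$ and hence has $\dist{\player}{U}{v}{T} \le n-1$. Your additional remarks on $n \ge 1$, on $\battr[n-1]{U}{\player}{T} \subseteq U \cup T$, and on totality ensuring the $\max$ is over a non-empty set are just extra care the paper leaves implicit.
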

\begin{proof}
  Let $u \in U\cap V_{\opponent{\player}}$ such that $u \forces{\player}{U} T$.
  Suppose $n = \dist{\player}{U}{u}{T}$. Then
  $u \in \battr[n]{U}{\player}{T}$ and $u \not \in \battr[n-1]{U}{\player}{T}$.
  Since  $u \notin V_{\player}$,
  $\forall{v \in \post{u}}{v \in \battr[n-1]{U}{\player}{T}}$, hence
  $\forall{v \in \post{u}}{\dist{\player}{U}{v}{T} \leq n - 1 < n}$;
  furthermore, such $v$ are in $U \cup T$,
  and the result follows immediately. \qedhere
\end{proof}

\begin{proposition}
   For all $v,w \in V$ if $v \gstut w$ then $v \gstutg w$.
\end{proposition}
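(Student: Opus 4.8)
The plan is to exhibit an explicit \duplicator strategy from the initial configuration $((v,w),\checkmark)$ and prove it wins every play. Following the sketch, the strategy is designed so that every play consistent with it enjoys three properties: (i) every configuration $((u_0,u_1),c)$ that occurs satisfies $u_0 \gstut u_1$; (ii) \duplicator is never stuck; and (iii) between any two consecutive configurations whose challenge is not $\checkmark$, the measure $\measuresym$ strictly decreases in the lexicographic order. Granting these, any play consistent with the strategy is infinite (the arena of Definition~\ref{def:gstut_game} is total) and, by (i) and clause~a) of governed stuttering bisimulation, has $\prio{v_j}=\prio{w_j}$ at every position; moreover it must contain infinitely many $\checkmark$'s, for otherwise, beyond the last $\checkmark$, (iii) would yield an infinite strictly decreasing chain in a well-founded order. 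Hence \duplicator wins, i.e. $v \gstutg w$. Throughout I would use that $\gstut$ is itself a governed stuttering bisimulation, so Theorem~\ref{th:fmist2} applies: if $u_0 \gstut u_1$ then $\prio{u_0}=\prio{u_1}$; every move $u_0 \to \C$ into a foreign class $\C$ can be answered by $u_1 \forces{\getplayer{u_0}}{\gstut} \C$; and $u_0 \diverges{\player}{\gstut}$ implies $u_1 \diverges{\player}{\gstut}$ (and symmetrically, with the roles of $u_0,u_1$ swapped).

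The strategy is this. Whenever the two moves proposed in a round land in vertices $t_0 \gstut t_1$ --- which, since vertices of a single class are pairwise $\gstut$-related, always happens when neither component leaves the common class $\class{u_0}{\gstut}$, and which also happens, by definition of $\gamma$, whenever \spoiler switched orientation or abandoned a pending challenge --- \duplicator accepts $((t_0,t_1),\checkmark)$. Otherwise exactly one component, say side $j$, has moved into a foreign class $\C=\class{t_j}{\gstut}$; then Theorem~\ref{th:fmist2} gives $u_{1-j} \forces{\getplayer{u_j}}{\gstut}\C$, so $u_{1-j}$ lies in the attractor $\battr{\class{u_{1-j}}{\gstut}}{\getplayer{u_j}}{\C}$. \duplicator rolls the move on side $j$ back, drives the move on side $1-j$ along this attractor --- choosing, when she controls that component, a successor that strictly decreases the attractor distance --- and records the challenge $(j,t_j)$ (the update $\gamma$ is engineered so that this challenge persists as long as \spoiler keeps re-offering the foreign move on side $j$). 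She continues until side $1-j$ enters $\C$, at which point both components are again related and she accepts with a $\checkmark$. The divergence clause of Theorem~\ref{th:fmist2} is what legitimises the $\dagger$-configurations, in which one side is on the brink of leaving its class: it guarantees the matching move, or the matching divergence, on the other side exists.

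For (i) I would observe that every configuration \duplicator selects is either an accepted pair $(t_0,t_1)$ with $t_0 \gstut t_1$, or a roll-back in which the kept old vertex $u_j$ faces a vertex $t_{1-j}$ that is still one attractor step short of $\C$, hence still in $\class{u_j}{\gstut}$, so $u_j \gstut t_{1-j}$. Property~(ii) holds because in each branch the forcing capability \duplicator relies on is exactly the one delivered by Theorem~\ref{th:fmist2}, and $\checkmark$ is always admissible when \spoiler switches or drops a challenge. Property~(iii) is where the auxiliary lemmas enter: by Lemma~\ref{lem:star_positive} a $\dagger$-configuration has strictly positive first measure coordinate, so it sits strictly above every $\checkmark$-configuration; and while a challenge $(j,t)$ is live, the relevant measure component is the attractor distance $\exit{u_j,u_{1-j},t}$, which drops each round --- by Lemma~\ref{lem:decrease_dist} when \duplicator herself moves the lagging component (she picks a distance-decreasing successor) and by the stronger Lemma~\ref{lem:decrease_dist_opponent} when that component is owned by the opponent of the forcing player, so that \emph{every} reply of \spoiler decreases it. One also checks that the top value $(\infty,0)$, and more generally any $\dagger$-configuration, can be re-entered only after a $\checkmark$, which prevents $\measuresym$ from being reset upward without a reward collected in between.

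The step I expect to be the main obstacle is the precise definition of the strategy and the verification of (iii). The roll-back mechanism combined with the \emph{governed} (ownership-dependent) nature of $\gstut$ forces a four-way case split on $(\getplayer{u_0},\getplayer{u_1})$, nested with cases on the current challenge and on \spoiler's choice of orientation and moves; in each branch one must simultaneously check that the chosen configuration keeps the two components $\gstut$-related, that the attractor role (forcing player versus opponent) of whichever component \duplicator must drive matches the ownership demanded by Lemmas~\ref{lem:decrease_dist} and~\ref{lem:decrease_dist_opponent}, and that $\gamma$ and $\measuresym$ co-evolve so that \spoiler cannot keep a challenge alive while avoiding a strict decrease of $\measuresym$. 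Pinning down these interactions --- rather than any single clever idea --- is the real work, which is why the full details belong in the appendix.
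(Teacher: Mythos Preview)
Your proposal is correct and follows essentially the same approach as the paper: the same \duplicator strategy idea (accept when the moved pair is related, otherwise roll back one side and drive the other along the attractor), the same measure $\measuresym$, and the same three auxiliary lemmata (\ref{lem:star_positive}, \ref{lem:decrease_dist}, \ref{lem:decrease_dist_opponent}) to establish the strict decrease between non-$\checkmark$ configurations. Two minor differences worth noting: the paper's invariant is slightly stronger than your (i) --- it additionally records that a pending challenge $(0,t)$ forces $u_0\in V_\even$ and $(1,t)$ forces $u_1\in V_\odd$ (plus $u_j\not\gstut t$), which is what makes the ``$\gamma$ and $\measuresym$ co-evolve'' argument go through cleanly in the $(\odd,\even)$ case --- and the paper organises the outer case split by the divergence status of $t_0$ (the four combinations of $t_0\diverges{\even}{\gstut}$ and $t_0\diverges{\odd}{\gstut}$) before splitting on ownership, whereas you split on ownership first and treat divergence more implicitly.
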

\begin{proof}
  We prove for all governed stuttering bisimilar vertices $v \gstut w$ that 
  there is a \duplicator winning strategy in the governed stuttering 
  bisimulation game from configuration $((v,w),\checkmark)$.
  
  We show this by constructing a \duplicator-strategy that moves
  between governed stuttering bisimilar vertices, and that makes
  sure that from every configuration $((v,w),c)$, within a finite
  number of steps another configuration $((v',w'),\checkmark)$ is reached.
  As a consequence, the \duplicator-strategy is such that it passes
  through configurations with reward $\checkmark$ infinitely often,
  hence the \duplicator strategy is winning.

  Formally, we preserve the invariant $\Phi$ which is the conjunction of the following
  for configurations $((u_0,u_1),c)$:
  \begin{itemize}
    \item $u_0 \gstut u_1$,
    \item $c = (j,t)$ implies $u_j \not\gstut t$,
    \item $c = (0,u)$ implies $(u_0,u_1) \in V_{\even} \times V$,
    \item $c = (1,u)$ implies $(u_0,u_1) \in V \times V_{\odd}$.
  \end{itemize}

  In addition, we prove that from every configuration $((u_0, u_1),c)$, a
  configuration $((u_0',u_1'),\checkmark)$ is reached within a
  finite number of steps by showing that $\measuresym$
  is a variant function. That is, if, in a round, we move from configuration
  $((u_0,u_1),c)$ to configuration $((u_0',u_1'),c')$ with $c \neq
  \checkmark$ and $c' \neq \checkmark$, then $\measure{u_0,u_1,c}
  > \measure{u_0',u_1',c'}$. 

  From these two observations, the result immediately follows.  Note
  that initially we are in a configuration $((v,w),\checkmark)$;
  hence $\Phi$ is satisfied trivially.  Suppose the game
  has reached a configuration $((u_0, u_1),c)$ satisfying $\Phi$.
  In step~1 of the round, \spoiler chooses to play from $(t_0,t_1)$,
  taken from $(u_0, u_1)$ or $(u_1,u_0)$.
  We remark that if \spoiler decides to play from $(u_1,u_0)$, then,
  regardless of step~2, any pending challenge or $\dagger$ will be
  replaced by a $\checkmark$ at the end of step~3. For this case, we
  therefore do not need to argue that $\measuresym$ decreases.

  We distinguish cases based on which player can force a divergence
  in the coinductive definition and consider \duplicator's options
  in step~2 and~3 of the round, and prove that \duplicator can
  always arrive in a new configuration $((t_0',t_1'),c')$ that satisfies $\Phi$ and
  for which, if $c' \neq \checkmark$ and $c \neq \checkmark$, 
  $\measure{t_0,t_1,c} > \measure{t_0',t_1',c'}$.

    \begin{itemize}
      \item $t_0 \diverges{\even}{\gstut}$ and $t_0 \diverges{\odd}{\gstut}$. This
      case is trivial, as in that case exactly one (reachable) equivalence class exists.

      \item $t_0 \diverges{\even}{\gstut}$ and $t_0 \ndiverges{\odd}{\gstut}$. 
      Since $t_0 \gstut t_1$ also $t_1 \diverges{\even}{\gstut}$. We distinguish cases based on the owners of the vertices.
      \begin{itemize}
        \item $\getplayer{t_0} = \getplayer{t_1} = \even$. \spoiler
	plays $t_0 \to w_0$.
        \begin{itemize}
        \item Case there is some $w_1 \in \post{t_1}$ such that
        $w_0 \gstut w_1$. Then \duplicator plays to
        such a $w_1$. The new configuration is $((w_0,w_1),\checkmark)$.

        \item Case there is no $w_1 \in \post{t_1}$ such that
        $w_0 \gstut w_1$. 
        Then \duplicator plays to a $w_1 \in \post{t_1}$ for which
        $w_1 \gstut t_1$ with minimal $\measure{t_0,w_1,(0,w_0)}$; 
        the existence of a $w_1 \gstut t_1$ follows from  $t_1 \diverges{\even}{\gstut}$.

        \emph{New configuration:} if
        $c \in \{\checkmark,\dagger,(0,w_0)\}$ and $u_0 = t_0$ then the new
        configuration is $((t_0,w_1),(0,w_0))$, and else
        $((t_0,w_1),\checkmark)$.

	\emph{Progress:} we demonstrate $\measure{t_0,w_1,(0,w_0)} < \measure{t_0,t_1,c}$
        for $c \in \{\dagger,(0,w_0)\}$.
	In case $c = \dagger$ this follows from 
        Lemma~\ref{lem:star_positive}.
        In case $c = (0,w_0)$, this follows from
	Lemmata~\ref{lem:decrease_dist} and~\ref{lem:decrease_dist_opponent}.

        \end{itemize}

	\item $\getplayer{t_0} = \getplayer{t_1} = \odd$. \spoiler
	plays $t_1 \to w_1$. Since $t_1
	\diverges{\even}{\gstut}$, all $w_1 \in \post{t_1}$
	satisfy $t_1 \gstut w_1$. The same holds for all $w_0 \in \post{t_0}$.
        \duplicator can thus play arbitrary $t_0 \to w_0$. 

        \emph{New configuration:} $((w_0, w_1),\checkmark)$

	\item $\getplayer{t_0} = \even$, $\getplayer{t_1} = \odd$.
	\spoiler plays $t_0 \to w_0$ and $t_1 \to w_1$.
	Since $t_1 \diverges{\even}{\gstut}$, also $w_1 \gstut t_1$.
        We distinguish two further cases.

	\begin{itemize}
	  \item Case $w_0 \gstut w_1$. 

          \emph{New configuration:} $((w_0,w_1),\checkmark)$.

	  \item Case $w_0 \not \gstut t_0$. 

          \emph{New configuration:} if $c \in
	  \{\checkmark,\dagger,(0,w_0)\}$ and $u_0 = t_0$ then
	  the new configuration is $((t_0, w_1),(0,w_0))$; else the new
	  configuration is $((t_0,w_1),\checkmark)$.  Observe
	  that $t_0 \gstut w_1$.

	\emph{Progress:} we demonstrate $\measure{t_0,w_1,(0,w_0)} < \measure{t_0,t_1,c}$
        for $c \in \{\dagger,(0,w_0)\}$.
	In case $c = \dagger$ this follows from 
        Lemma~\ref{lem:star_positive}.
        In case $c = (0,w_0)$, this follows from
	Lemmata~\ref{lem:decrease_dist} and~\ref{lem:decrease_dist_opponent}.

	\end{itemize}

	\item $\getplayer{t_0} = \odd$, $\getplayer{t_1} = \even$.
	\duplicator plays $t_1 \to w_1$ such that $w_1 \gstut t_1$ 
        and $t_0 \to w_0$. Such a $w_1$ exists because $t_1 \diverges{\even}{\gstut}$.

        \emph{New configuration:} $((w_0, w_1), \checkmark)$.

      \end{itemize}

    \item $t_0 \diverges{\odd}{\gstut}$ and $t_0 \ndiverges{\even}{\gstut}$. So, as before, $t_1
    \diverges{\odd}{\gstut}$.  This case is dual to the previous one.

    \item $t_0 \ndiverges{\even}{\gstut}$ and $t_0
    \ndiverges{\odd}{\gstut}$.  We consider the owners of $t_0$ and $t_1$.
    \begin{itemize}
    \item $\getplayer{t_0} = \getplayer{t_1} = \even$. \spoiler plays
    $t_0 \to w_0$. We distinguish two cases.

   \begin{itemize}
    \item Case there is some $w_1 \in \post{t_1}$ for which $w_0 \gstut w_1$.
     \duplicator plays $t_1 \to w_1$ such that $w_0 \gstut w_1$.

     \emph{New configuration:} $((w_0,w_1),\checkmark)$.

    \item Case there is no $w_1 \in \post{t_1}$ for which $w_0 \gstut w_1$. 
     \begin{itemize}
      \item Case $t_0 \gstut w_0$. Then for all $w_1 \in \post{t_1}$,
        $w_1 \not\gstut t_1$. 

        \emph{New configuration:} 
        $((w_0,t_1),\dagger)$ if $c \in \{\checkmark,\dagger\}$ and $u_0 = t_0$; 
        otherwise $((w_0,t_1),\checkmark)$.

	\emph{Progress:} we must show $\measure{t_0,w_1,\dagger} <
	\measure{t_0,t_1,\dagger}$. This follows from
	Lemma~\ref{lem:decrease_dist_opponent}.

      \item Case $t_0 \not\gstut w_0$. \duplicator plays $t_1 \to w_1$ such
        that $t_1 \gstut w_1$ and $\measure{t_0,w_0,(0,w_0)}$ is minimal.

        \emph{New configuration:} 
        if $c \in \{\checkmark,\dagger,(0,w_0)\}$ and $u_0 = t_0$ then 
        configuration $((t_0,w_1),(0,w_0))$ and else $((t_0,w_1),\checkmark)$.

        \emph{Progress:}  we must show $\measure{t_0,w_1,(0,w_0)} <
	\measure{t_0,t_1,\dagger}$ for $c \in \{\dagger,(0,w_0)\}$. 
	In case $c = \dagger$ this follows from 
        Lemma~\ref{lem:star_positive}.
        In case $c = (0,w_0)$, this follows from
	Lemmata~\ref{lem:decrease_dist} and~\ref{lem:decrease_dist_opponent}.

    \end{itemize}

   \end{itemize}

    \item $\getplayer{t_0} = \getplayer{t_1} = \odd$. \spoiler plays
    $t_1 \to w_1$.

   \begin{itemize}
    \item Case there is some $w_0 \in \post{t_0}$ for which $w_0 \gstut w_1$.
     \duplicator plays $t_0 \to w_0$ such that $w_0 \gstut w_1$.
     
     \emph{New configuration:} $((w_0,w_1),\checkmark)$.

    \item Case there is no $w_0 \in \post{t_0}$ for which $w_0 \gstut w_1$. 
     \begin{itemize}
      \item Case $t_1 \gstut w_1$. Then for all $w_0 \in \post{t_0}$,
        $w_0 \not\gstut t_0$. \duplicator plays some arbitrary $t_0 \to w_0$.

        \emph{New configuration:} 
        $((t_0,w_1),\dagger)$ if $c \in \{\checkmark,\dagger\}$ and $u_1 = t_1$; 
        otherwise $((t_0,w_1),\checkmark)$.

	\emph{Progress:} we must show $\measure{w_0,t_1,\dagger} <
	\measure{t_0,t_1,\dagger}$. This follows from
	Lemma~\ref{lem:decrease_dist_opponent}.

      \item Case $t_1 \not\gstut w_1$. \duplicator plays $t_0 \to w_0$ such
        that $t_0 \gstut w_0$ and $\measure{w_0,t_0,(1,w_1)}$ is minimal.

        \emph{New configuration:} 
        if $c \in \{\checkmark,\dagger,(1,w_1)\}$ and $u_1 = t_1$ then 
        configuration $((w_0,t_1),(1,w_1))$ and else $((w_0,t_1),\checkmark)$.

        \emph{Progress:}  we must show $\measure{w_0,t_1,(1,w_1)} <
	\measure{t_0,t_1,c}$ for $c \in \{\dagger,(1,w_1)\}$. 
	In case $c = \dagger$ this follows from 
        Lemma~\ref{lem:star_positive}.
        In case $c = (1,w_1)$, this follows from
	Lemmata~\ref{lem:decrease_dist} and~\ref{lem:decrease_dist_opponent}.

    \end{itemize}

   \end{itemize}

    \item $\getplayer{t_0} = \even, \getplayer{t_1} = \odd$.
          \spoiler plays $t_0 \to w_0$ and $t_1 \to w_1$. In case
          $w_0 \not\gstut w_1$ then either $t_0 \gstut w_0$ or
          $t_1 \gstut w_1$. We distinguish three cases:

          \begin{itemize}
          \item Case $w_0 \gstut w_1$. 

          \emph{New configuration:} $((w_0, w_1),\checkmark)$.

	  \item Case $w_0 \not \gstut w_1$ and $t_0 \gstut w_0$.

	  \emph{New configuration:} $((w_0, t_1), (1,w_1))$ if $c
	  \in \{\checkmark,\dagger,(1,w_1)\}$ and $u_1 = t_1$;
	  otherwise $((w_0,t_1),\checkmark)$.

          \emph{Progress:} we must show $\measure{w_0,t_1,(1,w_1)} 
          < \measure{t_0,t_1,c}$ for $c \in \{\dagger,(1,w_1)\}$.
          In case $c = \dagger$ this follows from Lemma~\ref{lem:star_positive}.
          In case $c = (1,w_1)$ this follows from 
          Lemmata~\ref{lem:decrease_dist} and~\ref{lem:decrease_dist_opponent}.

          \item Case $w_0 \not \gstut w_1$ and $t_1 \gstut w_1$.

          \emph{New configuration:} $((t_0,w_1),(0,w_0))$ if 
          $c \in \{\checkmark,\dagger,(0,w_0)\}$ and $u_0 = t_0$;  otherwise
          $((t_0,w_1),\checkmark)$.

          \emph{Progress:} we must show $\measure{t_0,w_1,(0,w_0)} 
          < \measure{t_0,t_1,c}$ for $c \in \{\dagger,(0,w_0)\}$.
          In case $c = \dagger$ this follows from Lemma~\ref{lem:star_positive}.
          In case $c = (0,w_0)$ this follows from 
          Lemmata~\ref{lem:decrease_dist} and~\ref{lem:decrease_dist_opponent}.

          \end{itemize}

    \item $\getplayer{t_0} = \odd, \getplayer{t_1} = \even$.
          \begin{itemize}
          \item Case there are $w_0 \in t_0^\bullet$ and $w_1 \in t_1^\bullet$
                such that $w_0 \gstut w_1$. Then \duplicator plays to such $w_0$ and
                $w_1$. 

                \emph{New configuration:} $((w_0,w_1),\checkmark)$.

          \item Case there are no $w_0 \in t_0^\bullet$ and $w_1 \in t_1^\bullet$
                such that $w_0 \gstut w_1$. 
              \begin{itemize}
              \item case there is some $w_0 \in t_0^\bullet$
                such that $w_0 \gstut t_0$. Then \duplicator plays to $w_0$ that
                is such while minimising $\measure{w_0,t_1,\dagger}$. 

                \emph{New configuration:} $((w_0,t_1),\dagger)$ if $u_1 = t_1$;
                else $((w_0,t_1),\checkmark)$.

	  \emph{Progress:} we first show that $u_1 = t_1$ implies
	  $c \in \{\dagger,\checkmark\}$.  Towards a contradiction,
	  assume $c = (0,t)$ for some $t$. By our invariant, this
	  implies $(t_0,t_1) \in V_{\even} \times V$. Since $(u_0,u_1)
	  \in \{(t_0,t_1),(t_1,t_0)\}$ and $u_1 = t_1$ we have $u_0
	  = t_0$. But then both $u_0 \in V_{\odd}$ and $u_0 \in
	  V_{\even}$. Contradiction.  Towards another contradiction,
	  assume $c = (1,t)$ for some $t$. By our invariant, this
	  implies $(t_0,t_1) \in V \times V_{\odd}$. This contradicts
	  $u_1 = t_1$ since $\getplayer{u_1} = \even$.

	  It therefore suffices to show $\measure{w_0,t_1,\dagger} <
	  \measure{t_0,t_1,\dagger}$. This follows from the fact
	  that we minimised $\measure{w_0,t_1,\dagger}$ and
	  Lemmata~\ref{lem:decrease_dist}
	  and~\ref{lem:decrease_dist_opponent}.

          \item case there is some $w_1 \in t_1^\bullet$
                such that $w_1 \gstut t_1$. Then \duplicator plays to $w_1$ that
                is such while minimising $\measure{t_0,w_1,\dagger}$. 

                \emph{New configuration:} $((t_0,w_1),\dagger)$ if $u_0 = t_0$;
                else $((t_0,w_1),\checkmark)$.

	  \emph{Progress:} using arguments, similar to those in the previous
          case, it follows that $c \in \{\dagger,\checkmark\}$.  

	  It therefore suffices to show $\measure{t_0,w_1,\dagger} <
	  \measure{t_0,t_1,\dagger}$. This follows from the fact
	  that we minimised $\measure{t_0,w_1,\dagger}$ and
	  Lemmata~\ref{lem:decrease_dist}
	  and~\ref{lem:decrease_dist_opponent}. \qedhere

          \end{itemize}
       \end{itemize}
    \end{itemize}

  \end{itemize}


\end{proof}
We next focus on proving that every pair of vertices related through the
governed stuttering bisimulation game are in fact governed stuttering bisimilar.
\begin{proposition}
For all $v,w \in V$ if $v \gstutg w$ then $v \gstut w$.
\end{proposition}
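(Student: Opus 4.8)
The plan is to prove the contrapositive: for all $v,w \in V$, if $v \not\gstut w$ then $v \not\gstutg w$. Since the governed stuttering bisimulation game has a B\"uchi winning condition it is determined, so it suffices to exhibit, for every pair with $v \not\gstut w$, a winning strategy for \spoiler from $((v,w),\checkmark)$. First I would invoke Corollary~\ref{cor:fmist_fixpoint} to rewrite $\gstut$ as $\nu\mathcal{F}$, and the Tarski--Kleene fixpoint approximation theorem to obtain $\nu\mathcal{F} = \bigcap_{k\ge 1} R^k$ where $R^k \isdef \mathcal{F}^k(V\times V)$; since $\mathcal{F}$ is monotone and $\mathcal{F}(V\times V)\subseteq V\times V$, the chain $R^1 \supseteq R^2 \supseteq \cdots$ is decreasing and $\bigcap_{l\le k} R^l = R^k$. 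Hence $v \not\gstut w$ yields $(v,w)\notin R^k$ for some $k\ge 1$, and the proposition follows from the claim, proved by induction on $k$: for every $k\ge 1$, every challenge $c\in C$, and every $(u_0,u_1)\notin R^k$, \spoiler wins every play starting in $((u_0,u_1),c)$.

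For the base case $k=1$ we have $R^1 = \mathcal{F}(V\times V) = \{(v,w)\mid \prio{v}=\prio{w}\}$ (the forcing clause of $\mathcal{F}$ is vacuous when there is a single equivalence class), so $(u_0,u_1)\notin R^1$ means $\prio{u_0}\neq\prio{u_1}$ and \spoiler wins immediately since every infinite play then violates $\prio{v_j}=\prio{w_j}$ already at $j=0$. For the inductive step, let $(u_0,u_1)\notin R^{k+1}$; if $(u_0,u_1)\notin R^k$ we are done by the induction hypothesis, so assume $(u_0,u_1)\in R^k\setminus\mathcal{F}(R^k)$. As $R^k$ is an equivalence with $\prio{u_0}=\prio{u_1}$, the failure of $\mathcal{F}(R^k)$ must lie in the transfer condition of Theorem~\ref{th:fmist4}: there are a player $\player$ and sets $\mathcal{U},\mathcal{T}\subseteq\partition{V}{R^k}$ with $\class{u_0}{R^k}\in\mathcal{U}$, $\class{u_0}{R^k}\notin\mathcal{T}$, and — choosing \spoiler's orientation in step~1 appropriately — $u_0\forces{\player}{\mathcal{U}}\mathcal{T}$ while $u_1 \nforces{\player}{\mathcal{U}} \mathcal{T}$ (Lemma~\ref{lem:force-vs-div} subsumes the divergence case $\mathcal{T}=\emptyset$). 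Fix a memoryless $\strategyname\in\strategy{\player}$ witnessing $u_0\forces{\strategyname}{\mathcal{U}}\mathcal{T}$. \spoiler keeps the left component pinned to that orientation and moves it according to $\strategyname$. The point is that \duplicator cannot indefinitely keep both components in the same $R^k$-class while converting the pending challenge to $\checkmark$ infinitely often: doing so would extract, from \duplicator's countermoves, a $\player$-strategy on the $u_1$-side forcing through $\bigcup\mathcal{U}$ to $\bigcup\mathcal{T}$, contradicting $u_1 \nforces{\player}{\mathcal{U}} \mathcal{T}$. Consequently every play consistent with \spoiler's strategy either eventually visits a configuration $((t_0,t_1),c')$ with $(t_0,t_1)\notin R^k$ — where the induction hypothesis finishes the job for every $c'$ — or has only finitely many $\checkmark$ rewards, so \spoiler wins the B\"uchi condition outright.

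The main obstacle is this last step: turning ``\duplicator cannot match forever'' into a concrete \spoiler strategy that never gets stuck and enforces the stated dichotomy. This requires a careful case analysis, for each of the four owner combinations of $(u_0,u_1)$ from Table~\ref{tab:simulation_game_rules}, of how \duplicator's three options in step~3 (a genuine countermove, or rolling back either proposed move) interact with the challenge-update function: one must show that a rollback can never permanently defer the move dictated by $\strategyname$, and that whenever \spoiler is forced to swap orientations in step~1 — which hands \duplicator a $\checkmark$ — progress towards leaving $\class{u_0}{R^k}$ or towards $\bigcup\mathcal{T}$ is still maintained. I would organise this with a measure on configurations analogous to the one used for Proposition~\ref{prop:soundness_gstut}, now strictly decreasing along \spoiler's plays between consecutive $\checkmark$-free stretches, and defer the full bookkeeping to the Appendix.
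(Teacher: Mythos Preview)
Your plan matches the paper's proof almost exactly: contrapositive, Corollary~\ref{cor:fmist_fixpoint} plus Tarski--Kleene to reduce to the approximants $R^k=\mathcal{F}^k(V\times V)$, induction on $k$ with the same base case, and in the inductive step a \spoiler strategy built from a memoryless witness $\strategyname$ for $u_0\forces{\player}{\mathcal{U}}\mathcal{T}$, leading to the dichotomy ``either the play leaves $R^k$ (and the IH applies) or only finitely many $\checkmark$'s occur''. This is precisely the structure of the paper's argument.

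The one place where your sketch deviates from the paper is the informal justification of that dichotomy. You write that infinitely many $\checkmark$'s while staying inside a single $R^k$-class would let you ``extract, from \duplicator's countermoves, a $\player$-strategy on the $u_1$-side forcing through $\bigcup\mathcal{U}$ to $\bigcup\mathcal{T}$''. That phrasing is shaky: on the $u_1$-side \duplicator only controls the $\player$-vertices, while \spoiler plays the $\opponent{\player}$-vertices, so a single play reaching $\bigcup\mathcal{T}$ does not yield $u_1\forces{\player}{\mathcal{U}}\mathcal{T}$. The paper instead argues on the $u_0$-side: it maintains an invariant $\Phi$ guaranteeing $t_0\forces{\strategyname}{\mathcal{U}}\mathcal{T}$ and $t_1\nforces{\player}{\mathcal{U}}\mathcal{T}$, and uses the longest-$\strategyname$-path distance $\delta$ from $t_0$ to $\bigcup\mathcal{T}$ as the decreasing measure. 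Whenever the left component actually moves, $\delta$ strictly drops; hence either the classes separate (IH) or the left component freezes, after which \duplicator's rollbacks never earn a $\checkmark$. Since you already flag this step as the obstacle and propose a measure-based bookkeeping, you are not far off---just shift the measure to the $u_0$-side and phrase the invariant as the paper does.
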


\begin{proof} We prove the contrapositive of the statement, \ie for all
$v,w \in V$, if $v \not\gstut w$, then also $v \not\gstutg w$. Let
$v \not\gstut w$. By Corollary~\ref{cor:fmist_fixpoint}, then also
$(v,w) \notin \nu \mathcal{F}$. By the Tarski-Kleene fixpoint approximation theorem,
we thus have $(v,w) \notin \bigcap\limits_{k \ge 1} \mathcal{F}^k(V \times V)$.
Let $R^k$ denote the relation $\mathcal{F}^k(V \times V)$; \ie, $R^k$ is the
relation obtained by applying the operator $\mathcal{F}$ $k$-times.
Note that
because of monotonicity, 
$R^k = \bigcap_{l \le k} R^l$.
We next prove, using induction, that for all $k \ge 1$:
\[
\tag{IH}
\label{eq:IH}
\begin{array}{l}
\text{\spoiler wins the governed stuttering bisimulation game} \\
\text{for all configurations 
$((u_0,u_1),c)$ for which
$(u_0,u_1) \notin R^k$}
\\
\end{array}
\]

\begin{itemize}

\item Base case $k = 1$. Observe that $R^1 = \{ (v,w) \in V \times V
\mid \prio{v} = \prio{w} \}$.  \spoiler wins the governed stuttering
bisimulation game for all configurations $((u_0,u_1),c)$ satisfying $(u_0,u_1)
\notin R^1$: all plays starting in such a configuration trivially
violate \duplicator's winning condition.

\item Inductive step. Assume that the statement holds for some $k \ge 1$.  
Pick an arbitrary position $(u_0,u_1)$ for which $(u_0,u_1) \notin R^{k+1}$ 
and let $c$ be an arbitrary challenge/reward. We must show that
\spoiler wins the governed stuttering bisimulation game for these.
Recall that we have $R^{k+1} \subseteq R^k$.

If $(u_0,u_1) \notin R^{k}$, then by the induction hypothesis, \spoiler
wins the governed stuttering bisimulation game from configuration
$((u_0,u_1), c)$.

Observe that by definition of
$\mathcal{F}$, we have for all $(v,w) \in R^k \setminus R^{k+1}$ that
there are $\player \in \{\even,\odd\}$ and
$\mathcal{U},\mathcal{T} \subseteq V_{/R^{k}}$ for which
\[
\tag{*}
\label{eq:star}
\text{
$\class{v}{R^{k}} \in \mathcal{U}\setminus \mathcal{T}$
but not
$v \forces{\player}{\mathcal{U}} \mathcal{T} \Leftrightarrow
w \forces{\player}{\mathcal{U}} \mathcal{T}$.
}
\]
Let $\player, \mathcal{U}, \mathcal{T}$ be such that~\eqref{eq:star}. We
focus on the case $\player = \even$; the case that $\player = \odd$ is fully
dual.
Assume that $v \forces{i}{\mathcal{U}} \mathcal{T}$ and
not $w \forces{\player}{\mathcal{U}} \mathcal{T}$; the case
in which not $v \forces{\player}{\mathcal{U}} \mathcal{T}$ but
$w \forces{\player}{\mathcal{U}} \mathcal{T}$ is symmetric.
Note that we can assume that
$\mathcal{T} \cap \mathcal{U} = \emptyset$, as
$v \forces{\player}{\mathcal{U}} \mathcal{T}$ iff
$v \forces{\player}{\mathcal{U} \setminus \mathcal{T}} \mathcal{T}$ for
any $\mathcal{U},\mathcal{T}$. We may therefore also simplify
$\mathcal{U} \setminus \mathcal{T}$ to $\mathcal{U}$.

Let $\sigma \in \strategy{\player}$ be the (memoryless) strategy underlying
$v \forces{\player}{\mathcal{U}} \mathcal{T}$.
Using $\sigma$, we construct a winning strategy for 
\spoiler for configuration $((v,w),c)$.
We first show that \spoiler can
invariantly move between configurations $((t_0,t_1),c)$ that satisfy the following
property $\Phi$:
\[
\begin{array}[t]{l}
\text{If } \class{t_0}{R^{k}} = \class{t_1}{R^{k}} \text{ then}\\

\begin{cases}
t_0 \forces{\sigma}{\mathcal{U}} \mathcal{T} \text{ but not }
t_1 \forces{\player}{\mathcal{U}} \mathcal{T}  & \\

c = (0,t) \text{ implies } t_0 \in V_{\even} \text{ and }
t_0 \steps{\sigma} t \\

c = (1,t) \text{ implies } 
 t_1 \in V_{\odd}, t \in \post{t_1} \text{ and
not } t \forces{\player}{\mathcal{U}} \mathcal{T}
\end{cases}
\end{array}
\]
%
%
%
%
%
%
Let $((t_0,t_1),c)$ be a configuration for which $\Phi$ holds. For all
such configurations \spoiler's move in step~1 of a round is to play from
$(t_0,t_1)$; \ie \spoiler does not switch positions.
We distinguish three main cases, showing that \duplicator has no other
option than to choose a new configuration that satisfies $\Phi$.

\begin{enumerate}
\item Case $c \in \{ \dagger,\checkmark\}$.
We furthermore distinguish cases based on the players of $t_0$ and $t_1$.

  \begin{itemize}
  \item Case $\getplayer{t_0} = \getplayer{t_1} = \even$. Since
        $t_0 \forces{\sigma}{\mathcal{U}} \mathcal{T}$, \spoiler proposes to move from
        $t_0$ to $\sigma(t_0)$. \duplicator proposes $u_1 \in t_1^\bullet$.
        Observe that not $u_1  \forces{\player}{\mathcal{U}} \mathcal{T}$. \duplicator
        then can propose to continue in:
        $((\sigma(t_0),u_1),\checkmark)$,
        $((t_0,u_1), (0,\sigma(t_0)))$, or
        $((\sigma(t_0),t_1), \dagger)$. Clearly, all new configurations
        satisfy $\Phi$.
          
  \item Case $\getplayer{t_0} = \getplayer{t_1} = \odd$.
        \spoiler proposes to move from
        $t_1$ to $u_1$ such that not $u_1 \forces{\player}{\mathcal{U}} \mathcal{T}$.
        Such $u_1$ exists. \duplicator proposes $u_0 \in t_0^\bullet$.
        Observe that $t_0 \steps{\sigma} u_0$. \duplicator then proposes to 
        continue in:
        $((u_0,u_1),\checkmark)$,
        $((t_0,u_1),\dagger)$, or
        $((u_0,t_1),(1,u_1))$. All new configurations satisfy $\Phi$.

  \item Case $\getplayer{t_0} = \even$, $\getplayer{t_1} = \odd$.
        Since $t_0 \forces{\sigma}{\mathcal{U}} \mathcal{T}$
        \spoiler proposes to move from
        $t_0$ to $\sigma(t_0)$ and from $t_1$ to $u_1$ such that not 
        $u_1 \forces{\player}{\mathcal{U}} \mathcal{T}$.
        Note that such $u_1$ exists.  \duplicator then proposes to 
        continue in:
        $( (\sigma(t_0),u_1),\checkmark)$,
        $( (t_0,u_1), (0,\sigma(t_0)))$, or
        $( (\sigma(t_0),t_1),(1,u_1))$.
        Again, all new configurations satisfy $\Phi$.

  \item Case $\getplayer{t_0} = \odd$, $\getplayer{t_1} = \even$.
        \duplicator proposes to move from
        $t_1$ to $u_1$ and from $t_0$ to $u_0$. Since $\player = \even$, we have 
        $t_0 \steps{\sigma} u_0$ and because of $\Phi$, we have not 
        $u_1 \forces{\player}{\mathcal{U}} \mathcal{T}$.  \duplicator then 
        proposes to continue in:
        $((u_0,u_1), \checkmark)$,
        $((t_0,u_1), \dagger)$, or
        $((u_0,t_1), \dagger)$. 
        All new configurations satisfy $\Phi$.

  \end{itemize}

\item Case $c = (0,t)$. Because of $\Phi$, we have
      $t_0 \forces{\sigma}{\mathcal{U}} \mathcal{T}$ and $\getplayer{t_0} = \even$. Then \spoiler plays from configuration $(t_0,t_1)$.
      We furthermore distinguish cases based on the owner of $t_1$.
  \begin{itemize}
  \item Case $\getplayer{t_1} = \even$. \spoiler proposes to 
        move from $t_0$ to $t$.
        \duplicator proposes $u_1 \in t_1^\bullet$. Observe that not
        $u_1 \forces{\player}{\mathcal{U}} \mathcal{T}$. \duplicator then proposes
        to continue in:
        $((t,u_1),\checkmark)$,
        $((t_0,u_1),(0,t))$, or
        $((t,t_1),\dagger)$

  \item Case $\getplayer{t_1} = \odd$. \spoiler proposes to move from 
        $t_0$ to $t$ and from $t_1$ to $u_1$ such that not 
        $u_1 \forces{\player}{\mathcal{U}} \mathcal{T}$.
        Such $u_1$ exists.  \duplicator then proposes to continue in:
        $((t,u_1),\checkmark)$,
        $((t_0,u_1),(0,t))$, or
        $((t,t_1), \checkmark)$

  \end{itemize}
  In both cases, the new rounds satisfy $\Phi$

\item Case $c = (1,t)$. Because of $\Phi$, we have not
      $t_1 \forces{\sigma}{\mathcal{U}} \mathcal{T}$ and $\getplayer{t_1} = \odd$.
      Then \spoiler plays from configuration $(t_0,t_1)$.
      We furthermore distinguish cases based on the owner of $t_0$.
  \begin{itemize}
  \item Case $\getplayer{t_0} = \even$. \spoiler proposes to move from
          $t_0$ to $\sigma(t_0)$ and from $t_1$ to $t$.
          \duplicator then proposes to continue in:
          $((\sigma(t_0),t),\checkmark)$,
          $((t_0,t),\checkmark)$, or
          $((\sigma(t_0),t_1),(1,t))$.

  \item Case $\getplayer{t_0} = \odd$. \spoiler proposes to move from
          $t_1$ to $t$. \duplicator proposes $u_0 \in t_0^\bullet$.
          Observe that $t_0 \steps{\sigma} u_0$. \duplicator then proposes to 
          continue in:
          $((u_0,t),\checkmark)$,
          $((t_0,t),\dagger)$, or
          $((u_0,t_1),(1,t))$.

  \end{itemize}

\end{enumerate}
We next observe that for any $(t_0,t_1)$ for which $t_0, t_1$ meet the premiss of
$\Phi$, but not the conclusion,
\spoiler can, in a single round, move to a configuration that either does not
meet $\Phi$'s premiss or to one that meets $\Phi$'s conclusion.
More specifically, suppose that $\class{t_0}{R^k} = \class{t_1}{R^k}$
but one of the following holds:
\begin{enumerate}
\item $t_1 \forces{\sigma}{\mathcal{U}} \mathcal{T}$ but not
$t_0 \forces{\player}{\mathcal{U}} \mathcal{T}$;

\item $c = (0,t)$ implies $t_0 \notin V_{\even}$ or not $t_0 \steps{\sigma} t$;

\item $c = (1,t)$ implies $t_1 \notin V_{\odd}$, $t \notin \post{t_1}$, or 
$t \forces{\player}{\mathcal{U}} \mathcal{T}$.
\end{enumerate}
Whenever we are in case 1, \spoiler switches positions in step~1 of a round
and follows the strategy outlined above. Whenever we are in case 2 or 3,
\spoiler drops challenge $c$ in step~1 and plays as if $c \in \{\dagger,\checkmark\}$.
In all three cases, \duplicator is rewarded a $\checkmark$ as the 
new challenge at the end of the round  and $\Phi$ holds trivially.\medskip

Summarising, we find that for configurations $((t_0,t_1),c)$ for 
which both~\eqref{eq:star} and $\Phi$ hold, \spoiler can move to another
configuration that either meets $\Phi$ or is such that $\Phi$'s premiss
is violated. For configurations $((t_0,t_1),c)$ for which~\eqref{eq:star} but
not $\Phi$ holds, \spoiler can move in a single round to a configuration 
for which she can
henceforth maintain $\Phi$ as an invariant or for which $\Phi$'s premiss
is violated.  

We finally argue that when \spoiler plays according to the above
strategy, she wins all plays.  Observe that we only need to show
this for all infinite plays that pass only through positions $(u,t)$
for which $(u,t) \in R^k$; for those plays that at some point pass
along a position $(u,t) \notin R^k$, our induction hypothesis yields
a winning strategy for \spoiler.\medskip

  Let $(v_0,w_0)\ (v_1,w_1)\ (v_2,w_2) \dots$ be an infinite sequence
  of positions on an infinite play $\pi$ that is allowed by \spoiler's
  strategy, such that for all $l$, $\class{v_l}{R^k} = \class{w_l}{R^k}$.
  Towards a contradiction, assume that \duplicator wins $\pi$.
  Observe that $\class{v_l}{R^k} = \class{w_l}{R^k}$ implies that
  $\prio{v_l} = \prio{w_l}$ for all $(v_l,w_l) \in R^k \setminus
  R^{k+1}$; therefore we can only arrive at a contradiction by
  showing that \duplicator earns a finite number of $\checkmark$
  rewards along $\pi$.

  By invariant $\Phi$, for all positions $(v_l,w_l)$, for $l \ge 1$,
  we have $v_l \forces{\sigma}{\mathcal{U}} \mathcal{T}$. 
  Let $\delta(v_l,w_l)$ denote the length of the longest path 
  from $v_l$ to reach $\mathcal{T}$ when playing according to $\sigma$. Note that
  $\delta$ is finite and decreases along the positions in $\pi$, but never
  reaches $0$, as all vertices remain in $\mathcal{U}$. This means
  that for some $m$, we have $\delta(v_m,w_m) = \delta(v_n,w_n)$ for
  all $n \ge m$. Fix
  this $m$. Moreover, there must be some $u$ such that:
  \[
  u \forces{\sigma}{\mathcal{U}}{\mathcal{T}} \wedge
  \forall{n \ge m} \forall{(v_n,w_n) \in \pi} v_n = u 
  \]
  But this means that, once \spoiler's strategy reaches the position
  containing $u$, all remaining $\checkmark$'s earned by \duplicator
  must be due to \spoiler switching positions or discarding a challenge
  in step~1 of each new round. 
  As we explained, \spoiler switches positions and/or drops a challenge only
  in the first round when starting in a configuration that does not satisfy
  $\Phi$; she never does so afterwards. Therefore, \duplicator earns no
  $\checkmark$ rewards when $\pi$ reaches a configuration containing a position
  with $u$. But then \duplicator earns only a finite number of $\checkmark$ rewards
  along $\pi$, contradicting the assumption that \duplicator wins $\pi$.

  Therefore, \spoiler has a strategy to win any configuration $((u_0,u_1),c)$
  for which $(u_0,u_1) \notin R^k$.
  \qedhere

\end{itemize}

\end{proof}

\end{document}